\newcommand{\myparskip}{3pt}
\par\vspace{4mm}}
\newcommand{\MP}{{\sf Minimum Planarization}\xspace}
\newcommand{\MCN}{{\sf Minimum Crossing Number}\xspace}
\newcommand{\Gr}[1]{G^{(#1)}}
\newcommand{\Hr}[1]{H^{(#1)}}
\newcommand{\edges}[1]{E^{(#1)}}
\newcommand{\optcro}[1]{\mathsf{OPT}_{\mathsf{cr}}(#1)}
\newcommand{\algsc}{\ensuremath{{\mathcal{A}}_{\mbox{\textup{\footnotesize{ALN}}}}}\xspace}
\newcommand{\alphasc}{\ensuremath{\alpha_{\mbox{\textup{\footnotesize{ALN}}}}}}
\newcommand{\betaFCG}{\beta_{\mathrm{FCG}}}
\newcommand{\fout}{F_{\mathsf{out}}}
\newcommand{\bad}{\mathsf{IRG}}
\newcommand{\G}{{\mathbf{G}}}
\renewcommand{\H}{{\mathbf{H}}}
\newcommand{\norm}[1]{\lVert #1\rVert}
\newcommand{\event}{{\cal{E}}}
\newcommand{\opt}{\mathsf{OPT}}
\newcommand{\OPT}{\mathsf{OPT}}
\newcommand{\set}[1]{\left\{ #1 \right\}}
\newcommand{\sse}{\subseteq}
\newcommand{\tset}{{\mathcal T}}
\newcommand{\pset}{{\mathcal{P}}}
\newcommand{\dset}{{\mathcal{D}}}
\newcommand{\qset}{{\mathcal{Q}}}
\newcommand{\lset}{{\mathcal{L}}}
\newcommand{\cset}{{\mathcal{C}}}
\newcommand{\fset}{{\mathcal{F}}}
\newcommand{\jset}{{\mathcal{J}}}
\newcommand{\xset}{{\mathcal{X}}}
\newcommand{\wset}{{\mathcal{W}}}
\newcommand{\yset}{{\mathcal{Y}}}
\newcommand{\rset}{{\mathcal{R}}}
\newcommand{\hset}{{\mathcal{H}}}
\newcommand{\sset}{{\mathcal{S}}}
\newcommand{\zset}{{\mathcal{Z}}}
\newcommand{\gset}{{\mathcal{G}}}
\newcommand{\nots}{\overline S}
\newcommand{\notr}{\overline R}
\newcommand{\be}{\begin{enumerate}}
\newcommand{\ee}{\end{enumerate}}
\newcommand{\bd}{\begin{description}}
\newcommand{\ed}{\end{description}}
\newcommand{\bi}{\begin{itemize}}
\newcommand{\ei}{\end{itemize}}
\numberwithin{equation}{section}
\numberwithin{figure}{section}
\newtheorem{theorem}{Theorem}[section]
\newtheorem{lemma}[theorem]{Lemma}
\newtheorem{observation}{Observation}[section]
\newtheorem{corollary}{Corollary}[section]
\newtheorem{claim}[theorem]{Claim}
\newtheorem{definition}{Definition}[section]
\newenvironment{proof}{\par \smallskip{\bf Proof:}}{\hfill\stopproof}
\def\stopproof{\square}
\def\square{\vbox{\hrule height.2pt\hbox{\vrule width.2pt height5pt \kern5pt
\vrule width.2pt} \hrule height.2pt}}
\renewcommand{\phi}{\varphi}
\newcommand{\eps}{\epsilon}
\newcommand{\half}{\ensuremath{\frac{1}{2}}}
\newcommand{\poly}{\operatorname{poly}}
\newcommand{\reals}{{\mathbb R}}
\newcommand{\expect}[2][]{\text{\bf E}_{#1}\left [#2\right]}
\newcommand{\prob}[2][]{\text{\bf Pr}_{#1}\left [#2\right]}
\newenvironment{properties}[2][0]
{
\begin{enumerate} \setcounter{enumi}{#1}}{\end{enumerate}}
\newcommand{\mynote}[1]{{\sc\bf{[#1]}}}
\newcommand{\dmax}{d_{\mbox{\textup{\footnotesize{max}}}}}
\newcommand{\out}{\operatorname{out}}
\newcommand{\cro}{\operatorname{cr}}
\newcommand{\supp}{\mbox{supp}}
\begin{document}


\title{An Algorithm for the Graph Crossing Number Problem}
\author{Julia Chuzhoy\thanks{Toyota Technological Institute, Chicago, IL
60637. Email: {\tt cjulia@ttic.edu}. Supported in part by NSF CAREER award CCF-0844872.}}

\begin{titlepage}

\maketitle
\thispagestyle{empty}
\begin{abstract}
We study the \MCN problem: given an $n$-vertex graph $G$, the goal is to find a drawing of $G$ in the plane with minimum number of edge crossings. This is one of the central problems in topological graph theory, that has been studied extensively over the past three decades. The first non-trivial efficient algorithm for the problem, due to Leighton and Rao, achieved an $O\left (n\log^4n\right )$-approximation for bounded degree graphs. This algorithm has since been improved by poly-logarithmic factors, with the best current approximation ratio standing on $O\left (n\cdot \poly(d)\cdot \log^{3/2}n\right )$ for graphs with maximum degree $d$. In contrast, only APX-hardness is known on the negative side.

In this paper we present an efficient randomized algorithm to find a drawing of any $n$-vertex graph $G$ in the plane with $O\left (\opt^{10}\cdot \poly(d\cdot \log n)\right )$ crossings, where $\opt$ is the number of crossings in the optimal solution, and $d$ is the maximum vertex degree in $G$. This result implies 
an $\tilde{O}\left (n^{9/10}\cdot \poly(d)\right )$-approximation for \MCN, thus breaking the long-standing $\tilde{O}(n)$-approximation barrier for bounded-degree graphs. 
\end{abstract}

\end{titlepage}

\section{Introduction}

A drawing of a graph $G$ in the plane is a mapping, in which every vertex of $G$ is mapped into a point in the plane, and every edge into a continuous curve connecting the images of its endpoints. 
We assume that no three curves meet at the same point, and no curve contains an image of any vertex other than its endpoints.
A \emph{crossing} in such a drawing is a point where the images of two edges intersect, and the \emph{crossing number} of a graph $G$, denoted by $\optcro{G}$, is the smallest number of crossings achievable by any drawing of $G$ in the plane. The goal in the \MCN problem is to find a drawing of the input graph $G$ with minimum number of crossings.
We denote by $n$ the number of vertices in $G$, and by $\dmax$ its maximum vertex degree.

The concept of the graph crossing number dates back to 1944, when P\'al Tur\'an has posed the question of determining the crossing number of the complete bipartite graph $K_{m,n}$. This question was motivated by improving the performance of workers at a brick factory, where Tur\'an has been working at the time (see Tur\'an's account in \cite{turan_first}). Later, Anthony Hill (see~\cite{Guy-complete-graphs}) has posed the question of computing the crossing number of the complete graph $K_n$, and Erd\"{o}s and Guy~\cite{erdos_guy73} noted that
\emph{``Almost all questions one can ask about crossing numbers remain unsolved.''}
Since then, the problem has become a subject of intense study, with hundreds of papers written on the subject (see, e.g. the extensive bibliography maintained by Vrt'o \cite{vrto_biblio}.) 
Despite this enormous stream of results and ideas, some of the most basic questions about the crossing number problem remain unanswered.
For example, the crossing number of $K_{11}$ was established just a few years ago (\cite{K11}), while the answer for $K_t, t\geq 13$, remains elusive.
We note that in general $\optcro{G}$ can be as large as $\Omega(n^4)$, for example for the complete graph. In particular, one of the famous results in this area, due to Ajtai et al.~\cite{ajtai82} and Leighton~ \cite{leighton_book} states that if $|E(G)|\geq 4n$, then $\optcro{G}=\Omega(|E(G)|^3/n^2)$.

In this paper we focus on the algorithmic aspect of the problem.
The first non-trivial algorithm for \MCN was obtained by Leighton and Rao \cite{LR},
who combined their breakthrough result on balanced separators with the techniques of Bhatt and Leighton~\cite{bhatt84} for VLSI design, to obtain an algorithm that finds a drawing of any bounded-degree $n$-vertex graph with at most $O(\log^4 n) \cdot (n + \optcro{G})$ crossings.
This bound was later improved to $O(\log^3 n) \cdot (n+\optcro{G})$ by Even, Guha and Schieber \cite{EvenGS02}, and the new approximation algorithm of Arora, Rao and Vazirani~\cite{ARV} for Balanced Cut  gives a further improvement to $O(\log^2 n) \cdot (n+\optcro{G})$, thus implying an $O(n \cdot \log^2 n)$-approximation for \MCN on bounded-degree graphs. This result can also be extended to general graphs with maximum vertex degree $\dmax$, where the approximation factor becomes  $O(n\cdot\poly(\dmax)\cdot \log^2n)$. Chuzhoy, Makarychev and Sidiropoulos~\cite{CMS10} have recently improved this result to an $O(n\cdot \poly(\dmax)\cdot \log^{3/2} n)$-approximation.
On the negative side, the problem was shown to be NP-complete by Garey and Johnson \cite{crossing_np_complete}, and remains NP-complete even on cubic graphs~\cite{Hlineny06a}.
More surprisingly, even in the very restricted case, where the input graph $G$ is obtained by adding a single edge to a planar graph, the problem is still NP-complete~\cite{cabello_edge}.
The NP-hardness proof of ~\cite{crossing_np_complete}, combined with the inapproximability result for Minimum Linear-Arrangement \cite{Ambuhl07}, implies that there is no PTAS for \MCN unless NP has randomized subexponential time algorithms.

To summarise, although current lower bounds do not rule out the possibility of a constant-factor approximation for the problem, the state of the art, prior to this work, only gives an $\tilde O(n\cdot \poly(\dmax))$-approximation. In view of this glaring gap in our understanding of the problem, a natural question is whether we can obtain good algorithms for the case where the optimal solution cost is low --- arguably, the most interesting setting for this problem. 
A partial answer was given by Grohe~\cite{Grohe04}, who showed that the problem is fixed-parameter tractable. Specifically, Grohe designed an exact $O(n^2)$-time algorithm, for the case where the optimal solution cost is bounded by a constant. Later, Kawarabayashi and Reed \cite{KawarabayashiR07} have shown a linear-time algorithm for the same setting. Unfortunately, the running time of both algorithms depends super-exponentially on the optimal solution cost.

Our main result is an efficient randomized algorithm, that, given any $n$-vertex graph with maximum degree $\dmax$, 
produces a drawing of $G$ with $O\left ((\optcro{G})^{10}\cdot\poly(\dmax\cdot \log n)\right )$ crossings with high probability. In particular, we obtain an $O\left (n^{9/10}\cdot \poly(\dmax\cdot \log n)\right )$-approximation for general graphs, and an 
$\tilde{O}(n^{9/10})$-approximation for bounded-degree graphs, thus breaking the long standing barrier of $\tilde{O}(n)$-approximation for this setting.

We note that many special cases of the \MCN problem have been extensively studied, with better approximation algorithms known for some.
Examples include $k$-apex graphs~\cite{crossing_apex,CMS10}, bounded genus graphs~\cite{BorozkyPT06,crossing_genus,crossing_projective,crossing_torus,CMS10} and minor-free graphs~\cite{WoodT06}.
Further overview of work on \MCN can be found in the expositions of Richter and Salazar \cite{richter_survey}, Pach and T\'{o}th \cite{pach_survey}, Matou\v{s}ek \cite{matousek_book}, and Sz\'{e}kely \cite{szekely_survey}.

\noindent {\bf Our results and techniques.}
Our main result is summarized in the following theorem.

\begin{theorem}\label{theorem: main-crossing-number}
There is an efficient randomized algorithm, that, given any $n$-vertex graph $G$ with maximum degree $\dmax$, finds a drawing of $G$ in the plane with $O\left ((\optcro{G})^{10}\cdot \poly(\dmax\cdot \log n)\right )$ crossings with high probability.
\end{theorem}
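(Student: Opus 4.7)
The plan is to reduce \MCN to the \MP problem---find a minimum set $E'$ of edges whose removal makes $G$ planar---and then reinsert the removed edges into a planar drawing of $G-E'$. Since any crossing in the optimal drawing of $G$ can be eliminated by deleting one of the two crossing edges, we have $\optmp{G}\le \optcro{G}$. So if we can compute a planarizing set $E'$ of size at most $K=(\optcro{G})^{c}\cdot \poly(\dmax\log n)$ for some small constant $c$, we fix a planar drawing of $G-E'$ and insert the edges of $E'$ one by one. For each insertion, the optimal number of new crossings equals the length of a shortest dual-graph path respecting the rotations at the endpoints, computable in polynomial time. The crossings among inserted edges are at most $O(K^{2}\cdot \dmax^{2})$, and the crossings between inserted edges and the planar skeleton are controlled by charging them to the optimal drawing of $G$: the edges of $E'$ participate in at most $\optcro{G}$ crossings there, so an appropriately chosen planar embedding of $G-E'$ (ideally close to the one induced by the optimal drawing) yields a total of $(\optcro{G})^{O(1)}\cdot \poly(\dmax\log n)$ crossings.

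The core of the work, and the main obstacle, is designing an approximation algorithm for \MP with a guarantee that is polynomial in $\optmp{G}$ (times $\poly(\dmax\log n)$) rather than linear in $n$. Separator-based methods as in Leighton--Rao, Even--Guha--Schieber, and \cite{CMS10} inevitably lose a factor of $\tilde\Omega(n)$ at the stitching step, because they pay crossings proportional to the size of each separator regardless of how much non-planar content actually crosses it. The strategy is to construct, via multicommodity flow or well-linked decomposition tools, a ``skeleton'' subgraph $H\subseteq G$ of size $\poly(\optcro{G},\dmax,\log n)$ that localizes all the non-planarity: every edge of $G-H$ can be routed through $H$ along an essentially planar route, while the few edges inside $H$ carry all obstructions to planarity. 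To certify such a skeleton exists when $\optcro{G}$ is small, one uses that the edges participating in crossings of the hypothetical optimal drawing form a set of size $O(\optcro{G})$, and around them one can find a small neighborhood which, together with sparse cuts produced by approximate flow/cut algorithms, contains all non-planar content. Recursively planarizing $H$ and adding the few edges interior to $H$ back via single-edge insertion yields the target size for $E'$.

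With the approximate planarizing set $E'$ in hand, one runs the single-edge insertion subroutine $|E'|$ times and verifies the final crossing bound. The exponent $10$ arises as the product of losses in each step: the \MCN-to-\MP reduction contributes a roughly quadratic loss in $|E'|$ at edge insertion, the iterative \MP algorithm contributes another polynomial loss (each recursive level may inflate the planarizing set by a $\poly(\dmax,\log n)$ factor while only partially reducing the problem), and charging crossings between inserted edges and the skeleton to the optimal drawing contributes another polynomial factor. The hardest part, and where the bulk of the new technical work must lie, is establishing the existence and efficient construction of the skeleton $H$: this is precisely the step where all previous approaches lose the factor of $n$, and overcoming it requires combining structural theorems about near-planar graphs with approximation algorithms for cut/flow problems in a way that is sensitive to how crossings distribute across balanced decompositions.
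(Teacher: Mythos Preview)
Your high-level reduction is right and matches the paper: solve \MP approximately to get a planarizing set $E^{*}$, then reinsert $E^{*}$ into a planar drawing of $G\setminus E^{*}$. The paper handles the reinsertion step by quoting Theorem~\ref{thm:CMS10} (from~\cite{CMS10}) as a black box, which gives a drawing with $O\!\left(\dmax^{3}\,|E^{*}|\,(\optcro{G}+|E^{*}|)\right)$ crossings; combined with $|E^{*}|=O\!\left((\optcro{G})^{5}\poly(\dmax\log n)\right)$ from Theorem~\ref{thm:main}, this yields the exponent~$10$. Your own reinsertion analysis has a gap: single-edge insertion via dual shortest paths is only as good as the planar embedding of $G\setminus E^{*}$ you start from, and ``choose one close to the optimal drawing'' is not something you can do without knowing the optimal drawing. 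The content of~\cite{CMS10} is precisely to make this step work, and it is not trivial; you should invoke it rather than sketch an ad~hoc charging argument (your claim that ``the edges of $E'$ participate in at most $\optcro{G}$ crossings in the optimal drawing'' is false in general, since $E'$ is the set \emph{you} produced, not the one the optimum uses).

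The more serious gap is in the \MP half. Your notion of ``skeleton'' is inverted relative to what the paper does and to what can be made to work. You want a \emph{small} subgraph $H$ that \emph{contains all the non-planarity}; but the set of edges that cross in the optimum is already of size $O(\optcro{G})$, and locating it is exactly the problem---there is no obvious flow/cut certificate for ``this edge must cross''. The paper's skeleton $K$ is the opposite object: a (not small) subgraph whose edges are \emph{good} (they do not cross in the optimal drawing) and which is \emph{rigid} (unique planar drawing), so that the remaining graph $G\setminus K$ breaks into pieces each of size at most a $(1-1/\rho)$-fraction, on which one recurses with the faces of $K$ as bounding boxes. Building such a $K$ is what requires the machinery you do not mention: a $\rho$-balanced $\alpha$-well-linked bipartition (so that random flow paths between the two sides give, with high probability, a good rigid skeleton), a conflict-resolution/assignment step (reduced to Min~Uncut) to place the pieces into faces, and---crucially---a \emph{graph contraction} mechanism to handle ``nasty'' vertex sets (large planar chunks with tiny boundary) that obstruct the existence of balanced well-linked cuts. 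None of these ingredients appears in your plan, and without the last one the balanced well-linked partition step simply fails on some inputs.
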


Combining this theorem with the algorithm of Even et al.~\cite{EvenGS02}, we obtain the following corollary.

\begin{corollary}\label{corollary: main-approx-crossing-number}
There is an efficient randomized $O\left (n^{9/10}\cdot \poly(\dmax\cdot \log n)\right )$-approximation algorithm for \MCN.
\end{corollary}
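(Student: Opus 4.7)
The plan is to run two algorithms in parallel and return whichever produces the drawing with fewer crossings: the algorithm of Theorem~\ref{theorem: main-crossing-number}, which yields a drawing with $O\bigl((\optcro{G})^{10}\cdot \poly(\dmax\cdot \log n)\bigr)$ crossings, and the $\tilde O(n\cdot \poly(\dmax))$-approximation derived from Even, Guha and Schieber~\cite{EvenGS02}, which yields a drawing with $O(\log^3 n)\cdot (n+\optcro{G})\cdot \poly(\dmax)$ crossings. Neither algorithm requires advance knowledge of $\optcro{G}$, so comparing the two outputs is well defined, and the composite algorithm remains randomized polynomial time. The first bound is strong when $\optcro{G}$ is small, the second when $\optcro{G}$ is large, and the corollary should follow by balancing the two regimes.

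To make this quantitative, let $\beta=\poly(\dmax\cdot \log n)$ be a single factor absorbing all the polylogarithmic and degree overheads appearing in both bounds. The number of crossings in the returned drawing is then at most
\[
\beta\cdot \min\bigl\{(\optcro{G})^{10},\; n+\optcro{G}\bigr\},
\]
so the approximation ratio is bounded by $\beta\cdot \min\bigl\{(\optcro{G})^{9},\; (n+\optcro{G})/\optcro{G}\bigr\}$. As $\optcro{G}$ varies, the first term is monotonically increasing and the second is monotonically decreasing, so the minimum is maximized at their crossing point $(\optcro{G})^{10}\approx n$, i.e.\ $\optcro{G}\approx n^{1/10}$, where each term equals roughly $n^{9/10}$. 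A short case split (on whether $\optcro{G}$ is below or above $n^{1/10}$) confirms that the minimum is $O(n^{9/10})$ uniformly in $\optcro{G}$, yielding the desired $O\bigl(n^{9/10}\cdot \poly(\dmax\cdot \log n)\bigr)$-approximation.

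Two minor technical points need attention. First, the degenerate case $\optcro{G}=0$ is handled up front by a linear-time planarity test: if $G$ is planar, output a planar embedding directly. Second, one must verify that the polylogarithmic and degree-dependent factors produced by the two underlying algorithms are indeed absorbed by a common $\poly(\dmax\cdot \log n)$ factor $\beta$, which is a routine bookkeeping step. No substantive obstacle is expected; all the difficulty has been concentrated in Theorem~\ref{theorem: main-crossing-number}, and the corollary reduces to a standard ``choose the better of two regimes'' trade-off.
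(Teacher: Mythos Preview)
Your proposal is correct and matches the paper's own proof essentially verbatim: the paper also runs the algorithm of Theorem~\ref{theorem: main-crossing-number} alongside the Even--Guha--Schieber algorithm (extended to general degree via~\cite{CMS10}), outputs the better drawing, and does the same case split at $\optcro{G}=n^{1/10}$. The only cosmetic difference is that the paper quotes $O(\log^2 n)$ rather than $O(\log^3 n)$ for the EGS bound, which is immaterial once absorbed into $\poly(\dmax\cdot\log n)$.
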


We now give an overview of our techniques. Instead of directly solving the \MCN problem, it is more  convenient to work with a closely related problem -- \MP.
In this problem, given a graph $G$, the goal is to find a minimum-cardinality subset $E^*$ of edges, such that the graph $G\setminus E^*$ is planar.
The two problem are closely related, and this connection was recently formalized by~\cite{CMS10}, in the following theorem:

\begin{theorem}[\cite{CMS10}]\label{thm:CMS10}
Let $G=(V,E)$ be any $n$-vertex graph of maximum degree $d_{\max}$, and suppose we are given a subset $E^*\subset E$ of edges, $|E^*|=k$, such that $G\setminus E^*$ is planar.
Then there is an efficient algorithm to find a drawing of $G$ in the plane with at most $O\left(d_{\max}^3 \cdot k \cdot (\optcro{G} + k)\right)$ crossings.
\end{theorem}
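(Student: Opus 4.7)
The plan is to construct the drawing of $G$ by first drawing $H = G \setminus E^*$ planarly and then inserting the $k$ edges of $E^*$ one at a time, aiming for a per-edge budget of $O(d_{\max}^3 \cdot (\optcro{G} + k))$ crossings with $H$-edges. To motivate that such a budget is achievable in principle, consider what an omniscient algorithm with access to an optimal drawing $\phi^*$ of $G$ could do: restrict $\phi^*$ to $H$, yielding a drawing $\phi^*_H$ with at most $\optcro{G}$ crossings (all among $H$-edges); planarize these crossings by inserting degree-$4$ dummy vertices to obtain a planar graph $\hat H$ with a planar embedding $\hat\phi$; and then draw the edges of $E^*$ along their $\phi^*$-curves. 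In the resulting picture the "dual distance" in $\hat H$ needed to route each $e \in E^*$ equals the number of crossings $e$ participates in under $\phi^*$, and these sum to at most $2\optcro{G}$ across all of $E^*$.

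The actual algorithm does not have access to $\phi^*$. Instead it computes \emph{any} planar embedding $\psi$ of $H$ in polynomial time, and then, for each edge $e = (u,v) \in E^*$, routes $e$ as a curve whose combinatorial description is a shortest path in the dual graph of $\psi$ from a face incident to $u$ to a face incident to $v$. The number of $H$-edges crossed by $e$ equals the length of this dual path, and crossings among $E^*$-edges contribute at most $\binom{k}{2}$ additional crossings in total, well within the overall budget.

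The main technical step is to bound the dual distance used for each $e$ in $\psi$, which we do by comparing $\psi$ with the "good" embedding of $H$ implicit in $\hat\phi$. Any two planar embeddings of the same $2$-connected planar graph are related by a sequence of flips at $2$-separations (as captured by the SPQR-tree decomposition); since $H$ has maximum degree $d_{\max}$, each such flip can alter the dual distance used to route a given $e$ by only $O(d_{\max}^2)$ additional crossings, because only edges incident to the two separator vertices can become "misrouted" under the flip. Amortizing over the flips relevant to $e$ and bounding their total effect via the $\optcro{G}$ crossings witnessed by $\phi^*$ then yields the per-edge bound $O(d_{\max}^3 \cdot (\optcro{G} + k))$, and summing over the $k$ edges of $E^*$ gives the theorem. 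The main obstacle is precisely this comparison of embeddings: controlling how many flips can simultaneously affect the routing of a single $e$, and handling the additional complications of cut vertices, bridges, and non-biconnected components of $H$, all while keeping the dependence on $d_{\max}$ polynomial rather than exponential.
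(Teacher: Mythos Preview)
This theorem is not proved in the present paper: it is quoted verbatim from \cite{CMS10} and used as a black box (see the sentence immediately preceding its statement, and the way it is invoked to derive Theorem~\ref{theorem: main-crossing-number} from Theorem~\ref{thm:main}). So there is no ``paper's own proof'' to compare your proposal against.

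That said, your high-level strategy---fix a planar embedding $\psi$ of $H=G\setminus E^*$, route each $e\in E^*$ along a shortest dual path, and bound the cost by comparing $\psi$ to the embedding of $H$ induced by an optimal drawing $\phi^*$ of $G$---is the right shape, and is indeed how results of this type are proved. The gap is in the comparison step. You assert that any two planar embeddings of $H$ differ by a sequence of flips at $2$-separations, that each flip changes the dual distance for a fixed $e$ by at most $O(d_{\max}^2)$, and that ``amortizing over the flips relevant to $e$'' then gives the bound. But you never explain why the number of flips relevant to a given $e$ should be $O(\optcro{G}+k)$: the number of $2$-separations in $H$ is governed by the structure of $H$, not by $\optcro{G}$, and a single dual path can traverse many split components. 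Without a mechanism tying the flips that matter for $e$ to the crossings in $\phi^*$, the amortization has nothing to amortize against.

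The actual argument in \cite{CMS10} does not count flips. It controls the discrepancy between $\psi$ and the drawing of $H$ induced by $\phi^*$ through the notion of \emph{irregular vertices and edges}---vertices whose local rotation differs between the two drawings, and edges on $2$-paths whose endpoint orientations disagree---and shows (this is exactly Lemma~\ref{lem:bad3} in the appendix of the present paper, also attributed to \cite{CMS10}) that for a $3$-connected planar graph the number of such irregularities is $O(\cro_{\phi^*}(H))$. The $d_{\max}$ factors then enter when one reroutes around irregular vertices and when one reduces to the $3$-connected case. If you want to complete your sketch, that lemma is the missing ingredient; the flip-counting heuristic, as written, does not close the argument.
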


Therefore, in order to solve the \MCN problem, it is sufficient to find a good solution to the \MP problem on the same graph.
We note that an $O(\sqrt {n\log n}\cdot \dmax)$-approximation algorithm for the \MP problem follows easily from  the Planar Separator theorem of Lipton and Tarjan~\cite{planar-separator} (see e.g.~\cite{CMS10}), and we are not aware of any other algorithmic results for the problem. 
 Our main technical result is the proof of the following theorem, which, combined with Theorem~\ref{thm:CMS10}, implies Theorem~\ref{theorem: main-crossing-number}.

\begin{theorem}\label{thm:main}
There is an efficient randomized algorithm, that, 
given an $n$-vertex graph $G=(V,E)$ with maximum degree $\dmax$, finds a subset $E^*\sse E$ of edges, such that $G\setminus E^*$ is planar, and  with high probability $|E^*| = O\left ((\optcro{G})^5\poly(\dmax \cdot \log n)\right )$.
\end{theorem}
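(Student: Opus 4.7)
The natural approach is a recursive decomposition guided by sparse cuts, combined with a direct structural treatment of pieces that cannot be further decomposed. Before describing the plan, I note that the naive bound $\optmp{G} \le \optcro{G}$ (delete one edge per crossing in an optimal drawing) is not algorithmically useful because we do not have access to an optimal drawing; the challenge is to find a planarizing set of size $\poly(\optcro{G})$ without having one.

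The plan is as follows. First, I would recursively invoke an ARV-style approximation for sparsest cut on $G$. At each step, if a balanced cut with sparsity below some threshold (depending on $\dmax$, $\log n$, and a well-linkedness parameter $\alpha$) is found, add its edges to the candidate planarization set $E^*$ and recurse on each side. The recursion terminates when every remaining piece is $\alpha$-well-linked for $\alpha = \poly(\optcro{G} \cdot \dmax \cdot \log n)$. The crucial amortization is to argue that each edge added to $E^*$ during this phase can be charged, via the sparse-cut structure, to some distinct ``crossing event'' in the (unknown) optimal drawing. Since each level cuts off a constant fraction of the remaining piece, the recursion depth is $O(\log n)$, and a careful accounting should bound the number of cut edges by $\poly(\optcro{G}) \cdot \poly(\dmax \cdot \log n)$.

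Next, I would handle each remaining well-linked piece $H$ directly. A key structural fact I would try to establish is that if $H$ is $\alpha$-well-linked and admits a drawing with $k$ crossings, then $\alpha$ must be bounded by a polynomial in $k$ and $\dmax$. Intuitively, $\alpha$-well-linkedness forces many edge-disjoint paths between any balanced bipartition of a large terminal set, and by topological arguments (e.g., invoking the grid minor theorem of Robertson--Seymour, or crossing lemma style counting in conjunction with the Ajtai--Chvátal--Newborn--Szemerédi bound) each such routing pattern contributes to the crossing number of $H$. Once this parameter is bounded, the effective ``planarizable core'' of $H$ is small enough that a planarizing set of size $\poly(\optcro{H}, \dmax, \log n)$ can be extracted, either by enumeration or by further structural decomposition.

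The main obstacle I anticipate is the charging scheme in the decomposition phase: without access to the optimal drawing, one must associate cut edges to crossings purely through graph-theoretic witnesses. I expect this requires either an LP relaxation that captures a fractional version of planarity, or a dual argument based on multicommodity flows adapted to the crossing-number setting. The appearance of the fifth power of $\opt$ in the target bound suggests that polynomial loss is incurred at several distinct points, likely (i) in translating the well-linkedness parameter into a crossing count, (ii) in the depth of the recursive decomposition, and (iii) in extracting the planarizing set from well-linked cores; tracking these losses carefully, rather than proving each is tight, will be the bulk of the technical work.
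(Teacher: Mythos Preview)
Your proposal has a genuine gap at both phases. In the second phase, your structural claim is false: an $\alpha$-well-linked graph can have crossing number zero. The $k\times k$ grid is $\Theta(1)$-well-linked (with respect to, say, the vertices of its first row) and is planar, so no inequality of the form ``$\alpha$ is bounded by a polynomial in $\optcro{H}$ and $\dmax$'' can hold. Well-linkedness by itself gives you no handle on a planarizing set, and your ``planarizable core is small'' conclusion does not follow. In the first phase, the charging of sparse-cut edges to ``distinct crossing events'' is not an argument but a hope; sparse cuts are a purely graph-theoretic object and there is no mechanism that associates them to crossings in an unknown drawing. The paper does not charge cut edges to crossings at all.

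The paper's route is quite different. It does use $\alpha$-well-linked \emph{bi-partitions} $(A,B)$, but not to terminate a decomposition---rather, the well-linkedness on each side guarantees low-congestion flows between the cut edges, from which one samples a few random paths. With high probability these paths avoid all edges that participate in crossings in the optimal drawing (because there are at most $2\,\optcro{G}$ such edges and the congestion is $\poly(\dmax\log n)$), and their union forms a rigid \emph{skeleton} $K$ with a unique planar drawing. Since $K$ consists of ``good'' edges, the optimal drawing induces this unique drawing of $K$, and every connected component of $G\setminus K$ sits inside a single face of $K$; one then recurses on each component with the face boundary as a bounding box, solving a Min-Uncut instance to assign components to faces. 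The obstruction to finding such a bi-partition is a \emph{nasty set}---a large planar chunk with few interface vertices---and the paper handles these by a separate graph-contraction step that replaces the chunk by a small grid, shrinking the instance. None of this machinery (skeletons, bounding boxes, face assignment, nasty-set contraction) is present in your outline, and it is where all of the actual work lies.
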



We now describe our main ideas and techniques. Given an optimal solution $\phi$ to the \MCN problem on graph $G$, we say that an edge $e\in E(G)$ is \emph{good} iff it does not participate in any crossings in $\phi$. For convenience, we consider a slightly more general version of the problem, where, in addition to the graph $G$, we are given a simple cycle $X\sse G$, that we call the \emph{bounding box}, and our goal is to find a drawing of $G$, such that the edges of $X$ do not participate in any crossings, and all vertices and edges of $G\setminus X$ appear on the same side of the closed curve to which $X$ is mapped. In other words, if $\gamma_X$ is the simple closed curve to which $X$ is mapped, and $F_1,F_2$ are the two faces into which $\gamma_X$ partitions the plane, then one of the faces $F\in \set{F_1,F_2}$ must contain the drawings of all the edges and vertices of $G\setminus X$. We call such a drawing \emph{a drawing of $G$ inside the bounding box $X$}. Since we allow $X$ to be empty, this is indeed a generalization of the \MCN problem. In fact, from Theorem~\ref{thm:CMS10}, it is enough to find what we call a \emph{weak solution} to the problem, namely, a small-cardinality subset $E^*$ of edges with $E^*\cap E(X)=\emptyset$, such that there is a planar drawing of the remaining graph $G\setminus E^*$ inside the bounding box $X$. Our proof consists of three major ingredients that we describe below.

The algorithm is iterative. Throughout the algorithm, we gradually remove some edges from the graph, and gradually build a planar drawing of the remaining graph. One of the central notions we use is that of \emph{graph skeletons}. A skeleton $K$ of graph $G$ is simply a sub-graph of $G$, that contains the bounding box $X$, and has a unique planar drawing (for example, it may be convenient to think of $K$ as being $3$-vertex connected). Given a skeleton $K$, and a small subset $E'$ of edges (that we eventually remove from the graph), we say that $K$ is an \emph{admissible skeleton} iff all the edges of $K$ are good, and every connected component of $G\setminus (K\cup E')$ only contains a small number of vertices (say, at most $(1-1/\rho)n$, for some balance parameter $\rho$). Since $K$ has a unique planar drawing, and all its edges are good, we can find its unique planar drawing efficiently, and it must be identical to the drawing $\phi_K$ of $K$ induced by the optimal solution $\phi$. Let $\fset$ be the set of faces in this drawing. Since $K$ only contains good edges, for each connected component $C$ of $G\setminus (K\cup E')$, all edges and vertices of $C$ must be drawn completely inside one of the faces $F_C\in \fset$ in $\phi$. Therefore, if, for each such connected component $C$, we can identify the face $F_C$ inside which it needs to be embedded, then we can recursively solve the problems induced by each such component $C$, together with the bounding box formed by the boundary of $F_C$. In fact, given an admissible skeleton $K$, we show that we can find a good assignment of the connected components of $G\setminus (K\cup E')$ to the faces of $\fset$, so that, on the one hand, all resulting sub-problems have solutions of total cost at most $\optcro{G}$, while, on the other hand, if we combine weak solutions to these sub-problems with the set $E'$ of edges, we obtain a feasible weak solution to the original problem. The assignment of the components to the faces of $\fset$ is done by reducing the problem to an instance of the Min-Uncut problem. We defer the details of this part to later sections, and focus here on finding an admissible skeleton $K$.

Our second main ingredient is the use of well-linked sets of vertices, and well-linked balanced bi-partitions. Given a set $S$ of vertices, let $G[S]$ be the sub-graph of $G$ induced by $S$, and let $\Gamma(S)$ be the subset of vertices of $S$ adjacent to the edges in $E(S,\nots)$. Informally, we say that $S$ is $\alpha$-well-linked, iff every pair of vertices in $\Gamma(S)$ can send one flow unit to each other, with overall congestion bounded by $\alpha|\Gamma(S)|$. We say that a bi-partition $(S,\nots)$ of the vertices of $G$ is $\rho$-balanced and $\alpha$-well-linked, iff $|S|,|\nots|\geq n/\rho$, and both $S$ and $\nots$ are $\alpha$-well-linked. Suppose we can find a $\rho$-balanced, $\alpha$-well linked bi-partition of $G$ (it is convenient to think of $\rho,\alpha=\poly(\dmax \cdot \log n)$). In this case, we show a randomized algorithm, that w.h.p. constructs an admissible skeleton $K$, as follows. Let $\pset,\pset'$ be the collections of the flow-paths in $G[S]$ and $G[\nots]$ respectively, guaranteed by the well-linkedness of $S$ and $\nots$. Since the congestion on all edges is relatively low, only a small number of paths in $\pset\cup \pset'$ contain bad edges. Therefore, if we choose a random collection of paths from $\pset$ and $\pset'$ with appropriate probability, the resulting skeleton $K$, obtained from the union of these paths, is unlikely to contain bad edges. Moreover, we can show that w.h.p., every connected component of $G\setminus K$ only contains a small number of edges in $E(S,\nots)$. It is still possible that some connected component $C$ of $G\setminus K$ contains many vertices of $G$. However, only one such component $C$ may contain more than $n/2$ vertices. Let $E'$ be the subset of edges in $E(S,\nots)$, that belong to $C$. Then, since the original cut $(S,\nots)$ is $\rho$-balanced, once we remove the edges of $E'$ from $C$, it will decompose into small enough components. This will ensure that all connected components of $G\setminus (K\cup E')$ are small enough, and $K$ is admissible.

Using these ideas, given an efficient algorithm for computing $\rho$-balanced $\alpha$-well-linked cuts, we can obtain an algorithm for the \MCN problem. Unfortunately, we do not have an efficient algorithm for computing such cuts. We can only compute such cuts in graphs that do not contain a certain structure, that we call \emph{nasty vertex sets}. Informally, a subset $S$ of vertices is a nasty set, iff $|S|>>|E(S,\nots)|^2$, and the sub-graph $G[S]$ induced by $S$ is planar. We show an algorithm, that, given any graph $G$, either produces a $\rho$-balanced $\alpha$-well linked cut, or finds a nasty set $S$ in $G$. Therefore, if $G$ does not contain any nasty sets, we can compute the $\rho$-balanced $\alpha$-well-linked bi-paritition of $G$, and hence obtain an algorithm for \MCN.
Moreover, given {\bf any} graph $G$, if our algorithm fails to produce a good solution to \MCN on $G$, then w.h.p. it returns a nasty set of vertices in $G$.

The third major component of our algorithm is handling the nasty sets. Suppose we are given a nasty set $S$, and assume for now that it is also $\alpha$-well-linked for some parameter $\alpha=\poly(\log n)$.  Let $\Gamma(S)$ denote the endpoints of the edges in $E(S,\nots)$ that belong to $S$, and let $|\Gamma(S)|=z$. Recall that $|S|>>z^2$, and $G[S]$ is planar. Intuitively, in this case we can use the $z\times z$ grid to ``simulate'' the sub-graph $G[S]$. More precisely, we replace the sub-graph $G[S]$ with the $z\times z$ grid $Z_S$, and identify the vertices of the first row of the grid with the vertices in $\Gamma(S)$. We call the resulting graph the \emph{contracted graph}, and denote it by $G_{|S}$. Notice that the number of vertices in $G_{|S}$ is smaller than that in $G$.
When $S$ is not well-linked, we perform a simple well-linked decomposition procedure to partition $S$ into a collection of well-linked subsets, and replace each one of them with a grid separately. 
Given a drawing of the resulting contracted graph $G_{|S}$, we say that it is a \emph{canonical drawing} if the edges of the newly added grids do not participate in any crossings. Similarly, we say that a planarizing subset $E^*$ of edges is a weak canonical solution for $G_{|S}$, iff the edges of the grids do not belong to $E^*$.
We show that the crossing number of $G_{|S}$ is bounded by $\poly(\dmax\cdot \log n)\optcro{G}$, and this bound remains true even for canonical drawings. On the other hand, we show that given any weak canonical solution $E^*$ for $G_{|S}$, we can efficiently find a weak solution of comparable cost for $G$. Therefore, it is enough to find a weak feasible canonical solution for graph $G_{|S}$.
However, even the contracted graph $G_{|S}$ may still contain nasty sets. We then show that, given any nasty set $S'$ in $G_{|S}$, we can find another subset $S''$ of vertices in the original graph $G$, such that the contracted graph $G_{|S''}$ contains fewer vertices than $G_{|S}$. The crossing number of $G_{|S''}$ is again bounded by $\poly(\dmax\cdot \log n)\optcro{G}$ even for canonical drawings, and a weak canonical solution to $G_{|S''}$ gives a weak solution to $G$ as before.

Our algorithm then consists of a number of stages. In each stage, it starts with the current contracted graph $G_{|S}$ (where in the first stage, $S=\emptyset$, and $G_{|S}=G$). It then either finds a good weak canonical solution for problem $G_{|S}$, thus giving a feasible solution to the original problem, or returns a nasty set $S'$ in graph $G_{|S}$. We then construct a new contracted graph $G_{|S''}$, that contains fewer vertices than $G_{|S}$, and becomes the input to the next stage.



\noindent{\bf Organization.} We start with some basic definitions, notation, and general results on cuts and flows in Section~\ref{sec: Prelims}. We then present a more detailed algorithm overview in Section~\ref{sec: overview}. Section~\ref{sec: graph contraction} is devoted to the graph contraction step, and the rest of the algorithm appears in Sections~\ref{sec: alg} and~\ref{sec: iteration}. For convenience, the list of all main parameters appears in Section~\ref{sec: param-list} of Appendix. Our conclusions appear in Section~\ref{sec: conclusions}. 

\label{------------------------------------------------Preliminaries--------------------------------------------}
\section{Preliminaries and Notation}\label{sec: Prelims}
 In order to avoid confusion, throughout the paper, we denote the input graph by $\G$, with $|V(\G)|=n$, and maximum vertex degree $\dmax$. 
 In statements regarding general arbitrary graphs, we will denote them by $G$, to distinguish them from the specific graph $\G$. 

\noindent{\bf General Notation.}
We use the words ``drawing'' and ``embedding'' interchangeably. 
Given any graph $G$, a drawing $\phi$ of $G$, and any sub-graph $H$ of $G$, we denote by $\phi_H$ the drawing of $H$ induced by $\phi$, and by $\cro_{\phi}(G)$ the number of crossings in the drawing $\phi$ of $G$. Notice that we can assume w.l.o.g. that no edge crosses itself in any drawing. For any pair $E_1,E_2\sse E(G)$ of subsets of edges, we denote by $\cro_{\phi}(E_1,E_2)$ the number of crossings in $\phi$ in which the images of edges of $E_1$ intersect the images of edges of $E_2$, and by $\cro_{\phi}(E_1)$ the number of crossings in $\phi$ in which the images of edges of $E_1$ intersect with each other. Given two disjoint sub-graphs $H_1,H_2$ of $G$, we will sometimes write $\cro_{\phi}(H_1,H_2)$ instead of $\cro_{\phi}(E(H_1),E(H_2))$, and $\cro_{\phi}(H_1)$ instead of $\cro_{\phi}(E(H_1))$.
If $G$ is a planar graph, and $\phi$ is a drawing of $G$ with no crossings, then we say that $\phi$ is a \emph{planar} drawing of $G$.
For a graph $G=(V,E)$, and subsets $V'\sse V$, $E'\sse E$ of its vertices and edges respectively, we denote by $G[V']$, $G\setminus V'$, and $G\setminus E'$ the sub-graphs of $G$ induced by $V'$, $V\setminus V'$, and $E\setminus E'$, respectively.

\begin{definition}
Let $\gamma$ be any closed simple curve, and let $F_1,F_2$ be the two faces into which $\gamma$ partitions the plane. Given any drawing $\phi$ of a graph $G$, we say that $G$ is \emph{embedded inside $\gamma$}, iff one of the faces $F\in\set{F_1,F_2}$ contains the images of all edges and vertices of $G$ (the images of the vertices of $G$ may lie on $\gamma$). Similarly, if $C\sse G$ is a simple cycle, then we say that $G$ is embedded inside $C$, iff the edges of $C$ do not participate in any crossings, and $G\setminus E(C)$ is embedded inside $\gamma_C$ -- the simple closed curve to which $C$ is mapped.
\end{definition}

Given a graph $G$ and a bounding box $X$, we define the problem $\pi(G,X)$, that we use extensively.

\begin{definition}
Given a graph $G$ and a simple (possibly empty) cycle $X\sse G$, called the \emph{bounding box}, a \emph{strong solution} for problem $\pi(G,X)$, is a drawing $\psi$ of $G$, in which $G$ is embedded inside the bounding box $X$, and its cost is the number of crossings in $\psi$. A \emph{weak solution} to  problem $\pi(G,X)$ is a subset $E'\sse E(G)\setminus E(X)$ of edges, such that $G\setminus E'$ has a \emph{planar drawing}, in which it is embedded inside the bounding box $X$. \end{definition}

Notice that in order to prove Theorem~\ref{thm:main}, it is enough to find a weak solution for problem $\pi(\G,X_0)$, where $X_0=\emptyset$, of cost $O\left ((\optcro{\G})^5\poly(\dmax \cdot \log n)\right )$.


\begin{definition} For any graph $G=(V,E)$, a subset $V'\sse V$ of vertices is called a $c$-separator, iff $|V'|=c$, and the graph $G\setminus V'$ is not connected. We say that $G$ is $c$-connected iff it does not contain $c'$-separators, for any $0<c'<c$.
 \end{definition}

We will use the following four well-known results:

\begin{theorem} (Whitney~\cite{Whitney})\label{thm:Whitney} Every 3-connected planar graph has a unique planar drawing.
\end{theorem}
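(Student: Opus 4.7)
The plan is to show that the facial cycles of a $3$-connected planar graph are determined by purely combinatorial data, so they are the same in every planar drawing; uniqueness of the drawing then follows by a standard rotation-system argument. The key combinatorial notion is the \emph{peripheral cycle}: a cycle $C \sse G$ is peripheral iff $C$ is an induced subgraph (has no chord) and $G \setminus V(C)$ is connected. Since this property is defined in graph-theoretic terms only, if I can prove that in every planar drawing of a $3$-connected planar graph $G$ the set of facial cycles equals the set of peripheral cycles, the drawing-independence of the face set follows immediately.

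First I would show that every facial cycle is peripheral. Let $\phi$ be a planar drawing of $G$ and let $C$ bound a face $F$. By $3$-connectivity, $G$ is $2$-connected, so the boundary of $F$ is a simple cycle. If $C$ had a chord $e$, then $e$ would have to be drawn inside one of the two regions determined by the Jordan curve $\gamma_C$; the side containing $F$ is empty of edges other than those of $C$, and if $e$ lay on the other side it would contradict $C$ bounding the face $F$. For the non-separating property, suppose $G \setminus V(C)$ had components $A_1, A_2$. Each $A_i$ must be drawn entirely on the non-$F$ side of $\gamma_C$. Since $|V(C)|\geq 3$, one can then locate two vertices of $C$ whose removal disconnects $A_1$ from $A_2$, giving a $2$-separator of $G$ and contradicting $3$-connectivity.

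Conversely, any peripheral cycle $C$ must bound a face in every planar drawing $\phi$. Indeed, $\gamma_C$ partitions the plane into two regions, and since $G\setminus V(C)$ is connected, all of its vertices and edges must lie in one of these two regions. The other region then contains no vertex or edge of $G$, so it is a face of $\phi$, bounded by $C$. Combining the two directions, the set of facial cycles is exactly the set of peripheral cycles, which depends only on the abstract graph $G$.

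Finally, I would deduce uniqueness of the drawing. Knowing the facial cycles determines, for every vertex $v$, which pairs of incident edges bound a common face; these pairings assemble into a cyclic rotation at $v$ (that this is a well-defined cyclic order uses that $G$ is $2$-connected and that each edge lies on exactly two faces). The resulting rotation system, together with the list of facial walks, is a complete combinatorial embedding on the sphere, which is known to determine the drawing up to homeomorphism of the sphere (and hence in the plane, up to the choice of outer face and reflection). The main technical obstacle is the non-separating claim for facial cycles; everything else reduces to careful bookkeeping, but that step is where the $3$-connectivity hypothesis is essential, and it is what fails for merely $2$-connected planar graphs (which can be ``flipped'' at $2$-separators to yield distinct embeddings).
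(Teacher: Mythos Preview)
The paper does not actually prove this theorem: it is quoted in the preliminaries as one of four ``well-known results'' and attributed to Whitney without argument. So there is nothing to compare against; the relevant question is simply whether your sketch is sound.

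Your approach---showing that the facial cycles of a $3$-connected planar graph are exactly its peripheral (induced, non-separating) cycles, and then recovering the rotation system from the face set---is the classical route, essentially Tutte's refinement of Whitney's argument. The converse direction (peripheral $\Rightarrow$ facial) and the final rotation-system step are fine as written.

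There is one genuine slip, though, in the ``facial $\Rightarrow$ no chord'' step. You write that if a chord $e$ of a facial cycle $C$ lay on the non-$F$ side of $\gamma_C$, this ``would contradict $C$ bounding the face $F$.'' It does not: a face boundary can perfectly well have chords drawn on its other side. What actually rules out a chord is $3$-connectivity, just as in your non-separating step. If $e=(u,v)$ is a chord of $C$, then $u,v$ split $C$ into two arcs, each with at least one internal vertex (the graph is simple), and since the $F$-side of $\gamma_C$ is empty, those two arcs are separated in $G\setminus\{u,v\}$ by the Jordan curve $\gamma_{P_1\cup e}$; hence $\{u,v\}$ is a $2$-separator. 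Patch that sentence and the argument goes through.
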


\begin{theorem}(Hopcroft-Tarjan~\cite{planar-drawing})\label{thm:planar drawing}
For any graph $G$, there is an efficient algorithm to determine whether $G$ is planar, and if so, to find a planar drawing of $G$.
\end{theorem}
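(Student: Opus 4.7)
The plan is to first reduce the question to biconnected graphs. A single depth-first search produces the block-cut tree of $G$ in linear time; $G$ is planar iff each biconnected block is planar, and a planar embedding of $G$ can be assembled from embeddings of its blocks by identifying them at the appropriate cut-vertices in any order consistent with the block-cut tree. Before proceeding, I would also invoke Euler's formula to reject $G$ immediately if $|E(G)| > 3|V(G)| - 6$, so that the remaining work is on sparse biconnected graphs.

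On a biconnected graph I would follow the Hopcroft--Tarjan path-addition strategy. A rooted DFS partitions the edge set into tree edges and back edges and yields a ``palm tree''. I would then decompose $E(G)$ into edge-disjoint simple paths, each consisting of a (possibly empty) chain of tree edges followed by exactly one back edge, and order these paths by the DFS number of their starting point. The algorithm processes the paths in this order, maintaining a combinatorial planar embedding of the subgraph processed so far, represented by a rotation system (a cyclic ordering of edges around each vertex, together with a choice of outer face).

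When a new path $P$ is to be added, every pertinent subtree hanging off the current spine carries a collection of back edges that must eventually reach above the base of $P$; these subtrees must be distributed between the ``left'' and ``right'' sides of the spine, and within each side they must admit a consistent ordering. I would maintain this information in a stack-based structure in the style of Hopcroft--Tarjan, or equivalently a PQ-tree as in Booth--Lueker, that records the currently feasible cyclic arrangements and supports the ``reduce'' operation required when a new path is added. If at some point no feasible arrangement exists, the algorithm outputs ``non-planar''; otherwise, after all paths are processed, the final rotation system is a planar embedding of $G$, which can then be converted into a straight-line drawing (e.g.\ by F\'ary's theorem or by Schnyder's method) if a geometric realization is desired.

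The main obstacle is the correctness and efficiency of the admissibility test at each path addition: one must verify that the data structure faithfully represents \emph{all} planar embeddings of the currently processed subgraph up to flips of its biconnected pieces, and that each reduction can be carried out in amortized near-constant time per edge so as to achieve a linear overall bound. For the purposes of this paper only polynomial efficiency is needed, so as a fallback I would note that Demoucron--Malgrange--Pertuiset's fragment-embedding procedure, which at each step picks an unembedded ``fragment'' whose set of admissible faces is smallest and extends the current embedding accordingly, gives an $O(n^2)$ algorithm with a considerably simpler correctness proof based directly on Kuratowski's theorem.
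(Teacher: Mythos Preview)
The paper does not prove this theorem at all: it is listed among four ``well-known results'' and is simply cited to Hopcroft--Tarjan without any proof or sketch. So there is no ``paper's own proof'' to compare against; your proposal goes well beyond what the paper does.

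That said, as a standalone sketch of the Hopcroft--Tarjan approach your outline is accurate in spirit: the reduction to biconnected blocks, the DFS palm tree, the path decomposition, and the left/right stack (or PQ-tree) maintenance are the right ingredients, and your fallback to Demoucron--Malgrange--Pertuiset for a simpler polynomial-time argument is sensible. For the purposes of this paper only the existence of an efficient algorithm is used, so citing the result, as the paper does, is entirely sufficient.
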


\begin{theorem}(Ajtai et al.~\cite{ajtai82}, Leighton \cite{leighton_book})\label{thm: large average degree large crossing number}
Let $G$ be any graph with $n$ vertices and $m\geq 4n$ edges. Then $\optcro{G}=\Omega(m^3/n^2)=\Omega(n)$.
\end{theorem}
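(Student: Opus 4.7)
The plan is to deploy the probabilistic amplification argument of Chazelle--Sharir--Welzl, which boosts a linear Euler-type bound into a cubic one. The starting point is the weak inequality $\optcro{H} \geq |E(H)| - 3|V(H)| + 6$, valid for every graph $H$. This follows directly from Euler's formula: in any drawing of $H$ achieving $\optcro{H}$ crossings, deleting one edge from each crossing pair yields a planar subgraph on $|V(H)|$ vertices, which has at most $3|V(H)| - 6$ edges, so $|E(H)| - \optcro{H} \leq 3|V(H)| - 6$. When the right-hand side is negative the bound is vacuous.

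To amplify this, I would sample a random induced subgraph $G[V']$ by including each vertex of $G$ in $V'$ independently with probability $p \in (0,1]$ to be fixed later. Fix an optimal drawing $\phi$ of $G$, which we may assume is \emph{good}, meaning no two adjacent edges cross; hence every crossing of $\phi$ involves four distinct vertices. Restricting $\phi$ to $G[V']$ produces a valid drawing of $G[V']$ whose crossings are exactly those crossings of $\phi$ all four of whose endpoints landed in $V'$. Call this random number $X'$, so $\optcro{G[V']} \leq X'$. Writing $n' = |V(G[V'])|$ and $m' = |E(G[V'])|$, linearity of expectation gives $\expect{n'} = pn$, $\expect{m'} = p^2 m$, and $\expect{X'} = p^4 \optcro{G}$.

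Applying the weak inequality to $G[V']$ and taking expectations gives $p^4 \optcro{G} \geq \expect{\optcro{G[V']}} \geq p^2 m - 3 p n$, so $\optcro{G} \geq m/p^2 - 3n/p^3$. The essential optimization is the choice of $p$: setting $p = 4n/m$, which lies in $(0,1]$ precisely because the hypothesis $m \geq 4n$ is given, yields $\optcro{G} = \Omega(m^3/n^2) = \Omega(n)$. The only delicate step is the ``good drawing'' reduction used to guarantee that each crossing involves four distinct vertices; it follows from a standard uncrossing argument, namely that if two edges sharing an endpoint cross, swapping their arcs past the crossing point strictly reduces the total crossing count, so an optimal drawing has no such crossings.
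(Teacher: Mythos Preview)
Your argument is correct and is the standard probabilistic proof of the Crossing Number Lemma. The paper itself does not prove this statement: Theorem~\ref{thm: large average degree large crossing number} is listed among the ``well-known results'' that are merely cited (alongside Whitney's uniqueness theorem, Hopcroft--Tarjan planarity testing, and the Lipton--Tarjan separator), so there is no in-paper proof to compare against. Your write-up reproduces the now-classical amplification argument; the computation with $p = 4n/m$ yields $\optcro{G} \ge m^3/(64 n^2)$, and the hypothesis $m \ge 4n$ then gives the $\Omega(n)$ conclusion as stated.
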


\begin{theorem}(Lipton-Tarjan~\cite{planar-separator})\label{thm: planar separator}
Let $G$ be any $n$-vertex planar graph. Then there is a constant $q$, and an efficient algorithm to partition the vertices of $G$ into three sets $A,B,C$, such that $|A|,|C|\geq n/3$, $|B|\leq q\sqrt{n}$, and there are no edges in $G$ connecting the vertices of $A$ to the vertices of $C$.
\end{theorem}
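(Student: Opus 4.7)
My plan follows the classical BFS-tree argument of Lipton and Tarjan. First I would reduce to the case that $G$ is connected: if some connected component has size in $[n/3, 2n/3]$ it already yields the desired partition with $B=\emptyset$; otherwise all components have size below $n/3$ or above $2n/3$, and a greedy grouping by size gives the partition. So assume $G$ is connected, and fix an arbitrary root $r$. Perform BFS from $r$ to obtain layers $L_0=\{r\},L_1,\ldots,L_h$, and let $m$ be the smallest index with $|L_0\cup\cdots\cup L_m|\geq n/2$. By an averaging argument over levels within distance $\sqrt n$ of $L_m$, I can find an ``upper'' level $L_i$ ($i\leq m$) and a ``lower'' level $L_j$ ($j>m$), each of size at most $\sqrt n$, with $m-i,\ j-m\leq \sqrt n$; in particular $j-i\leq 2\sqrt n$.

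Removing $L_i\cup L_j$ disconnects $G$ into three strips: $U=L_0\cup\cdots\cup L_{i-1}$, $M=L_{i+1}\cup\cdots\cup L_{j-1}$, and $D=L_{j+1}\cup\cdots\cup L_h$, with no edges between $U$ and $M\cup D$ or between $M$ and $D$ by the BFS property. If $|M|\leq 2n/3$, then among $\{U,M,D\}$ I can combine the pieces into two groups each of size in $[n/3,2n/3]$, set $B=L_i\cup L_j$ of size at most $2\sqrt n$, and be done.

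The main case (and the hard part) is $|M|>2n/3$. Here I would contract $U\cup L_i$ into a single vertex $r'$ and delete $D\cup L_j$, producing a planar graph $G'$ on $|M|+1$ vertices that carries a BFS spanning tree $T'$ rooted at $r'$ of depth at most $j-i\leq 2\sqrt n$. Fix a planar embedding of $G'$ (Theorem~\ref{thm:planar drawing}). For any non-tree edge $e=\{u,v\}$, the tree path in $T'$ between $u$ and $v$ together with $e$ forms a simple cycle $C_e$ of length at most $4\sqrt n+1$, which partitions the vertices of $M$ into the interior and exterior of $C_e$ in the embedding. I would then show, by an exchange argument that walks across adjacent faces of $T'\cup\{e\}$ and tracks how the interior vertex count changes when a non-tree edge is swapped for an adjacent one, that some non-tree edge $e^*$ yields a cycle $C_{e^*}$ whose interior and exterior each contain at most $2|M|/3$ vertices of $M$. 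Taking $B=L_i\cup L_j\cup V(C_{e^*})$ (of size $O(\sqrt n)$), and letting $A$ and $C$ be the interior and exterior of $C_{e^*}$ with the small leftover pieces among $U,D$ absorbed appropriately, gives the required partition with $|A|,|C|\geq n/3$ and no $A$--$C$ edges.

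The main obstacle is the last step: proving that a balanced-separating fundamental cycle always exists. The key is to weight faces of $T'$ by the vertex mass they enclose in the planar embedding and use the planarity of $G'$ to argue that the interior count varies by a bounded amount when we move to a ``neighboring'' non-tree edge, so the discrete intermediate-value conclusion forces some $e^*$ to straddle the $2|M|/3$ threshold. All steps (BFS, planar embedding, tree-path and fundamental-cycle computations, and the swap walk) can be implemented in linear time, yielding the claimed efficient algorithm with $q$ an absolute constant.
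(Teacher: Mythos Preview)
The paper does not prove Theorem~\ref{thm: planar separator}; it is quoted as a classical result of Lipton and Tarjan and used as a black box. There is therefore no ``paper's own proof'' to compare against.

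Your sketch is the standard Lipton--Tarjan BFS-layer argument and is essentially correct. Two small points if you want the sketch to stand on its own. First, your reduction to the connected case is incomplete: when some component has more than $2n/3$ vertices, greedy grouping does not work and you must recurse on that component (this is harmless but should be said). Second, in the fundamental-cycle step you omit the triangulation of $G'$. The ``swap to an adjacent non-tree edge'' / discrete intermediate-value argument relies on every bounded face being a triangle, so that moving across a face changes the interior count by exactly one vertex; without triangulating first, faces can be large and the change in interior mass when swapping non-tree edges is not controlled. Adding the triangulation step (which does not increase the vertex count and preserves planarity) fixes this, and then your outline matches the original proof.
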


\subsection{Well-linkedness}

\begin{definition} Let $G=(V,E)$ be any graph, and $J\sse V$ any subset of its vertices. We denote by $\out_G(J)=E_G(J,V\setminus J)$, and we call the edges in $\out_G(J)$ the \emph{terminal edges for $J$}. For each terminal edge $e=(u,v)$, with $u\in J$, $v\not \in J$, we call $u$ the \emph{interface vertex} and $v$ the \emph{terminal vertex} for $J$. We denote by $\Gamma_G(J)$ and $T_G(J)$ the sets of all interface and terminal vertices for $J$, respectively, and we omit the subscript $G$ when clear from context (see Figure~\ref{fig: terminal interface vertices}).
\end{definition}

\begin{figure}[h]
\scalebox{0.3}{\rotatebox{0}{\includegraphics{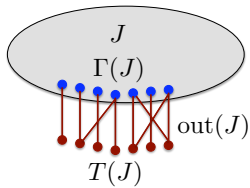}}} \caption{Terminal vertices and edges for set $J$ are red; interface vertices are blue. \label{fig: terminal interface vertices}}
\end{figure}

\begin{definition} Given a graph $G$, a subset $J$ of its vertices, and a parameter $\alpha>0$, we say that $J$ is $\alpha$-well-linked, iff for any partition $(J_1,J_2)$ of $J$, if we denote by $T_1=\out(J_1)\cap \out(J)$, and by $T_2=\out(J_2)\cap \out(J)$, then
$|E(J_1,J_2)|\geq \alpha\cdot \min\set{|T_1|,|T_2|}$
\end{definition}

Notice that if  $G$ is a connected graph and $J\subset V(G)$ is $\alpha$-well-linked for any $\alpha>0$, then $G[J]$ must be connected. Finally, we define $\rho$-balanced $\alpha$-well-linked bi-partitions.

\begin{definition} Let $G$ be any graph, and let $\rho>1, 0<\alpha\leq 1$ be any parameters. We say that a bi-partition $(S,\nots)$ of $V(G)$ is $\rho$-balanced and $\alpha$-well-linked, iff $|S|,|\nots|\geq |V(G)|/\rho$ and both $S$ and $\nots$ are $\alpha$-well-linked.
\end{definition}

\subsection{Sparsest Cut and Concurrent Flow}\label{subsec: sparsest cut}
In this section we summarize some well-known results on graph cuts and flows that we use throughout the paper.
We start by defining the non-uniform sparsest cut problem. Suppose we are given a graph $G=(V,E)$, with weights $w_v$ on vertices $v\in V$. Given any partition $(A,B)$ of $V$, the \emph{sparsity} of the cut $(A,B)$ is $\frac{|E(A,B)|}{\min\set{W(A),W(B)}}$, where $W(A)=\sum_{v\in A}w_v$ and $W(B)=\sum_{v\in B}w_v$. In the non-uniform sparsest cut problem, the input is a graph $G$ with weights on vertices, and the goal is to find a cut of minimum sparsity. Arora, Lee and Naor~\cite{sparsest-cut} have shown an $O(\sqrt{\log n}\cdot \log\log n)$-approximation algorithm for the non-uniform sparsest cut problem. We denote by $\algsc$ this algorithm and by $\alphasc=O(\sqrt{\log n}\cdot\log\log n)$ its approximation factor.
We will usually work with a special case of the sparsest cut problem, where we are given a subset $T\sse V$ of vertices, called terminals, and the vertex weights are $w_v=1$ for $v\in T$, and $w_v=0$ otherwise.

A problem dual to sparsest cut is the maximum concurrent multicommodity flow problem.
Here, we need to compute the maximum value $\lambda$, such that $\lambda/|T|$ flow units can be simultaneously sent in $G$ between every pair of terminals with no congestion. The flow-cut gap is the maximum possible ratio, in any graph, between the value of the minimum sparsest cut and the maximum concurrent flow. The value of the flow-cut gap in undirected graphs, that we denote by $\betaFCG $ throughout the paper, is $\Theta(\log n)$~\cite{LR, GVY,LLR,Aumann-Rabani}.
In particular, if the value of the sparsest cut is $\alpha$, then every pair of terminals can send $\frac{\alpha}{|T|\cdot \betaFCG }$ flow units to each other with no congestion.

Let $G$ be any graph, let $S$ be a subset of vertices of $G$, and let $0 < \alpha < 1$, such that $S$ is $\alpha$-well-linked. 
We now define the sparsest cut and the concurrent flow instances corresponding to $S$, as follows.
For each edge $e\in \out(S)$, we sub-divide the edge by adding a new vertex $t_e$ to it. Let $G'$ denote the resulting graph, and let $T$ denote the set of all vertices $t_e$ for $e\in \out_G(S)$.
Consider the graph $H=G'[S]\cup \out_{G'}(S)$. We can naturally define an instance of the non-uniform sparsest cut problem on $H$, where the set of terminals is $T$. The fact that $S$ is $\alpha$-well-linked is equivalent to the value of the sparsest cut in the resulting instance being at least $\alpha$.
We obtain the following simple well-known consequence:

\begin{observation}\label{observation: existence of flow in well-linked instance}
Let $G$, $S$, $H$, and $T$ be defined as above, and let $0<\alpha<1$, such that $S$ is $\alpha$-well-linked. Then every pair of vertices in $T$ can send one flow unit to each other in $H$, such that the maximum congestion on any edge is at most $\betaFCG |T|/\alpha$.
Moreover, if $M$ is any partial matching on the vertices of $T$, then we can send one flow unit between every pair $(u,v)\in M$ in graph $H$, with maximum congestion at most $2\betaFCG /\alpha$.
\end{observation}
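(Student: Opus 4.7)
The plan is to translate $\alpha$-well-linkedness of $S$ into a lower bound of $\alpha$ on the non-uniform sparsest cut of $(H, T)$ (with weight $1$ on terminals and $0$ on non-terminals), and then invoke the flow-cut gap $\betaFCG$ to obtain the two desired routings.

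First I would show that the non-uniform sparsest cut of $H$ is at least $\alpha$. Fix any cut $(A, B)$ of $V(H)$, and set $J_1 = A \cap S$, $J_2 = B \cap S$. Applying the well-linkedness inequality to the partition $(J_1, J_2)$ of $S$ gives $|E_G(J_1, J_2)| \geq \alpha \min(|T_1|, |T_2|)$, where $T_i$ is the set of terminal edges incident to $J_i$ in the original graph $G$. Since each degree-one terminal vertex $t_e$ may be placed on either side of the cut, we also have $|E_H(A, B)| = |E_G(J_1, J_2)| + \ell$, where $\ell$ counts the terminal edges $(u_e, t_e)$ of $G'$ whose endpoints straddle the cut. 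A short case analysis, comparing $\min(|T_1|, |T_2|)$ with $\min(|A \cap T|, |B \cap T|)$ and tracking how placing a terminal opposite its interface vertex affects both sides of the inequality, shows that these $\ell$ extra cut edges exactly compensate for the mismatch, yielding $|E_H(A, B)| \geq \alpha \min(|A \cap T|, |B \cap T|)$.

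Given the sparsity bound, the first assertion is immediate from the flow-cut gap: the maximum concurrent multicommodity flow on $(H, T)$ has value at least $\alpha/\betaFCG$, so every pair of terminals can simultaneously send $\alpha/(\betaFCG |T|)$ flow units with edge congestion at most $1$; scaling by $\betaFCG|T|/\alpha$ gives one unit between each pair with congestion $\betaFCG|T|/\alpha$. For the matching case, view $M$ as a set of $|M|$ unit-demand commodities. Across any cut $(A, B)$ of $H$, the demand separated is at most $\min(|A \cap T|, |B \cap T|)$, since every separated matched pair places one terminal on each side. Combined with the sparsity bound, every cut satisfies $|E_H(A, B)| \geq \alpha \cdot d_M(A, B)$, and applying the flow-cut gap to this fixed-demand multicommodity flow instance produces a feasible routing of $M$ with maximum edge congestion $O(\betaFCG/\alpha)$, which fits within the stated $2\betaFCG/\alpha$ bound.

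The main obstacle is the first step: $\alpha$-well-linkedness of $S$ only constrains partitions of $S$, while an arbitrary cut $(A, B)$ of $V(H)$ is free to place the degree-one terminals on either side of the cut. The case analysis sketched above is what resolves this asymmetry without any loss; once the sparsity bound is established, both assertions reduce to routine applications of the flow-cut gap.
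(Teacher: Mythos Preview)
Your proposal is correct. For the first assertion you do exactly what the paper does (indeed, the paragraph preceding the observation already asserts the equivalence between $\alpha$-well-linkedness of $S$ and the sparsest cut of $(H,T)$ being at least $\alpha$, so your case analysis on where the degree-one terminals sit is fleshing out a point the paper simply states; your analysis is right, using $\alpha<1$).

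For the second assertion you take a genuinely different route. The paper reuses the all-pairs flow $F$ from part one: for each matched pair $(u,v)\in M$, it sends $1/|T|$ units from $u$ to every terminal and collects $1/|T|$ units at $v$ from every terminal, both via $F$. Because $M$ is a matching, each commodity of $F$ is invoked at most twice (once with its first endpoint playing the role of a sender, once with its second endpoint playing the role of a receiver), which is where the exact factor $2$ in $2\betaFCG/\alpha$ comes from. You instead apply the flow-cut gap directly to the matching-demand instance, noting that any cut separates at most $\min(|A\cap T|,|B\cap T|)$ units of matching demand. That is also correct, but it invokes the \emph{general} multicommodity flow-cut gap of Linial--London--Rabinovich and Aumann--Rabani, not the uniform-demand version that the paper actually names $\betaFCG$. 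Both gaps are $\Theta(\log n)$, so your $O(\betaFCG/\alpha)$ is fine asymptotically, but the claim that this ``fits within the stated $2\betaFCG/\alpha$'' presumes the two constants line up; the paper's argument avoids this by staying within the uniform flow-cut gap throughout.
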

\begin{proof}
The first part is immediate from the definition of the flow-cut gap. Let $F$ denote the resulting flow. In order to obtain the second part, for every pair $(u,v)\in M$, 
 $u$ will send $1/|T|$ flow units to every vertex in $T$, and $v$ will collect $1/|T|$ flow units from every vertex in $T$, via the flow $F$. It is easy to see that every flow-path is used at most twice.
\end{proof}

For convenience, when given an $\alpha$-well-linked subset $S$ of vertices in a graph $G$, we will omit the subdivision of the edges in $\out(S)$, and we will say that the edges $e\in \out(S)$ send flow to each other, instead of the corresponding vertices $t_e$.

We will also use the algorithm of Arora, Rao and Vazirani~\cite{ARV} for balanced cut, summarized below.

\begin{theorem}[Balanced Cut~\cite{ARV}]\label{thm: ARV}
Let $G$ be any $n$-vertex graph, and suppose there is a partition of the vertices of $G$ into two sets, $A$ and $B$, with $|A|,|B|\geq \eps n$ for some constant $\eps>0$, and 
$|E(A,B)|=c$. Then there is an efficient algorithm to find a partition $(A',B')$ of the vertices of $G$, such that $|A'|,|B'|\geq \epsilon' n$ for some constant $0<\epsilon'<\epsilon$, and $|E(A',B')|\leq O(c\sqrt{\log n})$.
\end{theorem}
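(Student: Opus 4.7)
The plan is to derive this result from the $O(\sqrt{\log n})$-approximation of Arora, Rao, and Vazirani for (uniform) Sparsest Cut via a standard iterative peeling reduction. Using the convention that a cut $(X,Y)$ of $V(G)$ has sparsity $|E(X,Y)|/\min\{|X|,|Y|\}$, the hypothesized partition $(A,B)$ serves as a witness upper-bounding the optimum sparsity of $G$ by $c/(\epsilon n)$; hence a single call to ARV on $G$ (with unit weight on every vertex) returns a cut of sparsity at most $O(\sqrt{\log n}\cdot c/(\epsilon n))$.

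Fix a constant $\epsilon'=\min\{\epsilon/4,\ 1/3\}$. The algorithm maintains a residual vertex set $S\subseteq V$ (initially $V$) and an edge accumulator $E^*\subseteq E$ (initially empty). While $|V\setminus S|<\epsilon' n$, call ARV on $G[S]$ to obtain a cut $(S_1,S_2)$ of $S$ with $|S_1|\leq|S_2|$, add $E_G(S_1,S_2)$ to $E^*$, and replace $S$ by $S_2$. On termination, output $(V\setminus S,\ S)$; every edge crossing this partition has been placed into $E^*$. To bound $|E^*|$, I would argue inductively that, throughout the loop, the optimum sparsest-cut value in $G[S]$ stays at most $O(c/(\epsilon n))$: as long as $|V\setminus S|<\epsilon n/4$, the restricted partition $(A\cap S,\ B\cap S)$ of $S$ has at most $c$ crossing edges in $G[S]$ and both sides of size at least $3\epsilon n/4$. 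Consequently each iteration contributes at most $O(\sqrt{\log n}\cdot c/(\epsilon n))\cdot|S_1|$ edges to $E^*$, and since the peeled vertices accumulate to $\sum|S_1|\leq n$, we get $|E^*|=O(c\sqrt{\log n}/\epsilon)=O(c\sqrt{\log n})$. Both output sides exceed $\epsilon' n$: $|V\setminus S|\geq\epsilon' n$ by the exit condition, and writing $S^{\star}$ for the set just before the last iteration, $|S|\geq|S^{\star}|/2>(1-\epsilon')n/2\geq\epsilon' n$ by the choice $\epsilon'\leq 1/3$.

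The main obstacle I foresee is exactly this sparsity-witness maintenance inside the induced subgraph $G[S]$: if the threshold $\epsilon'$ is set too close to $\epsilon$, the restriction $(A\cap S,\ B\cap S)$ can degenerate into a wildly unbalanced partition of $S$, destroying the $c/(\epsilon n)$ bound on the optimum sparsity of $G[S]$ and in turn the charging argument above. Keeping a constant-factor gap between $\epsilon'$ and $\epsilon$, as with $\epsilon'=\epsilon/4$, guarantees a linear-sized witness in every iteration and closes the argument.
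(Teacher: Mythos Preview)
The paper does not prove this theorem at all; it is quoted as a black-box result from Arora, Rao, and Vazirani~\cite{ARV} and used without proof. So there is no ``paper's own proof'' to compare against.

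Your derivation is the standard reduction of balanced cut to (uniform) sparsest cut by iterative peeling, and it is correct. The choice $\epsilon'=\min\{\epsilon/4,1/3\}$ cleanly handles both issues you need: the $\epsilon/4$ gap keeps the witness $(A\cap S,\ B\cap S)$ linearly balanced inside $G[S]$ throughout the loop, and the $1/3$ cap guarantees the surviving side $S$ stays large after the final (possibly overshooting) peel. The charging $\sum|S_1|\le n$ is fine because the peeled pieces are pairwise disjoint, and the sparsity bound is only invoked \emph{before} each peel, where $|V\setminus S|<\epsilon' n\le \epsilon n/4$ still holds. One cosmetic remark: the $\log$ factor from ARV is $\sqrt{\log|S|}\le\sqrt{\log n}$, so there is no hidden growth across iterations.
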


\subsection{Canonical Vertex Sets and Solutions}

As already mentioned in the Introduction, we will perform a number of graph contraction steps on the input graph $\G$, where in each such graph contraction step, a sub-graph of $\G$ will be replaced with a grid. So in general, if $H$ is the current graph, we will also be given a collection $\zset$ of disjoint subsets of vertices of $H$, such that for each $Z\in \zset$, $H[Z]$ is the $k_Z\times k_Z$ grid, for some $k_Z\geq 2$. We will also ensure that $\Gamma_H(Z)$ is precisely the set of the vertices in the first row of the grid $H[Z]$, and the edges in $\out_H(Z)$ form a matching between $\Gamma_H(Z)$ and $T_H(Z)$. Given such a graph $H$, and  a collection $\zset$ of vertex subsets, we will be looking for solutions in which the edges of the grids $H[Z]$ do not participate in any crossings. This motivates the following definitions of canonical vertex sets and canonical solutions.

Assume that we are given a graph $G$ and a collection $\zset$ of disjoint subsets of vertices of $G$, such that each subset $Z\in\zset$ is $1$-well-linked (but some vertices of $G$ may not belong to any subset $Z\in \zset$).

\begin{definition}\label{definiton: canonical subset} We say that a subset $J\sse V$ of vertices is \emph{canonical} for $\zset$ iff for each $Z\in\zset$, either $Z\sse J$, or $Z\cap J=\emptyset$.
\end{definition}

We next define canonical drawings and canonical solutions w.r.t. the collection $\zset$ of subsets of vertices:

\begin{definition}\label{definition: canonical drawing} Let $G=(V,E)$ be any graph, and $\zset$ any collection of disjoint subsets of vertices of $G$. We say that a drawing $\phi$ of $G$ is \emph{canonical} for $\zset$ iff for each $Z\in \zset$, no edge of $G[Z]$ participates in crossings.
Similarly, we say that a solution $E^*$ to the \MP problem on $G$ is \emph{canonical} for $\zset$, iff for each $Z\in \zset$, no edge of $G[Z]$ belongs to $E^*$.
\end{definition}

\begin{definition}
Given a graph $G$, a simple cycle $X\sse G$ (that may be empty), and a collection $\zset$ of disjoint subsets of vertices of $G$, a \emph{strong solution} to problem $\pi(G,X,\zset)$ is a drawing $\psi$ of $G$, in which the edges of $E(X)\cup\left(\bigcup_{Z\in\zset}E(G[Z])\right )$ do not participate in any crossings, and $G$ is embedded inside the bounding box $X$. The cost of the solution is the number of edge crossings in $\psi$. A \emph{weak solution} to  problem $\pi(G,X,\zset)$ is a subset $E'\sse E(G)\setminus E(X)$ of edges, such that graph $G\setminus E'$ has a \emph{planar drawing} inside the bounding box $X$, and for all $Z\in \zset$, $E'\cap E(G[Z])=\emptyset$. \end{definition}

We will sometimes use the above definition for problem $\pi(G',X,\zset)$, where $G'$ is a sub-graph of $G$. That is, some sets $Z\in \zset$ may not be contained in $G'$, or only partially contained in it. We can then define $\zset'$ to contain, for each $Z\in \zset$, the set $Z\cap V(G')$. We will sometimes use the notion of weak or strong solution to problem $\pi(G',X,\zset)$ to mean weak or strong solutions to $\pi(G',X,\zset')$, to simplify notation.

\subsection{Cuts in Grids}

The following simple claim about grids and its corollary are used throughout the paper.

\begin{claim}\label{claim: cut of grids}
Let $Z$ be the $k\times k$ grid, for any integer $k\geq 2$, and let $\Gamma$ denote the set of vertices in the first row of $Z$. Let $(A,B)$ be any partition of the vertices of $Z$, with $A,B\neq \emptyset$. Then $|E(A,B)|\geq\min\set{|A\cap \Gamma|, |B\cap \Gamma|}+1$.
\end{claim}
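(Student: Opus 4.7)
The plan is to reduce the claim to Menger's theorem: I will show that there exist $a+1$ pairwise edge-disjoint paths from $A$ to $B$ in $Z$, where $a = \min\set{|A\cap\Gamma|,|B\cap\Gamma|}$, which immediately gives $|E(A,B)|\geq a+1$.

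Without loss of generality assume $a=|A\cap\Gamma|\leq|B\cap\Gamma|$, so that $|B\cap\Gamma|=k-a\geq a$ and in particular $a\leq k/2$. The case $a=0$ is immediate: since $A,B\neq\emptyset$ and $Z$ is connected, $|E(A,B)|\geq 1$. So assume $a\geq 1$. Enumerate the columns containing vertices of $A\cap\Gamma$ as $c_1<\cdots<c_a$ and the columns containing vertices of $B\cap\Gamma$ as $d_1<\cdots<d_{k-a}$. These two sets of column indices are disjoint. Fix any injection $\pi\colon\set{c_1,\ldots,c_a}\to\set{d_1,\ldots,d_{k-a}}$, which exists because $k-a\geq a$.

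Now I construct the $a+1$ edge-disjoint paths. For $j=1,\ldots,a$, define path $P_j$ as follows: start at the first-row vertex in column $c_j$, descend along column $c_j$ from row $1$ to row $j+1$, then traverse row $j+1$ horizontally from column $c_j$ to column $\pi(c_j)$, then ascend along column $\pi(c_j)$ from row $j+1$ back to row $1$, ending at a vertex of $B\cap\Gamma$. Row $j+1$ exists since $j+1\leq a+1\leq k$. Finally, define $P_0$ to be a single first-row edge joining a vertex of $A\cap\Gamma$ to a vertex of $B\cap\Gamma$; such an edge exists because $\Gamma$ induces a path in $Z$ and both $A\cap\Gamma$ and $B\cap\Gamma$ are nonempty.

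It remains to verify edge-disjointness, which is the one step that needs some care but is essentially routine. The path $P_0$ uses only a first-row edge, whereas every $P_j$ with $j\geq 1$ uses only vertical edges (lying in columns of $\{c_1,\ldots,c_a\}\cup\{d_1,\ldots,d_{k-a}\}$) and horizontal edges in row $j+1$; no $P_j$ ($j\geq 1$) uses any first-row edge, so $P_0$ is disjoint from all others. For $1\leq j<j'\leq a$, the horizontal edges of $P_j$ and $P_{j'}$ lie in different rows ($j+1$ and $j'+1$) and thus do not collide, while their vertical edges lie in the four distinct columns $c_j,\pi(c_j),c_{j'},\pi(c_{j'})$ (distinct because $c_1,\ldots,c_a$ are distinct, $\pi$ is an injection, and the $c$-columns are disjoint from the $d$-columns). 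Applying Menger's theorem to the $a+1$ edge-disjoint $A$-to-$B$ paths yields $|E(A,B)|\geq a+1$, which is the desired bound.
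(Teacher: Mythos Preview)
Your proof is correct and takes essentially the same approach as the paper: both construct $a+1$ edge-disjoint $A$--$B$ paths, consisting of one first-row edge together with $a$ paths that each descend their own $A\cap\Gamma$ column, traverse a distinct non-first row, and ascend a distinct $B\cap\Gamma$ column. Your version is slightly more explicit about the injection $\pi$ and the verification of edge-disjointness (in particular, that the four columns $c_j,\pi(c_j),c_{j'},\pi(c_{j'})$ are pairwise distinct), which the paper leaves implicit; invoking Menger is fine but unnecessary, since edge-disjoint $A$--$B$ paths already give distinct cut edges directly.
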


\begin{proof}
Let $\Gamma_A=\Gamma\cap A$, $\Gamma_B=\Gamma\cap B$, and assume w.l.o.g. that $|\Gamma_A|\leq |\Gamma_B|$.
If $\Gamma_A=\emptyset$, then the claim is clearly true.
Otherwise, there is some vertex $t\in \Gamma_A$, such that a vertex $t'$ immediately to the right or to the left of $t$ in the first row of the grid belongs to $\Gamma_B$. Let $e=(t,t')$ be the corresponding edge in the first row of $Z$.
We can find a collection of $|\Gamma_A|$ edge-disjoint paths, connecting vertices in $\Gamma_A$ to vertices in  $\Gamma_B$, that do not include the edge $e$, as follows: assign a distinct row of $Z$ (different from the first row) to each vertex in $\Gamma_A$. Route each such vertex inside its column to its designated row, and inside this row to the column corresponding to some vertex in $\Gamma_B$. If we add the path consisting of the single edge $e$, we will obtain a collection of $|\Gamma_A|+1$ edge-disjoint paths, connecting vertices in $\Gamma_A$ to vertices in $\Gamma_B$. All these paths have to be disconnected by the above cut.
\end{proof}

\begin{corollary}\label{corollary: canonical s-t cut} Let $G$ be any graph, $\zset$ any collection of disjoint subsets of vertices of $G$, such that for each $Z\in \zset$, $G[Z]$ is the $k_Z\times k_Z$ grid, for $k_Z\geq 2$. Moreover, assume that each vertex in the first row of $Z$ is adjacent to exactly one edge in $\out_G(Z)$, and no other vertex of $Z$ is adjacent to edges in $\out_G(Z)$. Let $s,t$ be any pair of vertices of $G$, that do not belong to any set $Z\in \zset$, and let $(A,B)$ be the minimum $s$--$t$ cut in $G$. Then both sets $A$ and $B$ are canonical w.r.t. $\zset$.
\end{corollary}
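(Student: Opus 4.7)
The plan is a swap argument against the minimality of $(A,B)$. Suppose for contradiction some $Z\in\zset$ is split by the cut, i.e.\ both $A_Z:=A\cap Z$ and $B_Z:=B\cap Z$ are nonempty. Let $\Gamma=\Gamma_G(Z)$ be the first row of $G[Z]$, and put $\Gamma_A=A\cap\Gamma$, $\Gamma_B=B\cap\Gamma$. Without loss of generality assume $|\Gamma_A|\leq|\Gamma_B|$ (otherwise swap the roles of $A$ and $B$ in the argument below). Since $s,t\notin Z$ by hypothesis, the modified partition
\[
(A',B')\ :=\ (A\setminus Z,\ B\cup A_Z)
\]
is still an $s$--$t$ cut, so the goal is to show $|E(A',B')|<|E(A,B)|$, contradicting the minimality of $(A,B)$.

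To compare the two cuts I would partition $E(A,B)$ and $E(A',B')$ according to how many endpoints lie in $Z$, and track each group. Edges with no endpoint in $Z$ are untouched by the move. Internal edges of $G[Z]$ contribute $|E(A_Z,B_Z)|\ge |\Gamma_A|+1$ to $E(A,B)$ by Claim~\ref{claim: cut of grids} applied to the partition $(A_Z,B_Z)$ of the grid $G[Z]$ (using $\min\{|\Gamma_A|,|\Gamma_B|\}=|\Gamma_A|$); after the swap $Z\subseteq B'$, so these edges contribute $0$ to $E(A',B')$. For edges in $\out_G(Z)$, the only interface vertices that changed sides are those in $\Gamma_A$, and by hypothesis each such vertex is incident to exactly one edge of $\out_G(Z)$; hence the contribution of $\out_G(Z)$ to the cut can increase by at most $|\Gamma_A|$.

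Combining these three accountings yields
\[
|E(A',B')|\ \leq\ |E(A,B)|-(|\Gamma_A|+1)+|\Gamma_A|\ =\ |E(A,B)|-1,
\]
the desired contradiction. Therefore no $Z\in\zset$ is split by $(A,B)$, which is exactly the statement that both $A$ and $B$ are canonical for $\zset$.

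I do not anticipate a serious obstacle here: the only mildly delicate point is the bookkeeping on $\out_G(Z)$, where one has to observe that edges incident to $\Gamma_B$ retain their status across the swap (their $Z$-endpoint stays in $B'$), and that the hypothesis ``each first-row vertex has a unique external edge, and no other vertex of $Z$ has an external edge'' caps the possible increase at exactly $|\Gamma_A|$. The ``$+1$'' slack in Claim~\ref{claim: cut of grids} is precisely what makes the swap strictly improving.
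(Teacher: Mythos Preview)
Your proof is correct and follows essentially the same swap argument as the paper's own proof: assume some $Z$ is split, move it entirely to the side containing the larger half of its first row, and use Claim~\ref{claim: cut of grids} to get the strict drop in cut value. Your writeup is more explicit about the edge bookkeeping on $\out_G(Z)$, but the approach is identical.
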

\begin{proof}
Assume for contradiction that some set $Z\in \zset$ is split between the two sides, $A$ and $B$. Let $\Gamma=\Gamma(Z)$ denote the set of vertices in the first row of $Z$, and let $\Gamma_A=\Gamma\cap A$, $\Gamma_B=\Gamma\cap B$. Assume w.l.o.g. that $|\Gamma_A|\leq |\Gamma_B|$.
Then by Claim~\ref{claim: cut of grids} $|E(A\cap Z,B\cap Z)|>|\Gamma_A|$, and so the value of the cut $(A\setminus Z,B\cup Z)$ is smaller than the value of the cut $(A,B)$, a contradiction.
\end{proof}

\begin{claim}\label{claim: cutting the grid}
Let $Z$ be the $k\times k$ grid, for any integer $k\geq 2$, and let $\Gamma$ be the set of vertices in the first row of $Z$. Suppose we are given any partition $(A,B)$ of $V(Z)$, denote $\Gamma_A=\Gamma\cap A$, $\Gamma_B=\Gamma\cap B$, and assume that $|\Gamma_B|\leq |\Gamma_A|$. Then $|B|\leq 4|E(A,B)|^2$.
\end{claim}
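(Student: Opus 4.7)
The plan is to bound $|B|$ by covering it with the rows and columns that contain a $B$-vertex, and then to show that both of these counts are controlled by $c:=|E(A,B)|$. I will split $c=h+v$ into the numbers of horizontal and vertical cut edges. Call a row of $Z$ \emph{full} if all its vertices lie in $B$, and \emph{mixed} if it contains vertices of both $A$ and $B$; define full and mixed columns analogously. Let $p,q$ denote the numbers of mixed and full rows, and $p',q'$ the analogous column counts; let $V_B,H_B$ be the sets of rows and columns containing a $B$-vertex. The starting observation is $|B|\le |V_B|\cdot|H_B|$, since every $B$-vertex sits in some row of $V_B$ and some column of $H_B$.

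Two elementary edge-counting facts will drive everything: every mixed row has at least one horizontal cut edge (the $A/B$ label must switch somewhere along it), and every mixed column has at least one vertical cut edge. Hence $p\le h$ and $p'\le v$.

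The argument then splits on whether $Z$ contains a full row or a full column. First, if $q\ge 1$, the full row places a $B$-vertex in every column, so $p'+q'=k$; each full column contributes its row-$1$ vertex to $\Gamma_B$, so combined with the hypothesis $|\Gamma_B|\le |\Gamma_A|$ one gets $q'\le k/2$, hence $p'\ge k/2$ and therefore $c\ge v\ge p'\ge k/2$. The symmetric case of a full column with no full row is simpler: the full column forces every row to be mixed, so $p=k$, yielding $c\ge h\ge k$. In either subcase $c\ge k/2$, so $|B|\le k^{2}\le 4c^{2}$ follows immediately.

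In the remaining case $q=q'=0$, the sets $V_B$ and $H_B$ consist exactly of the mixed rows and mixed columns, so $|V_B|\le h$ and $|H_B|\le v$, and by AM-GM
$$|B|\le hv\le\left(\frac{h+v}{2}\right)^{2}=\frac{c^{2}}{4}\le 4c^{2}.$$
I expect the main obstacle, modest though it is, to lie in the bound $q'\le k/2$ from the first case: this is the only place the hypothesis $|\Gamma_B|\le |\Gamma_A|$ is genuinely used, and without it the configuration $B=V(Z)\setminus\{(1,1)\}$ (almost the whole grid, yet with constant-size cut) would be a counterexample.
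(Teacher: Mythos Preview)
Your proof is correct, and it follows a genuinely different route from the paper's argument. The paper's proof is column-centric: it splits into three cases based on whether some column lies entirely in $A$ and whether some column lies entirely in $B$, and in the latter two cases invokes edge-disjoint paths in the grid (between a full-$A$ column and a full-$B$ column, or between a full-$A$ column and the $B$-part of a mixed column) to lower-bound $M$. Your argument instead decomposes the cut into horizontal and vertical edges and exploits the symmetry between rows and columns; the only counting fact you need is that a mixed row (resp.\ column) contributes at least one horizontal (resp.\ vertical) cut edge, so you never appeal to a path/flow argument. In the ``generic'' case ($q=q'=0$) your AM--GM step even yields the sharper $|B|\le c^{2}/4$, whereas the paper obtains $|B|\le M^{2}$. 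Both proofs use the hypothesis $|\Gamma_B|\le|\Gamma_A|$ at exactly one point, to handle the situation where the cut could otherwise be arbitrarily small relative to $|B|$; the paper does this when no column is entirely in $A$, you do it when a full-$B$ row forces every column to touch $B$.
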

\begin{proof}
Denote $M=|E(A,B)|$. Let $C_A$ denote the set of columns associated with the vertices in $\Gamma_A$, and similarly, $C_B$ is the set of columns associated with the vertices in $\Gamma_B$. Notice that $(C_A,C_B)$ define a partition of the columns of $Z$. We consider three cases.

The first case is when no column is completely contained in $A$. In this case, for every column in $C_A$, at least one edge must belong to $E(A,B)$, and so $M\geq |\Gamma_A|\geq k/2$. Since $|B|\leq |Z|\leq k^2$, the claim follows. From now on we assume that there is some grid column, denoted by $c$, that is completely contained in $A$.

The second case is when some grid column $c'$ is completely contained in $B$. In this case, it is easy to see that $M\geq k$ must hold, as there are $k$ edge-disjont paths connecting vertices of $c$ to vertices of $c'$ in $Z$. So $|B|\leq |Z|\leq k^2\leq M^2$, as required.

Finally, assume that no column is contained in $B$. Let $C'_B$ be the set of columns that have at least one vertex in $B$. Clearly, $M\geq |C'_B|$. Let $M'$ be the maximum number of vertices in any column $c'\in C'_B$, which are contained in $B$. Then $M\geq M'$ must hold, since there are $M'$ edge-disjoint paths between the vertices of column $c$, and the vertices of $c'\cap B$. On the other hand, $|B|\leq |C'_B|\cdot M'\leq M^2$.
\end{proof}

\subsection{Well-linked Decompositions}

The next theorem summarizes well-linked decomposition of graphs, which has been used extensively in graph decomposition (e.g., see~\cite{CSS,Raecke}). For completeness we provide its proof in Appendix.

\begin{theorem}[Well-linked decomposition]\label{thm: well-linked} Given any graph $G=(V,E)$, and any subset $J\sse V$ of vertices, we can efficiently find a partition $\jset$ of $J$, such that each set $J'\in \jset$ is $\alpha^*$-well-linked for $\alpha^*=\Omega(1/(\log^{3/2} n\log\log n))$, and $\sum_{J'\in \jset}|\out(J')|\leq 2\out(J)$.
\end{theorem}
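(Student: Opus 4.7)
The plan is to apply the standard iterative splitting procedure based on the approximate sparsest cut algorithm $\algsc$, and to bound the total growth of the boundary by a charging argument.

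\textbf{The algorithm.} I maintain a partition $\jset$ of $J$, initialized to $\jset=\{J\}$. For each $J'\in \jset$, I set up the sparsest cut instance from Section~\ref{subsec: sparsest cut}: subdivide each edge of $\out_G(J')$, let $H'=G[J']\cup \out_G(J')$, and let the terminal set $T'$ consist of the newly created subdivision vertices (weight~$1$) while all other vertices of $H'$ have weight~$0$. I then run $\algsc$ on $(H',T')$. If $\algsc$ returns a cut of sparsity at least $\alphasc\cdot\alpha^*$, I certify $J'$ as $\alpha^*$-well-linked and leave it alone: since $\algsc$ is an $\alphasc$-approximation, the true minimum sparsity is at least $\alpha^*$. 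Otherwise, $\algsc$ returns a partition $(J'_1,J'_2)$ of $J'$ with $|E_G(J'_1,J'_2)|<\alphasc\cdot\alpha^*\cdot\min\{|T_1|,|T_2|\}$, where $T_i=\out_G(J'_i)\cap \out_G(J')$; I replace $J'$ by $J'_1,J'_2$ in $\jset$. Each split produces strict subsets of the current set, so the process terminates (in the worst case at singletons, which are well-linked trivially), and when it halts every set in $\jset$ is $\alpha^*$-well-linked.

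\textbf{Bounding $\sum_{J'\in\jset}|\out(J')|$.} Record the algorithm as a binary tree whose leaves are the final sets in $\jset$ and whose internal nodes are splits. At a split of $J'$ into $(J'_1,J'_2)$, the edges of $E_G(J'_1,J'_2)$ are new terminal edges for \emph{both} new sides, so
\[
|\out(J'_1)|+|\out(J'_2)| \;=\; |\out(J')|+2|E_G(J'_1,J'_2)|.
\]
Summing over all splits yields $\sum_{J'\in\jset}|\out(J')|=|\out(J)|+2\sum_{\text{splits}}|E_G(J'_1,J'_2)|$, so it suffices to show that the total ``new'' mass is at most $|\out(J)|/2$.

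\textbf{Charging.} At each split, assume w.l.o.g.\ $|T_1|\le|T_2|$ and charge every edge of $E_G(J'_1,J'_2)$ to the terminal edges in $T_1$, distributing a charge of at most $\alphasc\cdot\alpha^*$ to each edge of $T_1$. Track an arbitrary edge $e\in\out_G(J)$ through the recursion: it lies in the boundary of exactly one current set $J'$ at any time, and $e$ is charged only when the split puts $e$ on the smaller-boundary side $J'_1$. Each such event at least halves the number of original terminal edges of $\out_G(J)$ sharing the current set with $e$, so $e$ is charged at most $\lceil\log_2|\out_G(J)|\rceil=O(\log n)$ times. Therefore
\[
\sum_{\text{splits}}|E_G(J'_1,J'_2)| \;\le\; O(\alphasc\cdot\alpha^*\cdot\log n)\cdot|\out_G(J)|,
\]
and choosing $\alpha^*=\Theta\bigl(1/(\alphasc\log n)\bigr)=\Omega\bigl(1/(\log^{3/2}n\log\log n)\bigr)$ makes the leading constant at most $1/4$, giving $\sum|\out(J')|\le 2|\out(J)|$ as required.

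\textbf{Main obstacle.} The only delicate point is the charging step: one must verify that when the algorithm splits a set $J'$ using the approximate sparsest cut, the number of \emph{original} terminal edges of $\out_G(J)$ that remain with edge $e$ on its smaller-boundary side is indeed (at least) halved. This follows because by the choice of sides, $|T_1|\le|T_2|$ is defined via the portions of $\out_G(J')$ lying on each side, and the original $\out_G(J)$-edges currently in $\out(J')$ are a subset of $T_1\cup T_2$; the potential subtlety is that the boundary edges created by \emph{previous} splits (which are not in $\out_G(J)$) can inflate $T_1$ and $T_2$, but these extra edges only make the halving inequality easier, since we are tracking original edges only. Once this is in place, both the well-linkedness guarantee and the edge-bound are immediate from the two paragraphs above.
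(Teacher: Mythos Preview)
Your overall strategy—iteratively split via the approximate sparsest cut and bound boundary growth by a charging scheme—is exactly the paper's approach. However, the charging argument has a genuine gap.

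You charge the new cut edges at a split to the edges of $T_1$ (the side with the smaller number of current terminal edges), and then you track only the \emph{original} edges $e\in\out_G(J)$, asserting that each charging event ``at least halves the number of original terminal edges of $\out_G(J)$ sharing the current set with $e$.'' This is not true in general. The side $T_1$ is smaller in \emph{total} terminal count, which includes edges created in previous splits; it is perfectly possible that $T_1$ contains most of the original $\out_G(J)$-edges while $T_2$ is padded with many previously-created cut edges. In that case $|T_1\cap\out_G(J)|>|T_2\cap\out_G(J)|$, so the potential you are tracking does not drop by half. Your ``Main obstacle'' paragraph identifies exactly this issue but then dismisses it with the claim that extra edges ``only make the halving inequality easier''—this is simply false, since the extra edges could sit predominantly on the $T_2$ side. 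Even if the halving were correct, there is a second problem: the total $\sum_{\text{splits}}|E_G(J'_1,J'_2)|$ equals the total charge received by \emph{all} edges in the various $T_1$'s, not just by the original ones; bounding only the charge received by edges of $\out_G(J)$ leaves the charge absorbed by previously-created cut edges unaccounted for.

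The paper closes both holes as follows. It charges to the side with fewer \emph{vertices} (so the cluster containing each endpoint of a charged edge genuinely halves in size, giving at most $2\log n$ direct chargings per edge), and then it explicitly handles the charge received by non-original edges via an \emph{indirect charging} argument: each newly-created edge itself receives total direct charge at most $1/4$, and pushing these charges back to the original edges yields a geometric series, so the cumulative (direct plus indirect) charge to each edge of $\out_G(J)$ is at most $1$. Adding that one extra step—accounting for charges to previously-created cut edges via a geometric series—would repair your argument.
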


We now define some additional properties that set $J$ may possess, that we use throughout the paper. We will then show that if a set $J$ has any collection of these properties, then we can find a well-linked decomposition $\jset$ of $J$, such that every set $J'\in \jset$ has these properties as well.

\begin{definition} Given a graph $G$ and any subset $J\sse V(G)$ of its vertices, we say that $J$ has property (P1) iff the vertices of $T(J)$ are connected in $G\setminus J$. We say that it has property (P2) iff there is a planar drawing of $J$ in which all interface vertices $\Gamma(J)$ lie on the boundary of the same face, that we refer to as the \emph{outer face}. We denote such a planar drawing by $\pi(J)$. If there are several such drawing, we select any of them arbitrarily.
\end{definition}

The next theorem is an extension of Theorem~\ref{thm: well-linked}, and its proof appears in Appendix.

\begin{theorem}\label{thm: well-linked-general} Suppose we are given any graph $G=(V,E)$, a subset $J\sse V$ of vertices, and a collection $\zset$ of disjoint subsets of vertices of $V$, such that each set $Z\in \zset$ is $1$-well-linked. Then we can efficiently find a partition $\jset$ of $J$, such that each set $J'\in \jset$ is $\alpha^*$-well linked for $\alpha^*=\Omega(1/(\log^{3/2} n\log\log n))$, and $\sum_{J'\in \jset}|\out(J')|\leq 2\out(J)$. Moreover, if $J$ has any combination of the following three properties: (1) property (P1); (2) property (P2); (3) it is a canonical set for $\zset$, then  each set $J'\in \jset$ will also have the same combination of these properties.\end{theorem}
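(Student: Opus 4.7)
The proof extends the iterative sparsest-cut decomposition that proves Theorem~\ref{thm: well-linked}: maintain a partition $\jset$, initially $\{J\}$; while some $J' \in \jset$ fails $\alpha^*$-well-linkedness, use $\algsc$ to locate a low-sparsity partition $(J_1,J_2)$ of $J'$ and replace $J'$ with $J_1,J_2$. The outgoing-edge bound $\sum_{J'\in \jset} |\out(J')| \le 2|\out(J)|$ follows from the same potential argument as in Theorem~\ref{thm: well-linked}. It remains to argue that canonicity, (P1), and (P2) can each be preserved under every splitting step, so that they propagate to the final partition by induction.

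\textbf{Canonicity.} Before invoking $\algsc$ on a canonical set $J$, I would contract each $Z\in\zset$ with $Z\sse J$ into a single super-vertex (equivalently, forbid any cut that splits $Z$ by assigning an infinite weight to intra-$Z$ edges). Because each $Z$ is $1$-well-linked and its entire boundary is carried by the first row of the grid, any cut that splits some $Z$ pays, by Claim~\ref{claim: cut of grids}, at least as many edges inside $Z$ as the number of its interface vertices crossing the cut; uncrossing $Z$ by moving it entirely to the side containing more of its interface vertices can therefore only improve sparsity. Hence a canonical sparsest cut of sparsity within a constant factor of optimum exists, $\algsc$ produces one, and both $J_1,J_2$ remain canonical.

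\textbf{Property (P1).} Since every splitting cut has sparsity $<\alpha^*\le 1$, we must have $|T_1|,|T_2|\ge 1$, so each side $J_i$ inherits at least one original terminal edge of $J$ reaching a vertex of $T(J)$. I would then prove by induction on the recursion tree that every piece $J'\in\jset$ has (P1). The terminals of $J'$ split into \emph{old} terminals (lying in $T(J)$, hence connected in $G\setminus J\sse G\setminus J'$) and \emph{new} terminals (lying in some sibling piece $J''\in\jset$). At the end of the decomposition each sibling $J''$ is $\alpha^*$-well-linked, so $G[J'']$ is connected; by the inherited-edge invariant, $J''$ is also incident to a terminal edge of $J$ going to $T(J)$. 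Consequently every new terminal of $J'$ lies in a connected subgraph of $G\setminus J'$ that meets the (connected) set $T(J)$, and all terminals of $J'$ end up in a single connected component of $G\setminus J'$.

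\textbf{Property (P2).} Starting from $\pi(J)$, in which $\Gamma(J)$ lies on the outer face, I would construct, for the final partition $\jset$, a planar embedding of each $G[J']$ in which $\Gamma(J')$ lies on one outer face. The induced drawing of $G[J']$ from $\pi(J)$ is planar; its interface vertices are $\Gamma(J)\cap J'$ (on the outer face of $\pi(J)$) together with the $J'$-neighbors of the other pieces. Since at the end of the recursion each sibling $J''$ is $\alpha^*$-well-linked, $G[J'']$ is connected and is drawn as a connected region of the plane inside some single face of the induced drawing of $G[J']$, so the $J'$-vertices adjacent to $J''$ all lie on the boundary of a single face of $G[J']$. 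To merge every such face with the outer face, I would use a re-embedding / surgery argument: for each sibling $J''$ drawn in an interior face $F$ of $G[J']$, argue that $F$ can be coalesced with the existing outer face by rerouting through the sphere (choosing a new outer face and shifting sibling pieces), exploiting the freedom in the planar embedding of each 2-connected block of $G[J']$.

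\textbf{Main obstacle.} Property (P2) is by far the most delicate: unlike canonicity (which follows from a direct uncrossing) and (P1) (which follows from connectivity inherited through one preserved original terminal edge), (P2) requires producing a specific planar embedding of every $J'\in \jset$ in which interface vertices coming from multiple different sibling pieces all land on a common outer face. Making the face-merging step rigorous requires a careful topological analysis of how the sibling pieces are nested inside $\pi(J)$, and may need to exploit the well-linkedness (hence higher connectivity) of the final pieces in order to justify the re-embedding.
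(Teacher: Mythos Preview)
Your canonicity argument matches the paper's: move each split $Z\in\zset$ entirely to the side holding the majority of its interface and show sparsity strictly decreases (using that $Z$ is $1$-well-linked); iterate until both sides are canonical.

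Your (P1) argument has a genuine gap. The ``inherited-edge invariant'' (each side retains an edge of $\out(J)$) does not follow from low sparsity: when you split a piece $J'$, both sides get an edge of $\out(J')$, but $\out(J')$ may consist entirely of edges created by earlier cuts, none in $\out(J)$. One can arrange a decomposition producing a final piece $J''$ all of whose boundary edges go into a single sibling $J'$; then the terminals of $J'$ lying in $J''$ form an isolated component of $G\setminus J'$, and (P1) fails for $J'$. The paper instead enforces (P1) \emph{locally at every split}: if after cutting $J'$ into $(A,B)$ side $A$ fails (P1), some connected component $C$ of $G[B]$ has $\out(C)\sse E(A,B)$; moving $C$ into $A$ only lowers sparsity (since $\out(C)\cap\out(J')=\emptyset$) and preserves canonicity. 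Iterating yields a canonical cut in which both sides inherit (P1) from $J'$.

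Your (P2) plan cannot be completed by face-merging surgery: if $G[J']$ happens to be $3$-connected, its planar embedding is unique (Whitney), and no re-embedding can move interface vertices from an interior face to the outer one. The paper's key device is to \emph{reduce (P2) to (P1)}. Form the auxiliary graph $H=G[J]$ together with a new vertex $s$ joined to every vertex of $\Gamma_G(J)$. Since $J$ has (P2), $H$ is planar (draw $s$ in the outer face of $\pi(J)$), and $J$ trivially has (P1) in $H$ because $T_H(J)=\{s\}$. Run the entire decomposition inside $H$, maintaining (P1) in $H$ at every split via the local fix-up above. For each final piece $J'$, (P1) in $H$ says all of $\Gamma(J')$ is connected through $(H\setminus J')\cup\out_H(J')$; in the planar drawing of $H$ this connected subgraph lies in a single face of the induced drawing of $H[J']=G[J']$, forcing $\Gamma(J')$ onto the boundary of one face. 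That is exactly (P2).
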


Throughout the paper, we use $\alpha^*$ to denote the parameter from Theorem~\ref{thm: well-linked-general}. 

\label{--------------------------------sec: high level overview----------------------------------}
\section{High Level Algorithm Overview}\label{sec: overview}

In this section we provide a high-level overview of the algorithm.
We start by defining the notion of nasty vertex sets.

\begin{definition}
Given a graph $G$,
we say that a subset $S\sse V(G)$ of vertices is \emph{nasty} iff it has properties (P1) and (P2), and $|S|\geq \frac{2^{16}\cdot \dmax^6}{(\alpha^*)^2}\cdot|\Gamma(S)|^2$, where $\alpha^*$ is the parameter from Theorem \ref{thm: well-linked}.
\end{definition}

Note that we do not require that $G[S]$ is connected.

For the sake of clarity, let us first assume that the input graph $\G$ contains no nasty sets. Our algorithm then proceeds as follows. We use a balancing parameter $\rho=O(\optcro{G}\cdot\poly(\dmax\cdot \log n))$ whose exact value is set later. The algorithm  has $O(\rho\cdot\log n)$ iterations. At the beginning of each iteration $h$, we are given a collection $G_1,\ldots, G_{k_h}$ of $k_h\leq \optcro{G}$ disjoint sub-graphs of $\G$, together with bounding boxes $X_i\sse G_i$ for all $i$. We are guaranteed that w.h.p., there is a strong solution to each problem $\pi(G_i,X_i)$, of total cost at most $\optcro{\G}$.
In the first iteration, $k_1=1$, and the only graph is $G_1=\G$, whose bounding box is $X_0=\emptyset$. 

We now proceed to describe each iteration. The idea is to find a \emph{skeleton} $K_i$ for each graph $G_i$, with $X_i\sse K_i$, such that $K_i$ only contains good edges --- that is, edges that do not participate in any crossings in the optimal solution $\phi$, and $K_i$ has a unique planar drawing, in which $X_i$ serves as the bounding box. Therefore, we can efficiently find the drawing $\phi_{K_i}$ of the skeleton $K_i$, induced by the optimal drawing $\phi$. We then decompose the remaining graph $G_i\setminus E(K_i)$ into \emph{clusters}, by removing a small subset of edges from it, so that, on the one hand, for each such cluster $C$, we know the face $F_C$ of $\phi_{K_i}$ where we should embed it, while on the other hand, different clusters $C,C'$ do not interfere with each other, in the sense that we can find an embedding of each one of these clusters separately, and their embeddings do not affect each other. For each such cluster $C$, we then define a new problem $\pi(C,\gamma(F_C))$, where $\gamma(F_C)$ is the boundary of the face $F_C$. We will ensure that all resulting sub-problems have strong solutions whose total cost is at most $\optcro{\G}$. In particular, there are at most $\optcro{G}$ resulting sub-problems, for which $\emptyset$ is not a feasible weak solution. Therefore, in the next iteration we will need to solve at most $\optcro{G}$ new sub-problems.
The main challenge is to find $K_i$, such that the number of vertices in each such cluster $C$ is bounded by roughly $(1-1/\rho)|V(G_i)|$, so that the number of iterations is indeed bounded by $O(\rho \log n)$. We need this bound on the number of iterations, since the probability of successfully constructing the skeletons in each iteration is only $(1-1/\rho)$. Roughly speaking, we are able to build the skeleton as required, if we can find a $\rho$-balanced $\alpha$-well-linked bipartition of the vertices of $G_i$, where $\alpha=1/\poly(\dmax \cdot \log n)$. We are only able to find such a partition if no nasty sets exist in $\G$. More precisely, we show an efficient algorithm, that either finds the desired bi-partition, or returns a nasty vertex set.

In order to obtain the whole algorithm, we therefore need to deal with nasty sets. We do so by performing a graph contraction step, which is formally defined in the next section. Informally,
given a nasty set $S$, we find a partition $\xset$ of $S$, such that for every pair $X,X'\in \xset$, the graphs $\G[X],\G[X']$ share at most one interface vertex and no edges. Each such graph $\G[X]$ is also $\alpha^*$-well-linked, has properties (P1) and (P2), and $\sum_{X\in \xset}|\Gamma(X)|\leq O(|\Gamma(S)|)$. We then replace each sub-graph $\G[X]$ of $\G$ by a grid $Z_X$, whose interface is $\Gamma(X)$. After we do so for each $X\in \xset$, we denote by $\G_{|S}$ the resulting contracted graph. Notice that we have replaced $\G[S]$ by a much smaller graph, whose size is bounded by $O(|\Gamma(S)|^2)$. Let $\zset$ denote the collection of sets $V(Z_X)$ of vertices, for $X\in \xset$.
We then show that the cost of the optimal solution to problem $\pi(\G_{|S},\emptyset,\zset)$ is at most $\poly(\dmax\cdot\log n)\optcro{G}$. Therefore, we can restrict our attention to canonical solutions only. We also show that it is enough to find a weak solution to problem $\pi(\G_{|S},\emptyset,\zset)$, in order to obtain a weak solution for the whole graph $\G$.
Unfortunately, we do not know how to find a nasty set $S$, such that the corresponding contracted graph $\G_{|S}$ contains no nasty sets. Instead, we do the following. Let $H=\G_{|S}$ be the current graph, which is a result of the graph contraction step on some set $S$ of vertices, and let $\zset$ be the corresponding collection of sub-sets of vertices representing the grids. Suppose we can find a nasty canonical set $R$ in the graph $H$. We show that this allows us to find a new set $S'$ of vertices in $\G$, such that the contracted graph $\G_{|S'}$ contains fewer vertices than $\G_{|S}$.

Returning to our algorithm, let $\G_{|S}$ be the current contracted graph. We show that with high probability, the algorithm either returns a weak solution for $\G_{|S}$ of cost $O\left ((\optcro{G})^5\poly(\dmax \cdot \log n)\right )$, or it returns a nasty canonical subset $S'$ of $\G_{|S}$. In the former case, we can recover a good weak solution for the original graph $\G$. In the latter case, we find a subset $S''$ of vertices in the original graph $\G$, and perform another contraction step on $\G$, obtaining a new graph $\G_{|S''}$, whose size is strictly smaller than that of $\G_{|S}$. We then apply the algorithm to graph $\G_{|S''}$. Since the total number of graph contraction steps is bounded by $n$, after $n$ such iterations, we are guaranteed w.h.p. to obtain a weak feasible solution of cost $O\left ((\optcro{G})^5\poly(\dmax \cdot \log n)\right )$ to $\pi(\G,\emptyset)$, thus satisfying the requirements of Theorem~\ref{thm:main}.
We now turn to formal description of the algorithm. One of the main ingredients is the graph contraction step, summarized in the next section.

\section{Graph Contraction Step}\label{sec: graph contraction}
The input to the graph contraction step consists of the input graph $\G$, and a subset $S\sse V(\G)$ of vertices, for which properties (P1) and (P2) hold. It will be convenient to think of $S$ as a nasty set, but we do not require it.

Let $\cset=\set{G_1,\ldots,G_q}$ be the set of all connected components of $\G[S]$. For each $1\leq i\leq q$, let $\Gamma_i=V(G_i)\cap \Gamma(S)=\Gamma(V(G_i))$ be the set of the interface vertices of $G_i$. 
The goal of the graph contraction step is to find,  for each $1\leq i\leq q$, 
a partition $\xset_i$ of the set $V(G_i)$, that has the following properties. Let $\xset=\bigcup_{i=1}^q\xset_i$.


\begin{properties}{C}
\item Each set $X\in \xset$ is $\alpha^*$-well-linked, and has properties (P1) and (P2).
Moreover, there is a planar drawing $\pi'(X)$ of $\G[X]$, and a simple closed curve $\gamma_X$, such that $\G[X]$ is embedded inside $\gamma_X$ in $\pi'(X)$, and the vertices of $\Gamma(X)$ lie on $\gamma_X$.\label{property: subsets-first}

\item For each $X\in\xset$, either $|\Gamma(X)|=2$, or there is a partition $(C^*_X,R_1,\ldots,R_t)$ of $X$, such that $\G[C^*_X]$ is $2$-connected and $\Gamma(X)\sse C^*_X$. Moreover, for each $1\leq t'\leq t$, there is a vertex $u_{t'}\in C^*_X$, whose removal from $\G[X]$ separates the vertices of $R_{t'}$ from the remaining vertices of $X$.
\label{property: structure of X}

\item For each pair $X,X'\in \xset$, the two sets of vertices are completely disjoint, except for possibly sharing one interface vertex, $v\in \Gamma(X)\cap \Gamma(X')$. 
\label{property: disjointness}

\item For each $1\leq i\leq q$, if $\Gamma_i'=\bigcup_{X\in\xset_i}\Gamma(X)$, then $|\Gamma_i'|\leq 9|\Gamma_i|$. \label{property: subsets - last}

\item For each $X\in \xset$, $|X|\geq (\alpha^*|\Gamma(X)|)^2/64\dmax^2$.\label{property: size-last}
\end{properties}

 For each set $X\in \xset$, we now define a new graph $Z'_X$, that will eventually replace the sub-graph $\G[X]$ in $\G$. Intuitively, we need $Z'_X$ to contain the vertices of $\Gamma(X)$ and to be $1$-well-linked w.r.t. these vertices. We also need it to have a unique planar embedding where the vertices of $\Gamma(X)$ lie on the boundary of the same face, and finally, we need the size of the graph $Z'_X$ to be relatively small, since this is a graph contraction step. The simplest graph satisfying these properties is a grid of size $|\Gamma(X)|\times |\Gamma(X)|$. 
 
 Specifically, we first define a graph $Z_X$ as follows: if $|\Gamma_X|=1$, then $Z_X$ consists of a single vertex, and if $|\Gamma_X|=2$, then $Z_X$ consists of a single edge. Otherwise, $Z_X$ is a grid of size $|\Gamma(X)|\times |\Gamma(X)|$. In order to obtain the graph $Z'_X$, we add the set $\Gamma(X)$ of vertices to $Z_X$, and add a matching between the vertices of the first row of the grid and the vertices of $\Gamma(X)$. This is done so that the order of the vertices of $\Gamma(X)$ along the first row of the grid is the same as their order along the curve $\gamma_X$ in the drawing $\pi'(X)$. We refer to these new edges as the \emph{matching edges}. For the cases where $|\Gamma_X|=1$ and $|\Gamma_X|=2$, we obtain $Z'_X$ by adding the vertices of $\Gamma(X)$ to $Z_X$, and adding an arbitrary matching between $\Gamma_X$ and the vertices of $Z_X$. (See Figure~\ref{fig: grids}).

\begin{figure}[h]
\centering
\subfigure[General case]{
	\scalebox{0.4}{\includegraphics{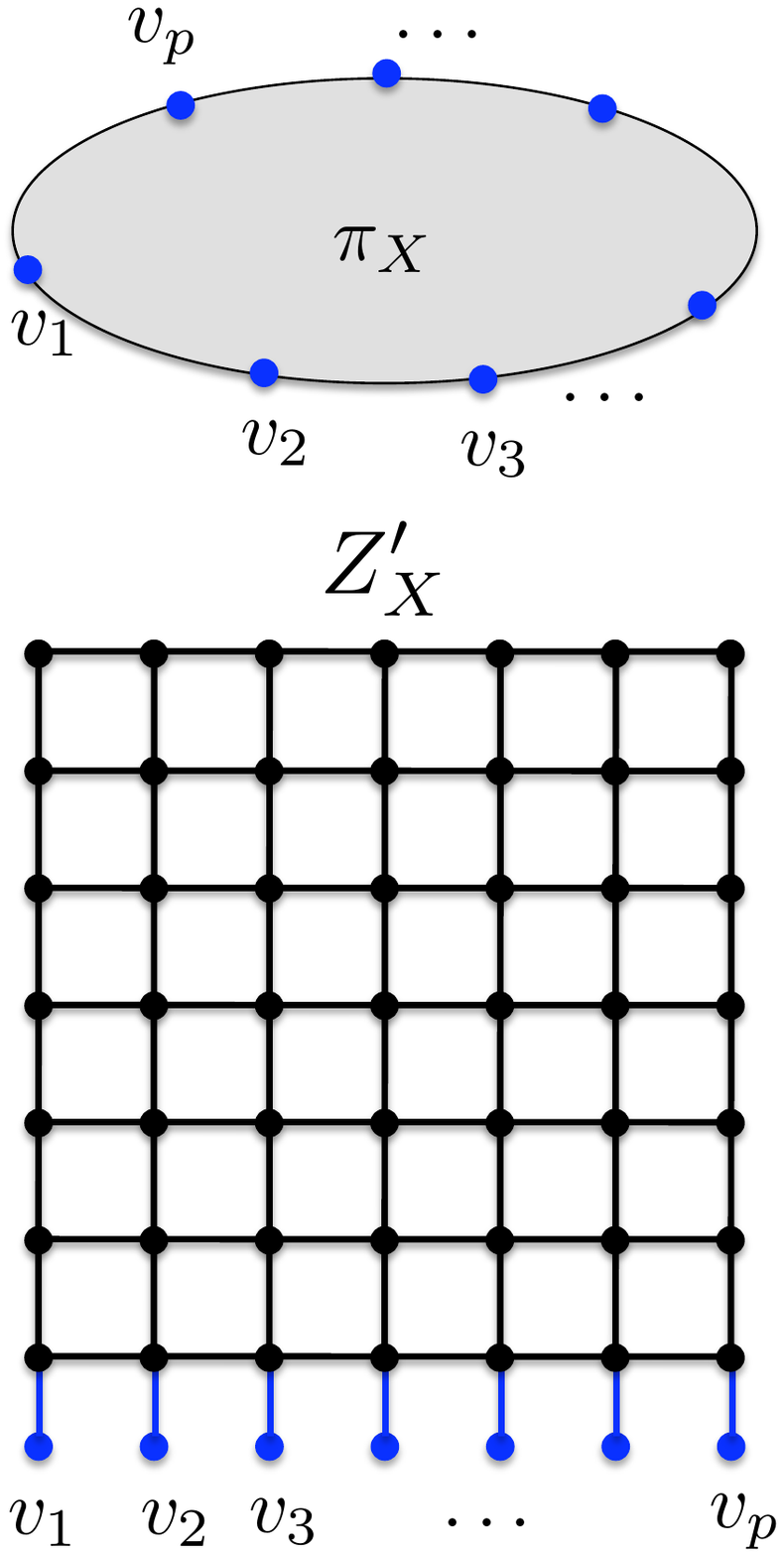}}
	\label{fig: grid}
}
\hfill
\subfigure[$|\Gamma(X)|=1$]{
	\scalebox{0.3}{\includegraphics{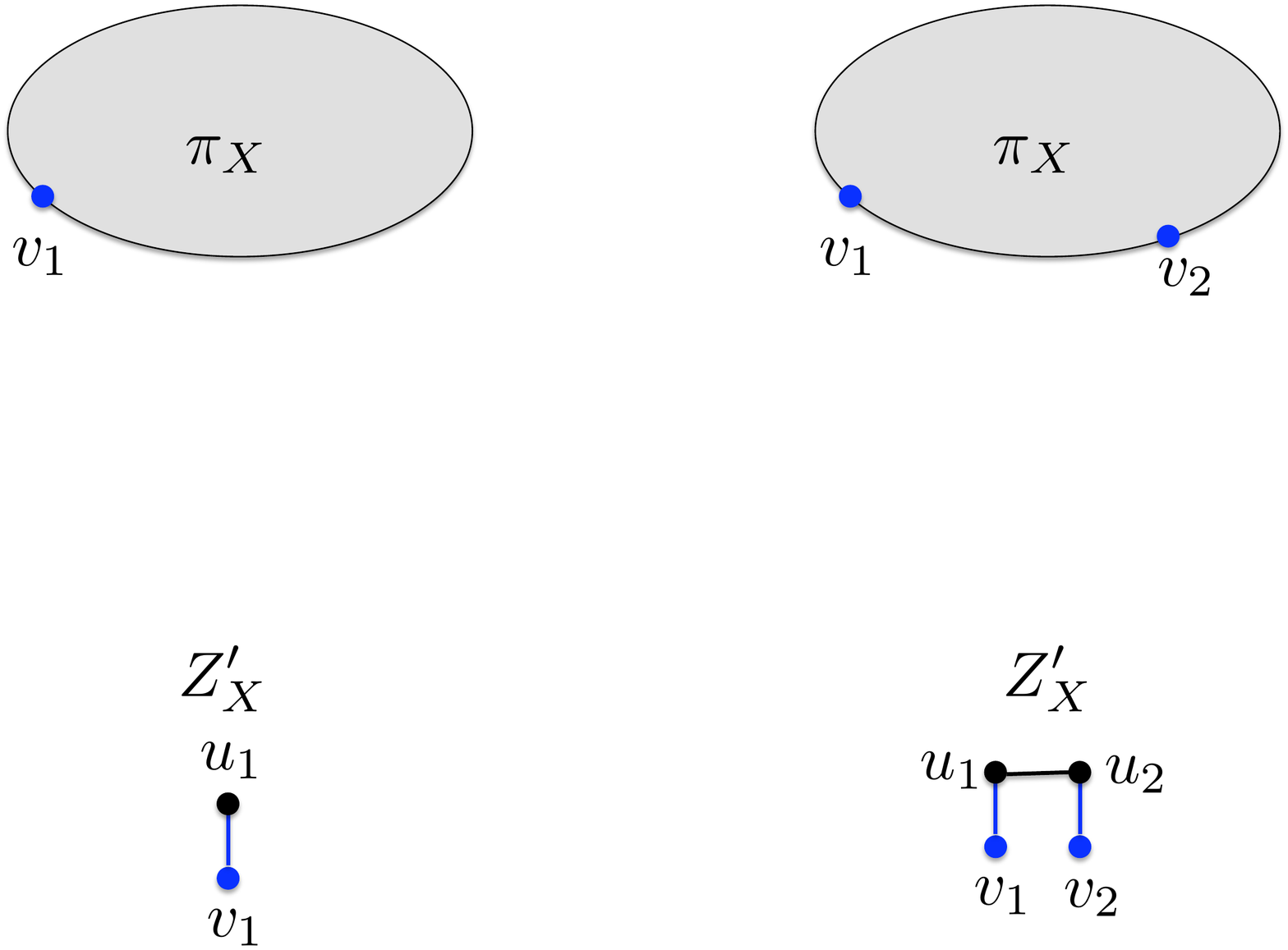}}
	\label{fig: grid 1 interface}
}
\hfill
\subfigure[$|\Gamma(X)|=2$]{
	\scalebox{0.3}{\includegraphics{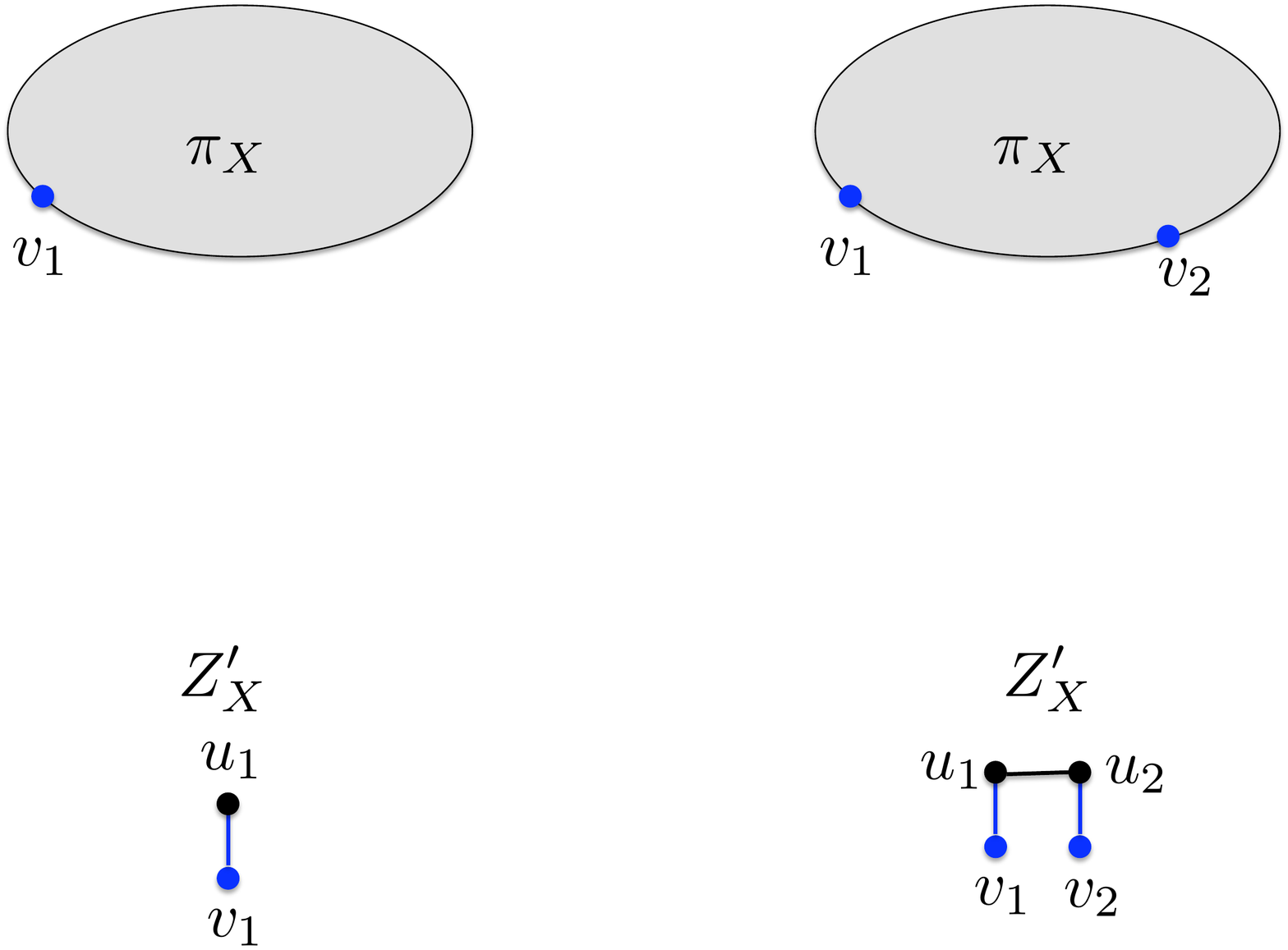}}
	\label{fig: grid 2 interface}
}

\caption{Graph $Z'_X$. The matching edges and the interface vertices are blue; the grid $Z_X$ is black.}\label{fig: grids}
\end{figure}



The contracted graph $\G_{|S}$ is obtained from $\G$, by replacing, for each $X\in\xset$, the subgraph $\G[X]$ of $\G$, with the graph $Z'_X$. This is done as follows: first, delete all vertices and edges of $\G[X]$, except for the vertices of $\Gamma(X)$, from $\G$, and add the edges and the vertices of $Z'_X$ instead. Next, identify the copies of the interface vertices $\Gamma[X]$ in the two graphs. Let $H=\G_{|S}$ denote the resulting contracted graph.
Notice that 

\begin{equation}\label{eq: final size}
\sum_{i=1}^q\sum_{X\in \xset_i}|V(Z'_X)|\leq \sum_{i=1}^q\sum_{X\in \xset_i}2|\Gamma(X)|^2\leq \sum_{i=1}^q2|\Gamma_i'|^2\dmax^2\leq 162\dmax^2|\Gamma|^2
\end{equation}

(we have used the fact that a vertex may belong to the interface of at most $\dmax$ sets  $X\in \xset_i$, and Property~(\ref{property: subsets - last})). Therefore, if the initial vertex set $S$ is nasty, then we have indeed reduced the graph size, as $|V(H)|<|V(\G)|$.

We now define a collection $\zset$ of subsets of vertices of $H$, as follows:
$\zset=\set{V(Z_X)\mid X\in \xset}$. 
Notice that these sets are completely disjoint, as $Z_X$ does not contain the interface vertices $\Gamma(X)$. 
Moreover, for each $Z\in \zset$, $H[Z]$ is a grid, $\Gamma_H(Z)$ consists of the vertices in the first row of the grid, and $\out_H(Z)$ consists of the set of the matching edges, each of which connects a vertex in the first row of the grid $Z$ to a distinct vertex in $T_H(Z)$.
Using Definitions~\ref{definiton: canonical subset} and \ref{definition: canonical drawing},
we can now define canonical subsets of vertices, canonical drawings and canonical solutions to the \MP problem on $H$, with respect to $\zset$.
Our main result for graph contraction is summarized in the next theorem, whose proof appears in Appendix.

\begin{theorem}\label{thm: graph contraction}
Let $S\sse V(\G)$ be any subset of vertices with properties (P1) and (P2), and let $\set{G_1,\ldots,G_q}$ be the set of all connected components of graph $\G[S]$. Then
for each $1\leq i\leq q$, we can efficiently find a partition $\xset_i$ of  $V(G_i)$, such that the resulting partition $\xset=\bigcup_{i=1}^q\xset_i$ of $S$ has properties~(\ref{property: subsets-first})--(\ref{property: size-last}). Moreover, there is a canonical drawing of the resulting contracted graph $H=\G_{|S}$ with $O(d_{\max}^9 \cdot \log^{10} n \cdot (\log\log n)^4 \cdot \optcro{\G})$ crossings.
\end{theorem}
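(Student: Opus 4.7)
The plan has two parts: an algorithmic construction of the partition $\xset$ satisfying properties (C1)--(C5), and an existential argument for the canonical drawing with the claimed crossing bound. I would begin with the partition. For each connected component $G_i$ of $\G[S]$, first compute the block-cut decomposition, which expresses $G_i$ as a tree of biconnected blocks glued at cut vertices. This decomposition directly provides the structural skeleton required by property (C2): a central $2$-connected piece $C^*_X$ (a block containing interface vertices of $\Gamma(X)$), together with pendant pieces $R_{t'}$ attached via single cut vertices $u_{t'}$. Because any two different blocks share at most one vertex, property (C3) is immediate, and a standard accounting (a cut vertex contributes to at most $\dmax$ resulting pieces) gives the interface-preservation bound (C4). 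Inside each central block, I would apply the refined well-linked decomposition of Theorem~\ref{thm: well-linked-general} to further partition it into $\alpha^*$-well-linked sets which inherit properties (P1) and (P2) from $S$ (these are induced subproperties on the blocks, and (P2) lets us produce the auxiliary planar drawing $\pi'(X)$ with interface on a single closed curve $\gamma_X$ required by (C1)).

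The remaining property (C5), a lower bound $|X|\geq (\alpha^*|\Gamma(X)|)^2/(64\dmax^2)$, is where the structural definition of nasty sets enters. For $X$ with a planar drawing with interface on the outer face and $\alpha^*$-well-linkedness, I would invoke the planar grid-minor theorem: such a planar well-linked graph (after the standard degree reduction by a factor $\dmax$) has treewidth $\Omega(\alpha^*|\Gamma(X)|/\dmax)$ and thus contains a grid minor of side $\Omega(\alpha^*|\Gamma(X)|/\dmax)$, forcing $|X|=\Omega((\alpha^*|\Gamma(X)|/\dmax)^2)$. Any $X$ failing (C5) can be merged with a neighboring piece along a shared cut vertex without breaking the other properties, since merging preserves well-linkedness when the gluing is at a single vertex. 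Iterating this merging to fixpoint yields the final partition.

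For the canonical drawing, start from the optimal drawing $\phi$ of $\G$ with $\optcro{\G}$ crossings. For each $X\in\xset$, delete the drawing of $\G[X]\setminus\Gamma(X)$ from $\phi$, leaving a region whose boundary carries the drawings of $\Gamma(X)$ in their cyclic order from $\pi'(X)$. In this region, embed $Z_X$ planarly (it is a grid) so that its first row meets $\Gamma(X)$ in the same cyclic order via the matching edges. Grid edges then participate in no crossings by construction. The crossings that remain are of three types: (i) crossings among edges of $\G\setminus(S\setminus\Gamma(S))$, which are preserved from $\phi$ and bounded by $\optcro{\G}$; (ii) crossings between matching edges and preserved edges; and (iii) crossings among matching edges. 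I would bound (ii) and (iii) by invoking Observation~\ref{observation: existence of flow in well-linked instance} inside each $X$: $\alpha^*$-well-linkedness produces a concurrent flow routing the interface pairs through $\G[X]$ with congestion $O(\betaFCG/\alpha^*)=O(\log^{5/2} n\log\log n)$, and this routing can be turned into an embedding of the matching edges whose crossings with external edges can be charged to crossings already present in $\phi$ involving $\out_\G(X)$, paying the congestion factor plus $\poly(\dmax)$ for the grid-to-interface fan-out. Summing over all $X$ and tracking the polynomial factors from $\alpha^*$, $\betaFCG$, $\dmax$, and the replication introduced by block and well-linked decompositions yields the final bound $O(\dmax^9\cdot \log^{10}n\cdot(\log\log n)^4\cdot \optcro{\G})$.

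The main obstacle will be step (ii)/(iii) of the crossing accounting. Interface vertices are shared across multiple sets $X,X'$ (as allowed by (C3)), so matching edges emanating from different grids meet at a common vertex; one must show that their cyclic arrangement can be made consistent with $\phi$ around that vertex without creating extra crossings, which requires a careful use of the planar drawing $\pi'(X)$ provided by (C1). Moreover, charging the matching-edge crossings back to crossings of $\phi$ requires the standard ``low-congestion routing implies bounded crossings'' lemma applied simultaneously over all $X$, and the resulting double-counting of shared crossings dictates the exponents appearing in the final estimate.
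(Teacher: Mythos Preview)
Your proposal has the right high-level shape but contains genuine gaps in both parts.

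For the partition, you invert the paper's order: the paper first applies the well-linked decomposition (Theorem~\ref{thm: well-linked-general}) to each component $G_i$, obtaining $\alpha^*$-well-linked pieces $Y_j$, and only then computes the block--cut tree of each $Y_j$. A marking procedure on this tree (mark $v_C$ whenever $C$ meets $\Gamma(Y_j)$, then propagate upward when two marked subtrees meet) cuts the tree so that each resulting piece $X$ either has $|\Gamma(X)|\le 2$ or has all of $\Gamma(X)$ inside a single $2$-connected root block $C^*_X$; this is precisely what delivers (C\ref{property: structure of X}). Your order---block--cut first, then well-linked decomposition of a ``central block''---breaks this: after the well-linked decomposition the pieces need not be $2$-connected, so (C\ref{property: structure of X}) is no longer guaranteed. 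For (C\ref{property: size-last}), no grid-minor theorem or merging is needed: it follows directly from well-linkedness plus (P2) by splitting $\Gamma(X)$ into four consecutive arcs on the outer face and observing that the $\Omega(\alpha^*|\Gamma(X)|/\dmax)$ vertex-disjoint paths between opposite arcs must pairwise intersect in the planar drawing, forcing $\Omega((\alpha^*|\Gamma(X)|/\dmax)^2)$ vertices. Your merging step is dangerous because it can destroy both well-linkedness and (C\ref{property: structure of X}).

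The canonical-drawing part is where the real obstacle lies, and your sketch does not meet it. You assert that deleting $\G[X]\setminus\Gamma(X)$ from $\phi$ ``leaves a region whose boundary carries the drawings of $\Gamma(X)$ in their cyclic order from $\pi'(X)$''---but this is exactly what must be \emph{proved}, and it is false in general: in $\phi$ the interface vertices may appear in a completely different cyclic order than in $\pi'(X)$. The paper's argument is substantially more involved. First, surgery on each $\G[X]$ (removing pendant pieces, introducing fresh interface vertices with matching edges, replacing most $2$-cut clusters by paths) yields a graph $H^{(3)}(X)$ with controlled $2$-separator structure; this already costs a factor $O(\dmax^3/\alpha^*)$ in crossings. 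Second, one defines \emph{irregular} vertices and edges of $H^{(3)}(X)$---those whose local rotation or orientation in $\phi$ disagrees with the planar drawing $\psi_X$---and bounds their number by $O(\beta^*)\cdot\cro_\phi(G^{(3)}(X),G^{(3)})$, using a lemma from~\cite{CMS10} for the $3$-connected part together with a separate face-counting argument for the surviving $2$-separators and interface vertices. Third, a routing lemma (Lemma~\ref{lem:ell2}) produces a single center $v^*_X\in H^{(3)}(X)$ and paths from every interface vertex to $v^*_X$ whose \emph{squared} congestion summed over the irregular edges is $O((\beta^*)^2\log n)$ times their number. One then keeps the $\phi$-drawing of the connected component of $H^{(3)}(X)\setminus N(X)$ containing $v^*_X$ (which, being free of irregular vertices and edges and of crossings, is drawn identically in $\phi$ and $\psi_X$), re-embeds the rest of $H^{(3)}(X)$ planarly around it, and routes each matching edge along the $\phi$-image of its path to $v^*_X$. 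Crossings among the routed edges arise only at irregular vertices, at irregular edges, and where the original paths crossed in $\phi$, and the squared-congestion bound controls their total. Your proposal of routing interface \emph{pairs} via Observation~\ref{observation: existence of flow in well-linked instance} and ``charging to crossings involving $\out_\G(X)$'' misses both the irregular-vertex analysis and the squared-congestion routing lemma, without which the accounting does not close.
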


The next claim shows, that in order to find a good solution to the \MP problem on $\G$, it is enough to solve it on $\G_{|S}$. 

\begin{claim}\label{claim: enough to solve contracted graph}
Let $S$ be any subset of vertices of $\G$, $\xset$ any partition of $S$ with properties~(\ref{property: subsets-first})--(\ref{property: size-last}), $H=\G_{|S}$ the corresponding contracted graph and $\zset$ the collection of grids $Z_X$ for $X\in \xset$. Then given any canonical solution $E^*$ to the \MP problem on $H$, we can efficiently find a solution  of cost $O(\dmax)|E^*|$
to \MP on $\G$.\end{claim}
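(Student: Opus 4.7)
The plan is to turn the canonical planarizing set $E^*$ in $H$ into a planarizing set $\widehat{E}^*$ of $\G$ by replacing every matching edge of $E^*$ with a small bundle of edges of $\G$ around the corresponding interface vertex, and then to obtain a planar embedding of $\G\setminus\widehat{E}^*$ by locally editing a planar embedding of $H\setminus E^*$, one grid at a time.

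I would first define $\widehat{E}^*$ as follows. Canonicity means $E^*$ contains no grid edge, so every $e\in E^*$ is either (i) an edge of $\G$ that was inherited by $H$, or (ii) a matching edge of some $Z'_X$. I would place every edge of type (i) into $\widehat{E}^*$, and for every edge of type (ii), say incident to $v\in\Gamma(X)$, I would add to $\widehat{E}^*$ all edges of $\G[X]$ incident to $v$ (at most $\dmax$ of them). This gives $|\widehat{E}^*|\le(\dmax+1)|E^*|=O(\dmax)\,|E^*|$.

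To prove that $\G\setminus\widehat{E}^*$ is planar, I would start from any planar drawing $\psi$ of $H\setminus E^*$ and perform the following local replacement around each $X\in\xset$. When $|\Gamma(X)|\ge 3$, the grid $H[Z_X]$ is $3$-connected, so by Theorem~\ref{thm:Whitney} its embedding in $\psi$ is the standard grid embedding, and the first row of $Z_X$ bounds a unique face $F_X$ whose boundary contains the entire first row. Every surviving matching edge of $Z'_X$ must lie inside $F_X$ (otherwise it would cross a grid edge), so every $v\in\Gamma(X)$ still joined to $Z_X$ lies on the closure of $F_X$. I would then trace a simple closed curve $\gamma$ that stays very close to $Z_X$ together with its surviving matching edges and passes through precisely these surviving interface vertices. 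By the planarity of $\psi$, and because the matching in $Z'_X$ was set up so that the cyclic order of $\Gamma(X)$ along the first row of $Z_X$ agrees with their cyclic order along $\gamma_X$ in the drawing $\pi'(X)$ guaranteed by property~(C\ref{property: subsets-first}), the cyclic order in which $\gamma$ visits the surviving interface vertices is consistent with that of $\gamma_X$. I would delete the interior of $\gamma$ and paste into the resulting disk the planar drawing of $\G[X]\setminus\widehat{E}^*$ inherited from $\pi'(X)$, identifying $\gamma_X$ with $\gamma$. An interface vertex whose matching edge belonged to $E^*$ lies outside $\gamma$; by our construction of $\widehat{E}^*$, all its $\G[X]$-edges have been removed, so it stays at its original $\psi$-position with no new constraints.

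The delicate point I expect is the cyclic-order compatibility between $\gamma$ and $\gamma_X$, which rests jointly on the rigidity of the grid embedding (Theorem~\ref{thm:Whitney}), the planarity of the matching drawing inside $F_X$, and the deliberate construction of the matching in Section~\ref{sec: graph contraction}; once this is established, iterating the replacement over all $X\in\xset$ produces a planar embedding of $\G\setminus\widehat{E}^*$, which together with the size bound on $\widehat{E}^*$ proves the claim. The degenerate cases $|\Gamma(X)|\in\{1,2\}$ (where $Z'_X$ is a single vertex or a single edge) are immediate, and interface vertices shared between two sets by property~(C\ref{property: disjointness}) cause no trouble since the surgeries around $Z_X$ and $Z_{X'}$ take place in disjoint regions of $\psi$.
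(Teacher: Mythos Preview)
Your overall architecture is the same as the paper's: replace each deleted matching edge by the at-most-$\dmax$ incident edges in $\G$, take a planar drawing of $H\setminus E^*$, and swap each grid for $\pi'(X)$ along a closed curve through the surviving interface vertices. The size bound and the cyclic-order compatibility argument are fine.

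The gap is the sentence ``Every surviving matching edge of $Z'_X$ must lie inside $F_X$ (otherwise it would cross a grid edge).'' This is false. A first-row vertex $(1,j)$ of the grid lies on the outer face \emph{and} on two inner $4$-faces of $Z_X$; the matching edge leaving $(1,j)$ can be drawn into either of those inner faces without crossing any grid edge. Concretely, if $v_j\in\Gamma(X)$ has degree~$1$ in $H\setminus E^*$, or if $v_j$ is a cut vertex with a component $C$ hanging off it (the paper's cases~2 and~3), then $v_j$ together with its matching edge and $C$ can sit entirely inside an inner grid face in $\psi$. Likewise, a whole connected component of $H\setminus E^*$ disjoint from $Z'_X$ can be parked inside an inner face (the paper's case~1). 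In any of these situations your curve $\gamma$ cannot exist: it would have to reach an interface vertex trapped inside an inner face without crossing a grid edge, which is impossible. The paper spends the bulk of its proof on exactly this point, first transforming $\psi$ so that the drawing of $Z'_X\setminus\Gamma'(X)$ becomes ``standard'' (Figure~\ref{fig: any to standard drawing}): it explicitly relocates such components out of the grid's faces, and only then draws the closed curve. You need the same standardization step before your surgery can proceed.

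Two minor remarks. First, $Z_X$ is not $3$-connected (the corner vertices have degree~$2$), so Whitney's theorem does not literally apply; the uniqueness of the grid embedding is still true but needs a different justification. Second, your $\widehat{E}^*$ is slightly leaner than the paper's $E^{**}$ (you remove only $\G[X]$-edges at $v$ rather than all $\G$-edges at $v$); this refinement seems to work, but once you insert the missing standardization step you should re-verify that an interface vertex whose matching edge was deleted really imposes no constraint on the replacement.
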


\begin{proof}
Partition set $E^*$ of edges into two subsets: $E^*_1$ contains all edges that belong to sub-graphs $Z'_X$ for $X\in \xset$, and $E^*_2$ contains all remaining edges. Notice that since $E^*$ is a canonical solution, each edge $e\in E^*_1$ must be a matching edge for some graph $Z'_X$. Also from the construction of the contracted graph $H$, all edges in $E^*_2$ belong to $E(\G)$.

Consider some set $X\in \xset$, and let $\Gamma'(X)\sse \Gamma(X)$ denote the subset of the interface vertices of $Z'_X$, whose matching edges belong to $E^*_1$. Let $\Gamma'=\bigcup_{X\in \xset}\Gamma'(X)$. We now define a subset $E^{**}_1$ of edges of $\G$ as follows: for each vertex $v\in \Gamma'$, add all edges incident to $v$ in $\G$ to $E^{**}_1$. Finally, we set $E^{**}=E^{**}_1\cup E^*_2$. Notice that $E^{**}$ is a subset of edges of $\G$, and $|E^{**}|=|E_1^{**}|+|E_2^*|\leq \dmax|E_1^*|+|E_2^*|\leq \dmax |E^*|$. In order to complete the proof of the claim, it is enough to show that $E^{**}$ is a feasible solution to the \MP problem on $\G$.

Let $\G'=\G\setminus E^{**}$, let $H'=H\setminus E^*$, and let $\psi$ be a planar drawing of $H'$. It is now enough to construct a planar drawing $\psi'$ of $\G'$. In order to do so, we start from the planar drawing $\psi$ of $H'$. We then consider the sets $X\in \xset$ one-by-one. For each such set, we replace the drawing of $Z'_X\setminus \Gamma'(X)$ with a drawing of $\G[X]\setminus \Gamma'(X)$. The drawings of the vertices in $\Gamma(X)$ are not changed by this procedure. After all sets $X\in \xset$ are processed, we will obtain a planar drawing of graph $\G'$ (that may also contain drawings of some edges in $E^{**}$, that we can simply erase).

Consider some such set $X\in \xset$. Let $G$ be the current graph (obtained from $H'$ after a number of such replacement steps), and let $\psi$ be the current planar drawing of $G$. Observe that the grid $Z_X$ has a unique planar drawing. We say that a planar drawing of graph $Z'_X\setminus \Gamma'(X)$ is \emph{standard} in $\psi$, iff we can draw a simple closed curve $\gamma'_X$, such that $Z_X$ is embedded completely inside $\gamma'_X$; no other vertices or edges of $G$ are embedded inside $\gamma'_X$; the only edges that $\gamma'_X$ intersects are the matching edges of $Z'_X\setminus \Gamma'(X)$, and each such matching edge is intersected  exactly once by $\gamma'_X$ (see Figure~\ref{fig: standard drawing}).

\begin{figure}[h]
\scalebox{0.4}{\rotatebox{0}{\includegraphics{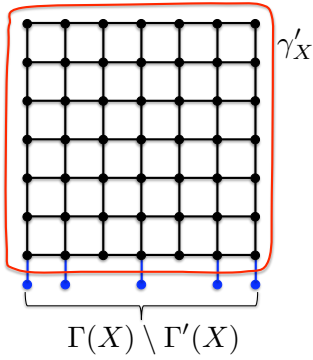}}}\caption{A standard drawing of $Z'_X\setminus \Gamma'(X)$ \label{fig: standard drawing}}
\end{figure}

It is possible that the drawing of $Z'_X\setminus \Gamma'(X)$ in $\psi$ is not standard.
However, since $\psi$ is planar, this can only happen for the following three reasons: (1) some connected component $C$ of the current graph $G$ is embedded inside some face of the grid $Z_X$: in this case we can simply move the drawing of $C$ elsewhere; (2) there is some subset $C$ of $V(G)$, and a vertex $v\in \Gamma(X)\setminus \Gamma'(X)$, such that $\Gamma_G(C)=v$, and $G[C]$ is embedded inside one of the faces of the grid $Z_X$ incident to the other endpoint of the matching edge of $v$; and (3) there is some subset $C$ of $V(G)$, and two consecutive vertices $u,v\in \Gamma(X)\setminus \Gamma'(X)$, such that $\Gamma_G(C)=\set{u,v}$, and $G[C]$ is embedded inside the unique face of the grid $Z_X$ incident to the other endpoints of the matching edges of $u$ and $v$  (See Figure \ref{fig: any to standard drawing}).  In the latter two cases, we simply move the drawing of $C$ right outside the grid, so that the corresponding matching edges now cross the curve $\gamma'(X)$.

\begin{figure}[h]
\centering
\subfigure{
	\scalebox{0.4}{\includegraphics{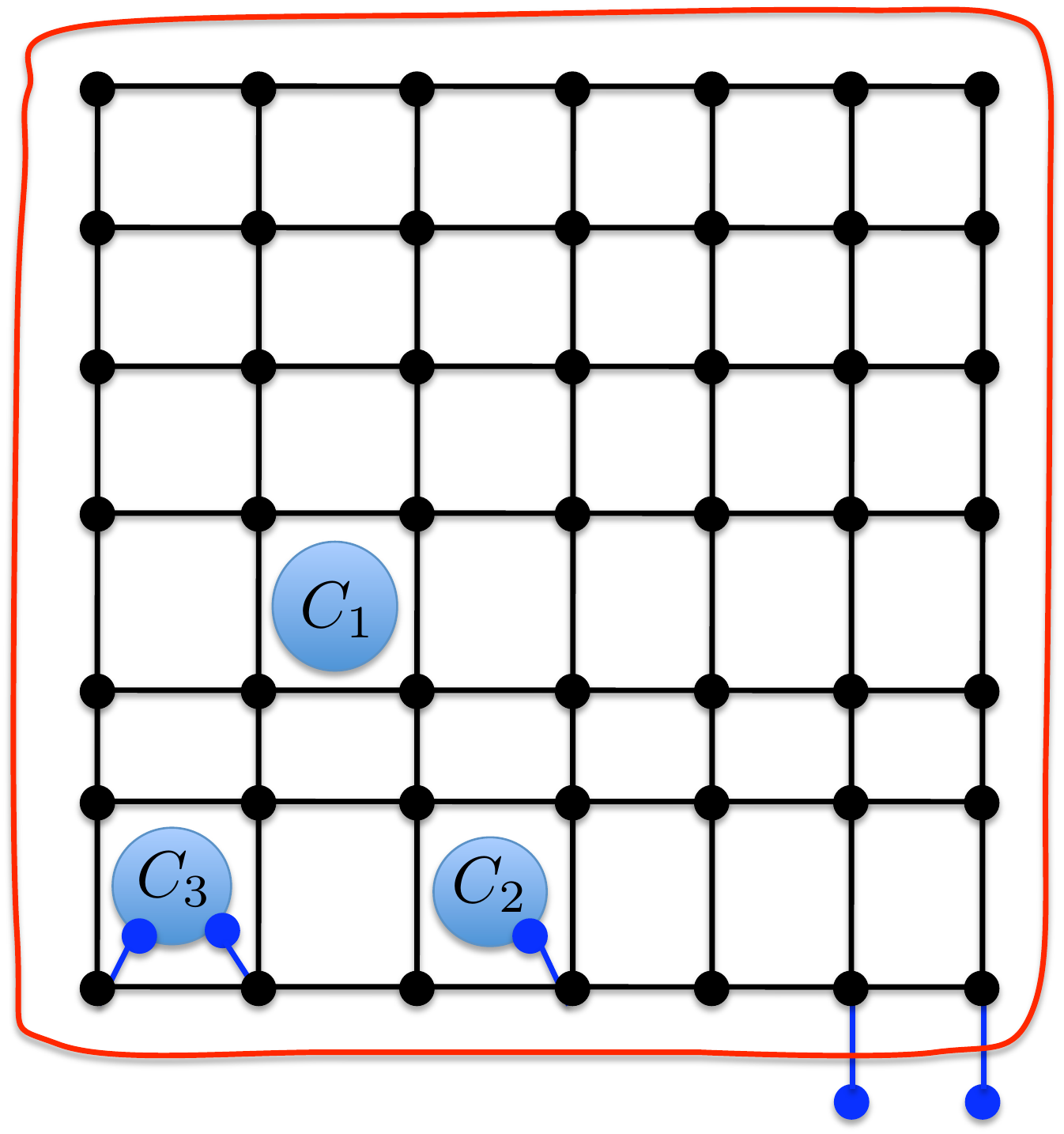}}
	\label{fig: any to standard before}
}
\hfill
\subfigure{
	\scalebox{0.4}{\includegraphics{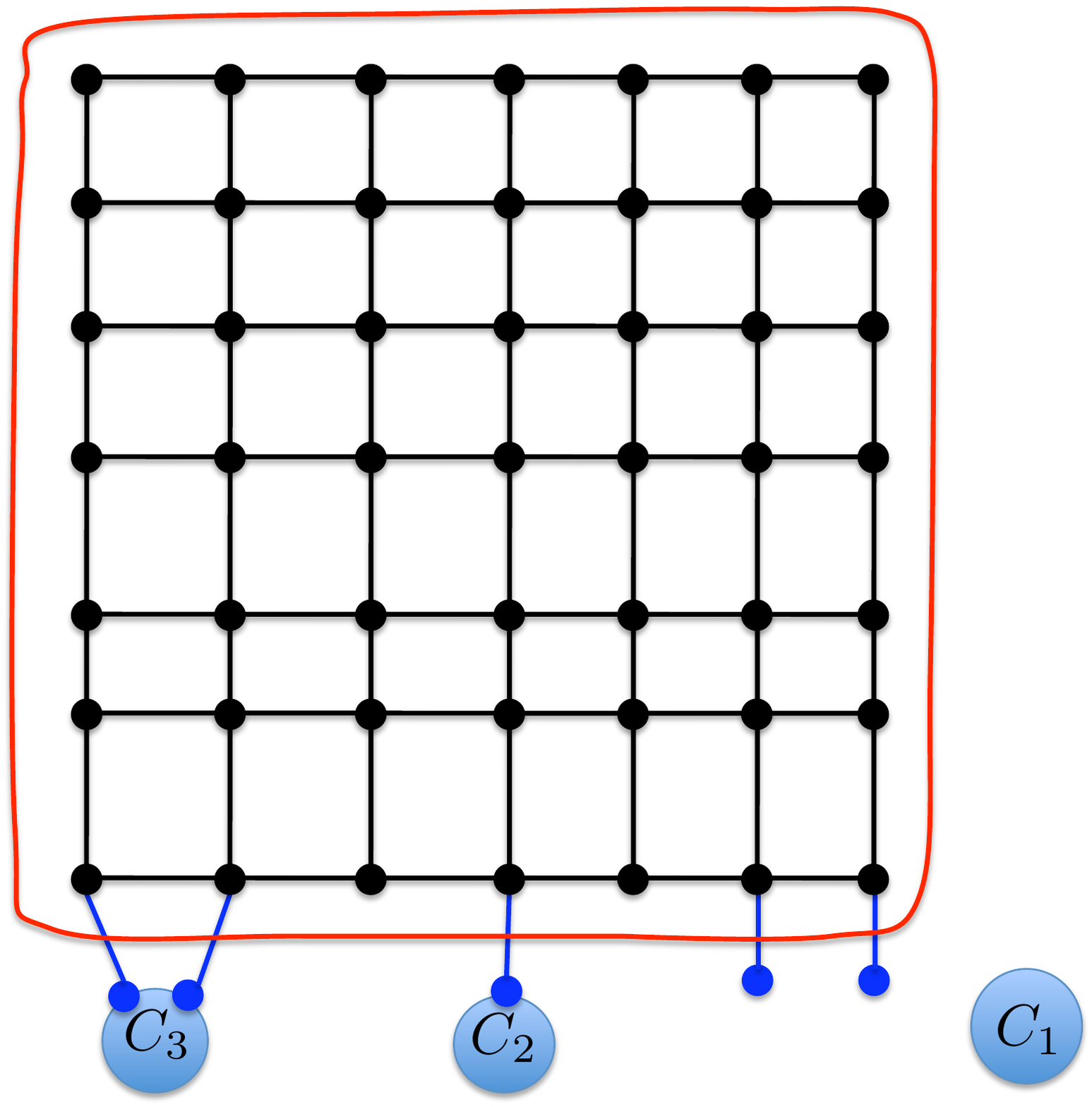}}
	\label{fig: any to standard after}
}
\caption{Transforming drawing $\psi$ to obtain a standard drawing of $Z'_X\setminus \Gamma'(X)$. Cases 1, 2 and 3 are illustrated by clusters $C_1$, $C_2$ and $C_3$, respectively. \label{fig: any to standard drawing}}
\end{figure}

To conclude, we can transform the current planar drawing $\psi$ of the graph $G$ into another planar drawing $\tilde{\psi}$, such that the induced drawing of $Z'_X\setminus \Gamma'(X)$ is standard. We can now draw a simple closed curve $\gamma''(X)$, such that $Z'_X\setminus \Gamma'(X)$ is embedded inside $\gamma''(X)$, no other vertices or edges are embedded inside $\gamma''(X)$, and the set of vertices whose drawings lie on $\gamma''(X)$ is precisely $\Gamma(X)\setminus \Gamma'(X)$. Notice that the ordering of the vertices of $\Gamma(X)\setminus \Gamma'(X)$ along this curve is exactly the same as their ordering along the curve $\gamma(X)$ in the planar embedding $\pi'(X)$ of $\G[X]$, guaranteed by Property (\ref{property: subsets-first}). Let $\pi''(X)$ be the drawing of $\G[X]\setminus \Gamma'(X)$ induced by $\pi'(X)$. We can now simply replace the drawing of $Z'_X\setminus \Gamma'(X)$ with the drawing $\pi''(X)$ of $\G[X]\setminus \Gamma'(X)$, identifying the curves $\gamma_X$ and $\gamma''_X$, and the drawings of the vertices in $\Gamma(X)\setminus \Gamma'(X)$ on them. The resulting drawing remains planar, and the drawings of the vertices in $\Gamma(X)$ do not change.
\end{proof}

Finally, we show that if we find a nasty canonical set in $\G_{|S}$, then we can contract $\G$ even further. The proof of the following theorem appears in Appendix.

\begin{theorem}\label{thm: nasty canonical set to contraction} Let $S$ be any subset of vertices of $\G$, $\xset$ any partition of $S$ with properties~(\ref{property: subsets-first})--(\ref{property: size-last}), $H=\G_{|S}$ the corresponding contracted graph, and $\zset$ the corresponding collection of grids $Z_X$ for $X\in \xset$. Then given any  nasty canonical vertex set $R\sse V(H)$, we can efficiently find a subset $S'\sse V(\G)$ of vertices, and a partition $\xset'$ of $S'$, such that properties~(\ref{property: subsets-first})--(\ref{property: size-last}) hold for $\xset'$, and if $H'=\G_{|S'}$ is the corresponding contracted graph, then $|V(H')|<|V(H)|$.
Moreover,  there is a canonical drawing $\phi'$ of $H'$ with $\cro_{\phi'}(H') = O(d_{\max}^9 \cdot \log^{10} n \cdot (\log\log n)^4 \cdot \optcro{\G})$.
\end{theorem}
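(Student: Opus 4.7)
The plan is to uncontract the nasty set $R$ back to the original graph $\G$ while preserving all prior contractions. Let $V_1 = V(\G) \setminus S$, $V_2 = \bigcup_{X \in \xset} \Gamma(X)$, and $V_3 = \bigcup_{X \in \xset} V(Z_X)$, so that $V(H) = V_1 \cup V_2 \cup V_3$, and recall that canonicity of $R$ means each $V(Z_X)$ lies either wholly in $R$ or is disjoint from $R$. I would set
\[
S' \;=\; S \cup (R \cap V_1).
\]
First I would verify properties (P1) and (P2) of $S'$ in $\G$. For (P2), start from the planar embedding of $H[R]$ whose outer face carries $\Gamma_H(R)$ (promised by the nastiness of $R$) and replace each grid $Z_X \subseteq R$ by the planar drawing $\pi'(X)$ of $\G[X]$ given by property~(\ref{property: subsets-first}); the splice is consistent because $\Gamma(X)$ appears in the same cyclic order along the first row of $Z_X$ and along the outer face of $\pi'(X)$. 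Property (P1) follows by combining (P1) of $R$ in $H$ with the connectivity of each $\G[X]$.

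Once (P1) and (P2) hold for $S'$, invoking Theorem~\ref{thm: graph contraction} on $S'$ produces a partition $\xset'$ with properties~(\ref{property: subsets-first})--(\ref{property: size-last}) and a canonical drawing of $H' = \G_{|S'}$ of crossing number $O(\dmax^9 \log^{10} n (\log \log n)^4 \cdot \optcro{\G})$, which handles every claim of the theorem except the strict inequality $|V(H')| < |V(H)|$.

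The main obstacle is the size decrease. Let $z = |\Gamma_H(R)|$. A case analysis on interface vertices shows $|\Gamma_\G(S')| \leq |\Gamma_\G(S)| + z$: any interface vertex of $S'$ either sits in $\Gamma_\G(S)$ while retaining a neighbor in $V_1 \setminus R$, or sits in $R \cap V_1$ with a neighbor in $V_1 \setminus R$ (and corresponds to a distinct element of $\Gamma_H(R) \cap V_1$). The critical structural step is to choose $\xset'$ to reuse the sub-partition of $\xset$ on connected components of $\G[S']$ that are disjoint from $R \cap V_1$, and apply Theorem~\ref{thm: graph contraction} only on the components containing new vertices from $R \cap V_1$; by inequality~(\ref{eq: final size}) applied locally, the new grid mass contributed beyond the old mass $|V_2 \cup V_3|$ is at most $O(\dmax^2 z^2)$. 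Hence
\[
|V(H)| - |V(H')| \;\geq\; |R \cap V_1| \;-\; O(\dmax^2 z^2).
\]
To finish, I would establish that a nasty set must be significantly $V_1$-concentrated: a subset $R \subseteq V_2 \cup V_3$ consisting only of grids and their interface vertices satisfies $|R|/|\Gamma_H(R)|^2 = O(1)$ (by a short Cauchy--Schwarz estimate: $\sum|\Gamma(X)|^2 \leq (\sum|\Gamma(X)|)^2$), contradicting the nastiness ratio $|R| \geq \tfrac{2^{16} \dmax^6}{(\alpha^*)^2} z^2$. A quantitative refinement of this observation then forces $|R \cap V_1|$ to be large enough that the polynomial gap between $\tfrac{\dmax^6}{(\alpha^*)^2}$ and $\dmax^2$ (using $\alpha^* < 1$) absorbs the $O(\dmax^2 z^2)$ loss, yielding $|V(H')| < |V(H)|$. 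The delicate technicalities are checking that the hybrid partition preserves properties~(\ref{property: subsets-first})--(\ref{property: size-last}) globally across old and new parts, and producing the quantitative lower bound on $|R \cap V_1|$ from the nastiness inequality.
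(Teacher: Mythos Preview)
Your approach has a genuine gap in the size-decrease argument, rooted in the choice $S' = S \cup (R \cap V_1)$. A touched component $C$ of $\G[S']$ can contain, besides vertices of $R\cap V_1$, an entire set $X\in\xset$ with $Z_X\cap R=\emptyset$, simply because some $v\in\Gamma(X)$ has a $\G$-neighbor in $R\cap V_1$. When you re-apply Theorem~\ref{thm: graph contraction} to $C$, the new grid mass is bounded only by $162\dmax^2|\Gamma_\G(C)|^2$, and $\Gamma_\G(C)$ includes the old interface of $C\cap S$, which is unrelated to $z=|\Gamma_H(R)|$. Your global bound $|\Gamma_\G(S')|\le|\Gamma_\G(S)|+z$ is correct but does not help, since the $|\Gamma_\G(S)|$ term can dominate; the claimed inequality ``new grid mass $\le$ old mass $+O(\dmax^2 z^2)$'' does not follow. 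Separately, your lower bound on $|R\cap V_1|$ is only sketched: the trivial bound $\sum k_X^2\le(\sum k_X)^2$ does not by itself give $|\Gamma_H(R)|\ge\Omega(\sum k_X)$ for an arbitrary canonical $R\subseteq V_2\cup V_3$, and even granting that no such $R$ is nasty, you would still need the much stronger quantitative statement $|R\cap V_1|>C\dmax^2 z^2$.

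The paper's proof avoids both issues by working with a different $S'$. It first performs a \emph{cleanup}: while some grid $Z_X\subseteq R$ has at most $|T(Z_X)|/4$ of its terminals inside $R$, remove $Z_X$ from $R$; one checks $R$ stays nasty, and afterwards every surviving grid has at least a quarter of its terminals in $R$. Next, rather than keeping all of $S$, the paper \emph{uncontracts} the grids in $R$: for each $X\in\xset_2$ (meaning $Z_X\subseteq R$) it replaces $Z_X$ by $\G[X]$, obtaining a set $S_2^*\subseteq V(\G)$. Property~(\ref{property: size-last}) ensures $|X|\ge(\alpha^*)^2 k_X^2/(64\dmax^2)$, so together with the cleanup guarantee one gets $|S_2^*|\ge (\alpha^*)^2|R|/(2^8\dmax^3)\ge 2^8\dmax^3 z^2$. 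Setting $S'_2=S_2^*\setminus\Gamma_\G(S_2^*)$ and $S'_1=\bigcup_{X\in\xset_1}X$ makes $\G[S'_1]$ and $\G[S'_2]$ edge-disjoint, so one keeps $\xset_1$ unchanged and freshly partitions only $S'_2$. Crucially $|\Gamma_\G(S'_2)|\le\dmax^2 z$ (this is where the cleanup is used), hence the new grid mass on $S'_2$ is at most $162\dmax^6 z^2<|R|/2$, and $|V(H')|<|V(H)|$ follows directly.
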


Notice that Claim~\ref{claim: enough to solve contracted graph} applies to the new contracted graph as well.

\label{---------------------------------------------------------sec algorithm---------------------------------------------------}
\section{The Algorithm}\label{sec: alg}

The algorithm consists of a number of stages. In each stage $j$, we are given as input a subset $S$ of vertices of $\G$, the contracted graph $\H=\G_{|S}$, and the collection $\zset$ of disjoint sub-sets of vertices of $\H$, corresponding to the grids $Z_X$ obtained during the contraction step. The goal of stage $j$ is to either produce a nasty canonical set $R$ in $\H$, or to find a weak feasible solution to problem $\pi(\H,\emptyset,\zset)$. We prove the following theorem.

\begin{theorem}\label{thm: alg summary}
There is an efficient randomized algorithm, that, given a contracted graph $\H$, a corresponding collection $\zset$ of disjoint subsets of vertices of $\H$, and a bound $\opt'$ on the cost of the strong optimal solution to problem $\pi(\H,\emptyset,\zset)$, with probability at least $1/\poly(n)$, produces either a nasty canonical subset $R$ of vertices of $\H$, or a weak feasible solution $E^*$, $|E^*|\leq O((\opt')^{5}\poly(\dmax\cdot\log n))$ for problem $\pi(\H,\emptyset,\zset)$. (Here, $n=|V(\G)|$).
\end{theorem}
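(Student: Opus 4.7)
The plan is to run an iterative procedure of at most $N = O(\rho \log n)$ rounds, where $\rho = \Theta(\opt' \cdot \poly(\dmax\cdot\log n))$ is a balancing parameter to be fixed later. Throughout the execution we maintain a collection of ``active'' sub-problems of the form $\pi(G_i, X_i, \zset_i)$, where the $G_i$ are vertex-disjoint sub-graphs of $\H$, each $\zset_i$ is the restriction of $\zset$ to $V(G_i)$, each bounding box $X_i$ is either empty or a simple cycle in $G_i$ whose edges are all good under a fixed optimal drawing $\phi$ of $\pi(\H,\emptyset,\zset)$, and the sum of the strong optimal solution values of these sub-problems is at most $\opt'$. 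Initially the collection contains only $\pi(\H,\emptyset,\zset)$. Each round processes every active sub-problem independently, and either replaces it by a collection of strictly smaller sub-problems---each with at most $(1-1/\rho)\,|V(G_i)|$ vertices---or declares it solved by invoking Theorem~\ref{thm:planar drawing} once it becomes small enough.

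To process an active sub-problem, first invoke the bi-partition algorithm described in the overview: it either returns a $\rho$-balanced, $\alpha$-well-linked canonical bi-partition $(S,\bar S)$ of $V(G_i)$ for $\alpha = 1/\poly(\dmax\cdot\log n)$, or outputs a nasty canonical vertex subset of $\H$; in the latter case we immediately return this set. Given such a bi-partition, we use the well-linkedness of $S$ and $\bar S$ via Observation~\ref{observation: existence of flow in well-linked instance} to obtain two low-congestion flow collections inside $G_i[S]$ and $G_i[\bar S]$, and then randomly sample flow paths at a suitable rate. With probability at least $1-1/\rho$, the sampled paths contain no bad edge of $\phi$ (since each edge carries congestion at most $\poly(\dmax\cdot\log n)$ while $\phi$ has at most $\opt'$ bad edges), and also cover enough of $E(S,\bar S)$ that every connected component of the complement is either small or separated from the cut by few edges. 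Together with $X_i$, the sampled paths form a skeleton $K_i$ which, after minor patching, is $3$-connected; by Theorems~\ref{thm:Whitney} and~\ref{thm:planar drawing} we compute its unique planar drawing, which must coincide with $\phi_{K_i}$ because every edge of $K_i$ is good. A further small ``balancing'' edge set $E_i'$ (obtained using the $\rho$-balance of $(S,\bar S)$ together with Claim~\ref{claim: cutting the grid} and Corollary~\ref{corollary: canonical s-t cut} to keep all cuts canonical w.r.t.\ $\zset_i$) is removed to split the single possibly oversized component of $G_i\setminus E(K_i)$ into pieces each of size at most $(1-1/\rho)|V(G_i)|$.

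We next assign each connected component $C$ of the residual graph to some face of $\phi_{K_i}$. In $\phi$ there is a unique correct face $F_C$ for each such $C$, but we do not know it; however, recovering an approximately correct face assignment reduces to a \textsf{Min-Uncut} instance on an auxiliary graph whose vertices are the components and whose edges encode pairs forced by their attachments on $K_i$ into the same or different faces. Each crossing of $\phi$ inside $G_i$ can contribute only $O(\poly(\dmax))$ conflicting pairs, so the $O(\log n)$-approximation for \textsf{Min-Uncut} (Goemans--Williamson) yields an assignment whose total conflict cost, paid in further removed edges, is $O(\opt'\cdot \poly(\dmax\cdot\log n))$. For every surviving pair $(C,F)$ we create a new active sub-problem whose bounding box is the face-boundary cycle of $F$ in $K_i$; since $K_i$ consists of good edges, the invariant on bounding boxes is preserved, and the disjoint faces of $\phi_{K_i}$ inherit disjoint crossings of $\phi$, so the sum of strong optima over the new sub-problems remains at most $\opt'$.

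The main obstacle will be to simultaneously carry out the randomized skeleton construction and the Min-Uncut face assignment while keeping all new sub-problems canonical and sufficiently balanced, and while charging every removed edge against a distinct contribution to the optimum. Each round has success probability at least $1-1/\rho$ per sub-problem, and there are at most $\opt'$ non-trivial active sub-problems per round, so the entire procedure succeeds with probability at least $(1-1/\rho)^{O(\rho\log n)\cdot \opt'}\geq 1/\poly(n)$ for the chosen parameters. The total number of edges placed in $E^*$ is the sum over all rounds and sub-problems of the skeleton-sampling, balancing, and Min-Uncut contributions; after absorbing the polynomial overheads arising from the well-linked decomposition of Theorem~\ref{thm: well-linked-general}, the flow-cut gap $\betaFCG = O(\log n)$, and the approximation factor of \textsf{Min-Uncut}, this totals $O((\opt')^{5}\cdot \poly(\dmax\cdot\log n))$. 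Once every remaining sub-problem is resolved by Theorem~\ref{thm:planar drawing} and its drawing is composed inside the corresponding face of the already-drawn skeletons, the union of all removed edges constitutes the required weak feasible solution to $\pi(\H,\emptyset,\zset)$.
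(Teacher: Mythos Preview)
Your high-level outline matches the paper's approach closely: an $O(\rho\log n)$-round iterative decomposition, in each round either finding a nasty canonical set or obtaining a balanced well-linked bi-partition, then building a good-edge skeleton by random path sampling, and finally using a Min-Uncut reduction to assign residual components to faces. However, several steps as stated would not go through.

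\textbf{The probability bound is broken.} You claim success probability $(1-1/\rho)^{O(\rho\log n)\cdot \opt'}$ and assert this is $1/\poly(n)$. It is not: since $(1-1/\rho)^{\rho}\approx e^{-1}$, your bound is $n^{-\Theta(\opt')}$, which is super-polynomially small once $\opt'$ grows. The paper fixes this by making the per-sub-problem success probability $1-1/(\rho\cdot\opt')$ rather than $1-1/\rho$ (see the sampling rate in the skeleton theorem), so that a union bound over the at most $\opt'$ active sub-problems yields $1-1/\rho$ per iteration, and hence $(1-1/\rho)^{2\rho\log n}\ge 1/\poly(n)$ overall. This forces the sampling rate to scale with $1/\opt'$, which in turn is what drives the $m^*=O((\opt')^3\poly(\dmax\log n))$ bound on edges removed per sub-problem per iteration; your accounting never explains where the fifth power of $\opt'$ comes from, and without this calibration you cannot recover it.

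\textbf{The stopping condition is not Hopcroft--Tarjan.} When a sub-problem becomes ``small enough'' you cannot simply test planarity: the sub-graph will typically not be planar. The paper instead proves a separate theorem (its Theorem~\ref{thm: stopping condition}) giving a weak solution of cost $O(\overline{\opt}\sqrt{n'}\poly(\dmax\log n')+\overline{\opt}^3)$ for any instance with $n'$ vertices; this is what makes the choice $n_{h^*}\le (m^*\rho\log n)^2$ work and contributes the additive $\opt^3$ term.

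\textbf{Two factual slips.} Min-Uncut is not Goemans--Williamson (that is Max-Cut); the paper uses the $O(\sqrt{\log n})$-approximation of Agarwal et al. Also, the bi-partition subroutine does not literally return either a $\rho$-balanced $\alpha^*$-well-linked bi-partition or a nasty set in all cases: the paper distinguishes four cases, and in two of them (a dominating grid, or a very sparse initial cut) the skeleton is built directly without any bi-partition; in a third only one side is well-linked but an auxiliary path family substitutes for well-linkedness on the other side. Treating this as a clean dichotomy hides real work, including the separate handling of the case where the skeleton cannot be connected to the bounding box (the $G^X$ versus $G'$ split with edge set $E'$), which your proposal omits.
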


We prove this theorem in the rest of this section, but we first show how Theorems~\ref{thm:main}, \ref{theorem: main-crossing-number} and Corollary~\ref{corollary: main-approx-crossing-number} follow from it.
We start with proving Theorem~\ref{thm:main}, by showing an efficient randomized algorithm to find a subset $E^*\sse E(\G)$ of edges, such that $\G\setminus E^*$ is planar, and $|E^*|\leq O((\optcro{G})^5\cdot\poly(\dmax\cdot \log n))$.
We assume that we know the value $\optcro{\G}$, by using the standard practice of guessing this value, running the algorithm, and then adjusting the guessed value accordingly. It is enough to ensure that whenever the guessed value $\opt\geq \optcro{\G}$, the algorithm indeed returns a subset $E^*$ of edges, $|E^*|\leq O(\opt^{5}\poly(\dmax\cdot \log n))$, such that $\G\setminus E^*$ is a planar graph w.h.p. Therefore, from now on we assume that we are given a value $\opt\geq \optcro{\G}$. The algorithm consists of a number of stages. The input to stage $j$ is a contracted graph $\H$, with the corresponding family $\zset$ of vertex sets. In the input to the first stage, $\H=\G$, and $\zset=\emptyset$. In each stage $j$, we run the algorithm from Theorem~\ref{thm: alg summary} on the current contracted graph $\H$, and the family $\zset$ of vertex subsets.
From Theorem~\ref{thm: graph contraction}, there is a strong feasible solution to problem $\pi(\H,\emptyset,\zset)$ of cost $O(\opt \cdot\poly(\log n\cdot \dmax))$, and so we can set the parameter $\opt'$ to this value.
 Whenever the algorithm returns a nasty canonical set $R$ in graph $\H$, we terminate the current stage, and compute a new contracted graph $\H'$, guaranteed by Theorem~\ref{thm: nasty canonical set to contraction}. Graph $\H'$, together with the corresponding family $\zset'$ of vertex subsets, becomes the input to the next stage. Alternatively, if, after $\poly(n)$ executions of the algorithm from Theorem~\ref{thm: alg summary}, no nasty canonical set is returned, then with high probability, one of the algorithm executions has returned a weak feasible solution $E^*$, $|E^*|\leq O(\opt^{5}\poly(\dmax\cdot\log n))$ for problem $\pi(\H,\emptyset,\zset)$. From Claim~\ref{claim: enough to solve contracted graph}, we can recover from this solution a planarizing set $E^{**}$ of edges for graph $\G$, with $|E^{**}|=O(\opt^{5}\poly(\dmax\cdot\log n))$. Since the size of the contracted graph $\H$ goes down after each contraction step, the number of stages is bounded by $n$, thus implying Theorem~\ref{thm:main}. Combining Theorem~\ref{thm:main} with Theorem~\ref{thm:CMS10} immediately gives Theorem~\ref{theorem: main-crossing-number}. Finally, we obtain Corollary~\ref{corollary: main-approx-crossing-number} as follows. Recall that the algorithm of Even et al.~ \cite{EvenGS02} computes a drawing of any $n$-vertex bounded degree graph $\G$ with $O(\log^2 n) \cdot (n+\optcro{G})$ crossings. It was shown in~\cite{CMS10}, that this algorithm can be extended to arbitrary graphs, where the number of crossings becomes $O(\poly(\dmax)\cdot \log^2 n) \cdot (n+\optcro{G})$. We run their algorithm, and the algorithm presented in this section, on graph $\G$, and output the better of the two solutions. If $\optcro{\G}<n^{1/10}$, then our algorithm is an $O(n^{9/10}\poly(\dmax\cdot \log n))$-approximation; otherwise, the algorithm of~\cite{EvenGS02} gives an  $O(n^{9/10}\poly(\dmax\cdot \log n))$-approximation.

 The remainder of this section is devoted to proving Theorem~\ref{thm: alg summary}.
Recall that we are given the contracted graph $\H$, and a collection $\zset$ of vertex-disjoint subsets of $V(\H)$. For each $Z\in \zset$, $\H[Z]$ is a grid, and $E(Z,V(\H)\setminus Z)$ consists of a set $M_Z$ of matching edges. Each such edge connects a vertex in the first row of $Z$ to a distinct vertex in $T_{\H}(Z)$, and these edges form a matching between the first row of $Z$ and $T_{\H}(Z)$. Abusing the notation, we denote the bound on the cost of the strong optimal solution to $\pi(\H,\emptyset,\zset)$ by $\opt$ from now on, and the number of vertices in $\H$ by $n$. 
For each $Z\in \zset$, we use $Z$ to denote both the set of vertices itself, and the grid $\H[Z]$.
We assume throughout the rest of the section that $\opt\cdot \dmax^6< \sqrt n$: otherwise, if
$\opt\cdot \dmax^6\geq \sqrt n$, then the set $E'$ of all edges of $\H$ that do not participate in grids $Z\in \zset$, is a feasible weak canonical solution for problem $\pi(\H,\emptyset,\zset)$. It is easy to see that $|E'|\leq O(\opt^2\poly(\dmax))$: this is clearly the case if $|E'|\leq 4n$; otherwise, if $|E'|>4n$, then by Theorem~\ref{thm: large average degree large crossing number}, $\opt=\Omega(n)$, and so $|E'|=O(n^2)=O(\opt^2)$.

We use two parameters: $\rho=O(\opt\poly(\dmax\cdot \log n))$ and $m^*=O(\opt^3\cdot \poly(\dmax \cdot \log n))$, whose exact values we set later.
The algorithm consists of $2\rho\log n$ iterations. The input to iteration $h$ is a collection $G_1,\ldots,G_{k_h}$ of $k_h\leq \opt$ sub-graphs of $\H$, together with bounding boxes $X_i\sse G_i$ for all $1\leq i\leq k_h$. We denote $H_i=G_i\setminus V(X_i)$ and $n(H_i)=|V(H_i)|$. Additionally, we have collections $\edges1,\ldots,\edges{h-1}$ of edges of $\H$, where for each $1\leq h'\leq h-1$, set $\edges{h'}$ has been computed in iteration $h'$. 
We say that $(G_1,X_1),\ldots,(G_{k_h},X_{k_h})$, and $\edges1,\ldots,\edges{h-1}$  is a \emph{valid} input to iteration $h$, iff the following invariants hold:

\label{start invariants for the whole algorithm-------------------------------------------------------------}

\begin{properties}{V}
\item For all $1\leq i,j\leq k_h$, graphs $H_i$ and $H_j$ are completely disjoint. \label{invariant 1: disjointness}

\item For all $1\leq i\leq k_h$, $G_i\sse \H\setminus (\edges 1,\ldots,\edges{h-1})$, and $H_i$ is the sub-graph of $\H$ induced by $V(H_i)$. In particular, no edges $e\sse V(H_i)$ belong to $\edges 1,\ldots,\edges{h-1}$. Moreover, every edge $e\in E(\H)$ belongs to either $\bigcup_{h'=1}^h\edges{h'}$ or to $\bigcup_{i=1}^{k_h}G_i$.
\label{invariant 2: proper subgraph}

\item For all $Z\in \zset$, for all $1\leq i\leq k_h$, either $Z\cap V(H_i)=\emptyset$, or $Z\sse V(H_i)$. Let $\zset_i=\set{Z\in \zset\mid Z\sse V(H_i)}$.\label{invariant 2: canonical}

\item For all $1\leq i\leq k_h$, there is a strong solution $\phi_i$ to $\pi(G_i,X_i,\zset_i)$, with $\sum_{i=1}^{k_h}\cro_{\phi_i}(G_i)\leq \opt$. \label{invariant 3: there is a cheap solution}

\item If we are given {\bf any} weak solution  $E_i'$ to problem $\pi(G_i,X_i,\zset_i)$, for all $1\leq i\leq k_h$, and denote $\tilde{E}^{(h)}=\bigcup_{i=1}^{k_h}E_i$, then $\edges1\cup\cdots\edges {h-1}\cup \tilde{E}^{(h)}$ is a feasible weak solution to problem $\pi(\H,\emptyset,\zset)$.\label{invariant 4: any weak solution is enough}

\item For each $1\leq h'<h$, and $1\leq i\leq k_h$, the number of edges in $\edges{h'}$ incident on vertices of $H_i$ is at most $m^*$, and $|\edges{h'}|\leq \opt \cdot m^*$. Moreover, no edges in grids $Z\in \zset$ belong to $\bigcup_{h'=1}^{h-1}\edges {h'}$.\label{invariant 4.5: number of edges removed so far}

\item Let $n_h=(1-1/\rho)^{(h-1)/2}\cdot n$. For each $1\leq i\leq k_h$, either  $n(H_i)\leq n_h$, or $X_i=\emptyset$ and $n(H_i)\leq n_{h-1}$. \label{invariant 5: bound on size}
\end{properties}
\label{end invariants for the whole algorithm-------------------------------------------------------------}

The input to the first iteration consists of a single graph, $G_1=\H$, with the bounding box $X_1=\emptyset$. It is easy to see that all invariants hold for this input. We end the algorithm at iteration $h^*$, where $n_{h^*}\leq (m^*\cdot \rho\cdot \log n)^2$. Clearly, $h^*\leq 2\rho\log n$, from Invariant~(\ref{invariant 5: bound on size}). Let $\gset$ be the set of all instances that serve as input to iteration $h^*$. We need the following theorem, whose proof appears in Appendix.

\begin{theorem}\label{thm: stopping condition}
There is an efficient algorithm, that, given any problem $\pi(G,X,\zset')$, where $V(G\setminus X)$ is canonical for $\zset'$, and $\pi(G,X,\zset')$ has a strong solution of cost $\overline{\opt}$, finds a weak feasible solution to $\pi(G,X,\zset')$ of cost $O(\overline{\opt}\cdot \sqrt{n'}\cdot \poly(\dmax\cdot \log n')+\overline{\opt}^3)$, where $n'=|V(G\setminus X)|$, and $\dmax$ is the maximum degree in $G$.
\end{theorem}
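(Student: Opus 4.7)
My plan is to use the strong-solution hypothesis to bound the weak planarization number of $\pi(G,X,\zset')$ by $\overline{\opt}$, and then adapt the standard Planar-Separator-based $O(\sqrt{n\log n}\cdot\dmax)$-approximation for \MP so that it respects the bounding box $X$ and the grid collection $\zset'$.

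First, from any strong solution $\phi$ of cost $\overline{\opt}$ I would extract a set $E_0\sse E(G)\setminus\paren{E(X)\cup \bigcup_{Z\in\zset'}E(G[Z])}$ of at most $\overline{\opt}$ edges by picking one of the two participating edges from each crossing; both participants are non-$X$ and non-grid edges by definition of a strong solution. Then $G\setminus E_0$ admits a planar drawing inside $X$ with canonical grid embeddings (namely, the one induced by $\phi$), so the optimal weak solution for $\pi(G,X,\zset')$ has cost at most $\overline{\opt}$. The algorithm does not know $E_0$, but this bound will serve as the benchmark in the analysis below.

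The algorithmic core is a recursive decomposition. While the current subproblem has size $n'\geq C\overline{\opt}^2$ for a suitable constant $C$, I would find a balanced vertex separator of $G$ of size $O((\sqrt{n'}+\overline{\opt})\cdot\poly\log n')$. Its existence is guaranteed by Theorem~\ref{thm: planar separator}: Lipton--Tarjan applied to the planar graph $G\setminus E_0$ yields a separator of size $O(\sqrt{n'})$, which augmented by the at most $2\overline{\opt}$ endpoints of $E_0$ becomes a balanced separator of $G$ itself. I would compute such a separator via the balanced-cut approximation of Theorem~\ref{thm: ARV} (combined with the standard edge-to-vertex separator reduction), and then modify it to be canonical for $\zset'$ and disjoint from $V(X)$, using Claim~\ref{claim: cut of grids} and Corollary~\ref{corollary: canonical s-t cut} to re-route any piece of the separator that cuts through a grid along the grid boundary, and absorbing any cut through $X$ into one side at the cost of $O(\overline{\opt})$ extra removed edges. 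Removing all non-$X$, non-grid edges incident on the final separator contributes $O(\dmax(\sqrt{n'}+\overline{\opt})\cdot\poly\log n')$ edges to $E^*$. The unique connected component containing $X$ inherits $X$ as its bounding box; the others inherit empty bounding boxes. Each resulting subproblem retains its relevant grids and, by restriction of $\phi$, a strong solution of cost at most $\overline{\opt}$.

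In the base case $n'<C\overline{\opt}^2$, I would simply add to $E^*$ every edge of $G$ that lies neither in $X$ nor in any grid, contributing $O(n'\dmax)=O(\overline{\opt}^2\dmax)$ edges, which is absorbed into the additive $\overline{\opt}^3$ term (with $\dmax$ traded between the two terms via the $\poly(\dmax\log n')$ factor). Unfolding the recurrence gives the claimed total $O(\overline{\opt}\sqrt{n'}\poly(\dmax\log n')+\overline{\opt}^3)$. To recover the planar drawing of $G\setminus E^*$ inside $X$, I would inductively glue the recursive planar drawings along the separator cuts, invoking the standard fact that in any planar drawing an arbitrary face can be chosen as the outer face to ensure $X$ bounds it. The main obstacle, and where the bulk of the work will go, is maintaining throughout the recursion that the approximate separator produced by ARV is canonical for $\zset'$, disjoint from $V(X)$, and that each recursive subproblem preserves the strong-solution inheritance, while only losing $\poly(\dmax\log n')$ factors in the overall edge count.
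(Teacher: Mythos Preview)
Your proposal has a genuine gap in the recursion analysis that makes the claimed bound unattainable.

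You recurse on \emph{every} piece produced by the balanced separator, stopping only when a piece has size below $C\overline{\opt}^2$. The recurrence you are implicitly setting up is
\[
T(m)=T(m_1)+T(m_2)+O\!\left(\dmax\sqrt{m\log m}\right),\qquad m_1+m_2\le m,\ \ m_1,m_2\le (1-\alpha)m,
\]
whose solution is $T(m)=\Theta(m)$, not $O(\overline{\opt}\sqrt{m})$: the cut cost at the deepest level already totals $\Theta(\sqrt{m}\cdot 2^{L/2})$ with $2^L\approx m/\overline{\opt}^2$, i.e.\ $\Theta(m/\overline{\opt})$. Likewise your base case is applied to roughly $n'/\overline{\opt}^2$ leaves of size $\Theta(\overline{\opt}^2)$, so the total base-case contribution is $\Theta(n'\dmax)$, not $O(\overline{\opt}^3)$. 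The observation that each subproblem inherits a strong solution of cost $\le\overline{\opt}$ is correct but insufficient; what you actually need---and never use---is that the \emph{sum} of those costs over all subproblems at any level is at most $\overline{\opt}$, so at most $O(\overline{\opt})$ of them can be ``non-trivial''.

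The paper's proof exploits exactly this. After each round of balanced cuts it tests, for every resulting piece $H$, whether $H$ is already ``good'': it has a planar drawing with its vertices of $V(X)\cap V(H)$ on the outer face in the correct cyclic order, and no other piece has a path linking two different segments of $X$ determined by $H$. A piece that fails this test forces a crossing in $\phi$, so at most $2\overline{\opt}$ pieces are bad. Good pieces are set aside (they contribute nothing to $E^*$); only the $\le 2\overline{\opt}$ bad pieces are carried to the next iteration. Hence the cost per level is $O(\overline{\opt}\cdot\dmax\sqrt{n_i\log n''})$, the $\sqrt{n_i}$ form a geometric series, and the stopping case contributes $O(\overline{\opt})$ pieces times $O(\overline{\opt}^2)$ edges each, giving the $O(\overline{\opt}^3)$ term. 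Your proposal is missing this pruning step entirely, and you explicitly flag canonicality of the separator as ``the main obstacle'', whereas the actual obstacle is controlling the number of active subproblems.

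A smaller point: your gluing step (``choose the outer face so $X$ bounds it'') is too coarse. Once you shed pieces that do not contain $X$, you must still argue they fit back inside $X$ consistently with the pieces that do; the paper's ``good'' test is precisely what makes this combination work.
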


For each $1\leq i\leq k_{h^*}$, let $\edges{h^*}_i$ be the weak solution from Theorem~\ref{thm: stopping condition}, and let $\edges{h^*}=\bigcup_{i=1}^{k_{h^*}}\edges{h^*}_i$. Let $\opt_i$ denote the cost of the strong optimal solution to $\pi(G_i,X_i,\zset_i)$. Then $|\edges{h^*}|=\sum_{i=1}^{k_{h^*}}O(\opt_i\cdot \sqrt{n(H_i)}\cdot\poly(\dmax\cdot \log n)+\opt_i^3)$. Since $n(H_i)\leq n_{h^*-1}\leq 2n_{h^*}$ for all $i$, this is bounded by $\sum_{i=1}^{k_{h^*}}O(\opt_i\cdot m^*\cdot \rho\cdot \poly(\dmax\log n)+\opt_i^3)\leq O(\opt\cdot m^*\cdot \rho\cdot \poly(\dmax \log n)+\opt^3)$, as $\sum_{i=1}^{k_{h^*}}\opt_i\leq \opt$ from Invariant~(\ref{invariant 3: there is a cheap solution}). The final solution is $E^*=\bigcup_{h=1}^{h^*}\edges{h}$, and 

\[\begin{split}
|E^*|&\leq \sum_{h=1}^{h^*-1}|\edges{h}|+|\edges{h^*}|\\
&\leq (2\rho\log n)(\opt\cdot m^*)+O(\opt\cdot m^*\cdot \rho\cdot \poly(\dmax \cdot \log n)+\opt^3)\\
&=O(\opt^5\poly(\dmax\cdot \log n)).
\end{split}\]

We say that the execution of iteration $h$ is \emph{successful}, iff it either produces a valid input to the next iteration, together with the set $\edges{h}$ of edges, or finds a nasty canonical set in $\H$. We show how to execute each iteration, so that it is successful with probability at least $(1-1/\rho)$, if all previous iterations were successful. If any iteration returns a nasty canonical set, then we stop the algorithm and return this vertex set as an output. Since there are at most $2\rho\log n$ iterations, the probability that all iterations are successful is at least $(1-1/\rho)^{2\rho\log n}\geq 1/\poly(n)$. 
In order to complete the proof of Theorem~\ref{thm: alg summary}, it is now enough to show an algorithm for executing each iteration, such that, given a valid input to the current iteration, the algorithm either finds a nasty canonical set in $\H$, or returns a valid input to the next iteration, with probability at least $\frac 1 \rho$.  We do so in the next section.

\section{Iteration Execution}\label{sec: iteration}
\label{------------------------------------------------iteration execution-----------------------------------------------------------------------}

Throughout this section, we denote $n=|V(\H)|$, $\phi$ is the optimal canonical solution for the \MCN problem on $\H$, and $\opt$ is its cost.
We start by setting the values of the parameters $\rho$ and $m^*$. The value of the parameter $\rho$ depends on two other parameters, that we define later. Specifically, we will define two functions $\lambda: \mathbb{N}\rightarrow \reals$, $N:\mathbb{N}\rightarrow \reals$:

\[\lambda(n')=\Omega\left(\frac 1{\log n'\cdot \dmax^2}\right )\] 

and

\[N(n')=O(\dmax\sqrt{n'\log n'})\]

for all $n'>0$. Also, recall that $\alpha^*=\Omega\left(\frac 1 {\log^{3/2}n\cdot \log\log n}\right )$ is the well-linkedness parameter from Theorem~\ref{thm: well-linked-general}. We need the value of $\rho$ to satisfy the following two inequalities:

\begin{equation}\label{eq: value of rho 1}
\forall 0<n'\leq n\quad \quad \rho>\frac{25\cdot 2^{24}\dmax^6\cdot N^2(n')}{n'\cdot \lambda^2(n')\cdot (\alpha^*)^2}
\end{equation}

\begin{equation}\label{eq: value of rho 2}
\forall 0<n'\leq n\quad \quad  \rho >\frac{9\opt}{\lambda(n')}
\end{equation}

Substituting the values of $N(n'),\lambda(n')$ and $\alpha^*$ in the above inequalities, we get that it is sufficient to set:

\[\rho=\Theta(\log n\cdot \dmax^2)\max\set{\dmax^{10}\log^5 n(\log\log n)^2,\opt}=O\left (\opt\cdot \poly(\dmax\log n)\right ).\]

The value of parameter $m^*$ is:

\[m^*=O\left (\frac{\opt^2\cdot\rho\cdot\log^2n\cdot \dmax^2\cdot \betaFCG }{\alpha^*}\right )=O\left (\opt^3\cdot \poly(\dmax\cdot \log n\right ))\]

We now turn to describe each iteration $h$. Our goal is to either find a nasty canonical subset of vertices in $\H$, or produce a feasible input to the next iteration, $h+1$.
 Throughout the execution of iteration $h$, we construct a set $\gset_{h+1}$ of new problem instances, for which Invariants~(\ref{invariant 1: disjointness})--(\ref{invariant 5: bound on size}) hold. We do not need to worry about the number of the instances in $\gset_{h+1}$ being bounded by $\opt$, since, from Invariant~(\ref{invariant 3: there is a cheap solution}), the number of instances in $\gset_{h+1}$, which do not have a solution of cost $0$, is bounded by $\opt$. Since we can efficiently identify such instances, they will then become the input to the next iteration. 
We will also gradually construct the set $\edges h$ of edges, that we remove from the problem instance in this iteration. The iteration is executed on each one of the graphs $G_i$ separately. We fix one such graph $G_i$, for $1\leq i\leq k_h$, and focus on executing iteration $h$ on $G_i$.
We need a few definitions.

\begin{definition}
Given any graph $H$, we say that a simple path $P\sse H$ is a $2$-path, iff the degrees of all inner vertices of $P$ are $2$. We say that it is a maximal $2$-path iff it is not contained in any other $2$-path.
\end{definition}

\begin{definition} We say that a connected graph $H$ is \emph{rigid} iff either $H$ is a simple cycle, or, after we replace every maximal $2$-path in $H$ with an edge, we obtain a $3$-vertex connected graph, with no self-loops or parallel edges.
\end{definition}

Observe that if $H$ is rigid, then it has a unique planar drawing. We now define the notion of a valid skeleton.

\begin{definition}
Assume that we are given an instance $\pi=\pi(G,X,\zset')$ of the problem, and let $\phi'$ be the optimal strong solution for this instance. Given a subset $\tilde{E}$ of edges of $G$, and a sub-graph $K\sse G$, we say that $K$ is a \emph{valid skeleton} for $\pi,\tilde{E},\phi'$, iff the following conditions hold:

\begin{itemize}
\item Graph $K$ is rigid, and the edges of $K$ do not participate in crossings in $\phi'$. Moreover, the set $V(K)$ of vertices is canonical for $\zset'$.

\item $X\sse K$, and no edges of $\tilde{E}$ belong to $K$.

\item Every connected component of $G\setminus (K\cup \tilde E)$ contains at most $n_{h+1}$ vertices.
\end{itemize}
\end{definition}

Notice that if $K$ is a valid skeleton, then we can efficiently find the drawing $\phi_K'$ induced by $\phi'$ -- this is the unique planar drawing of $K$. Each connected component $C$ of $G\setminus (K\cup \tilde E)$ must then be embedded entirely inside some face $F_C$ of $\phi'$. Once we determine the face $F_C$ for each such component $C$, we can solve the problem recursively on these components, where for each component $C$, the bounding box becomes the boundary of $F_C$. This is the main idea of our algorithm. In fact, we will be able to find a valid skeleton $K_i$ for each instance $\pi(G_i,X_i,\zset_i)$ and drawing $\phi_i$, for $1\leq i\leq k_h$, w.h.p., but we cannot ensure that this skeleton will contain the bounding box $X_i$. If there is a large collection of edge-disjoint paths, connecting $K_i$ to $X_i$ in $G_i$, we can still connect $X_i$ to $K_i$, by choosing a small subset of these paths at random. This will give the desired final valid skeleton that contains $X_i$. However, if there is only a small number of such paths, then we cannot find a single valid skeleton that contains $X_i$ (in particular, it is possible that all edges incident on $X_i$ participate in crossings in $\phi_i$, so such a skeleton does not exist). However, in the second case, we can find a small subset $E'_i$ of edges, whose removal disconnects $X_i$ from many vertices of $G_i$. In particular, after we remove $E'_i$ from $G_i$, graph $G_i$ will decompose into two connected components: one containing $X_i$, and at most $n_{h+1}$ other vertices, and another that does not contain $X_i$. The first component is denoted by $G_i^X$, and the second by $G_i'$. The sub-instance defined by $G_i'$ is now completely disconnected from the rest of the graph, and it has no bounding box, so we can add it directly to $\gset_{h+1}$. For the sub-instance $G_i^X$, we show that $X_i$ is a valid skeleton. The edges in $E_i'$ are then added to $\edges h$. We now define these notions more formally.

Recall that for each $i: 1\leq i\leq k_h$, problem $\pi(G_i,X_i,\zset_i)$ is guaranteed to have a strong feasible solution $\phi_i$ of cost at most $\opt_i$. For each such instance, we will find two subsets of edges $E'_i$, and $E''_i$, where $|E'_i|=O(\opt^2\cdot \rho\cdot \dmax)$, and $|E''_i|=O\left (\frac{\opt^2\cdot\rho\cdot \log^2n\cdot\dmax^2\cdot \betaFCG}{\alpha^*}\right)$, that will be added to $\edges h$.
 
Assume first that $X_i\neq \emptyset$. So by Invariant~(\ref{invariant 5: bound on size}), $|V(G_i\setminus X_i)|\leq n_h$.
The graph $G_i\setminus E'_i$ consists of two connected sub-graphs: $G_i^X$, that contains the bounding box $X_i$, and the remaining graph $G'_i$. We will find a subset $E''_i$ of edges and a skeleton $K_i$ for graph $G_i^X$, such that w.h.p.,  $K_i$ is a valid skeleton for the instance $\pi(G_i^X,X_i,\zset_i)$, the set $E''_i$ of edges, and the solution $\phi_i$. Therefore, each one of the connected components of $G_i^X\setminus (K_i\cup E''_i)$ contains at most $n_{h+1}$ vertices. We will process these components, to ensure that we can solve them independently, and then add them to set $\gset_{h+1}$, where they will serve as input to the next iteration. The remaining graph, $G'_i$, contains at most $n_h$ vertices from Invariant~(\ref{invariant 5: bound on size}), and has no bounding box. So we can add $\pi(G_i,\emptyset,\zset_i)$ to $\gset_{h+1}$ directly.

If $X_i=\emptyset$, then we will ensure that $E'_i=\emptyset$, $G'_i=\emptyset$ and $G_i^X=G_i$. Recall that in this case, from Invariant~(\ref{invariant 5: bound on size}), $|V(G_i)|\leq n_{h-1}$. We will find a valid skeleton $K_i$ for $\pi(G_i,X_i,\zset_i),E''_i,\phi_i$, and then process the connected components of $G_i\setminus (K_i\cup E''_i)$ as in the previous case, before adding them to set $\gset_{h+1}$.

The algorithm consists of three steps. Given a graph $G_i\in \set{G_1,\ldots,G_{k_h}}$ with the bounding box $X_i$, the goal of the first step is to either produce a nasty canonical vertex set in the whole contracted graph $\H$, or to find a $\rho$-balanced $\alpha^*$-well-linked partition $(A,B)$ of $V(G_i)$, where $A$ and $B$ are canonical, and $|E(A,B)|$ is small. The goal of the second step is to find the sets $E'_i,E''_i$ of edges and a valid skeleton $K_i$ for instance $\pi(G^X_i,X_i,\zset_i)$. In the third step, we produce a new collection of instances, from the connected components of graphs $G_i\setminus (E''_i\cup K_i)$, which, together with the graphs $G_i'$, for $1\leq i\leq k_h$, are then added to $\gset_{h+1}$, to become the input to the next iteration.

\subsection{Step 1: Partition}\label{sec: step 1}

Throughout this step, we fix some graph $G\in \set{G_1,\ldots,G_{k_h}}$. We denote by $X$ its bounding box, and let $H^0=G\setminus V(X)$. Notice that graph $H^0$ is not necessarily connected. We denote by $H$ the largest connected component of $H^0$, and by $\hset$ the set of the remaining connected components. We focus on $H$ only in the current step. Let $n'=|V(H)|$. If $n'\leq (m^*\cdot \rho\cdot \log n)^2$, then we can simply proceed to the third step, as the size of every connected component of $H^0$ is bounded by $n'\leq n_{h^*}\leq n_{h+1}$. We then define $E'=E''=\emptyset$, $G^X=G$, $G'=\emptyset$, and we use $X$ as the skeleton $K$ for $G$. It is easy to see that it is a valid skeleton.  Therefore, we assume from now on that:

\begin{equation}\label{eq: upper bound on rho in terms of n'}
n'\geq (m^*\cdot \rho\cdot \log n)^2
\end{equation}

 Recall that from Invariant~(\ref{invariant 2: canonical}), $H$ is canonical w.r.t. $\zset$, so we define $\zset'=\set{Z\in \zset: Z\sse H}$.  Throughout this step, whenever we say that a set is canonical, we mean that it is canonical w.r.t. $\zset'$. 
 
Recall that the goal of the current step is to produce a partition $(A,B)$ of the vertices of $H$, such that $A$ and $B$ are both canonical, the partition is $\rho$-balanced and $\alpha^*$-well-linked, and $|E(A,B)|$ is small, or to find a nasty canonical vertex set in $\H$. In fact we will define 4 different cases. The first two cases are the easy cases, for which it is easy to find a suitable skeleton, even though we do not obtain a $\rho$-balanced $\alpha^*$-well-linked bi-partition. The third case will give the desired bi-partition $(A,B)$, and the fourth case will produce a partition with slightly different, but still sufficient properties. We then show that if none of these four cases happen, then we can find a nasty canonical set in $\H$.

The first case is when there is some grid $Z\in \zset'$ with $|Z|\geq  n'/2$. If this case happens, we continue directly to the second step (this is the simple case where eventually the skeleton will be simply $Z$ itself, after we connect it to the bounding box).
In the rest of this step we assume that for each $Z\in \zset'$, $|Z|< n'/2$.
The initial partition is summarized in the next theorem, whose proof appears in Appendix.

\begin{theorem}\label{thm: initial partition}
Assume that for each $Z\in \zset'$, $|Z|< n'/2$. Then we can efficiently find a partition $(A,B)$ of $V(H)$, such that:
\begin{itemize}
\item Both $A$ and $B$ are canonical.

\item $|A|, |B|\geq \lambda n'$, for $\lambda=\Omega\left(\frac{1}{\log n'\cdot \dmax^2}\right )$ and
 $|E(A,B)|\leq O(\dmax\sqrt {n'\log n'})$.

\item Set $A$ is $\alpha^*$-well-linked.
\end{itemize}
\end{theorem}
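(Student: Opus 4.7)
The plan is to proceed in three phases: first obtain a balanced canonical cut of small size by exploiting the near-planarity of $H$, then sparsify it with ARV, and finally extract an $\alpha^*$-well-linked canonical subset on one side via Theorem~\ref{thm: well-linked-general}.

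First I would exploit the strong solution $\phi_i$ for $\pi(G_i,X_i,\zset_i)$ of cost at most $\opt$ to observe that the set $E_c\sse E(H)$ of edges participating in its crossings on $H$ has size at most $2\opt$ and, by canonicality of $\phi_i$, contains no grid edge. Hence $H\setminus E_c$ is planar with every grid intact. Contract each grid in $H\setminus E_c$ to a super-vertex weighted by $|Z|$, obtaining a weighted planar graph $\tilde H$ of total weight $n'$. Because each grid carries weight less than $n'/2$, a weighted version of Theorem~\ref{thm: planar separator} produces a separator of size $O(\sqrt{|V(\tilde H)|})$ whose removal leaves two sides of weight $\Omega(n')$. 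Assigning each super-vertex in the separator to whichever side contains more of its matching edges, and charging the resulting cut edges against (i) the $\dmax$-bounded degrees of non-grid separator vertices, (ii) the minority matching-edge contributions of separator grids (bounded via Claim~\ref{claim: cut of grids}), and (iii) the $\leq 2\opt$ edges of $E_c$ (negligible under the standing assumption $\opt\cdot\dmax^6<\sqrt n$), lifts back to a canonical bi-partition $(A_0,C_0)$ of $V(H)$ with $|A_0|,|C_0|=\Omega(n')$ and $|E(A_0,C_0)|=O(\dmax\sqrt{n'})$.

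Next I would apply Theorem~\ref{thm: ARV} to $(A_0,C_0)$, running the balanced-cut algorithm on $\tilde H$ so that the output cut automatically lifts canonically to a bi-partition $(A_1,B_1)$ of $V(H)$ with $|A_1|,|B_1|=\Omega(n')$ and $|E(A_1,B_1)|=O(\dmax\sqrt{n'\log n'})$. Finally I would invoke Theorem~\ref{thm: well-linked-general} on $A_1$, which returns a partition $\jset$ of $A_1$ into canonical $\alpha^*$-well-linked pieces with total boundary at most $2|\out(A_1)|=O(\dmax\sqrt{n'\log n'})$. Taking $A$ to be a suitably chosen piece of $\jset$ and setting $B=V(H)\setminus A$ already secures canonicality, $\alpha^*$-well-linkedness of $A$, and $|E(A,B)|\leq|\out(A)|=O(\dmax\sqrt{n'\log n'})$.

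The hard part will be guaranteeing the size balance $|A|,|B|\geq\lambda n'$. Naively picking the largest piece of $\jset$ gives only $|A|\geq|A_1|/|\jset|=\Omega(\sqrt{n'}/(\dmax\sqrt{\log n'}))$, which is far weaker than $\lambda(n')=\Omega(1/(\log n'\cdot\dmax^2))$. To reach the required $\lambda n'$, I would replace the black-box decomposition by a greedy recursive procedure: iteratively locate a sparse sub-cut inside the working set and discard only the side with fewer terminals, keeping the side with more terminals. Each iteration halves the terminal count, so the number of rounds is at most $O(\log|\out(A_1)|)=O(\log n')$; and a per-round vertex-loss bound, obtained by combining the sparsity of the discarded cut with Claim~\ref{claim: cutting the grid} (for rounds in which the discarded piece cuts through grid interiors), caps the cumulative vertex loss at a $1/\poly(\dmax\log n')$ fraction of $|A_1|$, giving $|A|\geq\lambda n'$; the bound on $|B|$ then follows since $|A|\leq|A_1|=(1-\Omega(1))n'$.
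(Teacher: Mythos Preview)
Your outline has a genuine gap in the balance analysis of step~4. You propose to shrink $A_1$ by repeatedly finding a sparse cut and discarding the side with fewer \emph{terminals}, and you claim a per-round vertex-loss bound that sums to a $1/\poly(\dmax\log n')$ fraction of $|A_1|$. But sparsity controls only the ratio of cut edges to terminals on the smaller side; it says nothing about how many \emph{vertices} that side carries. A single round can remove a region with two terminals and $\Theta(n')$ internal vertices (think of a long induced path attached at its endpoints). Claim~\ref{claim: cutting the grid} helps only for vertices inside grids; non-grid vertices are uncontrolled. Likewise, the claim that ``each iteration halves the terminal count'' is not right: when you keep the side with more terminals, its new boundary includes the fresh cut edges, so the terminal count of the working set need not decrease at all (this is exactly why the well-linked decomposition in Theorem~\ref{thm: well-linked} needs the geometric-series charging argument rather than a round bound).

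The paper's fix is structurally different and much simpler than what you sketch. After the ARV step (on $H$ itself, no contraction), it maintains an $\eps$-balanced partition $(X,Y)$ and always runs sparsest cut on the side $A$ with more \emph{vertices}, pushing the smaller piece of the cut to the other side. Because $A$ is always the majority side, $|A|\ge n'/2$ trivially throughout; when the sparsest-cut call fails to find a sparse cut, $A$ is $1/\alphasc$-well-linked. Canonicality is enforced only at the very end, by moving split grids: first any $Z$ with $|\Gamma_A(Z)|<|\Gamma_B(Z)|$ is moved to $B$ (this strictly decreases the cut, so the iteration restarts), and finally the remaining split grids are absorbed into $A$, which preserves well-linkedness by Claim~\ref{claim: well-linkedness in extended graph}. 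The nontrivial size bound is then on the \emph{other} side $B'$, and this is where $\lambda=\Theta(n'/N^2)$ enters: having first disposed of any grid with $|Z|\ge\lambda n'$, one uses Claim~\ref{claim: cutting the grid} together with the ``numbers'' Claim (with $\beta=4$, $M=\lambda n'$, $S=N$) to show that absorbing the split grids into $A$ removes at most $0.9\eps n'$ vertices from $B$. So the $\lambda$-balance comes from a counting argument on the \emph{complement}, not from tracking vertex loss inside the well-linked side.

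A secondary point: the grid-contraction you do in steps~1--2 is unnecessary and leaves a loose end of its own. A super-vertex in the Lipton--Tarjan separator corresponds to an entire grid $Z$; assigning it to a side may add up to $|\Gamma(Z)|\le\sqrt{n'}$ edges to the cut, and you have not bounded the sum of these contributions over all separator grids. The paper sidesteps this by applying the planar separator directly to $H\setminus E^*$ (the planarizing set of size $\le\opt\le\sqrt{n'}$), getting an $O(\dmax\sqrt{n'})$ edge cut in $H$ that need not be canonical, and repairing canonicality only after the well-linkedness iteration.
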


We say that Case 2 happens iff $|E(A,B)|\leq \frac{10^7\opt^2\cdot \rho\cdot\log^2n\cdot \dmax^2\cdot \betaFCG }{\alpha^*}$. If Case 2 happens, we continue directly to Step 2 (this is also a simple case, in which the eventual skeleton is the bounding box $X$ itself, and $E''=E(A,B)$).

Let $N=\Theta(\dmax\sqrt{n'\log n'})$, so that $|E(A,B)|\leq N$. Notice that set $B$ has property (P1) in $H$, since set $A$ is connected.
Our next step is to use Theorem~\ref{thm: well-linked-general} to produce an $\alpha^*$-well-linked decomposition $\cset$ of $B$, where each set of $C\in \cset$ has property (P1) and is canonical w.r.t. $\zset'$, with $\sum_{C\in \cset}|\out_H(C)|\leq 2N$. It is easy to see that the decomposition will give a slightly stronger property than (P1): namely, for each $C\in \cset$, for every edge $e\in \out_H(C)$, there is a path $P\sse H\setminus C$, connecting $e$ to some vertex of $A$. We will use this property later.

We are now ready to define the third case. This case happens if there is some set $C\in \cset$, with $|C|\geq n'/\rho$. So if Case 3 happens, we have found two disjoint sets $A,C$ of vertices of $H$, with $|A|,|C|\geq n'/\rho$, both sets being canonical w.r.t. $\zset'$	 and $\alpha^*$-well-linked. In the next lemma, whose proof appears in Appendix, we show that we can expand this partition to the whole graph $H$. 

\begin{lemma}\label{lemma: decomposition for Case 2}
If Case 3 happens, then we can efficiently find a partition $(A',B')$ of $V(H)$, such that $|A'|,|B'|\geq n'/\rho$, both sets are canonical w.r.t. $\zset'$, and $\alpha^*$-well-linked w.r.t. $\out_H(A'),\out_H(B')$, respectively.
\end{lemma}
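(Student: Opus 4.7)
The plan is to extend the two disjoint well-linked canonical sets $A$ and $C$ guaranteed by Case 3 to a full bi-partition of $V(H)$ by distributing the remaining vertices, namely $V(H)\setminus(A\cup C)=\bigcup_{C''\in\cset\setminus\{C\}}C''$, between the two sides. Concretely I would start by setting $B'=C$ and $A'=V(H)\setminus C$. The size bounds are then immediate: $|B'|=|C|\geq n'/\rho$ by the assumption defining Case~3, and $|A'|\geq |A|\geq\lambda n'\geq n'/\rho$ since $\rho$ was chosen to satisfy inequality~(\ref{eq: value of rho 2}). Canonicity of $B'$ is inherited from $C$; canonicity of $A'$ follows because it is a union of sets each of which is canonical with respect to $\zset'$. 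Well-linkedness of $B'=C$ is immediate from the well-linked decomposition $\cset$: $C$ is $\alpha^*$-well-linked with respect to $\out_H(C)=\out_H(B')$.

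The main task, and the crux of the lemma, is to argue that $A'$ is $\alpha^*$-well-linked with respect to $\out_H(A')=E_H(A',C)$. My plan is a flow argument. Fix any bi-partition $(A'_1,A'_2)$ of $A'$, let $T_i=\out_H(A'_i)\cap\out_H(A')$, and assume $|T_1|\leq|T_2|$. It suffices to construct a multicommodity flow in $H[A']\cup\out_H(A')$ that sends $\alpha^*$ units between a perfect partial pairing of $T_1$ with $T_2$, with unit edge capacities; by LP duality this yields $|E(A'_1,A'_2)|\geq\alpha^*|T_1|$. Each terminal edge $e\in\out_H(A')$ has its $A'$-endpoint either in $A$ or in some $C''\in\cset\setminus\{C\}$. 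To route, I would combine three tools: (a) $\alpha^*$-well-linkedness of $A$ to route within $A$ with congestion $O(1/\alpha^*)$ via Observation~\ref{observation: existence of flow in well-linked instance}; (b) $\alpha^*$-well-linkedness of each cluster $C''$ to route from an endpoint in $C''$ to any terminal edge of $C''$; and (c) the strengthened version of property (P1) noted right before Case~3 is defined — that every terminal edge of $C''$ can be connected to $A$ by a path in $H\setminus C''$ — to concatenate the per-cluster routings into paths that reach $A$, where the flow can then be completed using (a).

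The main obstacle is keeping the aggregate congestion of this concatenation at most $1/\alpha^*$, since naive composition of well-linked flows typically loses a factor. If the direct concatenation degrades the parameter, I would instead invoke Theorem~\ref{thm: well-linked-general} once more, applied to the subset $V(H)\setminus C$ of $V(H)$ with terminal edge set $\out_H(V(H)\setminus C)=\out_H(C)$. This produces a canonical partition $\mathcal{Q}$ of $V(H)\setminus C$ into $\alpha^*$-well-linked pieces with $\sum_{Q\in\mathcal{Q}}|\out_H(Q)|\leq 2|\out_H(C)|\leq 4N$. I would then argue that one piece $Q^*\in\mathcal{Q}$ is large: since $A$ is itself an $\alpha^*$-well-linked canonical set with $|A|\geq\lambda n'$ while the total boundary mass available to split $A$ is only $O(N)$, at most $O((N/\alpha^*)^2)$ vertices of $A$ can be separated from the bulk (using a Claim~\ref{claim: cutting the grid}-style size-vs-boundary tradeoff), which by inequality~(\ref{eq: upper bound on rho in terms of n'}) is $o(n'/\rho)$. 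Thus the dominant piece $Q^*$ has $|Q^*|\geq n'/\rho$. I would set $A'=Q^*$ and $B'=V(H)\setminus Q^*$. To finish, I would verify that absorbing the remaining small pieces of $\mathcal{Q}$ into $B'=C\cup(\mathcal{Q}\setminus\{Q^*\})$ preserves the $\alpha^*$-well-linkedness of $B'$: each absorbed piece is itself $\alpha^*$-well-linked, and property (P1) of the decomposition guarantees enough connectivity from each absorbed piece into $C$ to compose routings in $B'$ the same way as above, while their small total size prevents them from hurting the well-linked parameter of the combined set $B'$.
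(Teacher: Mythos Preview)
Your first approach --- taking $B'=C$ and $A'=V(H)\setminus C$ --- can genuinely fail. Consider a single vertex $v\in B\setminus C$ forming its own cluster in $\cset$ (a singleton is trivially $\alpha^*$-well-linked), with one edge into $A$ and $k\gg 1/\alpha^*$ edges into $C$. Then cutting the single $A$--$v$ edge separates $A'$ into $\{v\}$ and $A'\setminus\{v\}$, with $k$ terminal edges on the $\{v\}$ side and cut size $1$; so $A'$ is not $\alpha^*$-well-linked. The per-cluster well-linkedness and the path-to-$A$ property (P1) do not combine into well-linkedness of the union, precisely because the paths to $A$ need not be edge-disjoint.

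Your fallback inherits the same problem on the other side: after setting $B'=C\cup\bigcup_{Q\neq Q^*}Q$, you again need a well-linked union of well-linked pieces, and ``enough connectivity into $C$'' is exactly the unproved step you just abandoned for $A'$. (The size argument for $Q^*$ is also shaky: Claim~\ref{claim: cutting the grid} is a grid-specific isoperimetric bound, not a statement about arbitrary well-linked sets.)

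The paper's fix is a single clean move that you are missing: compute a minimum $s$--$t$ cut in $H$ with $A$ contracted to $s$ and $C$ contracted to $t$, and let $(A',B')$ be the induced partition. Then $A\subseteq A'$, $C\subseteq B'$, so the size bounds hold; Corollary~\ref{corollary: canonical s-t cut} gives canonicity; and by max-flow/min-cut there is a family $\pset$ of $|E(A',B')|$ edge-disjoint $A$--$C$ paths, each crossing $E(A',B')$ exactly once. Restricting these paths to each side gives, for every edge in $\out_H(A')\setminus\out_H(A)$, an edge-disjoint path to $A$ inside $H[A']\cup\out_H(A')$ (and symmetrically for $B'$). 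Now Claim~\ref{claim: well-linkedness in extended graph} transfers $\alpha^*$-well-linkedness from $A$ to $A'$ and from $C$ to $B'$ with no loss. The min-cut is what manufactures the edge-disjointness you were unable to get by composition.
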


If Case 3 happens, we continue directly to the second step. We assume  that Case 3 does not happen from now on.

Notice that the above decomposition is done in the graph $H$, that is, the sets $C\in \cset$ are well-linked w.r.t. $\out_H(C)$, and $\sum_{C\in \cset}|\out_H(C)|\leq 2N$. Property (P1) is also only ensured for $T_H(C)$, and not necessarily for $T_G(C)$. For each $C\in \cset$, let $\out^X(C)=\out_G(C)\setminus \out_H(C)$, that is, $\out^X(C)$ contains all edges connecting $C$ to the bounding box $X$. We do not have any bound on the size of $\out^X(C)$, and $C$ is not guaranteed to be well-linked w.r.t. these edges. The purpose of the final partitioning step is to take care of this.
This step is only performed if $X\neq \emptyset$.

We perform the final partitioning step on each cluster $C\in \cset$ separately. We start by setting up an $s$-$t$ min-cut/max-flow instance, as follows. We construct a graph $\tilde{C}$, by starting with $H[C]\cup \out_G(C)$, and identifying all vertices in $T_H(C)$ into a source $s$, and all vertices in $T_G(C)\setminus T_H(C)$ into a sink $t$. Let $F$ be the maximum $s$-$t$ flow in $\tilde{C}$, and let $(\tilde{C}_1,\tilde{C}_2)$ be the corresponding minimum $s$-$t$ cut, with $s\in \tilde{C}_1,t\in \tilde{C}_2$. From Corollary~\ref{corollary: canonical s-t cut}, both $\tilde C_1$ and $\tilde C_2$ are canonical.
We let $C_1$ be the set of vertices of $\tilde{C}_1$, excluding $s$, and $C_2$ is the set of vertices of $\tilde{C}_2$, excluding $t$. Notice that both $C_1$ and $C_2$ are also canonical. We say that $C_1$ is a cluster of type $1$, and $C_2$ is cluster of type $2$. Recall that we have computed a max-flow $F$ connecting $s$ to $t$ in $\tilde{C}$. Since all capacities are integral, and all capacities of edges in $H[C]$ are unit, $F$ consists of a collection $\pset$ of edge-disjoint paths in the graph $H[C]\cup\out_G(C)$. Each such path $P$ connects an edge in $\out_H(C)$ to an edge in $\out^X(C)$. Path $P$ consists of two consecutive segments: one is completely contained in $C_1$, and the other is completely contained in $C_2$. If the first segment is non-empty, then it defines a path $P_1\sse H[C_1]\cup \out_G(C_1)$, connecting an edge in $\out_H(C)$, to an edge in $E(\tilde C_1,\tilde C_2)$. Similarly, if the second segment is non-empty, then it defines a path $P_2\sse H[C_2]\cup \out_G(C_2)$, connecting an edge in $E(\tilde C_1,\tilde  C_2)$ to an edge in $\out^X(C)$. Every edge in $E(C_1,C_2)$ participates in one such path $P_1\sse H[C_1]\cup\out_G(C_1)$, and one such path $P_2\sse H[C_2]\cup\out_G(C_2)$. Similarly, if $e\in \out^X(C)\cap \out_G(C_1)$, then it is also an endpoint of exactly one path $P_1\sse H[C_1]\cup\out_G(C_1)$, and if $e\in \out_G(C_2)\setminus \out^X(C)$, then it is an endpoint of exactly one such path $P_2\sse H[C_2]\cup\out_G(C_2)$.

\begin{figure}[h]
\scalebox{0.5}{\rotatebox{0}{\includegraphics{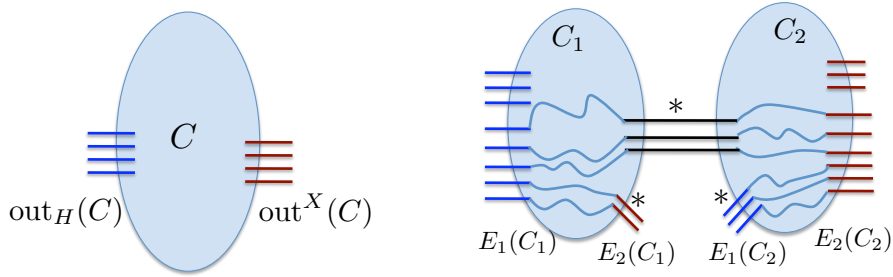}}} \caption{Partition of cluster $C$. Edges of $\out_H(C)$ are blue, edges of $\out^X(C)$ are red; edges participating in the min-cut are marked by $*$. The black edges belong to both $E_2(C_1)$ and $E_1(C_2)$.} \label{fig: step 1 last partition}
\end{figure}

For the cluster $C_1$, let $E_1(C_1)=\out_H(C_1)\cap \out_H(C)$, and $E_2(C_1)=\out_G(C_1)\setminus \out_H(C)$. All edges in $E_2(C_1)$ belong to either $E(C_1,C_2)$ or $\out^X(C)$. By the above discussion, we have a collection $\pset(C_1)$ of edge disjoint paths in $H[C_1]\cup\out_G(C_1)$,  each path connecting an edge in $E_1(C_1)$ to an edge in $E_2(C_1)$, and every edge in $E_2(C_1)$ is an endpoint of a path in $\pset(C_1)$. 
An important property of cluster $C_1$ that we will use later is that if $C_1\neq \emptyset$, then $E_1(C_1)\neq \emptyset$. All edges in $E_1(C_1)$ can reach set $A$ in graph $H\setminus C_1$, and all edges in $E_2(C_1)$ can reach the set $V(X)$ of vertices in the graph $G\setminus C_1$. Moreover, if $E_2(C_1)\neq \emptyset$, then there is a path $P(C_1)$, connecting a vertex of $C_1$ to a vertex of $X$, such that $P(C_1)$ only contains vertices of $C_2$. In particular, it does not contain vertices of any other type-1 clusters.


Similarly, for the cluster $C_2$, let $E_2(C_2)=\out_G(C_2)\cap \out^X(C)$, and $E_1(C_2)=\out_G(C_2)\setminus \out^X(C_2)$. All edges in $E_1(C_2)$ belong to either $E(C_1,C_2)$, or to $\out_H(C)$. From the above discussion, we have a set $\pset(C_2)$ of edge-disjoint paths in $H[C_2]\cup\out_G(C_2)$, each such path connecting an edge in $E_1(C_2)$ to an edge in $E_2(C_2)$, and every edge in $E_1(C_2)$ is an endpoint of one such path. 

Let $\tset_1$ be the set of all non-empty clusters of type $1$, and $\tset_2$ the set of clusters of type $2$. For the case where $X=\emptyset$, all clusters $C\in \cset$ are type-$1$ clusters, and $\tset_2=\emptyset$.
We are now ready to define the fourth case. We say that Case 4 happens, iff clusters in $\tset_2$ contain at least $\lambda n'/2$ vertices altogether.
Notice that Case 4 can only happen if $X\neq \emptyset$. The proof of the next lemma appears in Appendix.

\begin{lemma}\label{lemma: decomposition for Case 3}
If Case 4 happens, then we can find a partition $(A',B')$ of $V(H)$, such that $|A'|,|B'|\geq n'/\rho$, both $A'$ and $B'$ are canonical, and $A'$ is $\alpha^*$-well-linked w.r.t. $E(A',B')$. Moreover, if we denote by $\out^X(B')=\out_G(B')\setminus E(A',B')$, then there is a collection $\pset$ of edge-disjoint paths in graph $H[B']\cup \out_G(B')$, connecting the edges in $E(A',B')$ to edges in $\out^X(B')$, such that each edge $e\in E(A',B')$ is an endpoint of exactly one such path.\end{lemma}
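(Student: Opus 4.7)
Plan. I would take $A' := A \cup \bigcup_{C \in \cset} C_1$ and $B' := \bigcup_{C \in \cset} C_2$. The size conditions are essentially immediate: $|A'| \geq |A| \geq \lambda n' \geq n'/\rho$ because inequality~(\ref{eq: value of rho 2}) already forces $\rho \geq 1/\lambda$, and the Case~4 hypothesis gives $|B'| = \sum_{C}|C_2| \geq \lambda n'/2 \geq n'/\rho$. Canonicality of $A'$ and $B'$ is inherited from the individual pieces: $A$ is canonical by Theorem~\ref{thm: initial partition}, each $C \in \cset$ is canonical by the well-linked decomposition (Theorem~\ref{thm: well-linked-general}), and the partition $(C_1,C_2)$ is canonical by Corollary~\ref{corollary: canonical s-t cut}; a union of canonical sets is canonical.

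For the path collection, observe that every edge $e \in E(A',B')$ has its $B'$-endpoint inside exactly one cluster $C_2$, and then $e \in \out_H(C_2) = E_1(C_2)$ for that $C$ (using the convention that $E_1(C_2) = \out_G(C_2) \setminus \out^X(C)$ coincides with $\out_H(C_2)$). The existing family $\pset(C_2)$ assigns to every edge of $E_1(C_2)$ a unique path inside $H[C_2] \cup \out_G(C_2)$ terminating at some edge of $E_2(C_2) = E(C_2, X) \subseteq \out^X(B')$; I would select precisely those paths whose starting edge lies in $E(A',B')$. Each such path is contained in $H[B'] \cup \out_G(B')$; edge-disjointness within a single cluster is by construction, while edge-disjointness across clusters follows from the pairwise disjointness of the $C_2$'s, together with the observation that any edge lying simultaneously in $E_1(C_2)$ and $E_1(C'_2)$ for distinct clusters is a $C_2$-to-$C'_2$ edge in $E(B',B')$, hence not one of the selected starting edges.

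The substantive part is proving $A'$ is $\alpha^*$-well-linked with respect to $E(A',B')$. I would verify this via concurrent flow / flow-cut duality: show that an arbitrary partial matching $M$ on the endpoints (in $A'$) of the edges of $E(A', B')$ can be routed inside $H[A']$ with congestion $O(\betaFCG/\alpha^*)$; by the flow-cut gap (as encoded in Observation~\ref{observation: existence of flow in well-linked instance} and its contrapositive), this forces the sparsest-cut value at least $\alpha^*$, which is exactly $\alpha^*$-well-linkedness. The routing proceeds in three stages. Stage~1 (inside each $C_1$): use the edge-disjoint paths of $\pset(C_1)$ to push every demand endpoint sitting on an edge $e \in E(A',B') \cap E_2(C_1) = E(C_1,C_2)$ to a surrogate endpoint on an edge of $E_1(C_1)$; this stage incurs unit congestion inside $H[C_1]$. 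Stage~2 (inside each cluster $C$): the surrogate demands now live on $E_1(C_1) \subseteq \out_H(C)$; invoke the $\alpha^*$-well-linkedness of $C$ via Observation~\ref{observation: existence of flow in well-linked instance} to route them to endpoints incident to $A$ or to neighbouring clusters, but restricted to $H[C_1]$ by re-applying $\pset(C_1)$ to bounce back any flow that would otherwise cross into $C_2$. Stage~3 (through $A$): finally, use the $\alpha^*$-well-linkedness of $A$ (again Observation~\ref{observation: existence of flow in well-linked instance}) to route the aggregated demand among vertices of $A$, with congestion $O(\betaFCG/\alpha^*)$. Composing the three stages and absorbing constants into $\alpha^*$ yields the required bound.

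The main obstacle is Stage~2: the well-linkedness of $C$ guarantees routing inside all of $H[C]$, but we need to stay strictly inside $H[C_1] \subseteq H[A']$ to avoid crossing into $B'$. Handling this cleanly requires exploiting the max-flow/min-cut structure of $(\tilde C_1, \tilde C_2)$ together with $\pset(C_1)$ so that any flow reaching into $C_2$ can be re-routed by a path of $\pset(C_1)$ back out through an edge of $E_1(C_1)$, at the cost of only a constant overhead in congestion. Once this re-routing lemma is in place, the three-stage composition and the resulting concurrent flow give $\alpha^*$-well-linkedness.
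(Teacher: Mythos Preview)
Your construction of $(A',B')$ differs from the paper's: you set $A'=A\cup\bigcup_{C}C_1$ and $B'=\bigcup_{C}C_2$ directly, whereas the paper lets $C=\bigcup_{C\in\cset}C_2$ be the union of all type-2 clusters, contracts $A$ to a source and $C$ to a sink, and takes $(A',B')$ to be the minimum $s$--$t$ cut. The size and canonicality arguments, and the path collection on the $B'$ side, are fine in your version and close to what the paper does (the paper concatenates the max-flow paths from $E(A',B')$ to $\out_H(C)$ with the existing paths $\pset(C_2)$ inside each type-2 piece).

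The real problem is your well-linkedness argument. First, the direction of the flow--cut gap is the wrong one: routing an arbitrary matching with congestion $O(\betaFCG/\alpha^*)$ only yields sparsest cut value $\Omega(\alpha^*/\betaFCG)$, not $\alpha^*$. The factor $\betaFCG=\Theta(\log n)$ is not a constant you can ``absorb into $\alpha^*$''; $\alpha^*$ is a fixed parameter from Theorem~\ref{thm: well-linked-general} that the rest of the algorithm relies on. Second, and more seriously, your Stage~2 ``bounce back'' does not work. The paths $\pset(C_1)$ run from edges of $E_1(C_1)$ to edges of $E_2(C_1)\supseteq E(C_1,C_2)$; reversing such a path from a crossing edge $f\in E(C_1,C_2)$ just deposits the demand back on some edge of $E_1(C_1)$, which is exactly the class of edges you started Stage~2 with --- no progress toward $A$ has been made. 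In particular, edges of $E(A',B')$ of the form $C_1\to C'_2$ with $C\neq C'$ lie in $E_1(C_1)$ (not $E_2(C_1)$), so neither Stage~1 nor the bounce-back of Stage~2 handles them, and it is not clear your $A'$ admits the edge-disjoint routing to $A$ that well-linkedness would require.

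The paper's min-cut construction dissolves all of this in one stroke: because $(A',B')$ is a minimum $s$--$t$ cut, max-flow gives $|E(A',B')|$ edge-disjoint $A$--$C$ paths, each crossing the cut once. Their $A'$-side segments are exactly the edge-disjoint paths from every edge of $\out_H(A')\setminus\out_H(A)$ to a vertex of $A$, so Claim~\ref{claim: well-linkedness in extended graph} yields $\alpha^*$-well-linkedness of $A'$ immediately --- no concurrent-flow detour, no loss of $\betaFCG$. If you want to salvage your direct choice of $(A',B')$, you would need a genuine proof that it is $\alpha^*$-well-linked (e.g.\ by exhibiting the edge-disjoint paths required by Claim~\ref{claim: well-linkedness in extended graph}), which your three-stage sketch does not provide.
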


We will show below that for cases 1---4, we can successfully construct a skeleton and produce an input to the next iteration, with high probability. In the next theorem, whose proof appears in Appendix, we show that if none of these cases happen, then we can efficiently find a nasty canonical set.

\begin{theorem}\label{thm: case 4}
If none of the cases 1--4 happen, then we can efficiently find a nasty canonical set in the original contracted graph $\H$.
\end{theorem}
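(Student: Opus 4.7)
The plan is to construct the nasty canonical set as a single cluster of the well-linked decomposition of $B$. Under the hypothesis that none of Cases~1--4 holds, the failure of Case~1 lets me invoke Theorem~\ref{thm: initial partition} to obtain a canonical partition $(A,B)$ of $V(H)$ with $|A|,|B|\geq\lambda n'$, $|E(A,B)|\leq N$, and $A$ being $\alpha^*$-well-linked.  Since $A$ is connected, $B$ inherits property (P1) in $H$ and remains canonical, so Theorem~\ref{thm: well-linked-general} yields a canonical $\alpha^*$-well-linked decomposition $\cset$ of $B$ whose clusters have property (P1), with $\sum_{C\in\cset}|\out_H(C)|\leq 2N$.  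Case~3 failing gives $|C|<n'/\rho$ for every $C\in\cset$, and $\sum_{C}|C|=|B|\geq\lambda n'$.

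The key step is to find some $C^*\in\cset$ with $|C^*|\geq C_0\,|\out_H(C^*)|^2$, where $C_0=2^{16}\dmax^6/(\alpha^*)^2$.  Assume for contradiction that every $C\in\cset$ violates this, so $|C|<C_0|\out_H(C)|^2$; combined with $|C|<n'/\rho$, each cluster obeys $|C|\leq\min\!\bigl(n'/\rho,\,C_0|\out_H(C)|^2\bigr)$.  Split $\cset$ at the threshold $\tau=\sqrt{n'/(\rho C_0)}$: clusters with $|\out_H(C)|\geq\tau$ satisfy $|C|\leq n'/\rho$ and their number is at most $2N/\tau$ (since $\sum|\out_H(C)|\leq 2N$), contributing at most $(2N/\tau)\cdot(n'/\rho)=2N\sqrt{n'C_0/\rho}$ to $\sum|C|$; clusters with $|\out_H(C)|<\tau$ satisfy $|C|\leq C_0\tau|\out_H(C)|$, contributing at most $C_0\tau\cdot 2N=2N\sqrt{n'C_0/\rho}$.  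Hence $\sum_{C}|C|\leq 4N\sqrt{n'C_0/\rho}$; combined with $\sum_{C}|C|\geq\lambda n'$ and squaring, this gives $\lambda^2 n'\rho(\alpha^*)^2\leq 2^{20}\dmax^6 N^2$, contradicting the choice of $\rho$ in~\eqref{eq: value of rho 1}.  Since $|\Gamma_H(C^*)|\leq|\out_H(C^*)|$, the size condition for nastiness is met in $H$.

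It remains to verify that $C^*$ is nasty and canonical in $\H$.  Canonicalness and property (P1) within $H$ come directly from Theorem~\ref{thm: well-linked-general}.  Extending (P1) to $\H$ uses that the terminals of $C^*$ in $\H\setminus C^*$ lie either in $V(H)\setminus C^*$ (connected through $A$), in $V(X)$ (connected as a cycle), or on endpoints of edges removed in previous iterations (whose count is bounded by Invariant~(V6)), and these pieces are mutually connected through $G$; the size condition in $\H$ is preserved because the extra outlet edges $\out_\H(C^*)\setminus\out_H(C^*)$ contribute negligibly compared to $\sqrt{n'}$, using Invariant~(V6) and $n'\geq(m^*\rho\log n)^2$.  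Property (P2) is the delicate part: one needs a planar drawing of $\H[C^*]$ with its interface vertices on the outer face, which I plan to secure by strengthening Theorem~\ref{thm: initial partition} so that $B$ itself has (P2), after which Theorem~\ref{thm: well-linked-general} propagates (P2) to every $C\in\cset$, including $C^*$.  Establishing (P2) through this chain---in particular at the very first step, where one must exploit the absence of a dominating grid (Case~1)---is the main obstacle; by comparison, the counting argument above is mechanical once the threshold $\tau$ is matched to~\eqref{eq: value of rho 1}.
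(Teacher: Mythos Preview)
Your counting argument (the threshold split at $\tau=\sqrt{n'/(\rho C_0)}$) is correct and is essentially the paper's Claim~\ref{claim: numbers}. But there are two genuine gaps, and your own diagnosis of the main obstacle points at the wrong fix.

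For property~(P2), your plan to ``strengthen Theorem~\ref{thm: initial partition} so that $B$ itself has (P2)'' cannot work: $H$ is not planar, and there is no reason for $H[B]$ to admit a planar drawing with $\Gamma(B)$ on the outer face. The paper never tries to establish (P2) constructively. Instead it observes that if a cluster already has (P1) but not (P2), then its edges must participate in a crossing in \emph{every} drawing of $\H$, in particular in the optimal one. (Reason: if no edge of the cluster is crossed, then in the induced drawing the complement --- hence, by (P1), all terminals --- lies in a single face, which gives (P2).) Hence at most $2\opt$ of the (P1)-clusters can fail (P2); since Case~3 failing bounds each cluster by $n'/\rho$, these few clusters carry fewer than $(2\opt+1)n'/\rho<\lambda n'/4$ vertices by~\eqref{eq: value of rho 2}, and the counting argument then runs on the remaining clusters.

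Second, you work with $C\in\cset$ directly, but nastiness is measured by $|\Gamma_\H(C)|$, not $|\out_H(C)|$. You account for edges in $\bigcup_{h'}\edges{h'}$ via Invariant~(V6), but you do not bound $|\out^X(C)|$, the edges from $C$ to the bounding box, and nothing prevents this from being large. The paper handles this by the type-1/type-2 min-cut split of each $C\in\cset$: for a type-1 piece $C_1$ one automatically gets $|E_2(C_1)|\leq|E_1(C_1)|$, so $|\out_\H(C_1)|\leq 2|E_1(C_1)|+2m^*\rho\log n\leq 5N$ in total. Case~4 failing then guarantees that the type-2 pieces together contain fewer than $\lambda n'/2$ vertices, so the type-1 pieces still carry at least $\lambda n'/2$ and the counting goes through on them. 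The paper also augments each type-1 cluster $C$ to a set $U_C\supseteq C$ (absorbing connected components of $\H$ outside $A\cup X\cup\bigcup_{C'}C'$ that attach only to $C$) precisely to force (P1) in $\H$, and shows at most one $U_C$ can fail (P1); your one-line sketch of (P1) in $\H$ does not address why the terminals reached through $X$ and those reached through $A$ are connected to each other in $\H\setminus C$.
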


\subsection{Step 2: Skeleton Construction}\label{subsection: skeleton construction}
Let $(G,X)\in\set{(G_1,X_1),\ldots,(G_{k_h},X_{k_h})}$, let $\phi'$ be the strong solution to problem $\pi(G,X,\zset')$, guaranteed by Invariant~(\ref{invariant 3: there is a cheap solution}), and let $\opt'$ denote its cost. Recall that $H$ is the largest connected component in $G\setminus X$, and $\zset'=\set{Z\in \zset: Z\sse V(H)}$. We say that an edge $e\in E(G)$ is \emph{good} iff it does not participate in any crossings in $\phi'$. Recall that for each $Z\in \zset'$, all edges of $G[Z]$ are good.
In the second step we define the subsets $E',E''$ of edges, the two sub-graphs $G^X$ and $G'$ of $G$, and construct a valid skeleton $K$ for $\pi(G^X,X,\zset'), E''$ and $\phi'$, for Cases 1---4.
We define a set $T\sse E(G)$ of edges, that we refer to as ``terminals'' for the rest of this section, as follows. For Case 1, $T=\emptyset$. For Case 2, $T=E(A,B)$, where $(A,B)$ is the partition of $H$ from Theorem~\ref{thm: initial partition}. For Cases 3 and 4, $T=E(A',B')$, where $(A',B')$ are the partitions of $H$ given by Lemmas~\ref{lemma: decomposition for Case 2} and \ref{lemma: decomposition for Case 3}, respectively. For convenience, we rename $(A',B')$ as $(A,B)$ for these two cases. Since the partition $(A,B)$ of $H$ is canonical for cases 2--4, we are guaranteed that $T$ does not contain any edges of grids $Z\in \zset'$.

The easiest case is Case 2. The skeleton $K$ for this case is simply the bounding box $X$, and we set $E''=T$. Recall that $|T|\leq  \frac{10^7\opt^2\cdot \rho\cdot\log^2n\cdot \dmax^2\cdot \betaFCG }{\alpha^*}$ for this case. Since $|A|,|B|\geq n'/\rho$, it is easy to verify that $X$ is a valid skeleton for $G$, $\phi'$ and $E''$. In particular, $|A|,|B|\leq n'(1-\rho)\leq n_{h-1}(1-\rho)\leq n_{h+1}$. We set $E'=\emptyset$, $G^X=G$, and $G'=\emptyset$. From now on we focus on Cases 1, 3 and 4.

We first build an initial skeleton $K'$ of $G$, and a subset $E''$ of edges, such that $K'$ has all the required properties, except that it is possible that $X\not\sse K'$. Specifically, we will ensure that $K'$ only contains good edges, is rigid, and every connected component of $H\setminus (K'\cup E'')$ contains at most $n_{h+1}$ vertices. In the end, we will either connect $K'$ to $X$, or find a small subset $E'$ of edges, separating the two sets.

The initial skeleton $K'$ for Case 1 is simply the grid $Z\in\zset'$ with $|Z|\geq n'/2$, and we set $E''=\emptyset$. Observe that $K'$ is good, rigid, canonical, and every connected component of $H\setminus K'$ contains at most $n'/2\leq n_{h-1}/2\leq n_{h+1}$ vertices. The construction of the initial skeleton for Cases 3 and 4 is summarized in the next theorem, whose proof is deferred to the Appendix.

\begin{theorem}\label{thm: initial skeleton for Cases 3 and 4}
Assume that Cases 3 or 4 happen. Then we can efficiently construct a skeleton $K'\sse G$, such that with probability at least $\left (1-\frac 1{2\rho\cdot \opt}\right )$, $K'$ is good, rigid, and every connected component of $H\setminus K'$ contains at most $O\left ( \frac{\opt^2\cdot \rho\cdot \log^2 n\cdot \dmax^2 \cdot \betaFCG }{\alpha^*}\right )$ terminals.
\end{theorem}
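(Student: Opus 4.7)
The plan is to build $K'$ by randomly sampling a small subset of the terminal edges of $T$ and then routing a rigid ``backbone'' through the sampled terminals using the well-linkedness provided by Step~1. Since both the routing congestion and the bad-edge budget are under control, with the right sampling rate the resulting skeleton will avoid bad edges with the required probability, while still being dense enough that no component of $H\setminus K'$ can contain too many unsampled terminals.

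More concretely, first I would apply Observation~\ref{observation: existence of flow in well-linked instance} to the $\alpha^*$-well-linked set $A$ to obtain, for every matching on $T$, a routing in $H[A]$ with edge-congestion at most $2\betaFCG/\alpha^*$. In Case~3, I would do the same for $B$. In Case~4, I would use the edge-disjoint collection $\pset$ from Lemma~\ref{lemma: decomposition for Case 3} to push each terminal through $B$ to $\out^X(B)$, and I would allow the skeleton to include portions of the bounding box $X$ so that the $B$-side paths can close up with the $A$-side paths into cycles. I would then sample each terminal independently with probability
\[ p \;=\; \Theta\!\paren{\frac{\alpha^*}{\opt^2\cdot\rho\cdot \betaFCG}}, \]
take a constant number of independent random matchings on the sampled terminals, route each one of them through $A$ (and through $B$ or $\pset$ on the other side), and let $K'$ be the union of all the routed paths together with the sampled terminal edges. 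Taking several matchings (as opposed to one) is what lets me argue that after contracting internal degree-$2$ paths the resulting multigraph is $3$-vertex-connected, so that $K'$ is rigid in the sense of the definition.

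To bound the probability that $K'$ contains a bad edge, I use that the optimal drawing $\phi'$ has at most $\opt'$ crossings, and so at most $2\opt'$ edges are bad. Since each flow-routing of a matching in $A$ (or $B$) uses any fixed edge with congestion at most $2\betaFCG/\alpha^*$, any fixed edge is picked by the routing of a random matching on sampled terminals with probability $O(p\cdot \betaFCG/\alpha^*)$. A union bound over the at most $2\opt'\le 2\opt$ bad edges, followed by Markov's inequality, gives expected number of bad edges in $K'$ of order $\opt\cdot p\cdot \betaFCG/\alpha^* = O(1/(\rho \opt))$, hence $K'$ is good with probability at least $1-1/(2\rho\opt)$, as required.

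Finally, to control the number of terminals per connected component of $H\setminus K'$, I would use Chernoff together with the well-linkedness of $A$: for any ``candidate'' subset $C$ of $V(H)$ that could appear as a component, the number of terminals of $C$ that remain unsampled concentrates around its mean, which is of order $|T_C|/(p\cdot c)$ where $T_C$ is the set of terminals naturally associated to $C$. Converting this into a bound on $|T_C|$ when $C$ is a component of $H\setminus K'$ yields $O(\log n/p) = O(\opt^2\rho\log n\cdot\betaFCG/\alpha^*)$ terminals, well within the claimed $O\!\paren{\opt^2\rho\log^2 n\cdot d_{\max}^2\cdot\betaFCG/\alpha^*}$. The main obstacle I expect is the rigidity requirement: a single random-matching routing gives only a disjoint union of cycles (not rigid), and an ad hoc cycle construction would require internally vertex-disjoint paths which well-linkedness alone does not provide. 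The remedy is to route enough independent matchings so that the union is $3$-vertex-connected after $2$-path contraction; Case~4 adds the further subtlety that the $B$-side cannot be routed by well-linkedness, so the edge-disjoint $\pset$-paths together with $X$ must play the role of the ``$B$-side'' matchings that close the skeleton's cycles.
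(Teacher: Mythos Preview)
Your high-level plan (sample terminals, route them through the well-linked side(s), and take the union as the skeleton) is in the right spirit, but the proposal has two genuine gaps that the paper's proof handles with specific structural ideas you are missing.

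\textbf{Rigidity.} You acknowledge that rigidity is the hard part and propose to route ``enough independent matchings so that the union is $3$-vertex-connected after $2$-path contraction.'' This is not an argument: there is no reason why the union of a constant number of randomly-routed matchings in $H[A]$ and $H[B]$ should be $3$-connected. Routings for different matched pairs can share vertices and edges arbitrarily, and a single high-degree vertex or a pair of articulation vertices in the routing image would destroy $3$-connectivity. The paper does something quite different: it builds $K^0$ as a union of \emph{cycles} $L(t,t')$ (one red path plus one blue path for each sampled \emph{pair} $(t,t')$), identifies a connected component $K'$ containing a certain dense subgraph $K^{**}$, and then explicitly eliminates all $2$-vertex cuts of $K'$. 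The $2$-cut elimination is its own argument: it shows that the terminal sets $\tilde T'_{u,v}$ associated to the $2$-cuts form a laminar family, replaces each maximal $C'_{u,v}$ by a single path, and proves that the resulting contraction is $3$-connected. None of this follows from taking several random matchings.

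\textbf{Bounding terminals per component.} Your Chernoff argument ``for any candidate subset $C$ that could appear as a component'' cannot work as stated, because there are exponentially many candidate components and you have no way to restrict the union bound. The paper's key structural device is a \emph{circular ordering} $\pi$ on the good terminals $T^0$: it first proves (Claim preceding the construction) that all but $O(\opt\betaFCG/\alpha^*)$ terminals lie in a single face of $\phi_r$ and a single face of $\phi_b$, draws a curve $\tau$ inside $F_r$, and reads off the order in which terminals cross $\tau$. It then proves (the ``consecutive set'' lemma) that the terminals lying in any face of the skeleton drawing form a \emph{consecutive} interval of $\pi$. This reduces the union bound to at most $|T|^2$ intervals, which is what makes the Chernoff argument go through. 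Your sampling of individual terminals with probability $p$ and a generic concentration statement do not recover this; without the ordering $\pi$ and the consecutive-set lemma you have no handle on which subsets of terminals can occur as $T\cap E(C)$ for a component $C$.

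A smaller point: your good-edge calculation conflates congestion with probability. The paper samples pairs $(t,t')$ with probability $P'=\Theta\!\left(\alpha^*/(\rho\,\opt^2\,\betaFCG\,|T|)\right)$ and, for each sampled pair, draws the red path from the \emph{flow distribution} $F_r$; the expected load on any fixed edge is then $P'\cdot |T|\betaFCG/\alpha^*$, which is how the $1/(\rho\,\opt)$ bound on hitting any of the $\le 2\opt$ bad edges comes out cleanly. Your scheme of fixing a matching first and then routing it integrally does not give a per-edge probability in the same way.
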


Let $\cset$ be the set of all connected components of $H\setminus K'$. Observe that at most one of the components may contain more than $n'/2$ vertices. Let $C$ denote this component, and let $E''$ be the set of terminals contained in $C$, $E''=T\cap E(C)$. Let $\cset'$ be the set of all connected components of $C\setminus E''$. Then for each $C'\in \cset'$, $|V(C')|\leq n'(1-\rho)$ must hold: otherwise, $V(C')$ must contain vertices that belong to both $A$ and $B$, and so $E(C')$ must contain at least one terminal. Therefore, the size of every connected component of $H\setminus (K'\cup E'')$ is bounded by $n'(1-\rho)\leq n_{h-1}(1-\rho)\leq n_{h+1}$ from Invariant~(\ref{invariant 5: bound on size}). Recall that the terminals do not belong to the grids $Z\in \zset'$.

Observe that it is possible that $V(K')$ is not canonical.
Consider some grid $Z\in\zset$, such that $V(Z)\cap V(K')\neq \emptyset$. If $Z\cap K'$ is a simple path, then we will deal with such grids at the end of the third step. Let $\zset''(G)$  denote the set of all such grids. Assume now that $Z\cap K'$ is not a simple path. Since graph $K'$ is rigid, it must be the case that there are at least three matching edges from $\out_G(Z)$ that belong to $K'$. In this case, we can simply add the whole grid $Z$ to the skeleton $K'$, and still the new skeleton $K'$ remains good and rigid, and every connected component of $H\setminus (K'\cup E'')$ contains at most $n_{h+1}$ vertices. So from now on we assume that if $V(Z)\cap V(K')\neq \emptyset$ for some $Z\in \zset$, then $Z\cap K'$ is a simple path, and so $Z\in \zset''(G)$. We denote by $K^+$ the union of $K'$ with all the grids in $\zset''(G)$. Clearly, $K^+$ is connected, canonical, but it is not necessarily rigid.

Consider Cases 1, 3 and 4. If $X=\emptyset$, then we define $E'=\emptyset$, $G^X=G$, $G'=\emptyset$ and the final skeleton $K=K'$. It is easy to see that $K$ is a valid skeleton for $\pi(G^X,X,\zset'\setminus\zset''(G))$, $E''$ and $\phi'$.

Otherwise, if $X\neq \emptyset$, we now try to connect the skeleton $K'$ to the bounding box $X$ (observe that some of the vertices of $X$ may already belong to $K'$). In order to do so, we will try to find a set $\pset'$ of $24\opt^2\rho$ vertex-disjoint paths in $G\setminus E''$, connecting the vertices of $X$ to the vertices of $K^+$ (where some of these paths can be simply vertices in $X\cap K^+$). We distinguish between three cases.

The first case is when such a collection of paths does not exist in $G\setminus E''$. Then there must be a set $V'\sse V(G)$ of at most $24\opt^2\rho$ vertices, whose removal from $G\setminus E''$ separates $X$ from $K^+$. Therefore, the size of the edge min-cut separating $X$ from $K^+\setminus X$ in $G\setminus E''$ is at most $24\opt^2\rho\dmax$. 
Observe that both $K^+$ and $X$ are canonical w.r.t. $\zset'$, and the vertices in $V(X)\cap V(K^+)$ cannot belong to sets $Z\in \zset'$, by the definition of $\zset'$. 
Therefore, from Corollary~\ref{corollary: canonical s-t cut}, there is a subset $E'$ of at most $24\opt^2\rho\dmax$ edges (canonical edge min-cut), whose removal partitions graph $G\setminus E''$ into two connected sub-graphs, $G^X$  containing $X$, and $G'=G\setminus V(G^X)$, and moreover, $V(G_X)$ and $V(G')$ are both canonical, and the edges of $E'$ do not belong to any grids $Z\in \zset'$. We add the instance $\pi(G',\emptyset,\zset')$ directly to $\gset_{h+1}$. From Invariant~(\ref{invariant 5: bound on size}), since $X\neq \emptyset$, $|V(G')|\leq n_h$, and since the bounding box of the new instance is $\emptyset$, it is a valid input to the next iteration. For graph $G^X$, we use $X$ as its skeleton. Observe that every connected component of $G^X\setminus (X\cup E'')$ must be either a sub-graph of some connected component of $H\setminus (K'\cup E'')$  (and then its size is bounded by $n_{h+1}$), or it must belong to $\hset^0$ (and then its size is bounded by $n_{h-1}/2\leq n_{h+1}$). Therefore, $X$ is a valid skeleton for $\pi(G^X,X,\zset'\setminus \zset''(G))$, $E''$, and $\phi'$.

The second case is when there is some grid $Z\in \zset''(G)$, such that for any collection $\pset'$ of $24\opt^2\rho$ vertex-disjoint paths, connecting the vertices of $X$ to the vertices of $K^+$ in $G$, at least half the paths contain vertices of $\Gamma(Z)$ as their endpoints. Recall that only $2$ edges of $\out_H(Z)$ belong to $K'$. Then there is a collection $E'$ of at most $12\dmax\opt^2\rho+2$ edges in $G\setminus E''$, whose removal separates $V(X)\cup Z$ from $V(K^+)\setminus (Z\cup X)$. Again, we can ensure that the edges of $E'$ do not belong to the grids $Z\in \zset'$. Let $G^X$ denote the resulting subgraph that contains $X$, and $G'=G\setminus G^X$. Then both $G^X$ and $G'$ are canonical as before, and we can add the instance  $\pi(G',\emptyset,\zset')$ to $\gset_{h+1}$, as before. In order to build a valid skeleton for graph $G^X$, we consider the subset $\pset''\sse \pset'$ of $12\opt^2\rho$ vertex-disjoint paths, connecting the vertices of $X$ to the vertices of $\Gamma(Z)$, and we randomly choose three such paths. We then let the skeleton $K$ of $G^X$ consist of the union of $X$, $Z$, and the three selected paths. It is easy to see that the resulting graph $K$ is rigid, and with probability at least $(1-\frac 1{2\rho\cdot \opt})$, it only contains good edges. Moreover, every connected component of $G^X\setminus (K\cup E'')$ is either a sub-graph of a connected component of $H\setminus (K'\cup E'')$ (and  may contain at most $n_{h+1}$ vertices), or it belongs to $\hset^0$ (and then its size is bounded by $n_{h+1}$). Therefore, $K$ is a valid skeleton for $\pi(G^X,X,\zset'\setminus \zset''(G))$, $E''$, and $\phi'$.

The third case is when we can find the desired collection $\pset'$ of paths, and moreover, for each grid $Z\in \zset''(G)$, at most half the paths in $\pset'$ contain vertices of $\Gamma(Z)$. We then randomly select three paths from $\pset'$, making sure that at most two paths containing vertices of $\Gamma(Z)$ are selected for any grid $Z\in \zset''(G)$. Since at most $2\opt$ of the paths in $\pset'$ are bad, with probability at least $1-1/(2\opt\rho)$, none of the selected paths is bad. We then define $K$ to be the union of $K'$, $X$, and the three selected paths. 
Additionally, if, for some grid $Z\in \zset''(G)$, one or two of the selected paths contain vertices in $\Gamma(Z)$, then remove $Z$ from $\zset''(G)$, and add it to $K$. It is easy to verify that the resulting skeleton is rigid, and it only contains good edges. Moreover, every connected component of $G\setminus (K\cup E'')$, is either a sub-graph of a connected component of $H\setminus (K'\cup E'')$, or it is a sub-graph of one of the graphs in $\hset^0$. In the former case, its size is bounded by $n_{h+1}$ as above, while in the latter case, its size is bounded by $|V(G\setminus X)|/2\leq n_{h-1}/2<n_{h-1}(1-\rho)\leq n_{h+1}$. We set $E'=\emptyset$, $G^X=G$, and $G'=\emptyset$.

To summarize this step, we have started with the instance $\pi(G,X,\zset')$, and defined two subsets $E',E''$ of edges, with $|E'|\leq O(\opt^2\dmax\rho)$ and $|E''|\leq O\left ( \frac{\opt^2\cdot \rho\cdot \log^2 n\cdot \dmax^2 \cdot \betaFCG }{\alpha^*}\right )$, whose removal disconnects $G$ into two connected sub-graphs: $G^X$ containing $X$, and $G'$. Moreover, both sets $V(G^X)$, $V(G')$ are canonical, and $E',E''$ do not contain edges belonging to grids $Z\in\zset'$. We have added instance $\pi(G',\emptyset,\zset')$ to $\gset_{h+1}$, and we have defined a skeleton $K$ for $G^X$. We have shown that $K$ is a valid skeleton for $\pi(G^X,X,\zset'\setminus \zset''(G))$, $E''$, and $\phi'$. The probability that this step is successful for a fixed graph $G\in\set{G_1,\ldots,G_{k_h}}$ is at least $(1-1/(\rho\cdot  \opt))$, and so the probability that it is successful across all graphs is at least $(1-1/\rho)$.

We can assume w.l.o.g. that every edge in set $E'$ has one endpoint in $G^X$ and one endpoint in $G'$: otherwise, this edge does not separate $G^X$ from $G'$, and can be removed from $E'$. Similarly, we can assume w.l.o.g. that for every edge $e\in E''$, the two endpoints of $e$ either belong to distinct connected components of $G^X\setminus (K\cup E'')$, or one endpoint belongs to $G^X$, and the other to $G'$. We will use these facts later, to claim that Invariant~(\ref{invariant 2: proper subgraph}) holds for the resulting instances.

\subsection{Step 3: Producing Input to the Next Iteration}\label{sec: step 3}
Recall that so far, for each $1\leq i\leq k_h$, we have found two collections $E_i',E_i''$ of edges, two sub-graphs $G_i^X$ and $G_i'$ with $X_i\sse G_i^X$, and a valid skeleton $K_i$ for $\pi(G_i^X,X_i,\zset\setminus \zset''(G_i))$, $\phi_i$, $E''_i$. The sets $E_i'\cup E_i''$ do not contain any edges of the grids $Z\in \zset$, and each edge in $E_i'\cup E''_i$ either connects a vertex of $G_i^X$ to a vertex of $G_i'$, or vertices of two distinct connected components of $G_i^X\setminus (K_i\cup E''_i)$. Recall that $G_i'$ contains at most $n_{h}$ vertices, and there are no edges in $G_i\setminus (E_i'\cup E_{i}'')$ connecting the vertices of $G_i'$ to those of $G_i^X$. Let $\cset'$ denote the set of all connected components of $G_i^X\setminus (K_i\cup E_i'')$. Then for each $C\in \cset'$, $|V(C)|\leq n_{h+1}$. 

Since graph $K_i$ is rigid, we can find the planar drawing $\phi_i(K_i)$ of $K_i$ induced by $\phi_i$ efficiently. Since all edges of $K_i$ are good for $\phi_i$, each connected component $C\in\cset'$ is embedded inside a single face $F_C^*$ of $\phi_i$. Intuitively, we would like to find this face $F_C^*$ for each such connected component $C$, and then solve the problem recursively on $C$, together with the bounding box $\gamma(F_C^*)$ --- the boundary of the face $F_C^*$. Apart from the difficulty in identifying the face $F_C^*$, a problem with this approach is that it is not clear that we can solve the problems induced by different connected components separately. For example, if both $C$ and $C'$ need to be embedded inside the same face $F$, then even if we find weak solutions for problems $\pi(C,\gamma(F),\zset)$ and $\pi(C',\gamma(F),\zset')$, it is not clear that these two solutions can be combined together to give a feasible weak solution for the whole problem, since the drawings of $C\cup \gamma(F)$ and $C'\cup \gamma(F)$ may interfere with each other. We will define below the condition under which the two clusters are considered independent and can be solved separately. We will then find an assignment of each cluster $C$ to one of the faces of $\phi_i(K_i)$, and find a further partition of each cluster $C\in \cset'$, such that all resulting clusters assigned to the same face are independent, and their corresponding problems can therefore be solved separately.

We now focus on some graph $G=G_i^X\in \set{G_1^X,\ldots,G_{k_h}^X}$, and we denote its bounding box by $X$, its skeleton $K_i$ by $K$, and the two sets $E_i',E_i''$ of edges by $E'$ and $E''$ respectively. We let $\phi'$ denote the drawing of $G_i^X$ induced by the drawing $\phi_i$, guaranteed by Invariant~(\ref{invariant 3: there is a cheap solution}).
As before, $\cset'$ is the set of all connected components of $G\setminus (K\cup E'')$.

While further partitioning the clusters $C\in\cset'$ to ensure independence, we may have to remove edges that connect the vertices of $C$ to the skeleton $K$. However, such edges do not strictly belong to the cluster $C$. We next perform a simple transformation of the graph $G\setminus (E'\cup E'')$ in order to take care of this technicality.

Consider the graph $G\setminus (E'\cup E'')$. We perform the following transformation: let $e=(v,x)$ be any edge in $E(G)\setminus (E'\cup E'')$, such that $x\in K$, $v\not\in K$. We add an artificial vertex $z_e$, that subdivides $e$ into two edges: an artificial edge $(x,z_e)$, and a non-artificial edge $(v,z_e)$. We denote $x_{z_e}=x$. Similarly, if $e=(x,x')$ is any edge in $E(G)\setminus (E'\cup E'')$, with $x,x'\in K$, then we add two artificial vertices $z_e,z'_e$, that subdivide $e$ into three edges, artificial edges $(x,z_e)$, and $(z'_e,x')$, and a non-artificial edge $(z_e,z'_e)$. We denote $x_{z_e}=x$, and $x_{z_{e}'}=x'$. If edge $e$ belonged to any grids $Z\in \zset$ (which can happen if $Z\in \zset''(G)$), then we consider all edges obtained from sub-divviding $e$ also a part of $Z$. Let $\tilde{G}$ denote the resulting graph, $\Gamma$ the set of all these artificial vertices, and let $E_{\tilde{G}}(\Gamma,K)$ be the set of all artificial edges in $\tilde G$. Let $\tilde{\phi}$ be the drawing of $\tilde{G}$ induced by $\phi'$. Notice that we can assume w.l.o.g. that the edges of $E_{\tilde G}(\Gamma,K)$ do not participate in any crossings in $\tilde{\phi}$. We use this assumption throughout the current section. 

For any sub-graph $C$ of $\tilde G\setminus K$, we denote by $\Gamma(C)=\Gamma\cap V(C)$, and $\out_K(C)$ is the subset of artificial edges adjacent to the vertices of $C$, that is, $\out_K(C)=E_{\tilde G}(\Gamma_C,K)$. We also denote by $C^+=C\cup \out_K(C)$, and by $\delta(C)$ the set of endpoints of the edges in $\out_K(C)$ that belong to $K$. 
Let $\cset$ the set of all connected components of $\tilde G\setminus K$. We next formally define the notion of independence of clusters. Eventually, we will find a further partition of each one of the clusters $C\in \cset$, so that the resulting clusters are independent, and can be solved separately in the next iteration. 

Let $\phi_K'$ be the drawing of $K$ induced by $\phi'$. Recall that this is the unique planar drawing of $K$, that can be found efficiently. Let $\fset$ be the set of faces of $\phi_K'$. For each face $F\in \fset$, let $\gamma(F)$ denote the set of edges and vertices lying on its boundary. Since $K$ is rigid, $\gamma(F)$ is a simple cycle. Since all edges of $K$ are good for $\phi'$, for every component $C\in \cset$, $C^+$ is embedded completely inside some face $F^*_C$ of $\fset$ in the drawing $\tilde {\phi}$, and so $\delta(C)\sse \gamma(F)$ must hold. Therefore, there are three possibilities: either there is a unique face $F_C\in \fset$, such that $\delta(C)\sse \gamma(F_C)$. In this case we say that $C$ is of type 1, and  $F_C=F^*_C$ must hold; or there are two faces $F_1(C),F_2(C)$, whose both boundaries contain $\delta(C)$, so  $F^*_C\in\set{F_1(C),F_2(C)}$. In this case we say that $C$ is of type 2. The third possibility is that $|\delta(C)|\leq 1$. In this case we say that $C$ is of type 3, and we can embed $C$ inside any face whose boundary contains the vertex $\delta(C)$. The embedding of such clusters does not affect other clusters. For convenience, when $C$ is of type 1, we denote $F_1(C)=F_2(C)=F_C$, and if it is of type 3, then we denote $F_1(C)=F_2(C)=F$, where $F$ is any face of $\fset$ whose boundary contains $\delta(C)$.

We now formally define when two clusters $C,C'\in \cset$ are independent. Let $C,C'\in \cset$ be any two clusters, such that there is a face $F\in \fset$, with $\delta(C),\delta(C')\sse \gamma(F)$. The set $\delta(C)$ of vertices defines a partition $\Sigma$ of $\gamma(F)$ into segments, where every segment $\sigma\in \Sigma$ contains two vertices of $\delta(C)$ as its endpoints, and does not contain any other vertices of $\delta(C)$. Similarly, the set $\delta(C')$ of vertices defines a partition $\Sigma'$ of $\gamma(F)$.

\begin{definition}
We say that the two clusters $C,C'$ are \emph{independent}, iff $\delta(C)$ is completely contained in some segment $\sigma'\in \Sigma'$. Notice that in this case, $\delta(C')$ must also be completely contained in some segment $\sigma\in \Sigma$.
\end{definition}

Our goal in this step is to assign to each cluster $C\in \cset$, a face $F(C)\in\set{F_1(C),F_2(C)}$, and to find a partition $\qset(C)$ of the vertices of the cluster $C$. Intuitively, each such cluster $Q\in\qset(C)$ will become an instance in the input to the next iteration, with $\gamma(F(C))$ as its bounding box.
Suppose we are given such an assignment $F(C)$ of faces, and the partition $\qset(C)$ for each $C\in \cset$. We will use the following notation.
For each $C\in \cset$,
let $E^*(C)$ denote the set of edges cut by $\qset(C)$, that is, $E^*(C)=\bigcup_{Q\neq Q'\in \qset(C)}E_{\tilde{G}}(Q,Q')$, and let $E^*=\bigcup_{C\in \cset}E^*(C)$. For each $Q\in \qset(C)$, we denote by $X_Q=\gamma(F(C))$, the boundary of the face inside which $C$ is to be embedded. For each face $F\in \fset$, we denote by $\qset(F)=\bigcup_{C:F(C)=F}\qset(C)$ the set of all clusters to be embedded inside $F$, and we denote by $\qset=\bigcup_{C\in \cset}\qset(C)$. 
Abusing the notation, for each cluster $Q\in \qset$, we will refer to $Q$ both as the set of vertices, and as the sub-graph $\tilde G[Q]$ induced by it.
As before, we denote  $Q\cup \out_K(Q)$ by $Q^+$. 
The next theorem shows that it is enough to find an assignment of every cluster $C\in \cset$ to a face $F(C)\in \set{F_1(C),F_2(C)}$, and a partition $\qset(C)$ of the vertices of $C$, such that all the resulting clusters assigned to every face of $\fset$ are independent.

\begin{theorem}\label{thm: no conflict case}
Suppose we are given, for each cluster $C\in \cset$, a face $F(C)\in \set{F_1(C),F_2(C)}$, and a partition $\qset(C)$ of the vertices of $C$.
Moreover, assume that for every face $F\in \fset$, every pair $Q,Q'\in \qset(F)$ of clusters is independent, and for each $Z\in \zset$, $E^*\cap E(Z)=\emptyset$. 
Then:

\begin{itemize}
\item For each $Q\in \qset$, there is a strong solution to the problem $\pi(Q^+\cup X_Q,X_Q,\zset)$, such that the total cost of these solutions, over all $Q\in \qset$, is bounded by $\cro_{\tilde \phi}(\tilde G)\leq \cro_{\phi'}(G)$.

\item For each $Q\in \qset$, let $E^{**}_Q$ be any feasible weak solution to the problem $\pi(Q^+\cup X_Q,X_Q,\zset)$, and let $E^{**}=\bigcup_{Q\in \qset}E^{**}_Q$. Then $E'\cup E''\cup E^*\cup E^{**}$ is a feasible weak solution to problem $\pi(G,X,\zset)$.
\end{itemize}
\end{theorem}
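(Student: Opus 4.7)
The first claim says that we can produce, for each $Q \in \qset$, a strong solution to $\pi(Q^+ \cup X_Q, X_Q, \zset)$ whose total cost across all $Q$ is bounded by $\cro_{\phi'}(G)$. My approach is to start from the drawing $\tilde\phi$ of $\tilde G$ induced by $\phi'$, which by construction is canonical and whose restriction to $K$ is the unique planar drawing $\phi'_K$. For each parent cluster $C$ of type 1 or type 3, the induced drawing of $Q^+ \cup X_Q$ for each $Q \in \qset(C)$ already has $Q^+$ embedded inside $X_Q = \gamma(F(C))$, so it is already a valid strong solution. For each type-2 cluster $C$ with $F(C) \neq F^*_C$, I would apply a reflection-type homeomorphism that flips the entire drawing of $C^+$ across the shared boundary of $F_1(C)$ and $F_2(C)$ (whose endpoints contain $\delta(C)$). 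The flip preserves the internal crossings of $C^+$, keeps edges of $K$ (and hence $X_Q$) crossing-free, and preserves the grid-canonical property since every $Z \in \zset'$ sits either entirely inside $C^+$ (flipping rigidly with it) or outside it. The subgraphs $Q^+$ for different $Q \in \qset$ are pairwise edge-disjoint, because the $Q$'s are vertex-disjoint subsets of $\tilde G \setminus K$ and $\out_K(Q)$ consists only of artificial edges incident to $Q$; hence summing internal crossings bounds the total by $\cro_{\tilde\phi}(\tilde G) \leq \cro_{\phi'}(G)$.

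For the second claim, I would construct a planar drawing of the residual graph inside $X$ by assembling pieces around the planar drawing $\phi'_K$ of $K$. Fix a face $F \in \fset$ and consider the clusters $\qset(F)$ assigned to $F$, each equipped with a planar drawing $\psi_Q$ of $Q^+ \setminus E^{**}_Q$ inside $X_Q = \gamma(F)$, guaranteed by $E^{**}_Q$. Pairwise independence of the clusters in $\qset(F)$ implies that the attachment sets $\{\delta(Q)\}_{Q \in \qset(F)}$ form a non-crossing family of vertex subsets of the cycle $\gamma(F)$. Using this, I would build a pairwise non-crossing family of closed disks $\{D_Q \subseteq F\}_{Q \in \qset(F)}$, where each $\partial D_Q$ is composed of the vertices of $\delta(Q)$, arcs of $\gamma(F)$ between them, and arcs on the boundaries of enclosing disks already placed. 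Embedding $\psi_Q$ into $D_Q$ via a homeomorphism that identifies the $\delta(Q)$-vertices on the two boundaries yields a planar drawing of $Q^+ \setminus E^{**}_Q$ inside $D_Q$. Because the only edges of $Q^+$ touching $\gamma(F)$ are the artificial edges in $\out_K(Q)$ attached precisely at $\delta(Q)$, the replacement preserves global planarity; grid edges remain crossing-free because they are crossing-free inside each $\psi_Q$ and none are in $E^* \cup E^{**}$ by hypothesis. Performing this for every face and taking the union gives a planar drawing inside $X$ of the graph $G$ with the edges $E'\cup E''\cup E^*\cup E^{**}$ removed.

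The main obstacle is the simultaneous geometric realization of the non-crossing family $\{\delta(Q)\}_Q$ as a laminar system of disks in $F$. Pairwise independence supplies only a combinatorial non-crossing condition, and I would discharge this by showing that independence induces a laminar order on the clusters of $\qset(F)$: for any two clusters $Q, Q'$ assigned to $F$, either $\delta(Q)$ lies in a single segment of $\Sigma'$ or $\delta(Q')$ lies in a single segment of $\Sigma$, and these relations are consistent. This lets me process clusters outermost first, at each step carving out an unused sub-region of $F$ whose boundary contains $\delta(Q)$ and handing it to the next cluster inside it. A secondary technical point for Part 1 is verifying that the flip for a type-2 cluster does not move grids out of $C^+$ or disturb the bounding box, which follows from the canonicality of $C$ and the rigidity of $K$.
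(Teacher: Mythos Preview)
Your proposal is essentially correct and lands on the same two-part structure as the paper, but in Part~1 you work harder than necessary and introduce a step that is both unneeded and slightly shaky.

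\textbf{Part 1.} The flip for type-2 clusters with $F(C)\neq F^*_C$ is superfluous. Recall that a strong solution to $\pi(Q^+\cup X_Q,X_Q,\zset)$ only requires $Q^+$ to be embedded inside the \emph{curve} $X_Q$, i.e.\ entirely in one of the two faces that $X_Q$ bounds. The paper simply takes the drawing $\tilde\phi_{Q^+}$ induced by $\tilde\phi$; since $C$ lies in a single face of $\phi'_K$ and $X_Q$ is a face boundary of $\phi'_K$, all of $Q$ already sits on one side of $X_Q$, whichever side that is. No flip is needed. Your reflection argument would work (a homeomorphism of the induced drawing preserves the crossing set and hence canonicality), but your justification that ``every $Z\in\zset'$ sits either entirely inside $C^+$ or outside it'' is not warranted here: the theorem is explicitly stated not to require that the clusters $C\in\cset$ are canonical, so a grid may straddle $C$ and $K$. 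Fortunately you do not actually need that claim---canonicality survives the flip just because the flip is a homeomorphism---so the argument can be repaired, but the paper's route avoids the issue altogether.

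\textbf{Part 2.} Your construction and the paper's are the same idea in different clothing. The paper processes the clusters in $\qset(F)$ in an \emph{arbitrary} order and, after inserting each $\tilde Q_j^+$ inside the current face $F'$ containing $\delta(Q_j)$, adds an artificial edge $e_\sigma$ across each segment $\sigma$ of $\gamma(F')$ determined by $\delta(Q_j)$. The single induction invariant is: for every not-yet-inserted $Q_{j'}$ there is a face of the current drawing whose boundary contains $\delta(Q_{j'})$; pairwise independence is exactly what propagates this invariant through one insertion. Your language of a ``laminar order'' and processing ``outermost first'' suggests a nesting structure that pairwise independence does not literally give (three clusters attached at $\{1,2\}$, $\{2,3\}$, $\{1,3\}$ on a triangle are pairwise independent yet none encloses another), but your actual construction---carving disks $D_Q$ whose boundaries may use arcs of previously placed disks---is precisely the paper's iterative face subdivision, so the argument goes through once you drop the laminarity phrasing.
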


We remark that this theorem does not require that the sets $C\in \cset$ are canonical vertex sets.

\begin{proof}
Fix some $Q\in \qset$, and let $\tilde{\phi}_{Q^+}$ be the drawing of $Q^+\cup X_Q$ induced by $\tilde{\phi}$. Recall that the edges of the skeleton $K$ do not participate in any crossings in $\tilde \phi$, and every pair $Q,Q'\in \qset$ of graphs is completely disjoint. Therefore, $\sum_{Q\in \qset}\cro_{\tilde \phi_{Q^+}}(Q^+)\leq \cro_{\tilde \phi}(\tilde G)$.  Observe that every edge of $\tilde G$ belongs either to $K$, or to $E^*$, or to $Q^+$ for some $Q\in \qset$. Therefore, it is now enough to show that for each $Q\in \qset$, $\tilde \phi_{Q^+}$ is a feasible strong solution to problem $\pi(Q^+\cup X_Q,X_Q,\zset)$. Since $\phi'$ is canonical, so is $\tilde \phi_{Q^+}$. It now only remains to show that $Q^+$ is completely embedded on one side (that is, inside or outside) of the cycle $X_Q$ in $\tilde \phi_{Q^+}$. Let $C\in \cset$, such that $Q\in \qset(C)$. Recall that $C$ is a connected component of $\tilde G\setminus K$. Since $K$ is good, $C$ is embedded completely inside one face in $\fset$. In particular, since $X_Q$ is the boundary of one of the faces in $\fset$, all vertices and edges of $C$ (and therefore of $Q$) are completely embedded on one side of $X_Q$. Therefore, $X_Q$ can be viewed as the bounding box in the embedding $\tilde \phi_{Q^+}$.

We now prove the second part of the theorem. For each $Q\in \qset$, let $E^{**}_Q$ be any feasible weak solution to the problem $\pi(Q^+\cup X_Q,X_Q,\zset)$, and let $E^{**}=\bigcup_{Q\in \qset}E^{**}_Q$. We first show that $E'\cup E''\cup E^*\cup E^{**}$ is a feasible weak solution to the problem $\pi(\tilde G,X,\zset)$. 

Let $F\in \fset$ be any face of $\phi'_K$. For each $Q\in \qset(F)$, let $\tilde Q=Q\setminus E^{**}_Q$, and let $\tilde Q^+=Q^+\setminus E^{**}_Q$. Since $E^{**}_Q$ is a weak solution for instance $\pi(Q^+\cup X_Q,X_Q,\zset)$, there is a planar drawing $\psi_{Q}$ of $\tilde{Q}^+\cup X_Q$, inside the bounding box $X_Q=\gamma(F)$. It is enough to show that for each face $F\in \fset$, we can find a planar embedding of graphs $\tilde{Q}^+$, for all $Q\in\qset(F)$ inside $\gamma(F)$.

Fix an arbitrary ordering $\qset(F)=\set{Q_1,\ldots,Q_r}$. We now gradually construct a planar drawing of the graphs $\tilde{Q}_j^+$ inside $\gamma(F)$. For convenience, we will also be adding new artificial edges to this drawing. We perform $r$ iterations, and the goal in iteration $j: 1\leq j\leq r$ is to add the graph $\tilde{Q}_j^+$ to the drawing. We will maintain the following invariant: at the beginning of every iteration $j$, for each $j'\geq j$, there is a face $F'$ in the current drawing, such that $\delta(Q_j)\sse \gamma(F')$.

In the first iteration, we simply use the drawing $\psi_{Q_1}$ of $\tilde{Q}_1^+\cup \gamma(F)$. The vertices of $\delta(Q_1)$ define a partition $\Sigma_1$ of $\gamma(F)$ into segments, such that every segment contains two vertices of $\delta(Q_1)$ as its endpoints, and no other vertices of $\delta(Q_1)$. For each such segment $\sigma$, we add a new artificial edge $e_{\sigma}$ connecting its endpoints to the drawing. All such edges can be added without creating any crossings.
 Since every pair of clusters in $\qset(F)$ is independent, for each graph $Q_j$, $j> 1$, the vertices of $\delta(Q_j)$ are completely contained in one of the resulting segments $\sigma\in \Sigma_1$. The face $F'$ of the current drawing, whose boundary consists of $\sigma$ and $e_{\sigma}$ then has the property that $\delta(Q_j)\sse \gamma(F')$.
 
 Consider now some iteration $j+1$, and let $F'$ be the face of the current drawing, such that $\delta(Q_{j+1})\sse \gamma(F')$. We add the drawing $\psi_{Q_{j+1}}$ of $\tilde Q_{j+1}^+\cup \gamma(F)$, with $\gamma(F')$ replacing $\gamma(F)$ as the bounding box. We can do so since $\delta(Q_j)\sse \gamma(F')$. We can therefore add this drawing, so that no crossings with edges that already belong to the drawing are introduced. The bounding box $\gamma(F')$ is then sub-divided into the set $\Sigma'$ of sub-segments, by the vertices of $\delta(Q_j)$. Again, for each such segment $\sigma'$, we add an artificial edge $e_{\sigma'}$, connecting its endpoints, to the drawing, inside the face $F'$, such that no crossings are introduced. Since there are no conflicts between clusters in $\qset(F)$, for each $Q_{j'}$, with $j'>j+1$, such that $\delta(Q_{j'})\sse \gamma(F')$, there is a segment $\sigma'\in \Sigma'$, containing all vertices of $\delta(Q_{j'})$. The corresponding new face $F''$, formed by $\sigma'$ and the edge $e_{\sigma'}$ will then have the property that $\delta(Q_{j'})\sse \gamma(F'')$.
 
 We have thus shown that $\tilde{G}\setminus (E^*\cup E^{**})$ has a planar drawing. The same drawing induces a planar drawing for $G\setminus (E'\cup E''\cup E^*\cup E^{**})$.
\end{proof}

In the rest of this section, we will show an efficient algorithm to find the assignment of the faces of $\fset$ to the clusters $C\in \cset$, and the partition $\qset(C)$ of each such cluster, satisfying the requirements of Theorem~\ref{thm: no conflict case}. Our goal is also to ensure that $|E^*|$ is small, as these edges are eventually removed from the graph. If two clusters $C,C'\in \cset$, with $\delta(C),\delta(C')\sse \gamma(F)$ for some $F\in \fset$ are not independent, then we say that they have a conflict. The process of partitioning both clusters into sub-clusters to ensure that the sub-clusters are independent is called conflict resolution. The next theorem shows how to perform conflict resolution for a pair of clusters. The proof of this theorem is due to Yury Makarychev~\cite{Yura}. We provide it here for completeness.

\begin{theorem}\label{thm: conflict resolution for 2 clusters}
Let $C,C'\in \cset$, such that both $C$ and $C'$ are embedded inside the same face $F\in \fset$ in $\tilde{\phi}$. Then
we can efficiently find a subset $E_{C,C'}\sse E(C)$ of edges, $|E_{C,C'}|\leq 30 \cro_{\tilde\phi}(E(C),E(C'))$, such that if $\cset'$ denotes the collection of all connected components of $C\setminus E_{C,C'}$, then for every cluster $Q\in \cset'$, $Q$ and $C'$ are independent. Moreover, $E_{C,C'}$ does not contain any edges of the grids $Z\in \zset$.
\end{theorem}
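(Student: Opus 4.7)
The plan is to build a tree $T$ drawn inside $F$ whose leaves are exactly $\delta(C')$, so that $T \cup \gamma(F)$ partitions $F$ into one sub-region per segment of $\Sigma'$ (the partition of $\gamma(F)$ induced by $\delta(C')$). Then $E_{C,C'}$ will be the set of edges of $C$ that cross $T$ in $\tilde\phi$; after their removal, every connected component of $C$ must lie entirely inside a single sub-region, which will force its $\delta$-boundary into a single segment of $\Sigma'$.

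To produce $T$, I would first planarize the drawing of $(C')^+$ inside $F$ by placing a dummy vertex at every pairwise self-crossing of $(C')^+$ in $\tilde\phi$, obtaining a planar-drawn graph $\overline{(C')^+}$. This subdivision preserves the set of crossings with $C$ edge-by-edge. Since $C'$ is connected and is joined to every vertex of $\delta(C')$ by an edge of $\out_K(C')$, $\overline{(C')^+}$ is connected; I take any spanning tree of it and prune degree-one vertices not in $\delta(C')$ to obtain a subtree $T$ whose leaves are exactly $\delta(C')$. Because $T$ inherits a planar drawing inside $F$, with all of its leaves on $\gamma(F)$, the cyclic order of those leaves around an Euler-tour of $T$ must agree with their cyclic order along $\gamma(F)$. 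A standard incremental argument (adding the leaves to $\gamma(F)$ one by one while building up $T$) then shows that $T \cup \gamma(F)$ partitions $F$ into exactly $|\delta(C')|$ internal regions, each of whose boundary on $\gamma(F)$ is a single segment of $\Sigma'$.

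Now let $E_{C,C'}$ be the set of edges of $C$ that cross at least one edge of $T$ in $\tilde\phi$. Each edge of $T$ is a fragment of some edge of $(C')^+$, and the planarization introduces no new crossings with $C$, so
$|E_{C,C'}| \;\leq\; \cro_{\tilde\phi}(E(C),E(T)) \;\leq\; \cro_{\tilde\phi}(E(C),E((C')^+))$.
Since $\out_K(C')$ consists of artificial edges, which by assumption do not participate in any crossings in $\tilde\phi$, this equals $\cro_{\tilde\phi}(E(C),E(C'))$, which is much less than $30\cdot\cro_{\tilde\phi}(E(C),E(C'))$. Moreover, no edge of any grid $Z\in\zset$ can lie in $E_{C,C'}$, because every grid edge is good in $\phi'$ (and therefore in $\tilde\phi$) and participates in no crossings at all.

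Finally, fix a connected component $Q$ of $C\setminus E_{C,C'}$. None of its edges crosses $T$, and none crosses $\gamma(F)\subseteq K$ either, because all edges of the good skeleton $K$ are crossing-free in $\tilde\phi$; together with the fact that $V(Q)\subseteq V(C)$ is disjoint from $V(T)\cup V(K)$, this forces the drawing of $Q$ to lie in a single face $R$ of $T\cup\gamma(F)$. Each vertex of $\delta(Q)$ is attached to $Q$ by an artificial edge, which likewise avoids $T$ and $\gamma(F)$, so it must lie on the unique segment $\sigma_i\subseteq\partial R\cap\gamma(F)$; hence $\delta(Q)\subseteq\sigma_i$ and $Q$ is independent of $C'$. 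The main obstacle I anticipate is making the face-count/cyclic-order argument of the second paragraph fully rigorous, especially when $T$ has internal vertices of $\overline{(C')^+}$ that are attached to $\gamma(F)$ only through $\delta(C')$; the degenerate cases $|\delta(C')|\le 1$ are trivial, since then $\Sigma'$ consists of a single segment and one may take $E_{C,C'}=\emptyset$.
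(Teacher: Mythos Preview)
Your construction is an existence argument, not an efficient algorithm. Both the tree $T$ and the set $E_{C,C'}=\{e\in E(C):\text{$e$ crosses $T$ in }\tilde\phi\}$ are defined in terms of the drawing $\tilde\phi$, which is induced by the \emph{unknown} optimal drawing $\phi'$ (guaranteed by Invariant~(\ref{invariant 3: there is a cheap solution})). The algorithm does not have access to $\tilde\phi$; indeed, the whole point of the theorem is to resolve the conflict between $C$ and $C'$ \emph{without} knowing how they are actually drawn. So while your argument correctly establishes that a valid set of size at most $\cro_{\tilde\phi}(E(C),E(C'))$ exists, it does not give an efficient procedure.

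The paper's proof makes exactly this distinction. It first gives a short existence argument (even simpler than yours: just take all edges of $C$ crossed by edges of $C'$ in $\tilde\phi$, and argue by contradiction that each resulting component is independent of $C'$). Then it observes that a valid set corresponds to an assignment of the vertices of $C$ to the segments $\sigma'_1,\dots,\sigma'_k$ of $\Sigma'$, with boundary constraints on vertices of $\Gamma(C)$ depending on which segment their partner $x_z$ lies in. This assignment problem is relaxed to an LP in $\ell_1$ (one vector $y_u\in\mathbb{R}^k$ per vertex), solved exactly, and rounded by a randomized threshold scheme, losing a factor of $30$. That factor $30$ is the price of not knowing $\tilde\phi$; your argument implicitly assumes it away.
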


\begin{proof}

We say that a set $\tilde{E}$ of edges is valid iff it satisfies the condition of the theorem. For simplicity, we will assign weights $w_e$ to edges as follows: edges that belong to grids $Z\in \zset$ have infinite weight, and all other edges have weight $1$.
We first claim that there is a valid set of weight at most $\cro_{\tilde{\phi}}(C, C')$.
Indeed, let $\tilde{E}$ be the set of edges of $C$, that are crossed by the edges of $C'$ in $\tilde \phi$. Clearly, 
$|\tilde E| \leq \cro_{\tilde{\phi}}(C,C')$, and this set does not contain any edges in grids $Z\in \zset$, or edges adjacent to the vertices of $K$ (this was our assumption when we defined $\tilde\phi$). Let $\cset'$ be the set of all connected components of $C\setminus \tilde E$, and  consider some cluster $Q\in \cset'$. Assume for contradiction, that $Q$ and $C'$ are not independent. Then there are four vertices $a,b,c,d\in \gamma(F)$, whose ordering along $\gamma(F)$ is $(a,b,c,d)$, and $a,c\in \delta(Q)$, while $(b,d)\in \delta(C')$. But then there must be a path $P\sse Q\cup\out_K(Q)$ connecting $a$ to $c$, and a path $P'\sse C'\cup \out_K(C')$, connecting $b$ to $d$, as both $Q$ and $C'$ are connected graphs. Moreover, since $Q$ and $C'$ are completely disjoint, the two paths must cross in $\tilde{\phi}$. Recall that we have assumed that the artificial edges adjacent to $K$ do not participate in any crossings in $\tilde{\phi}$. Therefore, the crossing is between an edge of $Q$ and an edge of $C'$. This is impossible, since we have removed all edges that participate in such crossings from $C$.

We now show how to  \textit{efficiently} find a valid set $\tilde E$ of edges, of weight at most
$30 \cro_{\tilde{\phi}}(E(C),E(C'))$. 
Let $\Sigma' = \{\sigma'_1,\sigma'_2,\dots, \sigma'_k\}$ be the set of segments of $\gamma(F)$, 
defined by $\delta(C')$, in the circular order. Throughout the rest of the proof we identify $k+1$ and $1$. 

Consider the set $\Gamma(C)$ of vertices. We partition this set into a number of subsets, as follows. 
For $1\leq i\leq k$, let $\Gamma_i\sse \Gamma(C)$ denote the subset of vertices $z\in \Gamma(C)$, for which $x_z$ lies strictly 
inside the segment $\sigma_i'$. Let $\Gamma_{i,i+1}\sse \Gamma(C)$ denote the subset of vertices $z\in \Gamma(C)$, for which $x_z$ is the vertex separating segments $\sigma'_i$ and $\sigma'_{i+1}$.

We now restate the problem of finding a valid cut $E_{C,C'}$ as an assignment problem.
We need to assign each vertex of $C$ to one of the segments $\sigma'_1, \dots, \sigma'_k$ so that
\begin{itemize}
\item every vertex in $\Gamma_i$ is assigned to the segment $\sigma'_i$;
\item every vertex in $\Gamma_{i,i+1}$ is assigned to either  $\sigma'_i$ or $\sigma'_{i+1}$. 
\end{itemize}

We say that an edge of $C$ is cut by such an assignment, iff its endpoints are assigned to different segments.
Given any such assignment, whose weight is finite, let $\tilde E$ be the set of cut edges. We prove that set $\tilde E$ is valid. Since the weight of $\tilde E$ is finite, it cannot contain edges of grids $Z\in \zset$.
Let $\cset'$ be the collection of all connected components of $C\setminus \tilde E$. It is easy to see that for each $Q\in \cset'$, $Q$ and $C'$ is independent. This is since for all edges in $\out_K(Q)$, their endpoints that belong to $K$ must all be contained inside a single segment $\sigma'$ of $\Sigma'$.

On the other hand, every finite-weight valid set $\tilde E$ of edges corresponds to a valid assignment. Let $\cset'$ be the set of all connected components of $C\setminus \tilde E$, and let $Q\in \cset'$. Since there are no conflicts between $Q$ and $C'$, all vertices of $\delta(Q)$ that serve as endpoints of the set $\out_K(Q)$ of edges, must be contained inside a single segment $\sigma'\in \Sigma'$. If the subset of $\delta(Q)$ contains a single vertex, there can be two such segments of $\Sigma'$, and we choose any one of them arbitrarily; if this subset of $\delta(Q)$ is empty, then we choose an arbitrary segment of $\Sigma'$. We then assign all vertices of $Q$ to $\sigma'$. Since $\tilde E$ does not contain any edges  that are adjacent to the vertices of $K$ (as such edges are not part of $E(C)$), we are guaranteed that every vertex in $\Gamma_i$ is assigned to the segment $\sigma'_i$, and every vertex in $\Gamma_{i,i+1}$ is assigned to either $\sigma'_i$ or $\sigma'_{i+1}$, for all $1\leq i\leq k$.

We now show how to approximately solve the assignment problem, and therefore the original problem,
using linear programming. We will ensure that the weight of the solution $E_{C,C'}$ is at most $30$ times the optimum, and so $|E_{C,C'}|\leq 30 \cro_{\tilde \phi}(E(C),E(C'))$.   

For each vertex $u$ of $C$
and segment $\sigma'_i$ we introduce an indicator variable $y_{u,i}$, for assigning $u$ to segment $\sigma'_i$. All variables for vertex $u$ form a vector 
$y_u = (y_{u,1}, \dots, y_{u,k}) \in {\mathbb R}^k$. 
We denote the standard basis of ${\mathbb R}^k$ by $e_1,\dots, e_k$.
In the intended integral solution, $y_u = e_i$ if $u$ is assigned to $\sigma'_i$; that is,
$y_{u,i} = 1$ and $y_{u,j} = 0$ for $j\neq i$. Equip the space ${\mathbb R}^k$ with the $\ell_1$ norm
$\|y_u\|_1 = \sum_{i=1}^k |y_{u,i}|$. We solve the following linear program.
\begin{align*}
\text{minimize } &&\frac{1}{2} \sum_{e=(u,v)\in E(C)} w_e\cdot  \|y_u - y_v\|_1\\
\text{subsject to }&& \\
&& \|y_u\|_1 = 1 &&& \forall u \in V(C);\\
&& y_{u,i} = 1 &&& \forall 1\leq i\leq k, \forall u \in \Gamma_i;\\
&& y_{u,i} + y_{u,i+1} = 1 &&& \forall 1\leq i\leq k,\forall u \in \Gamma_{i,i+1};\\
&& y_{u,i} \geq 0 &&& \forall u \in V(C), \forall 1\leq i\leq k.
\end{align*}

Let $\OPT_{LP}$ be the value of the optimal solution of the LP. For all $1\leq i\leq k$, $r\in (1/2,3/5)$, define balls $B_i^r = \{u: y_{u,i} \geq r\}$ and 
$B_{i,i+1}^r = \{u: u\not\in B_i^r\cup B_{i+1}^r; y_{u,i}+y_{u,i+1} \geq 5r/3\}$.
Note that since, for each $u\in V(C)$, at most one coordinate $y_{u,i}$ can be greater than $\half$, whenever $r\geq \half$, the balls $B_i^r$ and $B_j^r$ are
disjoint for all $i\neq j$. Similarly, balls $B_{i,i+1}^r$ and $B_{j,j+1}^{r}$ are disjoint for $i\neq j$ when $r \geq 1/2$: this is since, if $u\in B_{i,i+1}^r$, then $y_{u,i}+y_{u,i+1}\geq 5/6$ must hold, while $y_{u,i},y_{u,i+1}<\half$. Therefore, $y_{u,i},y_{u,i+1}>1/3$ must hold, and there could be at most two coordinates $1\leq j\leq k$, for which $y_{u,j}>1/3$.

For each value of $r: 1/2\leq r/\leq 3/5$, we let $E^r$ denote all edges that have exactly one endpoint in the balls $B_i^r$, and $B_{i,i+1}^r$, for all $1\leq i\leq k$. We choose $r\in (1/2,3/5)$ that minimizes $|E^r|$, and we let $E_{C,C'}$ denote the set $E^r$ for this value of $r$. 
 We assign all
vertices in balls $B_i^r$ and $B_{i,i+1}^r$ to the segment $\sigma'_i$. We assign all unassigned vertices
to an arbitrary segment. We need to verify that this assignment is valid; that is, vertices 
from $\Gamma_i$ are assigned to $\sigma'_i$ and vertices from $\Gamma_{i,i+1}$
are assigned to either $\sigma'_i$ or $\sigma'_{i+1}$, for all $1\leq i\leq k$. Indeed, if $u\in\Gamma_i$, then $y_{u,i} = 1$, and so $u\in B^r_i$; similarly, if $u\in\Gamma_{i,i+1}$ then $y_{u,i} + y_{u,i+1} = 1$, and so $u\in B^r_i\cup B^r_{i+1}\cup B^r_{i,i+1}$.

Finally, we need to show that the cost of the assignment is at most $30 \OPT_{LP}$. In fact, 
we show that if we choose $r\in (1/2,3/5)$ uniformly at random, then the expected cost is at most $30\opt_{LP}$.
Consider an edge $e=(u,v)$. We compute the probability that $e\in B^r_i$, for each $1\leq i\leq k$. This is the probability that $y_{u,i}\geq r$, but $y_{v,i}<r$ (or vice versa if $y_{v,i}<y_{u,i}$). This probability is bounded by $10|y_{u,i}-y_{v,i}|$. Similarly, the probability that $u\in B^r(i,i+1)$ but $v\not \in B^r(i,i+1)$ is bounded by the probability that $y_{u,i}+y_{u,i+1}\geq 5r/3$, but $y_{v,i}+y_{v,i+1}< 5r/3$, or vice versa. This probability is at most $6\cdot \frac 5 3 ((y_{u,i}+y_{u,i+1})-(y_{v,i}+y_{v,i+1}))\leq 10(|y_{u,i}-y_{v,i}|+|y_{u,i+1}-y_{v,i+1}|)$. Therefore, overall, the probability that $e=(u,v)$ belongs to the cut is at most:

\[\sum_{i=1}^k10 |y_{u,i}-y_{v,i}|+\sum_{i=1}^k 10(|y_{u,i}-y_{v,i}|+|y_{u,i+1}-y_{v,i+1}|)\leq 10 \norm{y_u-y_v}_1+20\norm{y_u-y_v}_1=30\norm{y_u-y_v}_1\]

\end{proof}

We now show how to find the assignment $F(C)$ of faces of $\fset$ to all clusters $C\in \cset$, together with the partition $\qset(C)$ of the vertices of $C$. We will reduce this problem to an instance of the min-uncut problem. Recall that the input to the min-uncut problem is a collection $X$ of Boolean variables, together with a collection $\Psi$ of constraints. Each constraint $\psi\in \Psi$ has non-negative weight $w_{\psi}$, and involves exactly two variables of $X$. All constraints $\psi\in \Psi$ are required to be of the form $x\neq y$, for $x,y\in X$.
The goal is to find an assignment to all variables of $X$, to minimize the total weight of unsatisfied constraints. Agarwal et. al.~\cite{ACMM} have shown an $O(\sqrt{\log n})$-approximation algorithm for Min Uncut.

Fix any pair $C,C'\in \cset$ of clusters, and a face $F\in \fset$, such that $\delta(C),\delta(C')\sse \gamma(F)$.  Let $E'_{C,C'}$ denote the union of the sets $E_{C,C'}$ and $E_{C',C}$ of edges from Theorem~\ref{thm: conflict resolution for 2 clusters}, and let $w_{C,C'}=|E'(C,C')|$, $w_{C,C'}\leq 60\cro_{\tilde{\phi}}(E(C),E(C'))$. 

For each face $F\in \fset$, we denote by $\cset(F)\sse \cset$ the set of all clusters $C\in \cset$, of the first type, for which $\delta(C)\sse \delta(F)$. Recall that for each such cluster, $F^*_C=F$ must hold.
Let $E'(F)=\bigcup_{C,C'\in \cset(F)}E'_{C,C'}$, and let $w_F=|E'(F)|$.

Let $\pset$ be the set of all maximal $2$-paths in $K$. For every path $P\in \pset$, we denote by $\cset(P)\sse \cset$ the set of all type-2 clusters $C$, for which $\delta(C)\sse P$. Let $F_1(P),F_2(P)$ be the two faces of $K$, whose boundaries contain $P$. Recall that for each $C\in \cset(P)$, $F^*_C\in \set{F_1(P),F_2(P)}$.

For every $C\in \cset(P)$, $F\in \set{F_1(C),F_2(C)}$, let $w_{C,F}=\sum_{C'\in \cset(F)}w_{C,C'}$. If we decide to assign $C$ to face $F_1(P)$, then we will pay $w_{C,F_1(P)}$ for this assignment, and similarly, if $C$ is assigned to face $F_2(P)$, we will pay  $w_{C,F_2(P)}$.

We now set up an instance of the min-uncut problem, as follows. The set of variables $X$ contains, for each path $P\in \pset$, for each $F\in \set{F_1(P),F_2(P)}$, a Boolean variable $y_{P,F}$, and for each path $P\in \pset$ and cluster $C\in \cset(P)$ a Boolean variable $y_C$. Intuitively, if $y_C=y_{P,F_1(P)}$, then $C$ is assigned to $F_1(P)$, and if $y_C=y_{P,F_2(P)}$, then $C$ is assigned to $F_2(P)$. The set $\Psi$ of constraints contains constraints of three types: first, for each path $P\in \pset$, we have the constraint $y_{P,F_1(P)}\neq y_{P,F_2(P)}$ of infinite weight. For each $P\in \pset$, for each pair $C,C'\in \cset(P)$ of clusters, there is a constraint $y_C\neq y_{C'}$, of weight $w_{C,C'}$. Finally, for each $P\in \pset$, $F\in \set{F_1(P),F_2(P)}$, and for each $C\in \cset(P)$, we have a constraint $y_C\neq y_{P,F}$ of weight $w_{C,F}$.

\begin{claim}
There is a solution to the min-uncut problem, whose cost, together with $\sum_{F\in \fset}w_F$, is bounded by $60\cro_{\tilde{\phi}}(G)$.
\end{claim}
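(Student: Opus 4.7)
The plan is to exhibit an explicit feasible assignment for the min-uncut instance, derived from the optimal drawing $\tilde\phi$, and then account for its cost constraint-by-constraint. Specifically, I will set, for every maximal $2$-path $P\in\pset$, $y_{P,F_1(P)}=0$ and $y_{P,F_2(P)}=1$, which immediately satisfies every infinite-weight constraint. For every type-$2$ cluster $C\in\cset(P)$, I will set $y_C = y_{P,F^*_C}$, where $F^*_C\in\{F_1(P),F_2(P)\}$ is the face of $\phi'_K$ inside which $C$ is embedded under $\tilde\phi$ (this face is well-defined because $K$ has only good edges in $\phi'$, so every component of $\tilde G\setminus K$ lies in a single face of $\phi'_K$).

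Next I will go through the remaining constraint types and collect the cost paid by this assignment. A ``pair'' constraint $y_C\neq y_{C'}$ with $C,C'\in\cset(P)$ is violated exactly when $F^*_C=F^*_{C'}$, contributing $w_{C,C'}$. A ``cluster-path'' constraint $y_C\neq y_{P,F}$ with $C\in\cset(P)$ is violated exactly when $F=F^*_C$, contributing $w_{C,F^*_C}=\sum_{C'\in\cset(F^*_C)}w_{C,C'}$. Summing, the total min-uncut cost is at most the sum of $w_{C,C'}$ over all unordered pairs $\{C,C'\}$ with $F^*_C=F^*_{C'}$ such that either (i) both are type-$2$ clusters of the same maximal $2$-path, or (ii) one is type-$2$ and the other is type-$1$. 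Adding $\sum_{F\in\fset}w_F$ absorbs the pairs in which both clusters are type-$1$ and share the same face.

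The key geometric observation I will then prove is that if $C\in\cset(P)$ and $C'\in\cset(P')$ are two type-$2$ clusters with $P\neq P'$, then (regardless of their face assignment) they are automatically independent: $\delta(C)\sse P$ and $\delta(C')\sse P'$, and because $P,P'$ are distinct maximal $2$-paths they occupy disjoint arcs of the boundary of any common face, so $\delta(C)$ lies entirely inside a single segment of the partition of $\gamma(F)$ induced by $\delta(C')$. Consequently, no $w_{C,C'}$ for such pairs is needed. Combining with the previous paragraph, the min-uncut cost plus $\sum_F w_F$ is bounded by $\sum_{\{C,C'\}:\,F^*_C=F^*_{C'}} w_{C,C'}$.

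Finally, I will finish by applying the bound $w_{C,C'}\leq |E_{C,C'}|+|E_{C',C}|\leq 60\cro_{\tilde\phi}(E(C),E(C'))$ from Theorem~\ref{thm: conflict resolution for 2 clusters}, together with the fact that two distinct components of $\tilde G\setminus K$ can only cross each other inside the (unique) face they share, so $\sum_{\{C,C'\},\,C\neq C'}\cro_{\tilde\phi}(E(C),E(C'))\leq \cro_{\tilde\phi}(\tilde G)=\cro_{\phi'}(G)$ (equality because the artificial edges were assumed not to participate in crossings). The main obstacle is the middle step: one must verify carefully that the min-uncut formulation truly captures all remaining same-face conflicts after the type-$1$ pairs are shunted into $\sum_F w_F$ and the cross-path type-$2$ pairs are shown to be automatically independent; everything else is a direct cost accounting.
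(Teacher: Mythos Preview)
Your proposal is correct and follows the same approach as the paper: exhibit the assignment induced by $\tilde\phi$ (set $y_{P,F_1(P)}=0$, $y_{P,F_2(P)}=1$, and $y_C=y_{P,F^*_C}$), then bound the total violated weight plus $\sum_F w_F$ by $\sum_{\{C,C'\}:F^*_C=F^*_{C'}} w_{C,C'}$ and invoke Theorem~\ref{thm: conflict resolution for 2 clusters}. Your accounting is more explicit than the paper's two-line proof; the only remark is that the independence argument for cross-path type-$2$ clusters, while true, is not needed for the cost bound --- the inequality holds simply because the min-uncut instance contains no constraint for such pairs, so you are upper-bounding a sub-sum by the full sum.
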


\begin{proof}
We simply consider the optimal solution $\tilde{\phi}$. 
For each path $P\in \pset$, we assign $y_{P,F_1(P)}=0$ and $y_{P,F_2(P)}=1$. For each cluster $C\in \cset(P)$, if $F^*(C)=F_1(P)$, then we set $y_C=y_{P,F_1(P)}$, and otherwise we set $y_C=y_{P,F_2(P)}$. From Theorem~\ref{thm: conflict resolution for 2 clusters}, for every pair $C,C'$ of clusters with $F^*_C=F^*_{C'}$, $w_{C,C'}\leq 60\cro_{\tilde{\phi}}(C,C')$.
\end{proof}

We can therefore find an $O(\sqrt{\log n})$-approximate solution to the resulting instance of the min-uncut problem, using the algorithm of~\cite{ACMM}. This solution naturally defines an assignment of faces to clusters. Namely, if $C$ is a type-1 cluster, then we let $F(C)=F$, where $F$ is the unique face with $\delta(C)\sse \gamma(F)$. If $C$ is a type-2 cluster, and $C\in \cset(P)$, for some path $P\in \pset$, then we assign $C$ to $F_1(P)$ if $y_C=y_{P,F_1(P)}$, and we assign it to $F_2(C)$ otherwise. If $C$ is a type-3 cluster, then we assign it to any face that contains the unique vertex in $\delta(C)$.
 
For each face $F$, let $\cset'(F)$ denote all clusters $C$ that are assigned to $C$. Let $\tilde E(F)$ denote the union of the sets $E'_{C,C'}$ of edges for all $C,C'\in \cset'(F)$, and let $\tilde E=\bigcup_{F\in\fset}\tilde E(F)$.

For each cluster $C\in \cset$, we now obtain a partition $\qset'(C)$ of its vertices that corresponds to the connected components of graph $C\setminus \tilde E$. For each $Q\in \qset'(C)$, we let $Q$ denote both the set of vertices in the connected component of $C\setminus \tilde{E}$, and the sub-graph of $\tilde{G}$ induced by $Q$.
From Theorem~\ref{thm: conflict resolution for 2 clusters}, we are guaranteed that for every face $F\in \fset$, for all $C,C'\in \cset'_F$, if $Q\in \qset'(C)$ and $Q'\in \qset'(C')$, then $Q$ and $Q'$ are independent.

It is however possible that for some $C\in \cset$, there is a pair $Q,Q'\in \qset'(C)$ of clusters, such that there is a conflict between $Q$ and $Q'$. In order to avoid this, we perform the following grouping procedure: For each $F\in\fset$, for each $C\in \cset'_F$, while there is a pair $Q,Q'\in \qset(C)$ of clusters that are not independent, remove $Q,Q'$ from $\qset'(C)$, and replace them with $Q\cup Q'$.
For each $C\in \cset$, let $\qset(C)$ be the resulting partition of the vertices of $C$. Clearly, each pair $Q,Q'\in \qset(C)$ is independent.

\begin{claim}
For each $F\in \fset$, for each pair $C,C'\in \cset'(F)$ of clusters, and for each $Q\in \qset(C), Q'\in \qset(C')$, clusters $Q$ and $Q'$ are independent.\end{claim}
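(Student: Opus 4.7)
The plan is to maintain a single invariant throughout the entire grouping procedure and induct on the number of merges performed. Fix a face $F\in\fset$ and two distinct clusters $C,C'\in\cset'(F)$. I would prove by induction on the merging steps (taken across all clusters of $\cset'(F)$, in whatever order the procedure processes them) the invariant: \emph{at every stage, for any two distinct $C_1,C_2\in\cset'(F)$ and any two clusters $\tilde Q_1,\tilde Q_2$ in the current partitions of $V(C_1)$ and $V(C_2)$ respectively, $\tilde Q_1$ and $\tilde Q_2$ are independent}. The base case is Theorem~\ref{thm: conflict resolution for 2 clusters} applied to each ordered pair $(C_1,C_2)$ and $(C_2,C_1)$, which guarantees that every $Q\in\qset'(C_1)$ is independent from every $Q'\in\qset'(C_2)$.

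The key geometric ingredient I would establish first is a purely topological fact: if $A$ and $B$ are disjoint subgraphs with $\delta(A),\delta(B)\subseteq \gamma(F)$, and $D$ is a third subgraph with $\delta(D)\subseteq \gamma(F)$ such that $\delta(A)$ lies entirely in one segment $\sigma_A$ and $\delta(B)$ lies entirely in a \emph{distinct} segment $\sigma_B$ of the partition of $\gamma(F)$ defined by $\delta(D)$, then $A$ and $B$ are already independent. Indeed, distinct segments of $\delta(D)$ are arcs of the cycle $\gamma(F)$ whose interiors are disjoint and which share at most one endpoint (a vertex of $\delta(D)$), so a short case analysis shows that the cyclic order of $\delta(A)\cup\delta(B)$ around $\gamma(F)$ cannot interleave. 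Contrapositively: if $A$ and $B$ conflict, then necessarily $\sigma_A=\sigma_B$.

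For the inductive step of the main induction, suppose the procedure merges two current clusters $\tilde Q_a,\tilde Q_b$ of some $V(C)$ because they conflict. Fix any $C'\neq C$ in $\cset'(F)$ and any current cluster $\tilde Q'$ of $V(C')$. By the inductive hypothesis both $\tilde Q_a$ and $\tilde Q_b$ are independent from $\tilde Q'$, so $\delta(\tilde Q_a)$ and $\delta(\tilde Q_b)$ each fit inside some segment of the partition of $\gamma(F)$ defined by $\delta(\tilde Q')$. Applying the topological fact with $D=\tilde Q'$, these two segments must coincide (otherwise $\tilde Q_a$ and $\tilde Q_b$ would already be independent, contradicting the fact that the procedure merged them). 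Hence $\delta(\tilde Q_a\cup\tilde Q_b)$ lies in a single segment of $\delta(\tilde Q')$, so $\tilde Q_a\cup\tilde Q_b$ is independent from $\tilde Q'$, and the invariant is preserved. When the procedure terminates, the current partitions of $V(C)$ and $V(C')$ are exactly $\qset(C)$ and $\qset(C')$, and the invariant is precisely the statement of the claim.

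The main potential obstacle is the careful case analysis underlying the topological fact, ruling out cyclic interleaving of $\delta(A)$ and $\delta(B)$ even when $\sigma_A$ and $\sigma_B$ share an endpoint in $\delta(D)$, or when vertices of $\delta(A),\delta(B)$ coincide with endpoints of these segments (which can happen since $K$-vertices may carry multiple artificial edges into different clusters). Once this local geometric fact is pinned down, the rest of the argument is a clean induction on merges, and crucially does not depend on the order in which the procedure processes the clusters of $\cset'(F)$.
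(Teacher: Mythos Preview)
Your proposal is correct and follows essentially the same approach as the paper's proof: both argue by induction on the merging steps, and both reduce the inductive step to the contrapositive of your ``topological fact'' --- namely, that if $\delta(Q_1)$ and $\delta(Q_2)$ lie in \emph{distinct} segments of the partition of $\gamma(F)$ induced by $\delta(Q')$, then $Q_1$ and $Q_2$ are already independent. The paper simply asserts this step in one line (``since $Q_1$ and $Q_2$ are not independent, $\sigma=\sigma'$ must hold''), whereas you correctly flag it as the place requiring a small cyclic-order case analysis; otherwise the arguments are the same.
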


\begin{proof}
Consider the partitions $\qset'(C)$, $\qset'(C')$, as they change throughout the grouping procedure. Before we have started the grouping procedure, every pair $Q\in \qset'(C)$, $Q'\in \qset'(C')$ was independent. Consider the first step in the grouping procedure, such that this property held for $\qset'(C),\qset'(C')$ before this step, but does not hold anymore after this step. Assume w.l.o.g. that the grouping step was performed on a pair $Q_1,Q_2\in \qset'(C)$. Since no other clusters in $\qset'(C)$ or $\qset'(C')$ were changed, there must be a cluster $Q'\in \qset'(C')$, such that both pairs $Q_1,Q'$ and $Q_2,Q'$ are independent, but $Q_1\cup Q_2$ and $Q'$ are not independent. We now show that this is impossible.

Let $\Sigma$ be the partitioning of $\gamma(F)$ defined by the vertices of $\delta(Q')$. Since $Q_1$ and $Q'$ are independent, there is a segment $\sigma\in \Sigma$, such that $\delta(Q_1)\sse \sigma$. Similarly, since $Q_2$ and $Q'$ are independent, there is a segment $\sigma'\in \Sigma$, such that $\delta(Q_2)\sse \sigma'$. However, since $Q_1$ and $Q_2$ are not independent, $\sigma=\sigma'$ must hold. But then all vertices of $\delta(Q_1\cup Q_2)$ are contained in the segment $\sigma\in \Sigma$, contradicting the fact that $(Q_1\cup Q_2)$ and $Q'$ are not independent.
\end{proof}

To summarize, we have shown how to find an assignment $F(C)\in \set{F_1(C),F_2(C)}$ for every cluster $C\in \cset$, and a partition $\qset(C)$ of the vertices of every cluster $C$, such that for every face $F\in \fset$, every pair $Q,Q'\in \qset(F)$ of clusters is independent. Moreover, if $E^*$ denotes the subset of edges $E_{\tilde{G}}(Q,Q')$ for all $Q,Q'\in \qset$, then we have ensured that $|E^*|\leq O(\sqrt{\log n})\cro_{\tilde {\phi}}(\tilde{G})=O(\sqrt{\log n})\cro_{\phi'}(G)$, and set $E^*$ does not contain edges of grids $Z\in \zset$, or artificial edges. Therefore, the conditions of Theorem~\ref{thm: no conflict case} hold.

We now define the set $\edges h(G)$, as follows: $\edges h(G)=E'\cup E''\cup E^*$.
Recall that $|E'|\leq O(\opt^2\rho\dmax)$, $|E''|\leq O\left(\frac{\opt^2\cdot\rho\cdot\log^2n\cdot\dmax^2\cdot\betaFCG }{\alpha^*}\right )$, and $|E^*|\leq O(\sqrt{\log n})\cro_{\phi'}(G)$. Therefore, 

\[|\edges h(G)|\leq  O\left(\frac{\opt^2\cdot\rho\cdot\log^2n\cdot\dmax^2\cdot\betaFCG }{\alpha^*}\right )+O(\sqrt{\log n})\cro_{\phi'}(G)\leq m^*.\]

 We also set $\edges h=\bigcup_{G\in\set{G_1^X,\ldots,G_{k_h}^X}}\edges h(G)$, so $|\edges h|\leq m^*\cdot \opt$ as required.

We now define a collection $\gset_{h+1}$ of instances. Recall that for all $1\leq i\leq k_h$, this collection already contains the instance $\pi(G_i',\emptyset,\zset)$. Let $G=G_i^X$, and $Q\in \qset$.
Let $Q'$ denote the subset of vertices of $Q$ without the artificial vertices, and let $H_Q$ be the sub-graph of $\H$ induced by $Q\cup X_Q$. We then add the instance $\pi_G(H_Q,X_Q,\zset)$ to $\gset_{h+1}(G)$. 
This finishes the definition of the set $\gset'_{h+1}$. From Theorem~\ref{thm: no conflict case}, for each $1\leq i\leq k_h$, there is a strong solution to each resulting sub-instance of $G_i^X$, such that the total cost of these solutions is at most $\cro_{\phi_i}(G_i^X)$. Clearly, $\phi_i$ also induces a strong solution to instance $\pi(G_i',\emptyset,\zset)$ of cost $\cro_{\phi_i}(G_i')$. Therefore, there is a strong solution for each instance in $\gset_{h+1}$ of total cost at most $\sum_{i=1}^{k_h}\cro_{\phi_i}(G_i)\leq \opt$, and so the number of instances in $\gset_{h+1}$, for which $\emptyset$ is not a feasible weak solution is bounded by $\opt$. We let $\gset'_{h+1}\sse \gset_{h+1}$ denote the set of all instances for which $\emptyset$ is not a feasible solution. Observe that we can efficiently verify whether $\emptyset$ is a feasible solution for a given instance, so we can compute $\gset'_{h+1}$ efficiently. We now claim that $\gset'_{h+1}$ is a valid input to the next iteration, except that it may not satisfy Invariant~(\ref{invariant 2: canonical}) due to the grid sets $\zset''(G)$ -- we deal with this issue at the end of this section.

 We have already established Invariant~(\ref{invariant 3: there is a cheap solution}) in the above discussion.
 Also, from Theorem~\ref{thm: no conflict case}, if we find a weak feasible solution $\tilde{E}_H$ for each instance $H\in \gset_{h+1}$, then the union of these solutions, together with $\edges h$, gives a weak feasible solution to all instances $\pi(G_i,X_i,\zset_i)$ for $1\leq i\leq k_h$, thus giving Invariant~(\ref{invariant 4: any weak solution is enough}).
In order to establish Invariant~(\ref{invariant 4.5: number of edges removed so far}), observe that the number of edges in $\edges h$, incident on any new instance is bounded by the maximum number of edges in $\edges h$ that belong to any original instance $G_1,\ldots,G_{k_h}$, which is bounded by $m^*$, and the total number of edges in $\edges h$ is bounded by $m^*\cdot \opt$. Invariant~(\ref{invariant 5: bound on size}) follows from the fact that for each $1\leq i\leq k_h$, $|V(G_i')|\leq n_h$, and these graphs have empty bounding boxes. All sub-instances of $G_i^X$ were constructed by further partitioning the clusters in $G_i^X\setminus (K_i\cup E'')$, and each such cluster contains at most $n_{h+1}$ vertices. Invariant~(\ref{invariant 1: disjointness}) is immediate, as is Invariant~(\ref{invariant 2: proper subgraph}) (recall that we have ensured that all edges in $E',E'',E^*$ connect vertices in distinct sub-instances). Finally, if we assume that $\zset''(G)=\emptyset$ for all $G\in \set{G_1,\ldots,G_{k_h}}$, then the resulting sub-instances are canonical, as we have ensured that the edges in sets $E',E'',E^*$ do not belong to the grids $Z\in \zset$, thus giving Invariant~(\ref{invariant 2: canonical}). Therefore, we have shown how to produce a valid input to the next iteration, for the case where $\zset''(G)=\emptyset$ for all $G\in \set{G_1,\ldots,G_{k_h}}$. It now only remains to show how to deal with the grids in sets $\zset''(G)$.

\paragraph{Dealing with grids in sets $\zset''(G)$}
Let $G\in \set{G_1^X,\ldots,G_{k_h}^X}$, and let $Z\in \zset''(G)$. Recall that this means that $Z\cap K$ is a simple path, that we denote by $P_Z$, and in particular, $K$ contains exactly two edges of $\out(Z)$, that we denote by $e_Z$ and $e'_Z$.

The difficulty in dealing with such grids is that, on the one hand, we need to ensure that all new sub-instances are canonical, so we would like to add such grids to the skeleton $K$. On the other hand, since $Z\cap K$ is a simple path, graph $K\cup Z$ is not rigid, and has 2 different planar drawings (obtained by ``flipping'' $Z$ around the axis $P_Z$), so we cannot claim that we can efficiently find the optimal drawing $\phi'_{K\cup Z}$ of $K\cup Z$. Our idea in dealing with this problem is that we use the conflict resolution procedure to establish which face of the skeleton $K$ each such grid $Z\in\zset''(G)$ must be embedded in. Once this face is established, we can simply add $Z$ to $K$. Even though the resulting skeleton is not rigid, its drawing is now fixed.

More specifically, let $Z\in \zset''(G)$ be any such grid, and let $v,v'$ be the two vertices in the first row of $Z$ adjacent to the edges $e_z$ and $e'_z$, respectively. We start by replacing the path $P_Z$ in the skeleton $K$, with the unique path connecting $v$ and $v'$ that only uses the edges of the first row of $Z$. Let $P'_Z$ denote this path. We perform this transformation for each $Z\in \zset''(G)$. The resulting skeleton $K$ is still rigid and good. It is now possible that the size of some connected component of $G\setminus (K\cup E'')$ becomes larger. However, since we eventually add all vertices of all such grids $Z\in \zset''(G)$ to the skeleton, this will not affect the final outcome of the current iteration.

We then run the conflict resolution procedure exactly as before, and obtain the collection $\gset_{h+1}$ of new instances as before. Consider some such instance $\pi(H_Q,X_Q,\zset)$, and assume that $H_Q$ is a sub-graph of $G\in\set{G_1^X,\ldots,G^X_{k_h}}$. Let $Q=H_Q\setminus X_Q$. From the above discussion, $Q$ is canonical w.r.t. $\zset\setminus \zset''(G)$. The only problem is that for some grids $Z\in \zset''(G)$, $Q$ may contain the vertices of $Z\setminus P'_Z$. This can only happen if $P'_Z$ belongs to the bounding box $X_Q$. Recall that we are guaranteed that there is a strong solution to instance $\pi(H_Q,X_Q,\zset)$, and the total cost of all such solutions over all instances in $\gset_{h+1}$ is at most $\opt$. In particular, the edges of $Z$ do not participate in crossings in this solution. Therefore, we can simply consider the grid $Z$ to be part of the skeleton, remove its vertices from $Q$, and update the bounding box of the resulting instance if needed.
In other words, the conflict resolution procedure, by assigning every cluster $C\in \cset$ to a face of $\fset$, has implicitly defined a drawing of the graph $K\cup(\bigcup_{Z\in\zset''(G)}Z)$. Even though this drawing may be different from the drawing induced by $\phi'$, we are still guaranteed that the resulting sub-problems all have strong feasible solutions of total cost bounded by $\opt$. The final instances in $\gset'_{h+1}$ are now guaranteed to satisfy all Invariants~(\ref{invariant 1: disjointness})--(\ref{invariant 5: bound on size}).

\section{Conclusions} \label{sec: conclusions}
We have shown an efficient randomized algorithm to find a drawing of any graph $\G$ in the plane with at most $O\left ((\optcro{\G})^{10}\poly(\dmax\cdot \log n)\right )$ crossings. We did not make an effort to optimize the powers of $\opt,\dmax$ and $\log n$ in this guarantee, or the constant hidden in the $O(\cdot)$ notation, and we believe that they can be improved. We hope that the technical tools developed in this paper will help obtain better algorithms for the \MCN problem. A specific possible direction is obtaining efficient algorithms for $\rho$-balanced $\alpha$-well-linked bi-partitions. In particular, an interesting open question is whether there is an efficient algorithm, that, given an $n$-vertex graph $G$ with maximum degree $\dmax$, finds a $\rho$-balanced $\alpha$-well-linked bi-partition of $G$, for $\rho,\alpha=\poly(\dmax\cdot \log n)$. In fact it is not even clear whether such a bi-partition exists in every graph.
We note that the dependence of $\rho$ on $\dmax$ is necessary, for example, in the star graph. This question appears to be interesting in its own right, and its positive resolution would greatly simplify our algorithm and improve its performance guarantee. We also note that if we only require that one of the two sets in the bi-partition is well-linked, then there is an efficient algorithm for finding such bi-partitions, similarly to the proof of Theorem~\ref{thm: initial partition}.

\paragraph{Acknowledgements.} The author thanks Yury Makarychev and Anastasios Sidiropoulos for many fruitful discussions, and for reading earlier drafts of the paper.
\label{------------------------------------------------Start Appendix------------------------------------------------------------------------}

\bibliography{alg-v2}

\def\cprime{$'$} \def\cprime{$'$}
\begin{thebibliography}{ACMM05}

\bibitem[ACMM05]{ACMM}
Amit Agarwal, Moses Charikar, Konstantin Makarychev, and Yury Makarychev.
\newblock {$O(\sqrt{\log n})$} approximation algorithms for min {UnCut}, min
  {2CNF} deletion, and directed cut problems.
\newblock In {\em STOC '05: Proceedings of the thirty-seventh annual ACM
  symposium on Theory of computing}, pages 573--581, New York, NY, USA, 2005.
  ACM.

\bibitem[ACNS82]{ajtai82}
M.~Ajtai, V.~Chv\'{a}tal, M.~Newborn, and E.~Szemer\'{e}di.
\newblock Crossing-free subgraphs.
\newblock {\em Theory and Practice of Combinatorics}, pages 9--12, 1982.

\bibitem[ALN05]{sparsest-cut}
Sanjeev Arora, James~R. Lee, and Assaf Naor.
\newblock Euclidean distortion and the sparsest cut.
\newblock In {\em STOC '05: Proceedings of the thirty-seventh annual ACM
  symposium on Theory of computing}, pages 553--562, New York, NY, USA, 2005.
  ACM.

\bibitem[AMS07]{Ambuhl07}
C.~Ambuhl, M.~Mastrolilli, and O.~Svensson.
\newblock Inapproximability results for sparsest cut, optimal linear
  arrangement, and precedence constrained scheduling.
\newblock In {\em Proceedings of the 48th Annual IEEE Symposium on Foundations
  of Computer Science}, pages 329--337, 2007.

\bibitem[AR98]{Aumann-Rabani}
Yonatan Aumann and Yuval Rabani.
\newblock An {$O(\log k)$} approximate min-cut max-flow theorem and
  approximation algorithm.
\newblock {\em SIAM J. Comput.}, 27(1):291--301, 1998.

\bibitem[ARV09]{ARV}
Sanjeev Arora, Satish Rao, and Umesh~V. Vazirani.
\newblock Expander flows, geometric embeddings and graph partitioning.
\newblock {\em J. ACM}, 56(2), 2009.

\bibitem[BL84]{bhatt84}
S.~N. Bhatt and F.~T. Leighton.
\newblock A framework for solving {VLSI} graph layout problems.
\newblock {\em J.~Comput.~Syst.~Sci.}, 28(2):300--343, 1984.

\bibitem[BPT06]{BorozkyPT06}
K.~J. B{\"o}r{\"o}zky, J.~Pach, and G.~T{\'o}th.
\newblock Planar crossing numbers of graphs embeddable in another surface.
\newblock {\em Int.~J.~Found.~Comput.~Sci.}, 17(5):1005--1016, 2006.

\bibitem[CHM08]{crossing_apex}
M.~Chimani, P.~Hlin\v{e}n\'{y}, and P.~Mutzel.
\newblock Approximating the crossing number of apex graphs.
\newblock In {\em Graph Drawing}, pages 432--434, 2008.

\bibitem[CKS05]{CSS}
Chandra Chekuri, Sanjeev Khanna, and F.~Bruce Shepherd.
\newblock Multicommodity flow, well-linked terminals, and routing problems.
\newblock In {\em STOC '05: Proceedings of the thirty-seventh annual ACM
  symposium on Theory of computing}, pages 183--192, New York, NY, USA, 2005.
  ACM.

\bibitem[CM10]{cabello_edge}
S.~Cabello and B.~Mohar.
\newblock Adding one edge to planar graphs makes crossing number hard.
\newblock In {\em Proc.~ACM Symp.~on Computational Geometry}, 2010.

\bibitem[CMS]{CMS10}
J.~Chuzhoy, Y.~Makarychev, and A.~Sidiropoulos.
\newblock On graph crossing number and edge planarization.
\newblock {\em SODA 2010, to appear}.
\newblock Full version on arXiv:1010.3976v1, Oct. 2010.

\bibitem[EG73]{erdos_guy73}
P.~Erd\"{o}s and R.~Guy.
\newblock Crossing number problems.
\newblock {\em Amer.~Math.~Monthly}, 80:52--58, 1973.

\bibitem[EGS02]{EvenGS02}
G.~Even, S.~Guha, and B.~Schieber.
\newblock Improved approximations of crossings in graph drawings and {VLSI}
  layout areas.
\newblock {\em SIAM J.~Comput.}, 32(1):231--252, 2002.

\bibitem[GHLS07]{crossing_projective}
I.~Gitler, P.~Hlin\v{e}n{\'y}, J.~Lea{\~n}os, and G.~Salazar.
\newblock The crossing number of a projective graph is quadratic in the
  face-width.
\newblock {\em Electronic Notes in Discrete Mathematics}, 29:219--223, 2007.

\bibitem[GJ83]{crossing_np_complete}
M.~R. Garey and D.~S. Johnson.
\newblock Crossing number is {NP}-complete.
\newblock {\em SIAM J.~Algebraic Discrete Methods}, 4(3):312--316, 1983.

\bibitem[Gro04]{Grohe04}
M.~Grohe.
\newblock Computing crossing numbers in quadratic time.
\newblock {\em J.~Comput.~Syst.~Sci.}, 68(2):285--302, 2004.

\bibitem[Guy60]{Guy-complete-graphs}
R.K. Guy.
\newblock A combinatorial problem.
\newblock {\em Nabla (Bulletin of the Malayan Mathematical Society)}, 7:68--72,
  1960.

\bibitem[GVY95]{GVY}
N.~Garg, V.V. Vazirani, and M.~Yannakakis.
\newblock Approximate max-flow min-(multi)-cut theorems and their applications.
\newblock {\em SIAM Journal on Computing}, 25:235--251, 1995.

\bibitem[HC10]{crossing_genus}
P.~Hlin\v{e}n\'{y} and M.~Chimani.
\newblock Approximating the crossing number of graphs embeddable in any
  orientable surface.
\newblock In {\em Proc.~21st Annual ACM-SIAM Symposium on Discrete Algorithms},
  2010.

\bibitem[Hli06]{Hlineny06a}
P.~Hlinen{\'y}.
\newblock Crossing number is hard for cubic graphs.
\newblock {\em J. Comb. Theory, Ser. B}, 96(4):455--471, 2006.

\bibitem[HS07]{crossing_torus}
P.~Hlin\v{e}n\'{y} and G.~Salazar.
\newblock Approximating the crossing number of toroidal graphs.
\newblock {\em Lecture Notes in Computer Science}, 4835/2007:148--159, 2007.

\bibitem[HT74]{planar-drawing}
John Hopcroft and Robert Tarjan.
\newblock Efficient planarity testing.
\newblock {\em J. ACM}, 21(4):549--568, 1974.

\bibitem[KM04]{KolmanM04}
P.~Kolman and J.~Matou\v{s}ek.
\newblock Crossing number, pair-crossing number, and expansion.
\newblock {\em J.~Comb.~Theory, Ser.~B}, 92(1):99--113, 2004.

\bibitem[KR07]{KawarabayashiR07}
K.~Kawarabayashi and B.~A. Reed.
\newblock Computing crossing number in linear time.
\newblock In {\em STOC}, pages 382--390, 2007.

\bibitem[Lei83]{leighton_book}
F.~T. Leighton.
\newblock {\em Complexity issues in {VLSI}: optimal layouts for the
  shuffle-exchange graph and other networks}.
\newblock MIT Press, 1983.

\bibitem[LLR94]{LLR}
N.~Linial, E.~London, and Y.~Rabinovich.
\newblock The geometry of graphs and some of its algorithmic applications.
\newblock {\em Proceedings of 35th Annual IEEE Symposium on Foundations of
  Computer Science}, pages 577--591, 1994.

\bibitem[LR99]{LR}
F.~T. Leighton and S.~Rao.
\newblock Multicommodity max-flow min-cut theorems and their use in designing
  approximation algorithms.
\newblock {\em Journal of the ACM}, 46:787--832, 1999.

\bibitem[LT79]{planar-separator}
Richard~J. Lipton and Robert~Endre Tarjan.
\newblock A separator theorem for planar graphs.
\newblock {\em SIAM Journal on Applied Mathematics}, 36(2):177--189, 1979.

\bibitem[Mak10]{Yura}
Yury Makarychev.
\newblock Personal communication, 2010.

\bibitem[Mat02]{matousek_book}
J.~Matou\v{s}ek.
\newblock {\em Lectures on discrete geometry}.
\newblock Springer-Verlag, 2002.

\bibitem[PR07]{K11}
S.~Pan and R.~B. Richter.
\newblock The crossing number of {$K_{11}$} is 100.
\newblock {\em J.~Graph Theory}, 56(2):128--134, 2007.

\bibitem[PT00]{pach_survey}
J.~Pach and G.~T\'{o}th.
\newblock Thirteen problems on crossing numbers.
\newblock {\em Geombinatorics}, 9(4):194--207, 2000.

\bibitem[R{\"a}c02]{Raecke}
Harald R{\"a}cke.
\newblock Minimizing congestion in general networks.
\newblock In {\em In Proceedings of the 43rd IEEE Symposium on Foundations of
  Computer Science (FOCS)}, pages 43--52, 2002.

\bibitem[RS09]{richter_survey}
R.~B. Richter and G.~Salazar.
\newblock Crossing numbers.
\newblock In L.~W. Beineke and R.~J. Wilson, editors, {\em Topics in
  Topological Graph Theory}, chapter~7, pages 133--150. Cambridge University
  Press, 2009.

\bibitem[Sid10]{Tasos}
Anastasios Sidiropoulos.
\newblock Personal communication, 2010.

\bibitem[Sze05]{szekely_survey}
L.~A. Szekely.
\newblock Progress on crossing number problems.
\newblock In {\em SOFSEM}, pages 53--61, 2005.

\bibitem[Tur77]{turan_first}
P.~Tur\'{a}n.
\newblock A note of welcome.
\newblock {\em J.~Graph Theory}, 1:1--5, 1977.

\bibitem[Vrt]{vrto_biblio}
Imrich Vrto.
\newblock Crossing numbers of graphs: A bibliography.
\newblock \url{http://www.ifi.savba.sk/~imrich}.

\bibitem[Whi32]{Whitney}
Hassler Whitney.
\newblock Congruent graphs and the connectivity of graphs.
\newblock {\em American Journal of Mathematics}, 54(1):150--168, 1932.

\bibitem[WT07]{WoodT06}
D.~R. Wood and J.~A. Telle.
\newblock Planar decompositions and the crossing number of graphs with an
  excluded minor.
\newblock {\em New York J. Mathematics}, 13:117--146, 2007.

\end{thebibliography}
\bibliographystyle{alpha}
\newpage

\appendix

\section{Parameter List}\label{sec: param-list}

\begin{tabular}{|l|l|l|}\hline
&&\\
$\alpha^*$&$\Omega(1/(\log^{3/2} n\cdot \log\log n))$&Well-linkedness parameter in the\\
&& well-linked decomposition, Theorems~\ref{thm: well-linked},\ref{thm: well-linked-general}\\ \hline
&&\\
&$|S|\geq \frac{2^{16}\cdot \dmax^6}{(\alpha^*)^2}\cdot|\Gamma(S)|^2$&Requirement for nasty set $S$\\ \hline
&&\\
$\betaFCG $&$O(\log n)$&Flow-cut gap for undirected graphs \\
&&\\ \hline
&&\\
$\alphasc$&$O(\sqrt{\log n}\log\log n)$&approximation factor of algorithm $\algsc$ 
\\&& for non-uniform sparsest cut.\\ \hline
 &&\\
 $\beta^*$&$\beta_{FCG}\cdot \dmax/\alpha^*=O(\log^{5/2}n\cdot \log\log n\cdot \dmax)$&Parameter for Property (P3) for graphs\\
 && $H^{(3)}(X)$ in Step 3 of Graph Contraction.\\ \hline
&&\\
 $N$& $O(\dmax\sqrt {n'\log n'})$&number of cut edges in the initial partition in \\
 && Step 1 of the algorithm\\ \hline
 &&\\
 $\eps$&constant&balance parameter guaranteed by ARV for\\ 
 &&balanced cut\\ \hline
&&\\
$\lambda$&$\frac{\eps^2n'}{25N^2}=\Omega\left(\frac{1}{\log n'\cdot \dmax^2}\right )$& balance parameter in the initial partition\\
&& in Step 1 of the algorithm\\ \hline
 &&\\
 $\rho$&$\Theta(\log n\cdot \dmax^2)\max\set{\dmax^{10}\log^5 n(\log\log n)^2,\opt}$&Balance parameter in final partition\\
 &&in Step 1 of the algorithm\\ \hline
 &&\\
 $\edges h$&$|\edges h|\leq \opt\cdot m^*$& edges removed from the graph\\
 &&in iteration $h$\\ \hline
 &&\\
 $m^*$&$O\left (\frac{\opt^2\cdot\rho\cdot\log^2n\cdot \dmax^2\cdot \betaFCG }{\alpha^*}\right )$& maximum number of edges in $\edges{h'}$\\
 &\quad$=O\left (\opt^3\cdot \poly(\dmax\cdot \log n\right ))$&incident on any graph $H_i$ in the input\\
 && to iteration $h$, for all $h'<h$.\\ \hline
 &&\\
 $n'$&$\geq (m^*\cdot \rho\cdot \log n)^2$&number of vertices in a sub-instance $H$\\
 &&considered in current iteration \\ \hline
 &&\\
  $E'_i$&$|E_i'|=O(\opt^2\cdot \rho\cdot \dmax)$& Set of edges separating $G_i^X$ from $G_i'$\\
  && in each iteration of the algorithm\\ \hline
  &&\\
  $E_i''$&$|E''_i|=O\left (\frac{\opt^2\cdot\rho\cdot \log^2n\cdot\dmax^2\cdot \betaFCG}{\alpha^*}\right)$& Edges of $G_i$ for which we find\\
  &&a valid skeleton\\ \hline
\end{tabular}

\section{Auxiliary Claim}
The next simple claim is used extensively throughout the paper.
\begin{claim}\label{claim: numbers}
Suppose we are given any collection $\rset$ of non-negative numbers, and each number $x\in \rset$ is associated with another number, $y_x>0$. Moreover, assume that the following three conditions hold:

\begin{enumerate}
\item For each $x\in \rset$, $x\leq \beta y_x^2$.

\item For each $x\in \rset$, $x\leq M$.

\item $\sum_{x\in \rset}y_x\leq S$
\end{enumerate}
for some parameters $\beta, M,S>0$. Then $\sum_{x\in \rset}x\leq 2S\sqrt{\beta M}+M/4$.
\end{claim}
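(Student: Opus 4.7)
The claim is a simple averaging argument and I expect the proof to be very short; the bound $2S\sqrt{\beta M} + M/4$ is actually loose, and I can prove the cleaner bound $\sum_{x\in\rset} x \leq S\sqrt{\beta M}$.

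My plan is to combine the two pointwise upper bounds on $x$ multiplicatively. For each $x\in\rset$, the hypotheses give $x\leq \beta y_x^2$ and $x\leq M$. Multiplying these two inequalities and taking square roots yields $x \leq \sqrt{\beta M}\cdot y_x$ (here I use $y_x>0$, so both sides are non-negative). Summing over $x\in\rset$ and applying the third hypothesis $\sum_{x\in\rset} y_x \leq S$ gives
\[
\sum_{x\in\rset} x \;\leq\; \sqrt{\beta M}\,\sum_{x\in\rset} y_x \;\leq\; S\sqrt{\beta M},
\]
which is stronger than the stated bound $2S\sqrt{\beta M} + M/4$.

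As an alternative (and equivalent) route, one could split $\rset$ at the threshold $\tau=\sqrt{M/\beta}$: for $x$ with $y_x\leq \tau$, bound $x\leq \beta y_x^2 \leq \beta\tau y_x$, so this part contributes at most $\beta\tau S = S\sqrt{\beta M}$; for $x$ with $y_x>\tau$, there are at most $S/\tau$ such indices and each contributes at most $M$, again giving $MS/\tau = S\sqrt{\beta M}$. Adding yields $2S\sqrt{\beta M}$, which matches the dominant term in the claim.

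There is no real obstacle here; the only subtle point is verifying that multiplying the inequalities is legitimate, which follows from non-negativity of $x$ and positivity of $y_x$. I will present the first, one-line proof.
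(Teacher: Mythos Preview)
Your proof is correct, and in fact strictly simpler and sharper than the paper's. The one-line argument $x=\sqrt{x\cdot x}\leq \sqrt{(\beta y_x^2)\cdot M}=\sqrt{\beta M}\,y_x$ is valid because $x\geq 0$, and summing gives $\sum_{x\in\rset} x\leq S\sqrt{\beta M}$, which already implies the claimed bound.

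The paper takes a more roundabout route: it performs a sequence of normalizing transformations on $\rset$ (splitting off the excess when $y_x>\sqrt{M/\beta}$, saturating $x$ up to $\beta y_x^2$, and merging pairs with $y_x,y_{x'}\leq\tfrac12\sqrt{M/\beta}$), none of which decrease $\sum_x x$ or violate the hypotheses. After these steps, all but at most one of the $y_x$ lie in $[\tfrac12\sqrt{M/\beta},\sqrt{M/\beta}]$, so $|\rset|\leq 2S\sqrt{\beta/M}+1$; bounding each surviving $x$ by $M$ (and the exceptional one by $M/4$) yields $2S\sqrt{\beta M}+M/4$. This is essentially a discretized version of your threshold alternative, but with extra bookkeeping and a looser constant. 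Your geometric-mean trick bypasses all of this; it also shows the additive $M/4$ term is unnecessary.
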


Before providing the proof of this claim, we give some intuition as to why we need it. The general setting in which we will use this claim, is when we are given some collection $\xset$ of subsets of vertices of some graph $G$. For each set $X\in \xset$, we will have the number $x=|X|$, and $y_x=|\Gamma(X)|$. What the above claim essentially says is that if none of the sets $X\in \xset$ is nasty or too large, and if the total number of the interface vertices in all such sets $X$ is small, then the total number of vertices contained in sets $X\in \xset$ cannot be too large.

\begin{proof}
We will perform a number of transformations on the set $\rset$, until we obtain a set for which we can bound $\sum_{x\in\rset} x$ easily. After each such transformation, the above three conditions will continue to hold, and $\sum_{x\in \rset}x$ will not decrease. It will then be enough to bound $\sum_{x\in \rset}x$ in the final set $\rset$. We perform one of the following three steps, while possible:

\begin{itemize}
\item If there is a number $x\in \rset$ with $y_x>\sqrt{\frac M{\beta}}$: remove $x$ from $\rset$, and add two new numbers, $x'=x$, $x''=0$, with $y_{x'}=\sqrt{\frac M{\beta}}$, and $y_{x''}=y_x-y_{x'}$. Since $x\leq M$ and $x\leq \beta y_x^2$, it is easy to see that all three conditions continue to hold, and $\sum_{x\in \rset}x$ does not decrease.

\item If the first step is not applicable, but there is a number $x\in \rset$ with $x<\beta y_x^2$, replace it with $x'=\beta y_x^2$, and $y_{x'}=y_x$. Since $y_x\leq \sqrt{\frac M {\beta}}$, it is easy to see that all three conditions continue to hold, and $\sum_{x\in \rset}x$ does not decrease.

\item If the above two steps are not applicable, but there are two numbers $x,x'\in \rset$, with $y_x,y_{x'}\leq \half \sqrt{\frac M{\beta}}$, replace the two numbers $x,x'$ with a single number $x''=\beta(y_x+y_{x'})^2$, and set $y_{x''}=y_x+y_{x'}$. Notice that all three conditions continue to hold, and moreover, $x''=\beta(y_x+y_{x'})^2\geq \beta(y_x^2+y_{x'}^2)= x+x'$.
\end{itemize}

When none of the above steps is applicable, there is at most one number $x\in \rset$ with $y_x< \half \sqrt{\frac M{\beta}}$ (and in this case $x\leq \beta y_x^2\leq M/4$), and since $\sum_{x\in \rset}y_x\leq S$, we have that $|\rset|\leq \frac{2S\sqrt{\beta}}{\sqrt M}+1$. As each number $x\leq M$, we have that $\sum_{x\in \rset}x\leq 2S\sqrt{\beta M}+M/4$.
\end{proof}

\section{Well-Linked Decomposition}
\label{sec: appendix-well-linked-proofs}
\label{------------------------------------------------C: well linked decomposition------------------------------------------------------------------------}

\subsection{Proof of Theorem~\ref{thm: well-linked}}
We use the $\alphasc$-approximation algorithm $\algsc$ for the non-uniform sparsest cut problem (see Section~\ref{subsec: sparsest cut} for definitions). We set $\alpha^*=1/(8\alphasc\log n)=\Omega(1/(\log^{3/2}n\log \log n))$,

Throughout the algorithm, we maintain a partition $\jset$ of the input set $J$ of vertices. At the beginning, $\jset$ consists of the subsets of $J$ defined by the connected components of $G[J]$. 

Let $J'\in \jset$ be any set in the current partition, and let $G'=G[J']\cup \out(J')$ be the corresponding sub-graph of $G$. We set up a non-uniform sparsest cut problem on $G'$, where the weight of every vertex in $T(J')$ equals the number of edges in $\out(J')$ incident on it, and the weights of all other vertices are $0$. If $J'$ is not $\alpha^*$-well-linked, then there is a cut of sparsity at most $\alpha^*$ in $G'$. In this case, we can apply algorithm $\algsc$, to obtain a partition $(J'_1,J'_2)$ of $V(G')$ of sparsity at most $\alpha^*\cdot\alphasc\leq 1/(8\log n)$. Let $A=J'_1\setminus T(J)$ and $B=J'_2\setminus T(J)$. Notice that $(A,B)$ is a partition of $J'$. Moreover, if we denote $T_1=\out(J')\cap \out(A)$ and $T_2=\out(J')\cap \out(B)$, then $|E(A,B)|\leq \min\set{|T_1|,|T_2|}/(8\log n)$.   We then replace $J'$ with $A$ and $B$  in $\jset$. For accounting purposes, we charge the edges in $E(A,B)$ to the edges in $\out(J')$, as follows. If $|A|<|B|$, then we evenly charge the edges in $\out(J')\cap \out(A)$ for the edges in $E(A,B)$. Since $|E(A,B)|\leq |T_1|/(8\log n)$, the charge to every edge is at most $1/(8\log n)$. Otherwise, if $|B|\leq |A|$, we charge the edges of $\out(J')\cap \out(B)$ for the edges in $E(A,B)$. Again, the charge to every edge is at most $1/(8\log n)$.

We continue this procedure, until for every subset $J'\in\jset$, algorithm \algsc returns a cut of sparsity greater than $1/(8\log n)$. We are then guaranteed that every set $J'\in \jset$ is $\alpha^*$-well-linked.

In order to bound $\sum_{J'\in\jset}|\out(J')|$, we use the above charging scheme. Notice that every edge can be  charged at most $2\log n$ times (since each time an edge $e=(u,v)$ is charged for a cluster to which $u$ belongs, the size of this cluster decreases by at least factor $2$, and the same holds for $v$). Therefore, the total amount charged to any edge $e\in \out(J)$ is at most $\frac 1 4$. However, this only refers to the \emph{direct} charge. For example, some edge $e'\not \in \out(J)$, that has first been charged to the edges in $\out(J)$, can in turn be charged for other edges. We call such charging \emph{indirect}. If we sum up the indirect charge for every edge $e\in \out(J)$, we obtain a geometric series, and so the total direct and indirect amount charged to every edge $e\in \out(J)$ is at most $1$. Therefore, $\sum_{J'\in \jset}|\out(J')|\leq 2|\out(J)|$.

\subsection{Proof of Theorem~\ref{thm: well-linked-general}}
We first show how to handle property (P1) and the property that $J$ is canonical. We deal with property (P2) later.

The decomposition procedure is similar to the one in the proof of Theorem~\ref{thm: well-linked}, and the proof is by induction. Initially, $\jset$ is the partition of $J$ induced by the connected components of the graph $G[J]$. Clearly, if (P1) holds for $J$, it has to hold for every set in $\jset$, and if $J$ is canonical, every set in $\jset$ is also canonical (since for each $Z\in \zset$, $G[Z]$ is connected).

Assume that we have some set $J'\in \jset$, for which algorithm \algsc has returned a cut of sparsity at most $1/(8\log n)$, and let $(A,B)$ denote the resulting partition of $J'$. We first show that if $A$ or $B$ are not canonical, but $J'$ is canonical, then we can efficiently find a cut of even smaller sparsity in graph $G[J']\cup \out(J')$. We can then replace $(A,B)$ with the corresponding new partition and continue, until we obtain a partition $(A',B')$ of $J'$, where both $A'$ and $B'$ are canonical, and the sparsity of the corresponding cut is at most $1/(8\log n)$.

Denote $T=\out(J')$, $T_1=T\cap \out(A)$ and $T_2=T\cap \out(B)$, and we assume w.l.o.g. that $|T_1|\leq |T_2|$. We refer to the edges in $T$ as terminals. Let $E'=E(A,B)$. Then $|E'|/|T_1|\leq 1/(8\log n)$.

If $J'$ is canonical, but one of the sets $A$,$B$ is not, then there must be some set
 $Z\in \zset$, that is being split between the two sides. Let $\Upsilon=\out(Z)$, $\Upsilon_1'=\out(Z)\cap T_1$, $\Upsilon_2'=\out(Z)\cap T_2$. Let $\Upsilon_1$ denote all edges in $\out(Z)$ whose endpoint inside $Z$ belongs to $A$, but the edge itself is not in $T_1$ (notice that the other endpoint may be inside either $A$ or $B$). Similarly, let $\Upsilon_2$ denote all edges in $\out(Z)$ whose endpoint inside $Z$ belongs to $B$, but the edge is not in $T_2$ (see Figure~\ref{fig: well linked}).

\begin{figure}[h]
\scalebox{0.4}{\rotatebox{0}{\includegraphics{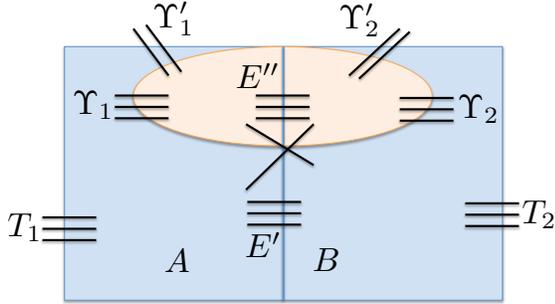}}}\caption{Canonical well-linked decomposition\label{fig: well linked}}
\end{figure}

Let $E''=E(Z\cap A,Z\cap B)$.
We consider two cases. In the first case, $|\Upsilon_1'|+|\Upsilon_1|\leq |\Upsilon_2'|+|\Upsilon_2|$, and therefore $|E''|\geq |\Upsilon_1'|+|\Upsilon_1|$ (since $Z$ is $1$-well-linked). In this case we move $Z$ completely to $B$. Observe that the cut size goes down by at least $|E''|-|\Upsilon_1|$, and the size of the new cut is at most $|E'|-|E''|+|\Upsilon_1|\leq |E'|-|\Upsilon_1'|$. The number of terminals on the smaller side becomes $|T_1|-|\Upsilon'_1|$. Therefore, the new sparsity is at most $\frac{|E'|-|\Upsilon_1'|}{|T_1|-|\Upsilon_1'|}<\frac{|E'|}{|T_1|}$, since for all $c<a<b$, $\frac{a-c}{b-c}<\frac a b$.

In the second case, $|\Upsilon_1'|+|\Upsilon_1|> |\Upsilon_2'|+|\Upsilon_2|$, and therefore $|E''|\geq |\Upsilon_2'|+|\Upsilon_2|$. We then move all vertices in $Z$ to $A$. Observe that the cut size goes down by at least $|E''|-|\Upsilon_2|$, and the size of the new cut is at most $|E'|-|E''|+|\Upsilon_2|\leq |E'|-|\Upsilon_2'|$. The number of terminals on one side becomes $|T_1|+|\Upsilon_2|$, and on the other side $|T_2|-|\Upsilon_2|$.
If the second quantity is smaller than the first, then the new sparsity is at most $\frac{|E'|-|\Upsilon_2'|}{|T_2|-|\Upsilon_2'|}<\frac{|E'|}{|T_2|}\leq\frac{|E'|}{|T_1|}$. Otherwise, the new sparsity is at most $\frac{|E'|-|\Upsilon_2'|}{|T_1|+|\Upsilon_2'|}<\frac{|E'|}{|T_1|}$. In any case, the sparsity goes down.

We continue this process, until we obtain a partition $(A,B)$ of $J'$, where $A$, $B$ are canonical and the cut sparsity remains at most $1/(8\log n)$.

We next show that if $J'$ had property (P1), then we can find a cut $(A',B')$ of $J'$, such that both $A',B'$ will still have property (P1), and the sparsity of the cut is at most $1/(8\log n)$. Moreover, if the cut $(A,B)$ we are starting from is canonical, then in the new cut, both sets $A'$ and $B'$ are canonical.
Since $J'$ had property (P1), the vertices of $T(J')$ are connected in the graph $G\setminus J'$. Assume that set $A$ does not have property (P1).
Then there are two edges $e=(x,y),e'=(x',y')$, $e,e'\in \out(A)$, $x,x'\in A$, such that there is no path connecting $y$ to $y'$ in graph $G\setminus A$. 
 Since $J'$ had property (P1), this can only happen if graph $G[B]$ has a connected component $C$, such that $\out(C)\sse E(A,B)$, and exactly one of the vertices $y,y'$ belongs to $C$. In this case, we can move all vertices of $C$ to $A$, and this will only decrease the cut sparsity, since $\out(C)\cap \out(J')=\emptyset$. Moreover, if $A$ and $B$ were both canonical, then both $B\setminus C, A\cup C$ will remain canonical, as for each $Z\in\zset$, $G[Z]$ is a connected graph. We can continue this process until both sets $A$ and $B$ have property (P1). Eventually, if the set $J'$ we start from is not $\alpha^*$-well-linked, we obtain a partition $(A,B)$ of $J'$ that induces a cut of sparsity at most $1/(8\log n)$. Moreover, if $J'$ was canonical, or had property (P1), or both, then both sets $A$ and $B$ will have the same properties.

Let $\jset$ be the final partition of $J$. We have shown that if $J$ was canonical, or had property (P1), or both, then every set $J'\in \jset$ will have exactly the same properties. It now only remains to show how to handle property (P2).

Assume that the original set $J$ had property (P2). Construct a new graph $H$ as follows: $H$ consists of the sub-graph $G[J]$ of $G$, and an additional vertex $s$, that connects with an edge to every vertex in $\Gamma_G(J)$. Notice that set $J$ has properties (P1) and (P2) in graph $H$. We can then find the decomposition $\jset$ of $J$ as above, such that every set $J'\in\jset$ has property (P1) in graph $H$, and if $J$ is canonical, then each set $J'\in \jset$ is also canonical.
 We now show that every set $J'\in \jset$ has property (P2) as well. Since $J$ had property (P2), there is a planar drawing $\psi$ of the graph $H$. This drawing induces a planar drawing $\psi'$ of $J'$, for each $J'\in \jset$. All vertices in $\Gamma(J')$ are connected in graph $(H\setminus J')\cup \out(J')$. Therefore, in $\psi'$, all vertices of $\Gamma(J')$ lie on the boundary of the same face, and so $J'$ has property (P2). Finally, notice that if, additionally, $J$ had property (P1) in $G$, then each set $J'\in\jset$ also has this property.

The number of edges $\sum_{J'\in \jset}|\out(J')|$ is bounded by $2\out(J)$ as in the proof of Theorem~\ref{thm: well-linked}.

\section{Graph Contraction: Proofs of Theorems}
\subsection{Proof of Theorem~\ref{thm: graph contraction}}
\label{------------------------------------------------E: graph contraction------------------------------------------------------------------------}

Fix some $i: 1\leq i\leq q$. We start by showing how to find the partition $\xset_i$ of $V(G_i)$, for which properties~(\ref{property: subsets-first})--(\ref{property: size-last}) hold. The difficult part will then be to show that there exists a required drawing of the resulting contracted graph.

We start with some fixed index $1\leq i\leq q$ and the graph $G_i$. Our first step is to use Theorem~\ref{thm: well-linked-general}, to find a partition $\wset_i$ of $V(G_i)$, into disjoint subsets, such that each set $Y\in \wset_i$ is $\alpha^*$-well-linked and has properties (P1) and (P2). We are also guaranteed that $\sum_{Y\in \wset_i}|\out(Y)|\leq 2|\Gamma(G_i)|$. We denote $\wset_i=\set{Y^i_1,\ldots,Y^i_{p_i}}$.


For each $1\leq j\leq p_i$, we now define a further decomposition $\yset_j^i$ of the set $Y_j^i$ of vertices, and in the end, we will set $\xset_i=\bigcup_{j=1}^{p_i}\yset_j^i$. We now fix some set $Y_j^i\in \wset_i$, and focus on defining the decomposition $\yset_j^i$ of $Y_j^i$. We will omit the subscript and the superscript $i$ from now on when clear from context.

First, we decompose $Y_j$ into a collection $\cset_j$ of maximal $2$-connected components (some components may consist of a single edge). Let $S_j^{(1)}$ be the set of all vertices $u\in V(Y_j)$, such that $u$ is a $1$-separator for $Y_j$. 
Let $\tset$ be the tree whose vertex set is: $S_j^{(1)}\cup \set{v_C\mid C\in \cset_j}$, and there is an edge between $v_C$ and $u\in S_j^{(1)}$ iff $u\in C$. If there is a vertex $u\in S_j^{(1)}\cap \Gamma(Y_j)$, then we root $\tset$ at $u$, and denote $t_0=u$. Otherwise, there must be a cluster $C\in \cset_j$ with $C\cap \Gamma(Y_j)\neq \emptyset$. We then root $\tset$ at $v_C$, and we set $t_0$ to be any vertex in $C\cap \Gamma(Y_j)$.
For each node $x$ of the tree, we denote by $\tset(x)$ the sub-tree of $\tset$ rooted at $x$. We also denote by $\Upsilon(x)$ the union of all clusters $C\in \cset_j$, such that $v_C\in \tset(x)$. Also, for any sub-tree $\tset'$ of $\tset$, we denote by $\Upsilon(\tset')$ the union of all clusters $C\in \cset_j$, such that $v_C\in \tset'$.

We now mark some vertices of the tree $\tset$, as follows. For each cluster $C\in \cset_j$ that contains at least one vertex in $\Gamma(Y_j)$, we mark the vertex $v_C$. We then go over the tree in the bottom-up fashion. Consider the current vertex $x$ of $\tset$. If $x$ has at least two children, say $y$ and $y'$, whose sub-trees contain marked vertices, then we mark $x$ as well. Otherwise we do not mark $x$. For each marked vertex $x$, remove the edge connecting it to its father from the tree $\tset$. Also, if $y$ is a child of a marked vertex $x$, and the sub-tree $\tset(y)$ contains a marked vertex, we remove the edge $(x,y)$ from the tree. Consider the resulting forest.  We say that a tree $\tset'$ of this forest is a trivial tree iff it consists of a single marked vertex  $u\in S^{(1)}_j$. For each non-trivial tree $\tset'$ in this forest, add the set $\Upsilon(\tset')$ of vertices to $\yset_j$.

This finishes the definition of the partition $\yset_j^i$ of $Y_j^i$. Notice that every vertex $v\in Y_j^i$ belongs to some set $X\in\yset_j^i$. Observe also that every pair $X,X'\in \yset_j^i$ of vertex subsets is completely disjoint, except that it may share one interface vertex, and so the graphs $\G[X]$ and $\G[X']$ are edge disjoint. We now set $\xset_i=\bigcup_{j=1}^{p_i}\yset_j^i$. Since all subsets of vertices in the partition $\wset_i$ of $V(G_i)$ were completely vertex disjoint, this establishes Property~(\ref{property: disjointness}). We prove that this partition has properties~(\ref{property: subsets-first}), (\ref{property: structure of X}), (\ref{property: subsets - last}) and (\ref{property: size-last}) below. But first, we need to establish some properties of the sets $X\in \yset_j^i$, that will be useful later.

\paragraph{Structure of sets $\mathbf{X\boldsymbol{\in \yset_j^i}}$.} Fix some $1\leq i\leq q$, $1\leq j\leq p_i$ (we will omit the index $i$ now). Consider some set $X\in \yset_j$, and the corresponding sub-tree $\tset'$ of $\tset$, such that $X=\Upsilon(\tset')$. Let $r$ be the root of the tree $\tset'$, and let $y_1,\ldots,y_t$ be the children of $r$ that belong to $\tset'$. If $r$ is a vertex of the form $v_C$, then we let $C^*_X$ denote the cluster of $\cset_j$, such that $v_{C^*_X}=r$. Otherwise, $C^*_X=\set{r}$. For each $1\leq t'\leq t$, let $R'_{t'}$ denote the sub-set of vertices of $\Upsilon(y_{t'})$ that belong to $X$. Then $(C^*_X,R'_1,\ldots,R'_{t})$ define a partition of $X$ (except that each set $R_{t'}$ shares a single vertex with $C^*_X$). Observe that $\Gamma(X)$ consists of three types of vertices: $\Gamma_1(X)=\Gamma(X)\cap \Gamma(Y_j)$ are the original interface vertices of $Y_j$; $\Gamma_2(X)$ contains a single vertex that is common to $r$ and its parent (if $r\in S^{(1)}_j$, then $\Gamma_2(X)=\set{r}$); $\Gamma_3(X)$ contains all remaining interface vertices.
Note that $\Gamma_1(X), \Gamma_2(X), \Gamma_3(X)$ is not necessarily a strict partition of $\Gamma(X)$, in the sense that some vertices of $\Gamma(X)$ may belong to several subsets.

If $r$ is a marked vertex, then we say that $X$ is of type 1; otherwise it is of type 2. 
Assume first that $r$ is a marked vertex. Since for each $1\leq t'\leq t$, we did not remove the edge $(r,y_{t'})$ from the tree, $\tset(y_{t'})$ does not contain any marked vertices, and so $\tset(y_{t'})\sse \tset'$, as no edges have been removed from it. Therefore, all interface vertices $\Gamma(X)$ must belong to $C^*_X$ in this case.

Assume now that $r$ is not a marked vertex. Then $C^*_X$ does not contain any vertices of $\Gamma(Y_j)$, and $\tset'$ does not contain any marked vertices. We claim that in this case $|\Gamma_3(X)|\leq 1$, and so $|\Gamma(X)|\leq 2$.
 Assume for contradiction that $|\Gamma_3(X)|\geq 2$.
Two cases are possible. The first case is when there is some node $v_C\in \tset'$, such that $C$ contains two vertices, $a,b\in \Gamma_3(X)$.
But that means that $v_C$ had two children that were marked vertices, and so it should have been marked itself, a contradiction.
The second case is when there are two {\bf distinct} nodes $x,y\in \tset'$, which either belong to $\Gamma_3(X)$, or their cluster contains a vertex in $\Gamma_3(X)$. 
 Then $x$ and $y$ each have a child, that is a marked vertex of $\tset$. We denote these vertices by $x'$ and $y'$. Let $z$ be the lowest common ancestor of $x'$ and $y'$ in $\tset$. Then $z\in \tset'$, and yet $z$ is a marked vertex, a contradiction. It follows that if $X$ is a type-2 cluster, then $|\Gamma(X)|\leq 2$. 
 It is now immediate to see that each set $X\in \xset$ has Property~(\ref{property: structure of X}). The only change is that, in order to obtain the partition $(C^*_X,R_1,\ldots,R_t)$, we start with the sets $(C^*_X,R'_1,\ldots,R'_t)$, and for each $1\leq t'\leq t$, we remove the unique vertex that $R'_{t'}$ shares with $C^*_X$, to obtain $R_{t}$. This vertex then serves as the separator $u_{t'}$.

\begin{figure}[h]
\scalebox{0.3}{\includegraphics{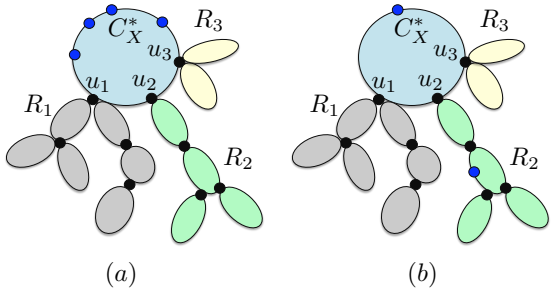}} \caption{Structure of $X$. (a) type-1 clusters; (b) type-2 clusters. Vertices of $\Gamma(X)$ are in blue.} \label{fig: structure of X}
\end{figure}

We are now ready to establish Properties~(\ref{property: subsets-first}), (\ref{property: subsets - last}) and (\ref{property: size-last}) for the resulting partition.
First, it is easy to see that each set $X\in \yset_j^i$ still has properties (P1) and (P2): Recall that $Y_j^i$ had both properties. Consider any planar drawing of $Y_j^i$, where all the interface vertices $\Gamma(Y_j^i)$ lie on the outer face. This induces the required planar drawing for each one of the clusters $X\in\yset_j$, giving property (P2). 
It is also easy to see that we can draw a closed simple curve $\gamma_X$, such that the images of all vertices in $\Gamma(X)$ lie on $\gamma_X$, and all other vertices and edges of $\G[X]$ are drawn inside $\gamma_X$. This follows from the fact that $X$ has property (P2), and whenever $|\Gamma(X)|>2$, all vertices of $\Gamma(X)$ belong to the $2$-connected sub-graph of $\G[X]$ induced by $C_X^*$.

We now turn to establish property (P1).
Consider some cluster $X\in \yset_j^i$. Recall the decomposition of $\Gamma(X)$ into three sets: $\Gamma_1(X)=\Gamma(X)\cap \Gamma(Y_j)$, $\Gamma_2(X)$ contains a single vertex common to the root of the tree $\tset'$ and its parent in $\tset$, and $\Gamma_3(X)$ contains all other vertices in $\Gamma(X)$. In order to establish property (P1), it is enough to show that for each vertex $v\in \Gamma_2(X)\cup \Gamma_3(X)$, for each edge $(v,u)\in E$ with $u\not\in X$, there is a path $P$ connecting $u$ to some vertex $t\in \Gamma(Y_j)$ in the graph $\G[Y_j]\setminus X$.
Consider first the vertex $v\in \Gamma_2(X)$. For each $u\not \in X$, with $(v,u)\in E$, there is a path connecting $u$ to the vertex $t_0\in \Gamma(Y_j)$ (recall that this is the vertex of $\Gamma(Y_j)$ corresponding to the root node of the tree $\tset$). Consider now some vertex $v\in \Gamma_3(X)$, and let $x$ be the node of $\tset'$, such that either $x=v$, or $v\in C$ and $v_C=x$. Let $u\not\in X$, such that $(u,v)\in E$, and let $y$ be the child node of $x$ in $\tset$, such that either $u=y$, or $u\in {C'}$, while $y=v_{C'}$. Then $y\not\in\tset'$, and $\tset(y)$ contains a marked vertex. Therefore, $\Upsilon(y)$ contains a vertex $t\in \Gamma(Y_j)$, and there is a path $P$ connecting $u$ to $t$ in the sub-graph of $\G$ induced by $\Upsilon(y)$. This establishes property (P1).
Next, we prove that each set $X\in \yset_j$ is $\alpha^*$-well-linked.

\begin{claim}\label{claim: well linkedness of X}
Each cluster $X\in \yset_j$ is $\alpha^*$-well-linked. 
\end{claim}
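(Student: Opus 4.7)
The plan is to reduce the $\alpha^*$-well-linkedness of $X$ to the already-established $\alpha^*$-well-linkedness of the ambient set $Y_j$, by lifting an arbitrary partition of $X$ to a partition of $Y_j$ that preserves the cut size and does not shrink the smaller-side terminal count. The cases $|\Gamma(X)|\le 1$ are vacuous, and the case $|\Gamma(X)|=2$ (type-2 clusters) reduces to the same extension argument below because there $\min\{|T_1|,|T_2|\}\le\dmax$ and $G[X]$ is connected. So I focus on type-1 clusters.

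Fix an arbitrary partition $(X_1,X_2)$ of $X$ and use the decomposition $X=C^*_X\cup\bigcup_{t'}R_{t'}$ from Property (\ref{property: structure of X}). Because $\Gamma(X)\sse C^*_X$, the terminal counts $|T_1|,|T_2|$ depend only on the split of $\Gamma(X)$; therefore reassigning each $R_{t'}$ entirely to the side of its attaching vertex $u_{t'}$ leaves $|T_1|$ and $|T_2|$ unchanged while only decreasing $|E(X_1,X_2)|$. Thus I may assume every $R_{t'}$ sits on a single side, and $E(X_1,X_2)=E_{G[C^*_X]}(X_1\cap C^*_X,X_2\cap C^*_X)$. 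I then extend to a partition $(A_1,A_2)$ of $V(Y_j)$ by placing each connected piece $P$ of $G[Y_j]\setminus X$ on the side of its unique attaching cut vertex $u_P\in\Gamma_2(X)\cup\Gamma_3(X)$. Since every edge of $G[Y_j]$ between $X$ and $P$ is incident to $u_P$ --- which is now on the same side as $P$ --- no extra edges are cut and $|E_{G[Y_j]}(A_1,A_2)|=|E_G(X_1,X_2)|$. Applying $\alpha^*$-well-linkedness of $Y_j$ to this cut yields
\[
|E_G(X_1,X_2)|\;\ge\;\alpha^*\min\{|T'_1|,|T'_2|\},
\]
where $T'_i=\out_G(A_i)\cap\out_G(Y_j)$.

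It remains to show $\min\{|T'_1|,|T'_2|\}\ge\min\{|T_1|,|T_2|\}$, which yields the claim. I decompose $T_i=T_i^D\cup T_i^I$, where $T_i^D=T_i\cap\out(Y_j)$ is incident to $\Gamma_1(X)\cap X_i$ and therefore contributes directly to $T'_i$, while $T_i^I$ consists of $\out(X)$-edges entering pieces assigned to side $i$. Each such piece $P$ contains at least one vertex of $\Gamma(Y_j)$ by the marking procedure, and the $\alpha^*$-well-linkedness of $Y_j$ applied to the cut $(P,Y_j\setminus P)$ gives $k_P\ge\alpha^*\min\{\ell_P,|T(Y_j)|-\ell_P\}$, where $k_P=|E_G(u_P,P)|$ and $\ell_P=|\out(Y_j)\cap\out(P)|$. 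I will leverage this, together with a piece-by-piece choice of which side to assign $P$ (possibly the side opposite to $u_P$, at the cost of $k_P$ extra cut edges but crediting $\ell_P$ to the smaller side), to amortize the terminal bookkeeping between the $X$-cut and the $Y_j$-cut so that the desired inequality closes. The main obstacle is exactly this accounting step: when $k_P\gg\ell_P$, the naive extension that follows the attaching cut vertex can leave $|T'_i|\ll|T_i|$, and balancing this requires a careful case analysis combining the $\alpha^*$-well-linkedness inequality for each piece with a judicious choice of side so that any additional cut paid on the right-hand side of the $Y_j$-inequality is dominated by the corresponding increase in $\min\{|T'_1|,|T'_2|\}$.
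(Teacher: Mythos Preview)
Your high-level plan is the same as the paper's: lift an arbitrary partition of $X$ to a partition of $Y_j$ by attaching each external piece of $G[Y_j]\setminus X$ to the side of its unique cut vertex, observe that the cut size does not grow, and contradict the $\alpha^*$-well-linkedness of $Y_j$. The preprocessing step where you push each $R_{t'}$ to the side of $u_{t'}$ is harmless but unnecessary; the paper works directly with an arbitrary partition $(A,B)$ of $X$.

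The proof, however, is not complete: you correctly isolate the terminal-accounting step as the crux, and you even record the key structural fact that every piece contains a vertex of $\Gamma(Y_j)$, but you then head into an amortization involving $k_P$, $\ell_P$, and a ``piece-by-piece choice of side,'' and leave it as ``requires a careful case analysis.'' That analysis is never supplied, and the obstacle you flag (when $k_P\gg\ell_P$) is genuine if one insists on edge-level accounting.

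The paper sidesteps this entirely by doing the count at the level of interface \emph{vertices}. For a type-1 cluster the non-$\Gamma(Y_j)$ interface vertices of $X$ are exactly the parent attachment $u\in\Gamma_2(X)$ and the child separators $u_1,\dots,u_k\in\Gamma_3(X)$ (the children of $r$ outside $\tset'$). Each $u_{k'}$ sits on a piece $\Upsilon(u_{k'})$ whose subtree contains a marked node, hence a distinct witness $t_{k'}\in\Gamma(Y_j)\cap\Upsilon(u_{k'})$; and $u$ sits on $Y_j\setminus\Upsilon(r)$, which contains $t_0\in\Gamma(Y_j)$. Since the pieces are disjoint, the witnesses are distinct. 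Thus the map
\[
\Gamma(X)\cap A\;\longrightarrow\;\Gamma(Y_j)\cap A',\qquad v\mapsto\begin{cases}v&v\in\Gamma_1(X),\\ t_0&v=u,\\ t_{k'}&v=u_{k'},\end{cases}
\]
is an injection, giving $|\Gamma(Y_j)\cap A'|\ge|\Gamma(X)\cap A|$ (and symmetrically for $B$). With $E(A',B')=E(A,B)$ this immediately contradicts the well-linkedness of $Y_j$. No per-piece reassignment or $k_P/\ell_P$ balancing is needed. Your obstacle disappears once you stop trying to match terminal \emph{edges} one-for-one and instead match interface \emph{vertices} via this injection; the paper carries out the whole argument with $T_A:=\Gamma(X)\cap A$ rather than $\out(A)\cap\out(X)$.
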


\begin{proof}
Consider some cluster $X\in \yset_j$, and let $\tset'$ be the corresponding sub-tree of $\tset$, such that $X=\Upsilon(\tset')$.
First, if $X$ is a type-2 cluster, then $|\Gamma(X)|\leq 2$, and since $\G[X]$ is connected, $X$ is $\alpha^*$-well-linked.

Assume now that $X$ is a type-1 cluster, that is, its root vertex $r$ is marked.  Then  $\Gamma(X)\sse C^*_X$. If $r\in S^{(1)}_j$, then $\Gamma(X)$ consists of a single vertex, $r$, and is therefore well-linked. Assume now that $r\not\in S^{(1)}_j$, and let $u\in S^{(1)}_j$ be the parent of $r$ in $\tset$.
Let $u_1,\ldots,u_k$ be the children of $r$ in $\tset$ that {\bf do not} belong to $\tset'$. Notice that $u_1,\ldots,u_k\in S^{(1)}_j$. Then, for each $1\leq k'\leq k$, there is a marked vertex in $\tset(u_{k'})$, and a vertex $t_{k'}\in \Gamma(Y_j)\cap \Upsilon(u_{k'})$. Moreover, for $k'\neq k''$, $t_{k'}\neq t_{k''}$. Let $t_u=t_0$, the vertex in $\Gamma(Y_j)$ associated with the root of the tree $\tset$.

 Then $\Gamma(X)=\Gamma_1(X)\cup\set{u,u_1,\ldots,u_k}$, where $\Gamma_1(X)\sse \Gamma(Y_j)$. Assume for contradiction that $X$ is not $\alpha^*$-well-linked. We will show that this implies that $Y_j$ is not $\alpha^*$-well-linked.
Let $(A,B)$ be any partition of $X$, such that, if we denote $T_A=\Gamma(X)\cap A$ and $T_B=\Gamma(X)\cap B$, then $|T_A|\leq |T_B|$, and $|E(A,B)|<\alpha^*|T_A|$.

We extend $(A,B)$ to a partition $(A',B')$ of the whole set $Y_j$ of vertices, as follows: if $u\in A$, then we add all vertices in $Y_j\setminus \Upsilon(r)$ to $A$, and otherwise we add them to $B$. Also, for each $1\leq k'\leq k$, if $u_{k'}\in A$, then we add all vertices in $\Upsilon(u_{k'})$ to $A$, and otherwise, we add them to $B$. Let $(A',B')$ be the resulting partition of $Y_j$. Then $E(A',B')=E(A,B)$.  

Let $T'_{A'}=\Gamma(Y_j)\cap A'$, and $T'_{B'}=\Gamma(Y_j)\cap B'$. Then $|T'_{A'}|\geq |T_A|$ must hold, as for each vertex $u_{k'}\in A$, we have added the vertex $t_{k'}$ to $T_{A'}$, and if $u\in T_A$, then vertex $t_u$ has been added to $T'_{A'}$. It then follows that $|E(A',B')|=|E(A,B)|<\alpha^*|T_A|\leq \alpha^*|T'_{A'}|$, and so $Y_j$ is not $\alpha^*$-well-linked, a contradiction.
\end{proof}

We have thus established Property~(\ref{property: subsets-first}). In order to establish Property~(\ref{property: subsets - last}),  we need to bound $\sum_{X\in \yset_j}|\out(X)|$. Notice that the number of marked vertices in $\tset$ is at most $2|\Gamma(Y_j)|$. Therefore, the number of edges removed from $\tset$ is at most $4|\Gamma(Y_j)|$. Since $\sum_{X\in \yset_j}|\Gamma(X)|$ is bounded by twice the number of the edges removed from the tree $\tset$ plus $|\Gamma(Y_j)|$, we get that  $\sum_{X\in \yset_j}|\Gamma(X)|\leq 9|\Gamma(Y_j)|$.
We establish Property~(\ref{property: size-last}) in the next claim.

\begin{claim} \label{claim: size of X} Let $X$ be any subset of vertices of $\G$ that has properties (P1), (P2), and is $\alpha$-well-linked for any $\alpha$. Then $|X|\geq \alpha^2 |\Gamma(X)|^2/64\dmax^2$.
\end{claim}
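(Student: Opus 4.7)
The plan is to combine the planarity of $G[X]$ (given by property (P2)) with $\alpha$-well-linkedness, using a balanced planar separator to force $|X|$ to be quadratic in $k := |\Gamma(X)|$. Trivially $|X| \ge k$, so it suffices to handle the case where $k$ is large. I would first invoke a terminal-weighted version of Theorem~\ref{thm: planar separator}, with weight $1$ on each vertex of $\Gamma(X)$ and $0$ elsewhere, producing a partition $(A, B, C)$ of $V(X)$ with $|B| \le q\sqrt{|X|}$, no edges between $A$ and $C$, and $k_A := |A \cap \Gamma(X)|,\ k_C := |C \cap \Gamma(X)| \le 2k/3$. The weighted version is a standard extension of Lipton--Tarjan (also derivable from the unweighted version by replacing each weight-$1$ vertex with a constant-size gadget).

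Now split into two cases. If $\max(k_A, k_C) \ge k/3$, say $k_C \ge k/3$, consider the bipartition $(A \cup B, C)$ of $V(X)$. Absence of $A$--$C$ edges gives $|E(A \cup B, C)| = |E(B, C)| \le \dmax|B| \le q\dmax\sqrt{|X|}$. Since every interface vertex is adjacent to at least one edge of $\out(X)$, we have $|T_1| \ge k - k_C$ and $|T_2| \ge k_C$, so $\min(|T_1|, |T_2|) \ge \min(k - k_C, k_C) \ge k/3$ (using $k/3 \le k_C \le 2k/3$). Invoking $\alpha$-well-linkedness, $q\dmax\sqrt{|X|} \ge \alpha k/3$, yielding $|X| \ge \alpha^2 k^2/(9q^2\dmax^2)$.

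In the remaining case, both $k_A, k_C < k/3$, forcing $k_B > k/3$; then $k_B \le |B| \le q\sqrt{|X|}$ gives $|X| > k^2/(9q^2) \ge \alpha^2 k^2/(9q^2\dmax^2)$ (using $\alpha \le 1 \le \dmax$). In either case $|X| \ge \alpha^2 k^2/(9q^2\dmax^2)$, which matches the stated bound $\alpha^2 k^2/(64\dmax^2)$ for the appropriate constant $q$ coming from the planar separator theorem (up to a slight adjustment of the constant in the separator balance or application).

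The main obstacle I see is justifying the terminal-weighted planar separator, since the paper as stated gives only the unweighted Theorem~\ref{thm: planar separator}; beyond that, the argument is a one-shot cut-size comparison. Note that properties (P1) and the connectivity structure of $\Gamma(X)$ play no role in this particular proof --- only the planarity from (P2) and the well-linkedness hypothesis are used. It is also worth noting that the proof sidesteps the flow-cut gap $\betaFCG$: a competing approach would route a matching of interface vertices via concurrent multicommodity flow (as in Observation~\ref{observation: existence of flow in well-linked instance}) and lower-bound the total path length using planarity and cyclic boundary ordering, but that route would introduce an unwanted $\betaFCG^2 = \Theta(\log^2 n)$ factor, whereas the cut-based argument above yields the clean $\poly(\alpha,\dmax)$-only bound demanded by the claim.
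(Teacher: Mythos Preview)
Your argument is correct (modulo the constant), but it is a genuinely different proof from the paper's.

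The paper does not go through a planar separator at all. Instead, it uses the full strength of (P2) --- that the interface vertices lie on the boundary of the outer face in the planar drawing $\pi(X)$ --- to partition $\Gamma(X)$ into four consecutive arcs $S_1,S_2,S_3,S_4$ of size $z=\lfloor k/4\rfloor$ each, in this cyclic order. Well-linkedness (via Menger plus the degree bound $\dmax$) yields $\alpha z/\dmax$ vertex-disjoint $S_1$--$S_3$ paths and $\alpha z/\dmax$ vertex-disjoint $S_2$--$S_4$ paths inside $\G[X]$. By the cyclic boundary ordering and planarity, every $S_1$--$S_3$ path must share a vertex with every $S_2$--$S_4$ path, and vertex-disjointness within each family forces all these shared vertices to be distinct. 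Hence $|X|\ge(\alpha z/\dmax)^2$, and with $z\ge k/8$ this gives exactly the constant $64$.

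Comparing the two: the paper's crossing-paths argument is self-contained (only Menger is needed) and exploits the boundary cyclic order, not just planarity, which is why it lands on the clean constant $64$. Your route is more black-box --- it only needs that $\G[X]$ is planar, not that $\Gamma(X)$ sits on one face --- but it imports the weighted Lipton--Tarjan separator, which the paper never states, and the resulting constant $9q^2$ (with the standard $q=2\sqrt 2$) is $72$, slightly off from $64$. Since the $64$ feeds directly into the definition of a nasty set, strictly speaking your argument would require retuning that constant; the overall structure of the paper would be unaffected. Your closing remark about avoiding the flow-cut gap is apt: both proofs are purely cut-based and incur no $\betaFCG$ loss, but the paper achieves this via the grid-minor-style crossing count rather than a separator.
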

\begin{proof}
Let $\pi_X$ be the drawing of $X$ in which the interface vertices lie on the boundary of the outer face. Let $S_1,S_2,S_3,S_4\sse \Gamma(X)$ be collections of $z=\lfloor |\Gamma(X)|/4\rfloor $ interface vertices each, such that for each $1\leq h\leq 4$, the interface vertices of $S_h$ lie contiguously along the boundary of the outer face of $\pi_X$, and the ordering among these sets is $(S_1,S_2,S_3,S_4)$. Since $X$ is $\alpha$-well-linked, there is a collection $\pset$ of $\alpha z/\dmax$ {\bf vertex-disjoint} paths in $\G[X]$, connecting vertices in $S_1$ to vertices in $S_3$. Similarly, there is a collection $\pset'$ of $\alpha z/\dmax$ vertex-disjoint paths connecting vertices in $S_2$ to vertices in $S_4$ in $\G[X]$. Since the drawing $\pi_X$ has no crossings, every pair $P\in \pset$, $P'\in \pset'$ of paths has to share a vertex. Therefore, $X$ must contain at least $\alpha^2z^2/\dmax^2$ vertices. Since $z\geq |\Gamma(X)|/8$, the claim follows.
\end{proof}

Combining Claims~\ref{claim: well linkedness of X} and \ref{claim: size of X} establishes Property~(\ref{property: size-last}). Therefore, so far we have proved  properties~(\ref{property: subsets-first})--(\ref{property: size-last}) for partition $\xset_i$, for each $1\leq i\leq q$. It now only remains to show that there is a canonical drawing $\phi'$ of the resulting contracted graph $H=G_{|S}$.

\subsection{Existence of the Canonical Drawing}
The goal of this section is to prove the following theorem.

\begin{theorem}\label{thm: canonical drawing}
Suppose we are given a collection $\xset$ of subsets of vertices of $\G$, for which Properties~(\ref{property: subsets-first}), (\ref{property: structure of X}) and (\ref{property: disjointness}) hold.
 Let $H$ be the contracted graph, in which each sub-graph $\G[X]$ is replaced by $Z'_X$. Then there is a canonical  drawing $\phi'$ of $H$, such that $\cro_{\phi'}(H) = O(d_{\max}^9 \cdot \log^{10} n \cdot (\log\log n)^4 \cdot \optcro{\G})$.
\end{theorem}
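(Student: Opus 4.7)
Starting from an optimal drawing $\phi$ of $\G$ with $\optcro{\G}$ crossings, the plan is to process the sets $X\in\xset$ one by one and replace, for each $X$, the subdrawing of $\G[X]$ in $\phi$ by a canonical drawing of $Z'_X$ placed inside a small topological disk $D_X$ enclosing the image of $\G[X]$. Property~(\ref{property: disjointness}) lets us take the disks $D_X$ essentially pairwise disjoint (modulo the at-most-$\dmax$ interface vertices that two disks may share), so the per-$X$ modifications do not interact. It therefore suffices to prove a local lemma for each $X$: the crossings inside $D_X$ after the replacement are at most $\poly(\dmax,\log n)$ times the crossings of $\phi$ incident to $E(\G[X])\cup\out(X)$. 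Summing over $X$, each crossing of $\phi$ is charged $O(\dmax)$ times, and the claimed global bound follows.

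To prove the local lemma, we fix $X$ and choose $D_X$ so that $\partial D_X$ meets only the images of $\out(X)$, each exactly once. Record two cyclic orderings of $\Gamma(X)$: the order $\sigma^\phi$ in which $\out(X)$ exits $\partial D_X$ in $\phi$, and the order $\sigma^\pi$ on the outer face of the planar drawing $\pi'(X)$ from Property~(\ref{property: subsets-first}). We build the replacement inside $D_X$ in two steps. First, a \emph{switching gadget}: by $\alpha^*$-well-linkedness and Observation~\ref{observation: existence of flow in well-linked instance}, one can route any matching on $\out(X)$ inside $\G[X]$ with edge-congestion $O(\betaFCG/\alpha^*)$; applying this to the matching that maps $\sigma^\phi$ onto $\sigma^\pi$ and re-drawing the flow paths topologically inside $D_X$ realigns the $\out(X)$-tails on $\partial D_X$ from $\sigma^\phi$ to $\sigma^\pi$, at the cost of $O((\dmax\cdot\betaFCG/\alpha^*)^2)$ new crossings per crossing of $\phi$ already inside $D_X$. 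Second, we erase the image of $\G[X]$ and, in a tiny sub-disk $D'_X\subset D_X$, we place a planar drawing of $Z_X$ with its first row on $\partial D'_X$ in order $\sigma^\pi$; the matching edges of $Z'_X$ then extend planarly in $D_X\setminus D'_X$ to the realigned $\out(X)$-tails. Since $Z_X$ itself is drawn planarly inside $D'_X$, the resulting drawing of $Z'_X$ is canonical.

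A careful accounting of the crossings introduced -- self-crossings of the switching gadget, crossings of the gadget with edges of $\G\setminus\G[X]$ whose images pass through $D_X$, and crossings on the matching edges -- bounds the local contribution of $X$ by $\poly(\dmax,\log n)$ times the crossings of $\phi$ incident to $E(\G[X])\cup\out(X)$. Substituting $\alpha^*=\Omega(1/(\log^{3/2}n\log\log n))$ and $\betaFCG=O(\log n)$, and then summing, gives $\cro_{\phi'}(H)\leq O(\dmax^9\log^{10}n(\log\log n)^4)\cdot\optcro{\G}$, as required. The hardest step is the switching gadget: a naive construction may force each rerouted curve to cross the entire grid $Z_X$, ruining the bound. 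The fix exploits the $2$-connected-core structure of Property~(\ref{property: structure of X}): the switching gadget is confined to the core $C^*_X$ (which contains all of $\Gamma(X)$ whenever $|\Gamma(X)|\geq 3$), while each pendant $R_{t'}$ is collapsed separately through its unique cut-vertex $u_{t'}$, so that flow paths can be charged tightly to the crossings of $\phi$ in their local part of $D_X$.
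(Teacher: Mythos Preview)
Your high-level plan (process each $X$ locally, replace $\G[X]$ by a canonical copy of $Z'_X$ inside a disk $D_X$, charge the new crossings to crossings of $\phi$ already incident to $X$) is the same as the paper's. But the heart of the argument --- your ``switching gadget'' --- has a genuine gap.

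First, ``the matching that maps $\sigma^\phi$ onto $\sigma^\pi$'' is not a well-defined object: $\sigma^\phi$ and $\sigma^\pi$ are two cyclic orderings of the \emph{same} vertex set $\Gamma(X)$, and going from one to the other is a permutation, not a matching on $\out(X)$. Routing an arbitrary matching on $\out(X)$ inside $\G[X]$ with congestion $O(\betaFCG/\alpha^*)$ (which well-linkedness indeed gives) produces paths between \emph{pairs of interface vertices}; drawing these paths along their $\phi$-images does not realign the points on $\partial D_X$ in any useful way.

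More importantly, the claimed cost ``$O((\dmax\betaFCG/\alpha^*)^2)$ per crossing of $\phi$ inside $D_X$'' is exactly the hard part, and it is not justified. Here is the obstruction your argument does not address: even when $\phi$ restricted to $\G[X]$ is \emph{crossing-free}, the cyclic orders $\sigma^\phi$ and $\sigma^\pi$ can differ. This is because $\G[X]$ (or its 2-connected core $C^*_X$) may have many $2$-vertex separators, and flipping one side of a $2$-cut produces another planar drawing with a different cyclic order of the interface on the outer face. Well-linkedness only tells you that each single flip moves $O(\dmax/\alpha^*)$ interface vertices, but there can be many such flips, and no crossings of $\phi$ to charge them to. Your use of Property~(\ref{property: structure of X}) removes $1$-cuts (the pendants $R_{t'}$), which is the analogue of the paper's Step~1; it does nothing about $2$-cuts inside the core.

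The paper closes this gap with substantial additional machinery. It first does surgery (introducing new interface vertices, then replacing each maximal $2$-cut cluster by a short path) so that the remaining $2$-separators are highly constrained (Observations~\ref{observation: characterization of 2-cuts}--\ref{observation: num fo 2-cuts between subterminals}). It then invokes the irregular-vertex/irregular-edge framework of \cite{CMS10} (Lemma~\ref{lem:bad3}) for the $3$-connected part, and gives separate charging arguments for the interface and the surviving $2$-separator vertices (Lemmas~\ref{lemma: bad vertices for X}--\ref{lemma: bad edges for X}) to bound $|N(X)|$, the set of edges whose local drawing in $\phi$ disagrees with $\psi_X$. Finally, instead of a matching, it routes \emph{all} interface vertices to a \emph{single} vertex $v^*_X$ via Lemma~\ref{lem:ell2}, which controls $\sum_{e\in N(X)} c^2(e)$; the matching edges of $Z'_X$ are then drawn along these paths, and new crossings occur only at edges of $N(X)$, each contributing at most its squared congestion. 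Your proposal contains none of these ingredients, and without them the local cost bound simply does not follow.
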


The rest of this section is devoted to proving Theorem~\ref{thm: canonical drawing}.
We start with a high-level overview of the proof.
For each set $X\in \xset$, let $\Gr0(X)$ denote the graph $\G[X]$. Recall that $\Gr0(X)$ has a planar drawing $\pi(X)$, in which all interface vertices $\Gamma(X)$ lie on the boundary of the outer face. On the other hand, the optimal drawing $\phi$ of $\G$ also induces a drawing $\phi_X$ of $\Gr0(X)$. Following~\cite{CMS10}, we will define the sets of irregular vertices and edges of the graph $\Gr0(X)$ to be the vertices and edges whose ``local'' drawing is different in $\pi(X)$ and $\phi_X$. 

Next, we show that there is a ``nice'' drawing $\phi''$ of $\G$, in which, roughly speaking, for each sub-graph $\Gr0(X)$, only the edges incident on the vertices in $\Gamma(X)$ participate in crossings. We show that such a nice drawing for $\G$ immediately gives a canonical drawing for the contracted graph $H$. We then bound $\cro_{\phi''}(\G)$ by roughly $\cro_{\phi}(\G)$ plus the number of irregular edges and vertices in each set $X\in \xset$. Using the result of~\cite{CMS10}, this number is in turn roughly bounded by $O(\cro_{\phi}(\G))$, but only if the graphs $\Gr0(X)$ are $3$-vertex connected. Therefore, in order to apply this argument, we need to first perform some surgery on the graphs $\Gr0(X)$, to get rid of all $1$-vertex cuts, and most $2$-vertex cuts. The rest of the proof consists of three parts. The first part is the surgery that we perform on graphs $\Gr0(X)$. In the second part we define irregular vertices and edges for each set $X\in \xset$ and bound their number. This is similar to what is done in~\cite{CMS10}, except that we will need to deal with the few $2$-vertex separators that still remain in the graphs. In the third step we show a nice drawing of the resulting graph, that will give a canonical drawing of the contracted graph.
\subsubsection{Part 1: Surgery}

The goal of this part is to get rid of all $1$-vertex cuts, and most $2$-vertex cuts in sets $X\in \xset$. We do so in four simple steps. At the end of each step $h: 1\leq h\leq 4$, we will obtain a graph $\Gr h(X)$, for each $X\in \xset$, that will replace the graph $\Gr0(X)$ in $\G$. Graph $\Gr h(X)$ will contain the set $\Gamma(X)$ of vertices, and in order to obtain the graph $\Gr h$ from $\G$, we simply replace each subgraph $\Gr0(X)$ with $\Gr h(X)$ for each $X\in \xset$, using the vertices $\Gamma(X)$ as the interface in this replacement procedure. Therefore, once we define the graphs $\Gr h(X)$ for all $X\in \xset$, the graph $\Gr h$ is fixed. We will ensure that the final graph $\Gr 4$, that we obtain after the fourth step, is precisely the contracted graph $H=\G_{|S}$. We will also ensure that the maximum vertex degree $\dmax$ does not increase throughout these steps.

From now on, for each set $X\in \xset$, $\Gamma(X)$ will denote the set of interface vertices of $X$ in the graph $\G$. This set will not change as we obtain new graphs by transforming $\G$. For any graph $H'$, given a subset $S\sse V(H')$ of vertices, we will use $\Gamma_{H'}(S)$ to denote the set of interface vertices of $S$ in $H'$. For convenience, when $H''$ is a sub-graph of $H'$, we will sometimes write $\Gamma_{H'}(H'')$ instead of $\Gamma_{H'}(V(H''))$, and $T_{H'}(H'')$ instead of $T_{H'}(V(H''))$.
\paragraph{Step 1: getting rid of things we do not need}
Let $X\in \xset$. Assume first that $|\Gamma(X)|\leq 2$. Then we set $\Gr1(X)=\Gr2(X)=\cdots =\Gr4(X)=Z'(X)$ (see Figure~\ref{fig: grids}). It is easy to see that replacing $\Gr0(X)$ with $Z'(X)$ does not increase the number of edges participating in crossings in the resulting graph. Moreover, for any drawing of the final graph $\Gr4$, it is easy to obtain a drawing in which only the matching edges of $Z'_X$ will participate in crossings (and not the unique non-matching edge). Therefore, from now on we can simply ignore sets $X$ for which $|\Gamma(X)|\leq 2$. Let $\xset'\sse \xset$ denote the collection of sets $X$ for which $|\Gamma(X)|>2$. We will restrict our attention to sets $X\in \xset'$ from now on.

Consider some set $X\in \xset'$. Using Property~(\ref{property: structure of X}), we can decompose $X$ into $C^*_X,R_1,\ldots,R_t$, such that $\Gamma(X)\sse C^*(X)$, and for each set $R_{t'}$, for $1\leq t'\leq t$, there is a $1$-separator $u_{t'}$ in $\Gr0(X)$, whose removal separates $R_{t'}$ from the remaining vertices of $X$. We simply erase the sets $R_1,\ldots,R_t$ from $X$. In other words, we replace $X$ with $C^*_X$. Let $\Gr1(X)$ denote this resulting graph, and let $\Gr 1$ denote the whole resulting graph obtained from $\G$, by replacing $\G[X]$ with $\Gr1(X)$ for all $X\in \xset$.

Notice that the set of the interface vertices $\Gamma_{\Gr1}(\Gr 1(X))=\Gamma(X)$, even though $C^*(X)$ had additional interface vertices in the old graph $\Gr0(X)$. Therefore, $\Gr1(X)$ retains properties\footnote{Notice that property (P1) is now defined w.r.t. $\Gr 1$ and the set $T_{\Gr 1}(\Gr 1(X))$ of terminal vertices.} (P1) and (P2), it contains all vertices in $\Gamma(X)$, and it is $\alpha^*$-well-linked w.r.t. them, since $\Gr0(X)$ had all these properties. Also, $\Gr1(X)$ does not contain any $1$-vertex cuts. 
 Clearly, $\optcro{\Gr1}\leq \optcro{\G}$, since, in a sense, $\Gr1$ is a sub-graph of $\G$. Obviously, the maximum vertex degree did not increase in this step.


\paragraph{Step 2: introducing new interface vertices}
Let $X$ be any set in $\xset'$. For each interface vertex $v\in \Gamma(X)$, we create a new copy $v_X$ of $v$, that will replace $v$ in $\Gr1(X)$. We call $v_X$ a \emph{new interface vertex} for $X$, and $v$ an \emph{old interface vertex}. In order to obtain the new graph $\Gr2(X)$ from $\Gr1(X)$, we replace each old interface vertex $v$ with a new interface vertex $v_X$, and add a \emph{matching edge} $(v,v_X)$ to the graph (see Figure~\ref{fig: Step 2}). Therefore, if $v$ is an interface vertex that has been shared by $k$ clusters $X_1,\ldots,X_k\in \xset'$, then each graph $\Gr2(X_{k'})$, for $1\leq k'\leq k$ will now contain a new copy $v_{X_{k'}}$ of $v$, and all these copies are connected to $v$ via matching edges. We denote by $\Gamma'(X)$ the set of new interface vertices, and by $M(X)$ the set of matching edges of $\Gr2(X)$. We also denote by $\Hr2(X)$ the graph $\Gr2(X)$ without the old interface vertices and the matching edges. Notice that the sets of the interface vertices are now defined as follows: $\Gamma_{\Gr2}(\Gr 2(X))=\Gamma(X)$, and $\Gamma_{\Gr 2}(\Hr 2(X))=\Gamma'(X)$.

Let $\Gr2$ be the resulting graph. Notice that for each $X\in \xset$, $\Hr2(X)$ still has properties (P1) and (P2), it is $\alpha^*$-well linked w.r.t. the new interface vertices, and does not contain any $1$-vertex cuts. The maximum vertex degree did not increase.

\begin{figure}[h]
\scalebox{0.3}{\includegraphics{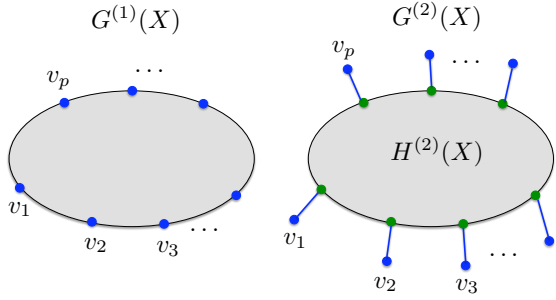}} \caption{Step 2: (a) before, (b) after. Old interface vertices and matching edges are blue, new interface vertices are green.} \label{fig: Step 2}
\end{figure}

Notice that graph $\Gr2(X)$ looks very similar to $Z'_X$: namely, $\Hr2(X)$ is $\alpha^*$-well-linked w.r.t. the new interface vertices, and it has properties (P1) and (P2). Moreover, if we replace $\Hr2(X)$ with the grid $Z_X$, we will obtain precisely $Z'_X$. In particular, if we prove that there is a good drawing of the graph $\Gr2$, such that for each $X\in \xset$, only the matching edges of $\Gr2(X)$ participate in crossings, this would imply the existence of the required canonical drawing for the contracted graph $H$. This is indeed what we do in the rest of the proof. For now, we need to prove the following lemma.

\begin{lemma}
$\opt_{cr}(\Gr2)\leq \frac{20 \dmax^3}{\alpha^*}\cdot \opt_{cr}(\Gr1)\leq \frac{20 \dmax^3}{\alpha^*}\cdot \opt_{cr}(\G)$. 
\end{lemma}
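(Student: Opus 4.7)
The plan is to construct a drawing $\phi_2$ of $\Gr2$ from an optimal drawing $\phi_1$ of $\Gr1$ by performing a purely local modification around each interface vertex, then charging the newly introduced crossings to crossings already present in $\phi_1$. For each vertex $v \in \bigcup_{X \in \xset'} \Gamma(X)$, shared by clusters $X_1,\ldots,X_k$ with $k \leq \dmax$, I pick a sufficiently small disk $D_v$ around the image of $v$ in $\phi_1$ so that no other vertex is drawn in $D_v$ and every edge incident to $v$ crosses $\partial D_v$ exactly once. Inside $D_v$, I place the new copies $v_{X_1},\ldots,v_{X_k}$ very close to $v$, draw each matching edge $(v,v_{X_i})$ as a short curve, and reroute every edge $e$ that used to end at $v$ so that it now ends at $v_{X_i}$, where $X_i$ is the cluster that $e$ belongs to (edges of $\G$ going to vertices outside $S$ remain attached to $v$). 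All changes happen strictly inside $D_v$, so every crossing of $\phi_1$ survives unchanged in $\phi_2$, and the only new crossings lie inside the disks $D_v$.

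Next I count the new crossings inside a single $D_v$. Inside $D_v$ there are at most $\deg(v)+k \leq 2\dmax$ curve segments, and each pair can cross at most once, so the number of added crossings inside $D_v$ is at most $\binom{2\dmax}{2} \leq 2\dmax^2$. Let $\bad(v)$ denote the indicator that some crossing is actually forced inside $D_v$. Then
\[
\cro_{\phi_2}(\Gr2) \;\leq\; \cro_{\phi_1}(\Gr1) \;+\; 2\dmax^2 \cdot \bigl|\{v : \bad(v)=1\}\bigr|.
\]
The core observation is that $\bad(v)=1$ only when the cyclic ordering of edges around $v$ in $\phi_1$ ``interleaves'' the clusters $X_1,\ldots,X_k$: if for each $i$ the edges going into $\Hr2(X_i)$ appear contiguously around $v$, we can place $v_{X_i}$ on that side and route all rerouted edges and matching edges without crossings. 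Thus $\bad(v)=1$ forces interleaving of two cluster groups at $v$.

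The hard part is bounding the number of bad vertices by the crossing number of $\Gr1$. I will argue that interleaving at $v$ forces at least one crossing of $\phi_1$ to be ``charged'' to $v$. Fix a cluster $X_i$ containing $v$, and recall that $\Gr1(X_i)$ has a planar drawing $\pi(X_i)$ in which all of $\Gamma(X_i)$ lies on the outer face (Property (P2)), and is $\alpha^*$-well-linked. Hence in any planar-style embedding of $\Gr1(X_i)$ respecting the rotation system of $\pi(X_i)$, the edges of $\Gr1(X_i)$ incident on $v$ are contiguous in the cyclic order around $v$. If this contiguity is violated in $\phi_1$, the induced drawing of $\Gr1(X_i)$ differs from $\pi(X_i)$ in the rotation at $v$, and a standard Euler-genus / interchange argument (combined with the $\alpha^*$-well-linkedness of $\Gr1(X_i)$, which forces $\Omega(\alpha^*/\dmax)$ edge-disjoint paths between any equal-sized halves of $\Gamma(X_i)$) forces at least $\Omega(\alpha^*/\dmax)$ crossings in $\phi_1$ that involve edges of $\Gr1(X_i)$ incident on $v$ or on its close neighbors in $\pi(X_i)$.

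Each such forced crossing in $\phi_1$ is charged to $v$, and each crossing of $\phi_1$ can be charged this way at most $O(\dmax)$ times (its two edges touch $O(\dmax)$ interface vertices together). Summing over all bad vertices yields
\[
\bigl|\{v : \bad(v)=1\}\bigr| \;\leq\; \frac{O(\dmax^2)}{\alpha^*}\cdot \opt_{cr}(\Gr1),
\]
and combining with the per-vertex $2\dmax^2$ bound gives $\opt_{cr}(\Gr2) \leq \opt_{cr}(\Gr1) + \frac{O(\dmax^4)}{\alpha^*}\opt_{cr}(\Gr1)$. A slightly more careful bookkeeping (distinguishing whether the charged crossing involves one or two relevant clusters, and using the fact that only $\dmax$ clusters can share a vertex) removes one factor of $\dmax$ to yield the stated bound $\opt_{cr}(\Gr2) \leq \frac{20\dmax^3}{\alpha^*}\opt_{cr}(\Gr1)$. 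The inequality $\opt_{cr}(\Gr1)\leq \opt_{cr}(\G)$ is immediate since $\Gr1$ is obtained from $\G$ by deleting vertices and edges in Step 1.
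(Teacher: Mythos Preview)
Your high-level setup matches the paper's: take an optimal drawing $\phi_1$ of $\Gr1$, perform a local split inside a small disk around each interface vertex, and observe that new crossings arise only when, for some cluster $X$, the edges of $\Gr1(X)$ incident to $v$ fail to be contiguous in the rotation at $v$. The paper also bounds the cost per such bad pair $(X,v)$ by $\dmax^2$, exactly as you do.

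The genuine gap is in your bound on the number of bad vertices. Your claim that an interleaved rotation at $v$ ``forces at least $\Omega(\alpha^*/\dmax)$ crossings in $\phi_1$'' is not justified by any ``standard Euler-genus / interchange argument,'' and in fact a single wrong rotation at one vertex need not force more than a single crossing; well-linkedness is a global flow condition and does not by itself amplify one local rotation defect into many crossings. Your subsequent accounting (each crossing charged $O(\dmax)$ times) then yields only $O(\dmax^4/\alpha^*)$, and the closing remark that ``slightly more careful bookkeeping removes one factor of $\dmax$'' is pure assertion.

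The paper's argument runs in the opposite direction: it charges each bad pair $(X,v)$ to a \emph{single} crossing of $\phi_1$, and uses well-linkedness only to bound how many times a given crossing can be charged. Concretely, if $e_1,e_1'\in E(\Gr1(X))$ and $e_2,e_2'\notin E(\Gr1(X))$ alternate around $v$ in $\phi_1$, then $e_1,e_1'$ can be chosen to lie on the boundary of a common inner face $F$ of the planar drawing $\pi'_X$; since $\Gr1(X)$ is $2$-connected, $\partial F$ is a simple cycle $C$. By property~(P1), the outer endpoints of $e_2,e_2'$ are joined by a path $P$ in $\Gr1\setminus\Gr1(X)$, and the closed curve $e_2\cup P\cup e_2'$ separates $e_1$ from $e_1'$ near $v$, so some edge of $C$ must cross some edge of $P$ in $\phi_1$. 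That crossing is charged to $v$. Well-linkedness now enters through a separate structural claim: every inner face of $\pi'_X$ has at most $5\dmax/\alpha^*$ interface vertices on its boundary (two far-apart interface vertices on the same inner face would give a sparse $2$-vertex cut of $\Gr1(X)$). Since each edge of $\Gr1(X)$ lies on at most two faces, any crossing involving such an edge is charged at most $10\dmax/\alpha^*$ times, giving $|V^*(X)|\le (10\dmax/\alpha^*)\,\cro_{\phi_1}(E(\Gr1(X)),E(\Gr1))$, and summing over $X$ yields the $20\dmax^3/\alpha^*$ factor. Your argument is missing both ingredients: the use of property~(P1) to exhibit the external path that must cross the face boundary, and the face-size bound that controls over-charging.
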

\begin{proof}
Let $\phi_1$ be the optimal drawing of $\Gr1$, and let $\pi'_X$, for $X\in\xset'$, be the planar embedding of $\Gr1(X)$, with all interface vertices lying on the boundary of the outer face $\fout$. (Recall that $\pi'_X$ may be different from the drawing of $X$ induced by $\phi_1$). Notice that $\Gr1(X)$ is $2$-vertex connected, and so the boundary of every face in $\pi'_X$ is a simple cycle.

\begin{claim}\label{claim:x}
Let $F$ be any face, other than the outer face $\fout$, in the drawing $\pi'_X$ of $\Gr1(X)$. Then there are at most $5\dmax/\alpha^*$ interface vertices of $\Gamma(X)$ on the boundary of $F$ in $\pi_X'$.
\end{claim}

\begin{proof}
Assume otherwise, and let $\Gamma^*\sse \Gamma(X)$ be the set of interface vertices lying on the boundary of some face $F\neq \fout$ of $\pi'_X$. Denote $\Gamma^*=\set{v_1,\ldots,v_p}$, where $p>5\dmax/ \alpha^*$. Let $\gamma$ be the boundary of the outer face of $\pi_X'$, and let $\gamma'$ be the boundary of $F$. Then all vertices of $\Gamma^*$ lie on $\gamma\cap \gamma'$. Assume that they appear in the order $(v_1,\ldots,v_p)$ on $\gamma$. Then the two vertices $v_1,v_{\lceil p/2\rceil}$ are a $2$-vertex cut in $\Gr1(X)$, that separate two sets of more than $2\dmax/\alpha^*$ of interface vertices from each other (see Figure~\ref{fig:claim}). This is impossible since the degree of each vertex is at most $\dmax$, and the interface vertices are $\alpha^*$ well-linked in $\Gr1(X)$.

\begin{figure}[h]
\scalebox{0.3}{\rotatebox{0}{\includegraphics{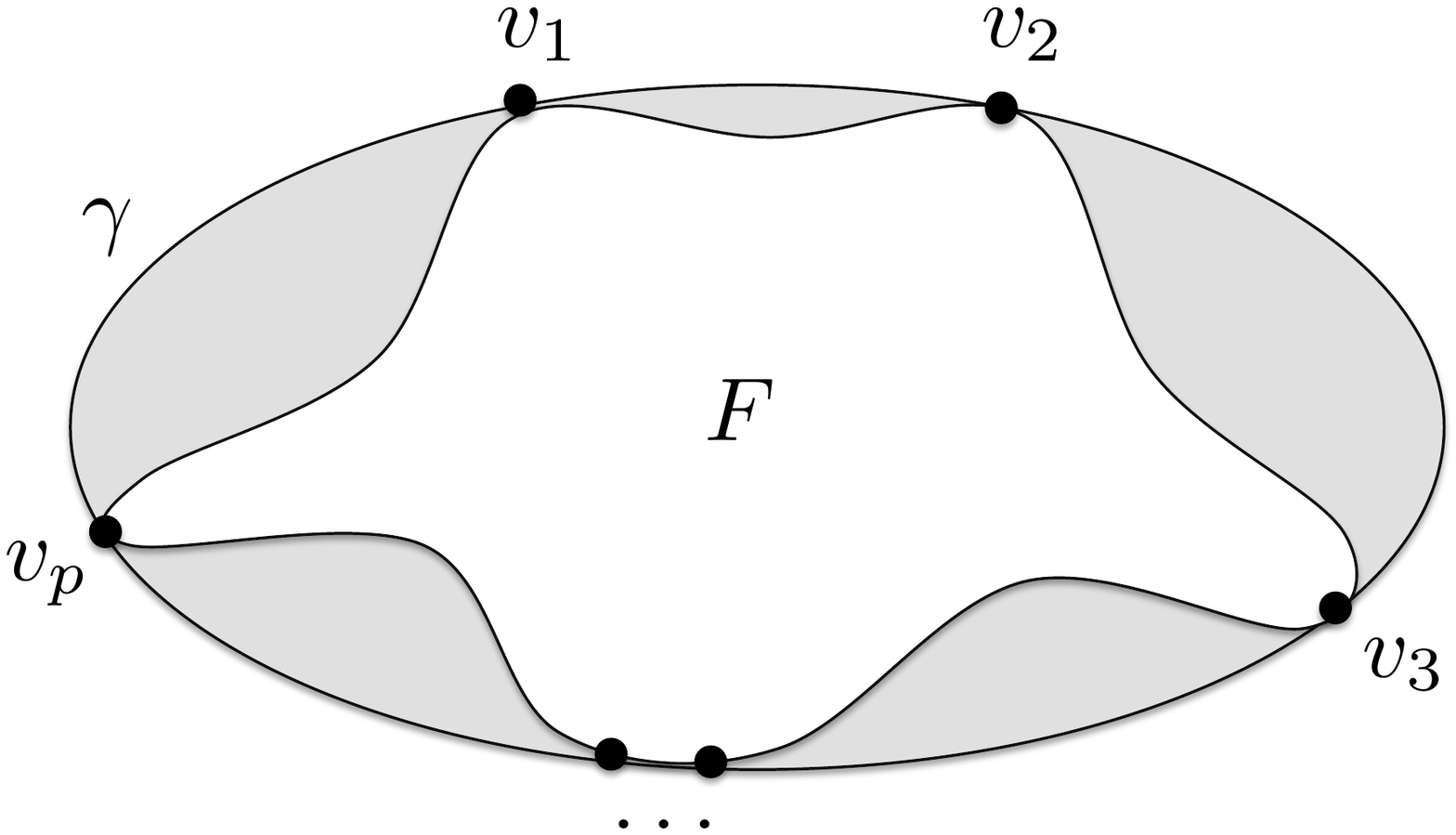}}} \caption{Illustration for Claim~\ref{claim:x}.\label{fig:claim}}
\end{figure}

\end{proof}

Consider now the optimal drawing $\phi_1$ of $\Gr1$, fix some $X\in \xset'$, and let $u\in \Gamma(X)$. We partition the edges incident on $u$ in $\Gr1$ into two subsets: $E_u'$ are the edges that belong to $\Gr1(X)$, and $E_u''$ are the remaining edges. If the images of edges $E_u'$, as they enter vertex $u$, appear consecutively in $\phi_1$, then splitting $u$ into two vertices $u,u_X$ and adding the edge $(u,u_X)$  does not create any new crossings. If the edges in $E_u'$ do not appear consecutively in $\phi_1$, then we add $u$ to a set $V^*(X)$. In this case we may have to pay up to $\dmax^2$ for splitting $u$ into two vertices. It is therefore enough to show that for all $X\in \xset'$:

\[|V^*(X)|\leq \frac{10\dmax}{\alpha^*}\cro_{\phi_1}(E(\Gr1(X)),E(\Gr1))\]

Consider again some vertex $u\in V^*(X)$. Since the edges of $E_u'$ do not appear consecutively in $\phi_1$, there must be two edges $e_1,e_1'\in E'_u$, and two edges $e_2,e_2'\in E_u''$, such that their ordering is $(e_1,e_2,e_1',e_2')$ in $\phi_1$, and moreover, there is a face $F\neq \fout$ in $\pi'_X$, such that $e_1,e_1'$ lie on the boundary of this face. Let $e_2=(u,v)$, $e_2'=(u,v')$. Since graph $\Gr1(X)$ has property (P1), there must be a path $P$ connecting $v$ to $v'$ in $\Gr1\setminus \Gr1(X)$, and since the boundary of $F$ is a simple cycle, denoted by $C$, the image of path $P$ in $\phi_1$ must cross the image of at least one edge of $C$. We charge this crossing for $u$. Since the face $F$ may only contain at most $5\dmax/\alpha^*$ interface vertices, and every edge of $\Gr1(X)$ participates in at most two faces of $\pi'_X$, each such crossing will be charged at most $10\dmax/\alpha^*$ times for $X$.
\end{proof}

\paragraph{Step 3: getting rid of $2$-cuts}

Fix some $X\in \xset'$, and consider the graph $\Hr2(X)$. We will construct a graph $\Hr3(X)$, by deleting some edges and vertices from $\Hr2(X)$, and contracting some $2$-paths. Graph $\Gr3(X)$ is then obtained by adding the matching edges back to $\Hr3(X)$ (or, equivalently, replacing $\Hr2(X)$ with $\Hr3(X)$ in $\Gr2(X)$). This will give the graph $\Gr3$.

\begin{definition} Let $H'$ be any graph and $T$ any subset of vertices of $H'$. We say that $H'$ has property (P3) w.r.t. $T$ and parameter $\beta$, iff there is a flow $\fset$ in $H'$, in which every pair $v,v'\in T$ sends one flow unit to each other, and the congestion on every vertex is at most $|T|\cdot \beta$.
\end{definition}

The two notions, of well-linkedness and property (P3), are closely related to each other.
Specifically, from Observation~\ref{observation: existence of flow in well-linked instance},
if $H'$ is any graph, and $S$ is an $\alpha$-well-linked set in $H'$, then $H'[S]$ has property (P3) w.r.t. $\Gamma(S)$, with parameter $\betaFCG\cdot \dmax/\alpha$. On the other hand, if $H'[S]$ has property (P3) w.r.t. $\Gamma(S)$ and some parameter $\beta$, then $S$ is $1/(2\beta\dmax)$ well-linked. Indeed, consider any partition $(S_1,S_2)$ of $S$, and let $T_1=\out(S)\cap \out(S_1)$, and $T_2=\out(S)\cap \out(S_2)$. Similarly, let $\Gamma_1=\Gamma(S)\cap \Gamma(S_1)$, and $\Gamma_2=\Gamma(S)\cap \Gamma(S_2)$. Assume w.l.o.g. that $|\Gamma_1|\leq |\Gamma_2|$. Then $|\Gamma_1|\geq |T_1|/\dmax$, and from Property (P3), we can send at least $|\Gamma_1|\cdot |\Gamma_2|$ flow units between the two sets, with congestion at most $|\Gamma(S)|\beta$ on vertices and edges. Therefore, $|E(S_1,S_2)|\geq \frac{|\Gamma_1|\cdot |\Gamma_2|}{|\Gamma(S)|\beta}\geq \frac{|T_1|}{2\beta\dmax}$. Using the same reasoning, it is easy to show that for the special case where every vertex in $\Gamma(S)$ is incident on exactly one edge in $\out(S)$, and $H'[S]$ has property (P3) w.r.t. $\Gamma(S)$ and parameter $\beta$, set $S$ is $1/(2\beta)$-well linked.

So far we have only worked with well-linkedness, but for the analysis of Step 3, property (P3) is more convenient.

As $\Hr2(X)$ is $\alpha^*$-well-linked, it also has property (P3) w.r.t. $\Gamma'(X)$ with parameter $\beta^*= \beta_{FCG}\cdot \dmax/\alpha^*=O(\log^{5/2}n\cdot \log\log n\cdot \dmax)$. We will ensure that throughout this step, the graph will retain this property, and so will the final graph $\Hr3(X)$. This in turn will imply that $\Hr3(X)$ is well-linked. 

\begin{definition} Given any graph $H'$, let $(u,v)$ be any $2$-vertex cut in $H'$, and let $C_1,\ldots,C_{q'}$ be the connected components of $H'\setminus \set{u,v}$. For each $C_r$, $1\leq r\leq q'$, let $C'_r$ be the sub-graph of $H'$ induced by $V(C_r)\cup\set{u,v}$. We then define $\cset_{u,v}=\set{C'_1,\ldots,C'_{q'}}$. For each $1\leq r\leq q'$, the vertices of $V(C'_r)\setminus\set{u,v}$ are called the \emph{inner vertices} of $C'_r$.
\end{definition}

Notice that if $H'$ has property (P3) w.r.t. some set $S$ and parameter $\beta^*$, and $|S|\geq 12\beta^*$, then there can be at most one cluster $C\in \cset_{u,v}$ that contains more than $4\beta^*$ vertices of $S$: assume otherwise, and let $C,C'\in \cset_{u,v}$ be two clusters, containing more than $4\beta^*$ vertices of $S$ each (notice that it is possible that $u,v\in S$). Assume w.l.o.g. that $|C'\cap S|\leq |C\cap S|$, so that $|S\setminus C|\geq (|S|-2)/2$. Then because of Property (P3), the total amount of flow leaving the cluster $C'$ is at least $|C'\cap S|\cdot \frac{|S|-2} 2\geq (4\beta^*+1)\frac{|S|-2} 2= 2\beta^*|S|+\frac {|S|}2-4\beta-1>2\beta^*|S|$, and all these flow-paths have to contain either $u$ or $v$. This is impossible, since the load on each one of these vertices is at most $\beta^*|S|$.
We call the unique cluster $C\in {\cal C}_{u,v}$ with $|C\cap S|>4\beta^*$ \emph{the main cluster} of $\cset_{u,v}$ (if such cluster exists).

We start with the following simple operation on the graph $\Hr2(X)$: while there is a $2$-cut $(u,v)$, with some cluster $C\in\cset_{u,v}$, such that $C$ does not contain vertices of $\Gamma'(X)$ as inner vertices, we replace $C$ with an edge $(u,v)$. Notice that this procedure has no influence on properties (P1),(P2) and (P3). Let $H'$ be the resulting graph. Then $H'$ has properties (P1), (P2) and property (P3), w.r.t. $\Gamma'(X)$ and parameter $\beta^*$. Let $\pi_{H'}$ denote the planar drawing of $H'$, in which all vertices of $\Gamma'(X)$ lie on the boundary $\gamma$ of the outer face. Then for any $2$-separator $(u,v)$ in $H'$, each cluster $C\in \cset_{u,v}$ must contain at least one vertex $x\in\Gamma'(X)$ as its inner vertex (that is, $x\neq u,v$). Since all vertices in $\Gamma'(X)$ lie on the boundary $\gamma$ of the outer face of $\pi_{H'}$, it follows that $\cset_{u,v}$ may only contain two clusters, $C_{u,v},C'_{u,v}$, and both vertices $u$ and $v$ have to lie on $\gamma$. Assume that $|\Gamma'(X)|\geq 12\beta^*$, and assume w.l.o.g. that $C_{u,v}$ is the main cluster of $\cset_{u,v}$. Let $\Gamma'_{u,v}=\Gamma'(X)\cap C'_{u,v}$. Recall that $|\Gamma'_{u,v}|\leq 4\beta^*$. We say that a $2$-cut $(u,v)$ is \emph{maximal}, iff $\Gamma'_{u,v}$ is not contained in any other set $\Gamma'_{u',v'}$ for any other $2$-separator $(u',v')$ of $H'$. Let $\rset$ denote the set of all maximal clusters $C'_{u,v}$. We now replace each cluster $C'_{u,v}\in \rset$ with a path $Q_{u,v}$, whose only vertices are $\Gamma'_{u,v}$, and they appear on $Q_{u,v}$ in exactly the same order as on the boundary $\gamma$ of the outer face of $\pi_{H'}$ (see Figure~\ref{fig: Step 3}).
Observe that each such path $Q_{u,v}$ must contain at least one vertex of $\Gamma'(X)$ as its inner vertex, and so all vertices of $Q_{u,v}$ lie on $\gamma$.
We denote by $M'=\set{(u,v)\mid C'_{u,v}\in \rset}$, and $S^{(2)}=\set{v\mid\exists u: (u,v)\in M'}$. Let $H''$ be the resulting graph. We do not allow $H''$ to contain parallel edges, and if such edges have been introduced, we simply remove them. This procedure will not affect property (P3), since the congestion is measured on vertices. 

For the case where $|\Gamma'(X)|<12\beta^*$, we let $H''$ be simply a cycle, whose only vertices are the vertices of $\Gamma'(X)$, that appear on the cycle in the same order as on the boundary of the outer face of $\pi_{H'}$.

\begin{figure}[h]
\scalebox{0.3}{\includegraphics{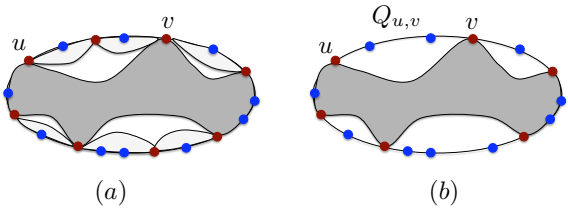}} \caption{Step 3: (a) before, (b) after. New interface vertices are blue, 2-separators are red.} \label{fig: Step 3}
\end{figure}

\begin{claim}\label{claim: graph H3 has property P3} Graph $H''$ has property (P3) w.r.t. $\Gamma'(X)$, and parameter $12\beta^*$. Therefore, it is $1/(24\beta^*)$-well-linked.
\end{claim}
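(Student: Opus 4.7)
My overall strategy is to build a flow $\mathcal{F}'$ in $H''$ witnessing property (P3) with parameter $12\beta^*$; the well-linkedness conclusion then follows immediately from the standard implication noted right before the claim's statement (each $v\in \Gamma'(X)$ is incident to exactly one outside matching edge in $\Gr3(X)$, so a (P3)-flow with parameter $12\beta^*$ yields $\alpha$-well-linkedness with $\alpha=1/(24\beta^*)$). I split the argument on $|\Gamma'(X)|$ against the threshold $12\beta^*$ used in the construction.

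If $|\Gamma'(X)|<12\beta^*$, then by construction $H''$ is a cycle on $\Gamma'(X)$ in cyclic order along $\gamma$. Routing each pair along its shorter arc gives congestion at most $|\Gamma'(X)|^2/4<3\beta^*\cdot |\Gamma'(X)|$ on every vertex, which is well below $12\beta^*\cdot|\Gamma'(X)|$.

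The main case is $|\Gamma'(X)|\geq 12\beta^*$. I start from the flow $\mathcal{F}$ in the intermediate graph $H'$ obtained just before the maximal $2$-cut replacements (the preliminary contractions that produce $H'$ from $\Hr2(X)$ delete only clusters with no inner interface vertices, so they preserve property (P3) with parameter $\beta^*$). I then edit $\mathcal{F}$ into a flow $\mathcal{F}'$ in $H''$ by rerouting every flow path that visited a removed cluster $C'_{u,v}$: if the path entered and exited $C'_{u,v}$ through $\{u,v\}$ without touching any $x\in\Gamma'_{u,v}$, I detour it through an arbitrary $u$-$v$ path inside the main cluster (which exists since the main cluster is connected); if the path had an endpoint or visited a vertex $x\in\Gamma'_{u,v}$, I detour the corresponding piece along $Q_{u,v}$ to the appropriate endpoint $u$ or $v$ of $Q_{u,v}$. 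Because distinct maximal $2$-cuts partition the vertex set of their non-main clusters (by maximality and the planar structure of $\pi_{H'}$ on the outer face), these rerouting operations are independent and the resulting $\mathcal{F}'$ is a valid flow in $H''$.

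For the congestion analysis, vertices outside all $Q_{u,v}$'s and outside all cut pairs $\{u,v\}$ carry at most their original $\mathcal{F}$-congestion, i.e.\ $\beta^*|\Gamma'(X)|$. A cut vertex $u$ picks up at most $|\Gamma'_{u,v}|\cdot|\Gamma'(X)|\le 4\beta^*|\Gamma'(X)|$ extra congestion from pairs re-routed via $Q_{u,v}$, staying well within the bound. A vertex $x\in\Gamma'_{u,v}$ lying on $Q_{u,v}$ collects congestion from three sources: (a) the $2(|\Gamma'(X)|-1)$ pairs with $x$ as endpoint, (b) at most $|\Gamma'_{u,v}|^2\le 16(\beta^*)^2$ intra-$\Gamma'_{u,v}$ paths, and (c) at most $|\Gamma'_{u,v}|\cdot|\Gamma'(X)|\le 4\beta^*|\Gamma'(X)|$ paths crossing through $x$ between $\Gamma'_{u,v}$ and the outside. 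Using $|\Gamma'(X)|\ge 12\beta^*$ to absorb (b) into a $\beta^*|\Gamma'(X)|$ term, the total is at most $12\beta^*|\Gamma'(X)|$. The main obstacle I anticipate is precisely this congestion bookkeeping on vertices of $Q_{u,v}$ and on the cut vertices $u,v$: making sure the rerouting is done in a way that does not double-count paths across several nested maximal $2$-cuts, and exploiting both the smallness $|\Gamma'_{u,v}|\le 4\beta^*$ of each non-main cluster and the largeness $|\Gamma'(X)|\ge 12\beta^*$ of the full interface to close the inequality.
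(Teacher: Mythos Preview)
Your overall strategy is the same as the paper's: start from the (P3)-flow $\mathcal{F}$ in $H'$ and reroute, inside each replaced cluster $C'_{u,v}$, every flow path that used its inner vertices. The small case $|\Gamma'(X)|<12\beta^*$ and the final well-linkedness deduction are fine.

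There is, however, a genuine gap in the main case. For the ``bridge'' paths (those that enter $C'_{u,v}$ at $u$, exit at $v$, and avoid $\Gamma'_{u,v}$) you detour through an arbitrary $u$--$v$ path inside the \emph{main} cluster $C_{u,v}$. But then your later assertion that ``vertices outside all $Q_{u,v}$'s and outside all cut pairs $\{u,v\}$ carry at most their original $\mathcal{F}$-congestion'' is false: every inner vertex of that detour path absorbs the entire bridge flow for $(u,v)$, and a fixed vertex of the core may lie on such detour paths for \emph{many} different maximal $2$-cuts (the main clusters all overlap, even though the small clusters $C'_{u,v}$ are pairwise disjoint). Since the number of maximal $2$-cuts is not bounded by a constant, this extra load is uncontrolled and the argument breaks.

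The paper avoids this by rerouting \emph{all} flow that visited inner vertices of $C'_{u,v}$---bridge paths included---along $Q_{u,v}$. The key observation is that every such path must pass through $u$ or $v$, so the total rerouted flow is at most the sum of the $\mathcal{F}$-congestions on $u$ and $v$, i.e.\ $2\beta^*|\Gamma'(X)|$; adding the at most $16(\beta^*)^2\le 4\beta^*|\Gamma'(X)|$ intra-$\Gamma'_{u,v}$ pairs gives load $\le 6\beta^*|\Gamma'(X)|$ on each vertex of $Q_{u,v}$, and vertices outside all $Q_{u,v}$'s genuinely keep their original congestion $\beta^*|\Gamma'(X)|$. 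Your case split (a)/(b)/(c) can be salvaged by sending the bridge paths through $Q_{u,v}$ as well (adding a fourth term $\le 2\beta^*|\Gamma'(X)|$), but the paper's accounting via the $\{u,v\}$-congestion is both simpler and what actually closes the inequality.
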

\begin{proof}
If $|\Gamma'(X)|\leq 12\beta^*$, the claim is clearly true as $H''$ is connected. We therefore assume that $|\Gamma'(X)|>12\beta^*$.
Consider any cluster $C'_{u,v}\in \rset$. Since $H'$ had property (P3) with parameter $\beta^*$, the total amount of flow in $\fset$, routed on flow-paths  that contain either $u$ or $v$, and visit inner vertices of $C'_{u,v}$ is  at most $2\beta^*|\Gamma'(X)|$. All such flow-paths can be re-routed via the path $Q_{u,v}$, causing total load of at most $2\beta^*|\Gamma'(X)|$ on the vertices of $Q_{u,v}$. Additionally, for every pair $(t,t')\in \Gamma'_{u,v}$ sends one flow unit to each other. We again re-route all such flow-paths along $Q_{u,v}$. As $|\Gamma'_{u,v}|\leq 4\beta^*$, the total number of such pairs is at most $16(\beta^*)^2\leq 4\beta^*\cdot |\Gamma'(X)|$, and so the total load on any vertex becomes at most $6\beta^*|\Gamma'(X)|$.
\end{proof}

We now denote $\Hr3(X)=H''$, and $\Gr3(X)$ is the graph obtained from $\Hr3(X)$ after we add all matching edges back to it. Let $\Gr3$ be the resulting whole graph. Notice that $\Hr3(X)$ is obtained from $\Hr2(X)$ by performing a series of steps, where each step either deletes vertices or edges from the graph, or contracts a $2$-path. Such operations do not increase the crossing number of the graph, and so $\optcro{\Gr3}\leq \optcro{\Gr2}\leq  \frac{20 \dmax^3}{\alpha^*}\cdot \opt_{cr}(\G)$.
It is also easy to see that both $\Gr3(X)$ and $\Hr3(X)$ retain properties (P1) and (P2), have no $1$-vertex separators, and we have established above that $\Hr3(X)$ is $1/(12\beta^*)$-well-linked.

In general, this concludes the third step. Notice however that $\Hr3(X)$ still contains $2$-separators, and we will need to deal with them when bounding the number of irregular vertices and edges. We establish a few more structural properties of the graph $H''=\Hr3(X)$, in the next three observations. These properties will be used when bounding the number of irregular $2$-separator vertices in $\Hr3(X)$.

\begin{observation} \label{observation: characterization of 2-cuts} Assume that $|\Gamma'(X)|>12\beta^*$.
Let $(x,y)$ be any $2$-vertex cut in $H''$. Then $x,y\in Q_{u,v}$, for some $(u,v)\in M'$.
\end{observation}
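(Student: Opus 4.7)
The plan is to exploit the planar structure of $H''$ together with the fact that the path-replacement procedure has absorbed every maximal $2$-cut of $H'$. First, I would observe that the planar embedding $\pi_{H'}$ induces a planar embedding of $H''$ by drawing each path $Q_{u,v}$ along the arc of the outer boundary $\gamma$ previously occupied by $C'_{u,v}$; in this embedding, all vertices of $\Gamma'(X)$ continue to lie on the outer boundary, and the inner vertices of each $Q_{u,v}$ form a contiguous sub-arc. Combined with the observation that $H''$ is $2$-vertex-connected (the replacement of a non-main cluster $C'_{u,v}$ by a path connecting $u$ and $v$ cannot create a $1$-cut in an otherwise $2$-connected graph), this gives a clean planar structure to reason about.

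The central technical claim is that the graph $\tilde H$ obtained from $H''$ by contracting each path $Q_{u,v}$ to a single edge between $u$ and $v$ is $3$-vertex-connected. To establish this I would classify the $2$-cuts of $H'$ using planarity: in $\pi_{H'}$ every $2$-cut $(a,b)$ has both endpoints on $\gamma$, and the non-main cluster $C'_{a,b}$ corresponds to a single arc of $\gamma$ strictly between $a$ and $b$ on one side. For any such $(a,b)$, this non-main arc is nested inside the non-main arc of some maximal $2$-cut $(u,v) \in M'$, which forces $C'_{a,b} \subseteq C'_{u,v}$; hence the replacement of $C'_{u,v}$ by $Q_{u,v}$ absorbs the cut $(a,b)$ as well. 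After all such contractions, no $2$-cut of $H'$ survives in $\tilde H$, so $\tilde H$ is $3$-connected. Verifying this nesting property is the main obstacle I expect, since it requires a careful but elementary argument about the outerplanar-like embedding in which all $\Gamma'(X)$-vertices lie on $\gamma$.

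Given that $\tilde H$ is $3$-connected, the observation follows from a short case analysis on the locations of $x$ and $y$ relative to the paths $Q_{u,v}$. If both $x$ and $y$ lie outside the interior of every $Q_{u,v}$, they correspond to distinct vertices of $\tilde H$, and a $2$-cut at $\{x,y\}$ in $H''$ lifts to a $2$-cut in $\tilde H$, contradicting $3$-connectedness. If $x$ is an inner vertex of some $Q_{u,v}$ but $y$ is not on $Q_{u,v}$, then $\tilde H \setminus \{y\}$ is $2$-connected (in particular $u$ and $v$ lie in the same component); uncontracting the paths and then removing the degree-$2$ vertex $x$ from $Q_{u,v}$ still leaves the two resulting sub-paths attached through $u$ and $v$, so $H'' \setminus \{x,y\}$ is connected, contradicting that $\{x,y\}$ is a $2$-cut. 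An analogous argument rules out $x$ and $y$ being inner vertices of two distinct paths. The only remaining possibility is that $x$ and $y$ both lie on a single path $Q_{u,v}$ with $(u,v) \in M'$, which is precisely the conclusion of the observation.
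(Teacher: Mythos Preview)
Your approach is essentially correct and takes a genuinely different route from the paper. The paper argues by a direct three-case analysis on the positions of $x$ and $y$ relative to the paths $Q_{u,v}$, and in its Case~2 (one vertex inner on some $Q_{u,v}$, the other off that path) it explicitly invokes the bound $|\Gamma'_{u,v}|\le 4\beta^*$ together with an outer-boundary counting argument to force $|\Gamma'(X)|\le 12\beta^*$, contradicting the hypothesis. You instead package the structural content into the single claim that $\tilde H$ (the graph obtained from $H''$ by contracting every $Q_{u,v}$ to an edge) is $3$-connected, after which the case analysis becomes purely graph-theoretic: in your mixed cases you use that $\tilde H\setminus\{y\}$ is $2$-connected (so $u$ and $v$ remain joined by a $u$--$v$ path avoiding the contracted edge), and for two inner vertices on distinct paths you use that $3$-vertex-connectivity implies $3$-edge-connectivity. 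These deductions are sound.

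Two points are worth noting. First, your claim that $\tilde H$ is $3$-connected is actually equivalent to the observation itself: a $2$-cut $(a,b)$ of $\tilde H$ lifts to a $2$-cut of $H''$ with neither vertex inner on any $Q$, and conversely. So you are not reducing to an easier statement but reformulating; what you gain is a cleaner separation between the structural work and the case analysis. Second, your justification of $3$-connectedness (that every $2$-cut of $H'$ has its non-main arc nested inside that of some maximal $(u,v)\in M'$, so one of its endpoints is absorbed) requires two ingredients you leave implicit: that any $2$-cut of $\tilde H$ is already a $2$-cut of $H'$ (true, since replacing $C'_{u,v}$ by an edge preserves core connectivity), and that $\Gamma'_{a,b}\subsetneq\Gamma'_{u,v}$ forces $a$ or $b$ to be an inner vertex of $C'_{u,v}$. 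The paper's own Case~1 asserts this last implication without proof as well, so you are no less rigorous than the paper here; but it is the place where the outerplanar arc structure (both endpoints of any $2$-cut lie on $\gamma$, and each side meets $\gamma$ in a single arc) must be invoked carefully. Your approach has the virtue of never touching the quantitative bound $4\beta^*$ in the case analysis; the hypothesis $|\Gamma'(X)|>12\beta^*$ enters only upstream, to make the main/non-main distinction well-defined in the construction of $H''$.
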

\begin{proof}
Assume otherwise. We consider three cases. The first case is when none of the vertices $x,y$ is an inner vertex on any path $Q_{u,v}$ for any $(u,v)\in M'$. Then $(x,y)$ was a $2$-separator in $H'$. Since $(x,y)\not\in M'$, it was not a maximal cut. This means that either $x$ or $y$ must be an inner vertex in some $C'_{u,v}\in \rset$, and this is a contradiction to $(x,y)$ surviving in $H''$.

The second case is when exactly one of the vertices $x,y$ is an inner vertex on some path $Q_{u,v}$, for some $(u,v)\in M'$. Assume w.l.o.g. that it is $x$. Then $y\neq u,v$, and moreover, either $(y,v)$ is a $2$-cut in $H''$, or there is an edge $(y,v)$ in $H''$, that lies on the boundary $\gamma'$ of the outer face of the drawing of $H''$. Similarly, either $(y,u)$ is a $2$-cut in $H'$, or there is an edge $(y,u)$ in $H''$ that lies on $\gamma'$. If $(y,v)$ is a $2$-cut in $H''$, but $(y,v)\not\in M'$, then we obtain Case 1. Similarly, if $(y,u)$ is a $2$-cut in $H''$, but $(y,u)\not\in M'$, then we obtain Case 1. Therefore, we can assume that for each $z\in \set{u,v}$, either there is an edge $(y,z)$ in $H''$ that belongs to $\gamma'$, or $(y,z)\in M'$. In either case, the segment of $\gamma'$ between $y$ and $z$, which does not contain $x$, can contain at most $4\beta^*$ vertices of $|\Gamma'(X)|$. Since $C'_{u,v}$ also contains at most $4\beta^*$ vertices of $\Gamma'(X)$, this means that $|\Gamma'(X)|\leq 12\beta^*$, a contradiction.

The third case is when $x$ is an inner vertex on some path $Q_{u,v}$ and $y$ is an inner vertex on some path $Q_{u',v'}$, where $(u,v),(u',v')\in M'$. If $(u,v)=(u',v')$, then we again get two clusters that contain all vertices in $\Gamma'(X)$, but have at most $4\beta^*$ vertices of $\Gamma'(X)$ each, contradicting that $|\Gamma'(X)|>12\beta^*$.
So assume w.l.o.g. that $u'\not\in \set{u,v}$. Then $(u',x)$ is a $2$-cut in $H''$, and we obtain the second case.
\end{proof}

Let $\psi$ be the unique planar drawing of $H''$, in which the vertices of $\Gamma'(X)$ lie on the boundary $\gamma$ of the outer face $F_{out}$. Since $H''$ is $2$-vertex connected, the boundary of every face of $\psi$ is a simple cycle. Observe that each path $Q_{u,v}$, for $(u,v)\in M'$ appears consecutively on $\gamma$. We therefore have two types of internal faces in the drawing $\psi$: a face of the first type contains some path $Q_{u,v}$ for $(u,v)\in M'$ on its boundary, and a face of the second type does not contain any such path.

\begin{observation}\label{observation: 2-cuts on face} Assume that $|\Gamma'(X)|>12\beta^*$,
let $F\neq  F_{out}$ be any face of $\psi$, and let $\gamma_F$ be its boundary.  If $F$ is of the first type, then  the number of vertices on $\gamma(F)\cap \gamma$ is at most $8\beta^*$, and if $F$ is of the second type, then the number of vertices on $\gamma\cap \gamma_F$ is at most $3$.
\end{observation}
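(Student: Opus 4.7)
The plan is to leverage the 2-connectivity of $H''$ and the structural information from Observation~\ref{observation: characterization of 2-cuts} to analyze the boundary cycle $\gamma_F$. Since $H''$ is 2-connected, $\gamma_F$ is a simple cycle. I would decompose $\gamma_F$ into maximal ``external'' arcs $\sigma_1,\ldots,\sigma_k$ lying on $\gamma$ and the complementary ``internal'' chords connecting their endpoints through the interior of $\psi$. For each $i$, the two endpoints of the internal chord between $\sigma_i$ and $\sigma_{i+1}$ (cyclically) form a 2-vertex cut in $H''$: the chord, together with the arc of $\gamma$ strictly between these two endpoints on one side, encloses a sub-disc, so removing the two endpoints disconnects the vertices inside from those outside. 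By Observation~\ref{observation: characterization of 2-cuts}, both endpoints of each such 2-cut lie on a common path $Q_{u_i,v_i}$ for some $(u_i,v_i)\in M'$.

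The next step is to establish a key structural fact: every inner vertex $t$ of any path $Q_{u,v}$ has degree exactly $2$ in $H''$, with both incident edges on $Q_{u,v}$. Indeed, inner vertices of $Q_{u,v}$ are in $\Gamma'_{u,v}\subseteq \Gamma'(X)$, and in $\Hr3(X)$ they are adjacent only to their two neighbors on $Q_{u,v}$ (matching edges belong to $\Gr3(X)$, not $\Hr3(X)$). Consequently, both edges at $t$ separate $F_{out}$ from the \emph{same} single inner face $F_{u,v}$ adjacent to $Q_{u,v}$, and $F_{u,v}$ is of first type because it contains the entire path $Q_{u,v}$ on its boundary. Thus no inner $Q$-vertex can lie on the boundary of a second-type face, and every vertex in $\gamma_F\cap\gamma$ for a second-type $F$ is either a ``border'' vertex (on $\gamma$ but on no $Q$-path) or an endpoint $u$ or $v$ of some $Q$-path in $M'$.

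For the second-type case, using that $\gamma_F\cap\gamma$ excludes all inner $Q$-vertices and that the endpoints of the external arcs form 2-cuts lying on common $Q$-paths, a case analysis on $k$ yields the bound of $3$. For $k=1$, the endpoints of $\sigma_1$ lie on a single $Q_{u,v}$; the ``long'' routing of $\sigma_1$ around $\gamma$ would necessarily pass through inner vertices of other $Q$-paths (which exist since $|\Gamma'(X)|>12\beta^*$ forces $\rset$ to cover only a bounded fraction of $\Gamma'(X)$), contradicting exclusion of inner $Q$-vertices; the ``short'' routing along $Q_{u,v}$ likewise contains no inner $Q_{u,v}$-vertex and therefore consists of at most one edge plus at most one incident border vertex. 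For $k\ge 2$, each extra external arc introduces another 2-cut on some $Q$-path, but combined with planarity and the $Q$-vertex exclusion this leaves no room for more than three total $\gamma$-vertices without again forcing an inner $Q$-vertex onto $\gamma_F$. For the first-type case, $F=F_{u,v}$ has the entire path $Q_{u,v}$ on $\gamma_F$, contributing at most $|\Gamma'_{u,v}|+2\le 4\beta^*+2$ vertices; the remaining external arcs obey the second-type analysis (since no further full $Q$-path can appear on $\gamma_F$ without making $F$ border two disjoint inner $Q$-paths, which is impossible by planarity), and a planarity argument on the disc outside $Q_{u,v}$ shows at most one additional arc can occur, each contributing at most $4\beta^*+2$ vertices, for a total of at most $8\beta^*$.

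The main obstacle is the delicate case analysis for the second-type bound of $3$ and the ``at most one extra arc'' claim for first-type faces: both rely on combining planarity with the exclusion of inner $Q$-path vertices from $\gamma_F$ to rule out configurations that would otherwise look plausible. A secondary point to handle carefully is that some vertices in $\sigma_i$ may be endpoints $u,v$ shared by several distinct $Q$-paths, so the book-keeping in the case analysis must track the membership of 2-cut endpoints in specific $Q$-paths rather than just their position on $\gamma$.
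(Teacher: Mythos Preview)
Your core idea---that pairs of vertices on $\gamma_F\cap\gamma$ give 2-cuts to which Observation~\ref{observation: characterization of 2-cuts} applies---is exactly right, and your degree-2 observation about inner vertices of the paths $Q_{u,v}$ is both correct and useful. But the arc-and-chord decomposition forces you into a case analysis that you yourself identify as the ``main obstacle'' and do not complete. The paper sidesteps this with one stronger but equally easy observation: \emph{any} two vertices $x,y\in\gamma_F\cap\gamma$ (not only the endpoints of your internal chords) form a 2-separator of $H''$, provided they are not joined by an edge of $\gamma$. This is immediate from planarity, since $\gamma$ and $\gamma_F$ give two $x$--$y$ arcs whose union bounds a disc.

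With this freedom to choose the pair, both cases are a few lines. For a first-type face containing $Q_{u,v}$, fix an \emph{inner} vertex $y$ of $Q_{u,v}$ and take any $x\in\gamma_F\cap\gamma$ with $x\notin Q_{u,v}$. Since $y$'s only neighbours on $\gamma$ lie on $Q_{u,v}$, the pair $(x,y)$ is a 2-cut; by Observation~\ref{observation: characterization of 2-cuts} both lie on a common $Q_{u',v'}$, but $y$ (degree~2, hence on no other $Q$-path) forces $(u',v')=(u,v)$ and so $x\in Q_{u,v}$, a contradiction. Thus \emph{all} of $\gamma_F\cap\gamma$ lies on the single path $Q_{u,v}$; there is no ``one extra arc'' to account for. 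For a second-type face with four vertices $v_1,v_2,v_3,v_4$ on $\gamma_F\cap\gamma$ in this cyclic order, the non-consecutive pairs $(v_1,v_3)$ and $(v_2,v_4)$ are each 2-cuts, hence each lies on a $Q$-path; but $Q$-paths are arcs of $\gamma$, and two such arcs covering the interlacing pairs $\{v_1,v_3\}$ and $\{v_2,v_4\}$ cannot coexist on $\gamma$.

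So the gap in your proposal is strategic rather than conceptual: by committing to consecutive-arc endpoints you lose the freedom to choose \emph{which} pair of $\gamma_F\cap\gamma$ vertices to analyse, and that freedom is precisely what collapses the case analysis you were unable to finish.
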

\begin{proof}
It is easy to see that if $x,y\in \gamma\cap \gamma_F$, then $(x,y)$ is a $2$-separator in $H''$, unless there is an edge $(x,y)$ in $H''$ that belongs to $\gamma$ (see Figure~\ref{fig:claim}). 
Assume first that $F$ is of the first type. Let $Q_{u,v}$ be the path it contains, for $(u,v)\in M'$, let $y\in \Gamma'(X)$ be any {\bf inner} vertex of $Q_{u,v}$ (which must exist for all $Q_{u,v}$), and let $x\not\in Q_{u,v}$ be any additional vertex on $\gamma(F)\cap \gamma$. Then $(x,y)$ is a $2$-separator in $H''$, and yet they do not belong to the same path $Q_{u',v'}$ for $(u',v')\in M'$, which is impossible from Observation~\ref{observation: characterization of 2-cuts}.

Assume now that $F$ is of the second type.
Suppose there are four vertices $v_1,\ldots,v_4$ in $\gamma\cap \gamma_F$, and
assume that these vertices appear on $\gamma_F$ in this order.  
Every pair of vertices $(v_i,v_j)$ that do not appear consecutively in this order is a $2$-separator for $H'$, and hence must belong to $M'$, which is impossible, because then we would get two paths $Q_{v_1,v_3}$ and $Q_{v_2,v_4}$, that must both be contained in $\gamma$.
\end{proof}

\begin{observation}\label{observation: num fo 2-cuts between subterminals} Assume that $|\Gamma'(X)|>12\beta^*$,
let $t,t'\in \Gamma'(X)$ be a pair of interface vertices, such that one of the two segments $\sigma$ of $\gamma$, connecting $t$ and $t'$, does not contain any other vertices of $\Gamma'(X)$. Then $|\sigma\cap (S^{(2)}\setminus \set{t,t'})|\leq 2$.
\end{observation}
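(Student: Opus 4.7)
The plan is to argue by contradiction. Suppose $\sigma$ contains three distinct vertices $v_1, v_2, v_3 \in S^{(2)} \setminus \set{t,t'}$, ordered on $\sigma$ from $t$ to $t'$. Since each $v_i$ lies in the open segment of $\gamma$ strictly between the consecutive $\Gamma'(X)$-vertices $t$ and $t'$, we have $v_i \notin \Gamma'(X)$. Since $v_i \in V(H'')$ and inner vertices of clusters in $\rset$ that lie outside $\Gamma'(X)$ are erased when we replace these clusters by the paths $Q_{u,v}$, the vertex $v_i$ cannot be an inner vertex of any cluster in $\rset$.

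By definition of $S^{(2)}$, for each $i$ there is a pair $(u_i, v_i) \in M'$. I would first argue that $u_i$ lies strictly on the outside arc of $\gamma$, i.e., $u_i \notin \sigma \cup \set{t,t'}$. Indeed, if $u_i$ were in $\sigma$ or equal to one of $t, t'$, then the side cluster $C'_{u_i, v_i}$ would be drawn on the short arc of $\gamma$ from $u_i$ to $v_i$ contained in $\sigma$; since the interior of $\sigma$ contains no vertex of $\Gamma'(X)$ and $v_i \notin \Gamma'(X)$, the only possible $\Gamma'(X)$-vertex in $C'_{u_i,v_i}$ would be $u_i$ itself as an endpoint, so $C'_{u_i,v_i}$ would contain no $\Gamma'(X)$ \emph{inner} vertex. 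But then this cluster would have been collapsed to an edge during the preprocessing step that precedes the definition of $H'$, contradicting $(u_i, v_i) \in M'$.

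Next, I classify each $v_i$ by the direction from which its cut ``exits'' $\sigma$: say $v_i$ is of type ``past $t$'' if the side arc of the cut $(u_i, v_i)$ passes through $t$, and ``past $t'$'' otherwise. Since $|\Gamma'(X)| > 12\beta^*$ while $|\Gamma'_{u_i,v_i}| \leq 4\beta^*$, the side cluster lies on whichever of the two arcs has fewer $\Gamma'(X)$-vertices, so exactly one of these types applies. By pigeonhole, two of $v_1, v_2, v_3$ share a type; without loss of generality say $v_i, v_j$ are both past $t$, with $v_i$ closer to $t$ along $\sigma$ than $v_j$.

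The key geometric observation is then that $v_i$ lies on the side arc of the cut $(u_j, v_j)$: this side arc begins at $v_j$, proceeds counterclockwise through $\sigma$ (passing $v_i$), reaches $t$, and continues outside to $u_j$. So $v_i$ is an inner vertex of $C'_{u_j, v_j} \in \rset$. Since $v_i \notin \Gamma'(X)$, the replacement of $C'_{u_j,v_j}$ by $Q_{u_j,v_j}$ deletes $v_i$ from the graph, so $v_i \notin V(H'')$, contradicting $v_i \in S^{(2)}$ (interpreted as a vertex that actually survives in $H''$). This yields the desired contradiction, so $|\sigma \cap (S^{(2)} \setminus \set{t,t'})| \leq 2$. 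The main technical point to be careful about is the first step, where I must rule out degenerate positions for $u_i$ using the preprocessing that already removed clusters without $\Gamma'(X)$ inner vertices; the rest follows from a clean planarity/nesting argument.
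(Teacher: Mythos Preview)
Your argument is correct, but it takes a longer route than the paper's.

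The paper's proof is two lines. It focuses only on the \emph{middle} vertex $v_2$: since $v_2\in S^{(2)}$ there is some $v'$ with $(v_2,v')\in M'$, and the path $Q_{v_2,v'}$ is a segment of $\gamma$ whose inner vertices all lie in $\Gamma'(X)$. This segment cannot extend past $v_1$ or $v_3$ (neither is in $\Gamma'(X)$), nor can it stay strictly between them (no $\Gamma'(X)$ vertices there). Hence $v'\in\{v_1,v_3\}$; say $(v_1,v_2)\in M'$. Then $Q_{v_1,v_2}$ must be the short arc between $v_1$ and $v_2$ inside $\sigma$ (the other arc carries all of $\Gamma'(X)$), but that arc contains no $\Gamma'(X)$ vertex, contradicting the fact that every $Q_{u,v}$ has a $\Gamma'(X)$ inner vertex.

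Your approach instead pushes each partner $u_i$ \emph{outside} $\sigma$ (your Step~4 is exactly the same ``no $\Gamma'(X)$ inner vertex on the short arc'' contradiction, just applied differently), then pigeonholes on the exit direction through $t$ or $t'$, and finally argues that two paths exiting through the same endpoint overlap. This works, but the pigeonhole and the casework on $u_i\in\{t,t'\}$ are unnecessary once you observe, as the paper does, that the partner of $v_2$ is pinned to $\{v_1,v_3\}$. Also, your final contradiction is phrased as ``$v_i$ is deleted in passing from $H'$ to $H''$,'' which mixes the two graphs; the cleaner version, staying inside $H''$, is simply that $v_i$ would have to be a vertex of the segment $Q_{u_j,v_j}\subseteq\gamma$, yet $v_i\notin\Gamma'(X)\cup\{u_j,v_j\}$.
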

\begin{proof}
Assume otherwise, and let $v_1,v_2,v_3$ be three vertices of $S^{(2)}\setminus\set{t,t'}$ lying on $\sigma$ in this order. We claim that either $(v_1,v_2)\in M'$, or $(v_2,v_3)\in M'$: otherwise, from the definition of $S^{(2)}$, there must be some other vertex $v'\in S^{(2)}$, such that $(v_2,v')\in M'$, $v'\neq v_1,v_2$. This is impossible, because then we would have a path $Q_{v_2,v'}\sse\gamma$. Assume w.l.o.g. that $(v_1,v_2)\in M'$. But then $\sigma$ must contain $Q_{v_1,v_2}$, which in turn must contain vertices of $\Gamma'(X)$, a contradiction.
\end{proof}

\subsubsection{Part 2: Irregular Vertices and Edges}
 We start by defining irregular vertices and edges, and bounding their number as in ~\cite{CMS10}. We deal with $2$-separators later.
Suppose we are given any graph $G$, and a pair $\phi,\psi$ of drawings of $G$.
\begin{definition}
We say that a vertex $x$ of $G$ is irregular iff its degree is more than $2$, 
and the circular ordering of the edges incident on it, as their images enter $x$, is different in $\phi$ and $\psi$ (ignoring the orientation). 
We denote the set of irregular vertices by $\bad_V(\phi, \psi)$, and we call all other vertices \emph{regular}. \end{definition}

\begin{definition}
For any pair $(x,y)$ of vertices in $G$,
we say that a path $P$, connecting $x$ to $y$ in $G$ is irregular iff $x$ and $y$ have degree at least $3$, all other
vertices on $P$ have degree $2$ in $G$, vertices $x$ and $y$ are {\bf regular}, but their
orientations differ in $\phi$ and $\psi$. That is, the orderings of the edges adjacent to $x$ and to $y$, as their images enter these vertices, are identical in both drawings, but the pairwise orientations are different: for one of the two vertices, the orientations are identical in both drawings (say clock-wise), while for the other vertex, the orientations are opposite (one is clock-wise, and the other is counter-clock-wise). An edge $e$ is an irregular edge iff it is the first or the last edge on an irregular path. In particular, if the irregular path only consists of edge $e$, then $e$ is an irregular edge.
We denote the set of irregular edges by $\bad_E(\phi, \psi)$, and all other edges are called regular.
\end{definition}

\begin{lemma}\label{lem:bad3} (\cite{CMS10})
Let $G$ be any planar $2$-vertex connected graph, let $S_2$ be the set of vertices participating in $2$-vertex separators in $G$, and $E_2$ the set of edges adjacent to the vertices of $S_2$. Let $\phi$ be an arbitrary drawing of $G$ and $\psi$ be a planar drawing of $G$.
Then
\[
|\bad_V(\psi, \phi)\setminus S_2| + |\bad_E(\psi, \phi)\setminus E_2| = O(\cro_{\phi}(G)).
\]
\end{lemma}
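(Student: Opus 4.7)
My plan is to pass from the drawing $\phi$ to its planarization. Introduce a new degree-$4$ vertex at every crossing of $\phi$, subdividing the two crossing edges, to obtain a $2$-connected planar supergraph $G^+\supseteq G$ with $\cro_\phi(G)$ extra vertices, together with a planar drawing $\phi^+$ of $G^+$. The key feature of this construction is that for every vertex $v\in V(G)$ the cyclic order of the edges incident to $v$ in $\phi^+$ equals the cyclic order at $v$ in $\phi$, and for every maximal $2$-path of $G$ whose interior is disjoint from the crossing-vertices, the relative orientation of its two degree-$\geq 3$ endpoints in $\phi^+$ equals that in $\phi$. Hence detecting irregular vertices/edges of $G$ under $(\phi,\psi)$ reduces to comparing two planar rotation systems on a common graph: the rotation system on $G^+$ coming from $\phi^+$, and the one obtained by extending $\psi$ to $G^+$, by placing each crossing-vertex along its host edges in the planar drawing $\psi$ (which automatically yields a planar embedding of $G^+$).

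Next I would invoke the classification of planar embeddings of $2$-connected planar graphs via Whitney's theorem, equivalently the SPQR-tree decomposition: two planar rotation systems on a $2$-connected planar graph differ only by independent ``flips'' at each $2$-vertex separator. Applied to $G^+$, this means the two rotation systems above differ only by flips at the $2$-separators of $G^+$. Classify these separators into (a) pairs whose both vertices lie in $V(G)$, which are also $2$-separators of $G$ and hence belong to $S_2$, and (b) pairs containing at least one crossing-vertex. Since each crossing-vertex has degree $4$ in $G^+$, it can participate in only $O(1)$ separators of type (b), so there are $O(\cro_\phi(G))$ type-(b) separators in total.

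The charging is now straightforward. Because the irregularity relation ignores the global orientation at each vertex, a flip at a $2$-separator $\{u,w\}$ changes the unordered cyclic order of incident edges only at the two separator vertices $u,w$; the interior vertices of the flipped side merely have their rotations reversed, which does not affect the irregularity relation. Therefore every irregular vertex $v\in \bad_V(\psi,\phi)\setminus S_2$ must lie in some type-(b) separator, which charges $v$ to the associated crossing-vertex; since each crossing is charged $O(1)$ times, this yields $|\bad_V(\psi,\phi)\setminus S_2| = O(\cro_\phi(G))$. The argument for irregular edges is analogous: an irregular maximal $2$-path arises from an odd number of flips along the path, and if both endpoints lie outside $S_2$ at least one of these flips must be type-(b), so the same $O(\cro_\phi(G))$ budget applies.

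The main obstacle I anticipate is the rigorous reduction to Whitney's flip classification in the presence of the crossing-vertices. Care is needed in extending $\psi$ to a planar embedding of $G^+$ that honors the rotations of $\psi$ at vertices of $V(G)$, and in verifying that the two planar rotation systems on $G^+$ are indeed related by a sequence of flips rather than an unrelated embedding. A subtle point is that when a $2$-separator of $G^+$ involves a crossing-vertex $x$, the ``two sides'' of the flip are not quite analogous to the case of a genuine $2$-separator of $G$, because one side may contain a fragment of an edge of $G$; one must argue that the induced change in rotation at $x$'s non-crossing neighbor still affects only the vertices of $V(G)$ in the separator itself. Handling this cleanly requires setting up the SPQR-tree of $G^+$ explicitly and comparing it to that of $G$, using the observation that the two trees differ by $O(\cro_\phi(G))$ nodes corresponding to the planarization-introduced separators.
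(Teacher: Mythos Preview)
The paper does not prove this lemma; it is quoted verbatim from \cite{CMS10} and used as a black box. So there is no in-paper proof to compare against. That said, the paper's own Lemmas~\ref{lemma: bad vertices for X} and~\ref{lemma: bad edges for X} extend this result to handle the residual $2$-separators, and their proofs reveal the intended technique: a direct ``wheel'' charging argument, not a reduction to Whitney flips on the planarization.

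Your proposal has a genuine gap at the step you yourself flag as the main obstacle. You write that one obtains a second planar embedding of $G^+$ ``by placing each crossing-vertex along its host edges in the planar drawing $\psi$ (which automatically yields a planar embedding of $G^+$).'' This is not true. A crossing vertex $x$ of $G^+$ arises from two edges $e_1,e_2$ of $G$ that intersect in $\phi$; in $G^+$ the vertex $x$ is adjacent to all four endpoints of $e_1,e_2$. In the planar drawing $\psi$ the curves $\psi(e_1)$ and $\psi(e_2)$ are disjoint, so there is no single point at which to place $x$ while keeping the drawing planar. More abstractly: $G^+$ is planar (witnessed by $\phi^+$), but there is no reason for $G^+$ to admit a planar embedding whose rotation at every $v\in V(G)$ coincides with that of $\psi$. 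Indeed, if $G$ is $3$-connected and $v\in V(G)$ is irregular for $(\psi,\phi)$, then in $\phi^+$ the rotation at $v$ equals that of $\phi$; if $G^+$ happens to be $3$-connected as well, $\phi^+$ is its \emph{only} planar embedding, so no $\psi^+$ with the $\psi$-rotation at $v$ exists. Your entire comparison-by-flips argument therefore never gets off the ground.

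A second, smaller issue: the claim that a degree-$4$ crossing vertex participates in only $O(1)$ split pairs of $G^+$ is not obvious and would need an SPQR-tree argument; a degree-$4$ vertex can lie in arbitrarily many $2$-separators of a $2$-connected graph.

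The argument that actually works (and that the paper deploys in the neighbouring lemmas) stays in $G$, not $G^+$. For each $v\notin S_2$, the $3$-connected component of $G$ containing $v$ has a unique planar embedding, so the faces of $\psi$ incident to $v$ bound a ``wheel'' $W_v\subseteq G$. If the cyclic order at $v$ in $\phi$ differs from $\psi$, two spokes of $W_v$ that are separated by the rim in $\psi$ must interleave in $\phi$, forcing a crossing on some edge of $W_v$. One then charges that crossing to $v$ and bounds the multiplicity by observing that each edge lies on the boundary of only two faces of $\psi$, hence in $O(1)$ wheels. This is exactly the mechanism in the proofs of Lemmas~\ref{lemma: bad vertices for X} and~\ref{lemma: bad edges for X}; the restriction $v\notin S_2$ is precisely what guarantees the wheel exists.
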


We now fix some set $X\in \xset'$, and we define the set of irregular vertices and edges for $X$. Recall that both $\Hr3(X)$ and $\Gr3(X)$ have properties (P1) and (P2) w.r.t. $\Gamma'(X)$ and $\Gamma(X)$ respectively, and there is a set of matching edges in $\Gr3(X)$ connecting $\Gamma'(X)$ to $\Gamma(X)$. Let $\psi_X$ denote the unique planar drawing of $\Hr3(X)$, in which the vertices of $\Gamma'(X)$ lie on the boundary of the outer face $F_{out}$, and let $\psi^+_X$ denote the extension of $\psi_X$ to include the drawing of the matching edges inside $F_{out}$. That is, $\psi^+_X$ is a planar drawing of $\Gr3(X)$, with all vertices of $\Gamma(X)$ lying on the boundary of the outer face. 
Let $\phi$ be the optimal drawing of $\Gr3$, and let $\phi_X,\phi_X^+$ be the drawings of the graphs $\Hr3(X)$ and $\Gr3(X)$, respectively, induced by $\phi$. From now on, we denote by $F_{out}$ the outer face of $\psi_X$, and by $\gamma$ its boundary.

We are now ready to define the set of irregular vertices and edges for $X$. We say that a vertex $v\in V(\Hr3(X))$ is an irregular vertex, iff $v\in \bad_V(\phi^+_X,\psi^+_X)$. Notice that we require that $v\in V(\Hr3(X))$, that is, $v\not\in \Gamma(X)$, but if $v\in \Gamma'(X)$, then we need to take its matching edge into consideration, so the set of irregular vertices is defined w.r.t. the extended drawings $\phi^+_X$ and $\psi^+_X$. Let $\bad_V(X)$ denote the set of all irregular vertices for $X$. Similarly, we say that an edge $e\in E(\Hr3(X))$ is an irregular edge, iff $e\in \bad_E(\phi^+_X,\psi^+_X)$. Again, we do not include the matching edges in the set of irregular edges, but we define the irregular edges w.r.t. the extended drawings $\phi^+_X$, $\psi^+_X$. Let $\bad_E(X)$ denote the set of all irregular edges for $X$. We bound the number of irregular vertices and edges for $X$ in the next two lemmas.

\begin{lemma}\label{lemma: bad vertices for X}
For each $X\in \xset'$, $|\bad_V(X)|\leq O(\beta^*)\cro_{\phi}(\Gr3(X),\Gr3)$.
\end{lemma}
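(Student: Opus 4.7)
The plan is to split $\bad_V(X)$ according to whether the vertex lies in the set $S^{(2)}$ of $2$-separator vertices of $\Hr3(X)$, and to bound the two pieces by different methods. Throughout I will use that $\Hr3(X)$ is planar, $2$-vertex connected, and has Property~(P3) with respect to $\Gamma'(X)$ with parameter $12\beta^*$, by Claim~\ref{claim: graph H3 has property P3}, as well as the fact that the unique planar drawing $\psi_X$ places the interface vertices on the outer face $F_{out}$.

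First, for vertices in $\bad_V(X)\setminus S^{(2)}$ I would apply Lemma~\ref{lem:bad3} to $\Hr3(X)$ with drawings $\phi_X$ and $\psi_X$, obtaining $|\bad_V(\phi_X,\psi_X)\setminus S^{(2)}|=O(\cro_{\phi_X}(\Hr3(X)))\le O(\cro_\phi(\Gr3(X),\Gr3))$. To pass from irregularity in $\phi_X,\psi_X$ to irregularity in the extended drawings $\phi_X^+,\psi_X^+$ that define $\bad_V(X)$, I need to account for vertices $v\in \Gamma'(X)\setminus S^{(2)}$ whose $\Hr3(X)$-edge ordering agrees in the two drawings but whose matching edge $(v,v_X)$ sits in different slots of the cyclic ordering. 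In $\psi_X^+$ the matching slot is forced to be the one facing $F_{out}$, so if it has moved in $\phi_X^+$ then the matching edge of $v$ must cross some other edge of $\Gr3$ in $\phi$; I charge $v$ to such a crossing, and each crossing is charged $O(1)$ times (the matching edge has a unique endpoint in $\Gamma'(X)$), contributing another $O(\cro_\phi(\Gr3(X),\Gr3))$.

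The hard part, and the main obstacle, is bounding $|\bad_V(X)\cap S^{(2)}|$, which is where the factor $\beta^*$ enters. By Observations~\ref{observation: characterization of 2-cuts}--\ref{observation: num fo 2-cuts between subterminals}, every $2$-separator of $\Hr3(X)$ lies inside some short path $Q_{u,v}$, and these paths together contribute only $O(|\Gamma'(X)|)$ vertices to $S^{(2)}$. For every irregular $v\in S^{(2)}$, the drawing $\phi_X^+$ near $v$ differs from $\psi_X^+$ by a topological "twist" across one of the $2$-cuts containing $v$. I plan to argue that the family of $\Omega(|\Gamma'(X)|)$ flow paths between the two sides of this cut, guaranteed by Property~(P3), cannot be simultaneously realized in $\phi$ without crossing an edge of $\Gr3(X)$ incident on the twisted region near $v$, so $v$ can be charged to $\Omega(1/\beta^*)$ distinct crossings counted by $\cro_\phi(\Gr3(X),\Gr3)$; the factor $\beta^*$ is unavoidable because each such crossing can serve as a witness for up to $O(\beta^*)$ different separator vertices, by the congestion bound in Property~(P3). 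The combinatorial "twisted ordering" statement must be converted into a quantitative lower bound on crossings in $\phi$, and this is precisely the step where a purely planar argument would fail (flipping a subgraph across a $2$-cut is planarity-preserving) and the well-linkedness of the interface becomes indispensable. Summing the two estimates yields $|\bad_V(X)|\le O(\beta^*)\cro_\phi(\Gr3(X),\Gr3)$.
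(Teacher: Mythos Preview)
Your split into $S^{(2)}$ and non-$S^{(2)}$ vertices is reasonable, but both halves have real gaps, and the ingredient that drives the paper's proof---property~(P1), the connectivity of the terminals of $\Hr3(X)$ in $\Gr3\setminus\Hr3(X)$---is entirely absent from your argument.

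For interface vertices $v\in\Gamma'(X)\setminus S^{(2)}$ you assert that if the matching edge sits in a different cyclic slot in $\phi_X^+$ than in $\psi_X^+$, then the matching edge must participate in a crossing in $\phi$. This is unjustified: the slot facing $F_{out}$ is distinguished only in the planar drawing $\psi_X^+$; in the arbitrary drawing $\phi$ nothing prevents the matching edge from leaving $v$ through any slot and reaching the old interface vertex without a crossing. Regularity of $v$ in $(\phi_X,\psi_X)$ is a purely local condition on the rotation at $v$ and does not constrain which region of $\phi$ the matching edge enters.

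For vertices in $S^{(2)}$ your plan relies on Property~(P3) flows inside $\Hr3(X)$ to detect a ``twist'' across a $2$-cut, but you yourself note the obstruction: flipping one side of a $2$-cut of a planar graph is planarity-preserving, so any flow routed inside $\Hr3(X)$ is realized without crossings in the flipped drawing just as well. Internal well-linkedness therefore cannot witness irregularity at a $2$-separator. What is needed is something tying the two sides of the cut together \emph{outside} $\Hr3(X)$.

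This is exactly what the paper does. For a $2$-separator vertex $v$ (and likewise for any interface vertex $t$), property~(P1) yields a path $P\subseteq\Gr3\setminus\Hr3(X)$ connecting two interface vertices lying immediately on either side of $v$ along the outer boundary $\gamma$; adding $P$ to $\psi_X$ closes off a sub-face $F_v$ of $F_{out}$. The union $W$ of the boundaries of all faces of this augmented drawing incident to $v$ is then homeomorphic to a wheel, which has a unique planar embedding; hence irregularity of $v$ forces a crossing in $\phi$ between two edges of $W$, at least one of which lies in $\Hr3(X)$. Charging that crossing and using Observations~\ref{observation: 2-cuts on face}--\ref{observation: num fo 2-cuts between subterminals} (each inner face of $\psi_X$ has at most $O(\beta^*)$ boundary vertices on $\gamma$) shows each crossing is charged $O(\beta^*)$ times. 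The same scheme, with two external paths, handles all of $\Gamma'(X)$ at once, making your separate matching-edge argument unnecessary.
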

\begin{proof}
We fix some $X\in \xset'$. The vertices of $\Hr3(X)$ can be partitioned into three types.

The first type is the vertices that do not participate in any $2$-separators in $\Hr3(X)$, and do not belong to $\Gamma'(X)$. If $v$ is such a vertex, and $v\in \bad_V(X)$, then $v\in \bad_V(\phi_X,\psi_X)$ must hold. Since $\Hr3(X)$ does not contain any $1$-vertex cuts, by Lemma~\ref{lem:bad3}, the number  of such irregular vertices is bounded by $O(\cro_{\phi_X}(\Gr3(X),\Gr3(X)))\leq O(\cro_{\phi}(\Gr3(X),\Gr3))$.

The second type is vertices that serve as $2$-separators in $\Hr3(X)$, but do not belong to $\Gamma'(X)$. Recall that all such vertices belong to $S^{(2)}$, and for each such vertex $v$, there is a vertex $u\in S^{(2)}$, such that $(u,v)\in M'$. 

Consider the boundary $\gamma$ of the outer face $F_{\out}$ of the planar drawing $\psi_X$, and recall that $Q_{u,v}\sse \gamma$. Let $x, y\in \Gamma'(X)$ be the two new interface vertices lying closest to $v$, on both sides of $v$ on $\gamma$, with $y\in Q_{u,v}$. Then no other new terminal vertices appear between $x$ and $y$ on $\gamma$.
Since $\Hr3(X)$ has property (P1), there is a path $P$ connecting $x$ to $y$ in $\Gr3\setminus \Hr3(X)$. Let $\tilde{\psi}$ be the drawing $\psi_X$ of $\Hr3(X)$, together with the path $P$, which is drawn inside the outer face $F_{out}$ of $\psi_X$. Then $\tilde{\psi}$ is a planar drawing of $\Hr3(X)\cup P$. Path $P$ has split $F_{out}$ into two subfaces, and we denote by $F_v$ the sub-face containing $v$ (see Figure~\ref{fig: bad 2-sep}).

\begin{figure}[h]
\scalebox{0.4}{\rotatebox{0}{\includegraphics{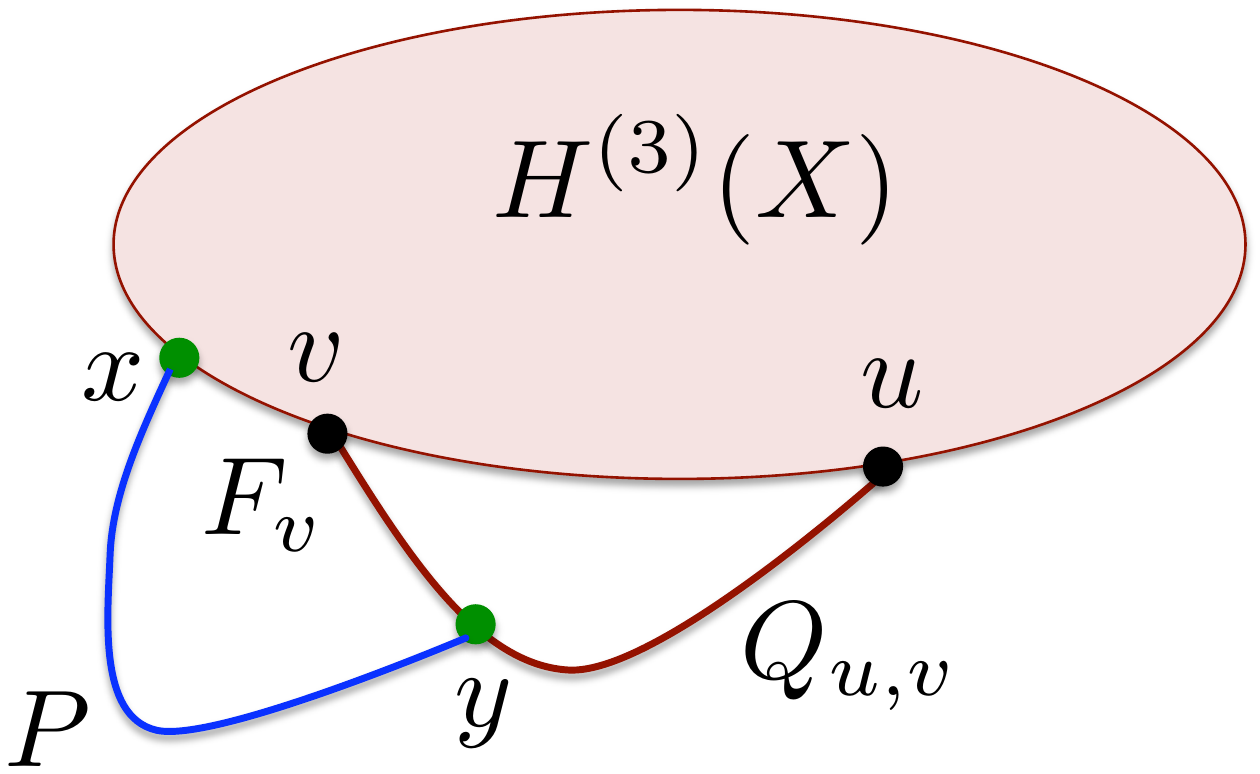}}} \caption{Illustration for Lemma~\ref{lemma: bad vertices for X}.\label{fig: bad 2-sep}}
\end{figure}

 Let $W$ denote the set of vertices and edges lying on the boundary of all faces $F$ of $\tilde{\psi}$, such that $v\in \gamma(F)$. Notice that except for $F_v$, all other such faces are proper faces of $\psi_X$, that are distinct from $F_{out}$.  Graph $W$ is homeomorphic to a wheel, with path $P$ being one of the edges of the wheel, that are not adjacent to $v$. Therefore, $W$ has a unique planar drawing, which is identical to the drawing induced by $\tilde{\psi}$. It is easy to see that if $v$ is an irregular vertex, then the edges of $W$ must cross. Moreover, at least one crossing has to involve some edge $e$ of $\Hr3(X)\cap W$ (that is, $e\not\in P$). We charge this crossing of $e$ for $v$.

We now claim that for each edge $e\in E(\Hr3(X))$, each crossing in which $e$ participates is charged at most $O(\beta^*)$ times. 
First, if $|\Gamma'(X)|\leq 12\beta^*$, then $|S^{(2)}|\leq 2|\Gamma'(X)|\leq 24\beta^*$ must hold (because of the paths $Q_{u,v}$ connecting every pair $(u,v)\in M'$ and containing vertices of $\Gamma'(X)$), and then $e$ may only be charged at most $24\beta^*$ times. We now assume that $|\Gamma'|>12\beta^*$.

Let $F$ be a face of $\psi_X$ on whose boundary $\gamma(F)$ edge $e$ lies, and assume first that $F\neq F_{out}$. Edge $e$ can only be charged for those vertices of $\gamma(F)$, that belong to $S^{(2)}$. Since all such vertices must lie on $\gamma\cap \gamma_F$, by Observation~\ref{observation: 2-cuts on face}, there can be at most $8\beta^*$ such vertices. Assume now that $F=F_{out}$. Then $e$ can only be charged for vertices $v\in S^{(2)}$ if $e\in \gamma(F_v)$. If $x$ and $y$ denote the vertices of $\Gamma'(X)$ lying immediately to the left and to the right of $e$ on $\gamma$, then $v$ must lie between $x$ and $y$ for this to happen. From Observation~\ref{observation: num fo 2-cuts between subterminals}, there are at most $2$ such vertices $v\in \sset^{(2)}$. In total, taking into account both faces on whose boundary $e$ lies, we get that $e$ can be charged at most $O(\beta^*)$ times.
Therefore, the number of irregular vertices of this type is bounded by $O(\beta^*)\cro_{\phi}(\Gr3(X),\Gr3)$.

Finally, the third type is the new interface vertices of $\Gamma'(X)$. Fix one such vertex $t$, and let $t'\in \Gamma(X)$ be its corresponding old interface vertex. Let $t_l$ and $t_r$ be the two new interface vertices lying immediately to the left and to the right of $t$ on $\gamma$, and let $t_l'$ and $t_r'$ be their old interface vertices, respectively. Since all vertices of $\Gamma(X)$ are connected in $\Gr3\setminus\Gr3(X)$, there are two paths: $P$ connecting 
$t_l'$ to $t'$, and $P'$ connecting $t_r'$ to $t'$ in $\Gr3\setminus\Gr3(X)$. We can choose $P$ and $P'$, so that the vertices that they share form one consecutive segment on both paths (See Figure~\ref{fig: bad 2-sep-2}).
We extend the two paths by using the matching edges, so that $P$ connects  $t_l$ to $t$, and $P'$ connects $t_r$  to $t$. 
Let $\tilde{\psi}$ be the planar drawing of $\psi_X\cup P\cup P'$, obtained by adding the drawings of $P$ and $P'$ inside the outer face of $\psi_X$, so that they do not cross.
Observe that the drawings of $P$ and $P'$ partition $F_{out}$ into three sub-faces. We denote by $F_t$ and $F'_t$ the two sub-faces whose boundaries contain $t$.

\begin{figure}[h]
\scalebox{0.4}{\rotatebox{0}{\includegraphics{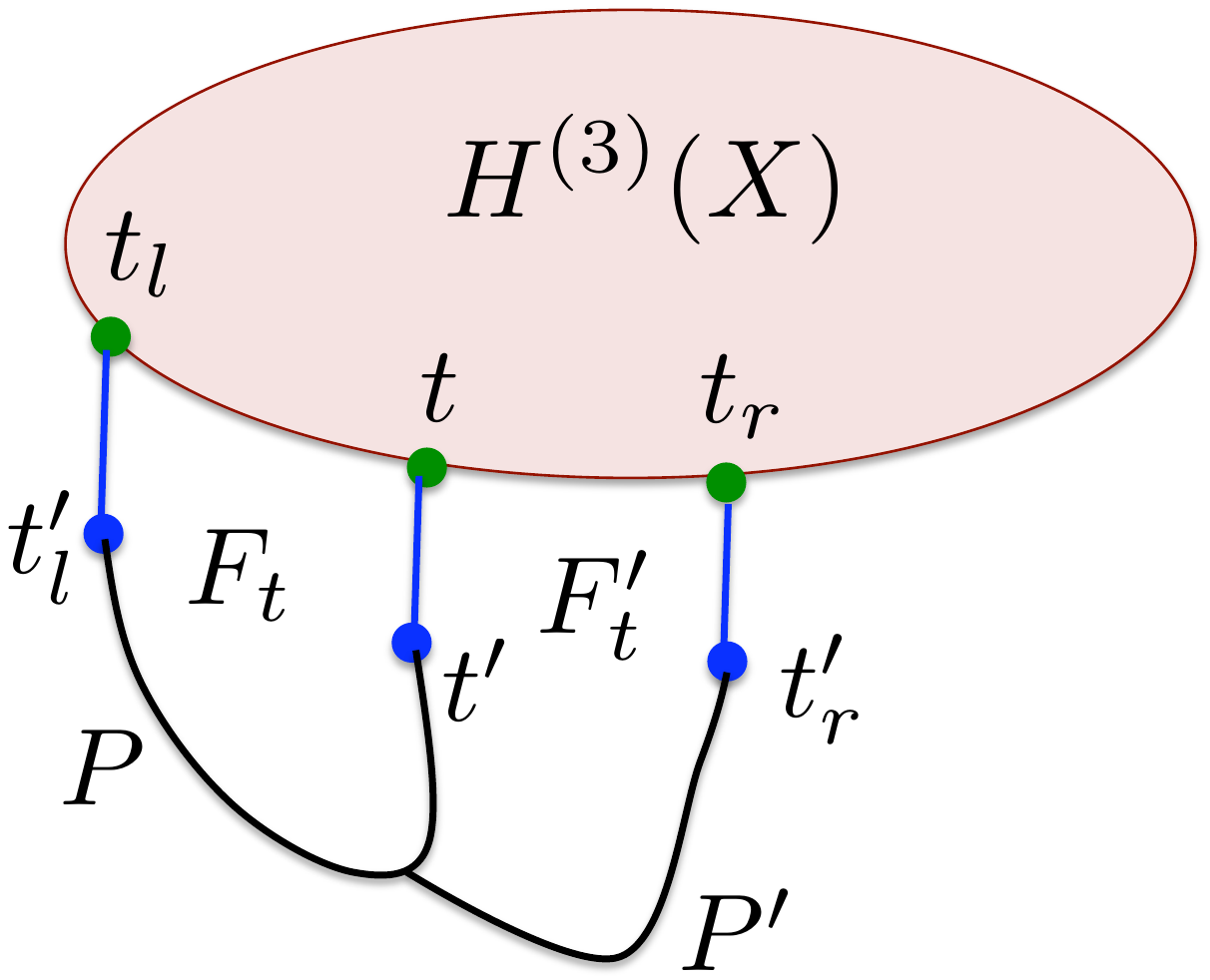}}} \caption{Illustration for ~\ref{lemma: bad vertices for X}.\label{fig: bad 2-sep-2}}
\end{figure}

 Let $W$ be the set of vertices and edges lying on the boundaries of the faces $F$ of $\tilde{\psi}$, such that $t\in \gamma(F)$. Notice that except for $F_t,F_t'$, all such faces are proper faces of $\psi_X$, distinct from $F_{out}$. Again, $W$ is homeomorphic to the wheel graph, and in any embedding where $t$ is irregular, a pair of edges $(e,e')$ must cross, where $e$ belongs to $\Hr3(X)\cap W$. We charge this crossing of $e$ for $t$. We now need to show that every crossing of every edge is charged $O(\beta^*)$ times. Again, if $|\Gamma'(X)|\leq 12\beta^*$, edge $e$ may only be charged $12\beta^*$ times. So we assume that $|\Gamma'(X)|>12\beta^*$. Consider some edge $e$ of $\Hr3(X)$, and let $F$ be a face of $\psi_X$ to which it belongs. Assume first that $F\neq F_{out}$. Then 
we can only charge $e$ for vertices $t$ of $\Gamma'(X)$ that appear on the boundary of $F$, that is, $t\in \gamma(F)\cap \gamma$. From Observation~\ref{observation: 2-cuts on face}, $F$ may contain at most $8\beta^*$ such vertices.  So edge $e$ may be charged at most $8\beta^*$ times for $F$. Finally, if $F=F_{out}$, then $e$ may only be charged for such vertices $t\in \Gamma'(X)$, for which $e$ lies on the boundary of $F_t$ or $F'_t$. Therefore, if $t,t'$ denote the vertices of $\Gamma'(X)$ lying immediately to the left and to the right of $e$ on $\gamma$, then $e$ may only be charged for these vertices as part of $F_{out}$.
 Therefore, the number of irregular vertices of the third type is bounded by $O(\beta^*)\cro_{\phi}(\Gr3(X),\Gr3)$.
\end{proof}

\begin{lemma}\label{lemma: bad edges for X}
For each $X\in \xset'$, $|\bad_E(X)|\leq O(\beta^*)\cro_{\phi}(\Gr3(X),\Gr3)$.
\end{lemma}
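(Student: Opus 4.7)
The plan is to mirror the three--type case analysis used in Lemma~\ref{lemma: bad vertices for X}, but now charging crossings for \emph{orientation-flips} at the endpoints of irregular paths instead of for reordering of the edges around a vertex. Recall that an irregular edge is a first or last edge of an irregular path $P_{xy}$, whose endpoints $x,y$ are regular, have degree $\geq 3$, and whose pairwise orientation is inconsistent between $\phi_X^+$ and $\psi_X^+$. Partition the vertices of $\Hr3(X)$ into three sets: (a) vertices of degree $\geq 3$ that do not belong to $S^{(2)}\cup\Gamma'(X)$; (b) vertices in $S^{(2)}\setminus\Gamma'(X)$; (c) vertices in $\Gamma'(X)$. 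Every irregular edge is adjacent to a vertex of one of these three types at each of its endpoints along its irregular path, so it is enough to bound the number of irregular edges whose \emph{nearer} endpoint on the irregular path is of type (a), (b), or (c), separately.

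For type~(a), the edges in question are exactly those whose irregular path's endpoint avoids $S^{(2)}$; hence they are not adjacent to any $2$-separator of $\Hr3(X)$. Since $\Hr3(X)$ is $2$-vertex-connected and $\psi_X$ is its unique planar drawing, Lemma~\ref{lem:bad3} applied to $\Hr3(X),\phi_X,\psi_X$ directly bounds their number by $O(\cro_{\phi_X}(\Hr3(X)))\leq O(\cro_\phi(\Gr3(X),\Gr3))$.

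For type~(b), fix an endpoint $v\in S^{(2)}\setminus\Gamma'(X)$ of an irregular path whose first edge $e$ is an irregular edge. Using property (P1) exactly as in the proof of Lemma~\ref{lemma: bad vertices for X}, select the two closest vertices $x,y\in\Gamma'(X)$ surrounding $v$ on the boundary $\gamma$ (with $y$ on the same path $Q_{u,v'}$ as $v$), route an external path $P$ between them through $\Gr3\setminus \Hr3(X)$, and consider the wheel-like subgraph $W$ formed by the boundaries of all faces of $\tilde\psi=\psi_X\cup P$ incident to $v$. In any drawing in which the orientation of $v$ is flipped relative to $\psi_X^+$, two edges of $W$ must cross; moreover, if the first edge $e$ of the irregular path is to become irregular, one may choose this forced crossing so that it involves $e$ itself (or more generally, some edge of $\Hr3(X)\cap W$ that is immediately adjacent to $v$). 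Charging the irregular edge to that crossing, Observations~\ref{observation: 2-cuts on face} and \ref{observation: num fo 2-cuts between subterminals} together bound the multiplicity of charges on any single crossing by $O(\beta^*)$, as in the proof of Lemma~\ref{lemma: bad vertices for X}.

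For type~(c), the argument is the analogue of the third-type case of Lemma~\ref{lemma: bad vertices for X}. For each new interface vertex $t\in\Gamma'(X)$ whose adjacent irregular edge $e$ we wish to bound, use the fact that $\Gr3\setminus\Gr3(X)$ connects all the old interface vertices $\Gamma(X)$, extended through matching edges back to $\Gamma'(X)$, to build two non-crossing external paths from the neighbours $t_l,t_r\in\Gamma'(X)$ of $t$ on $\gamma$ to $t$ itself; this again creates a wheel $W$ around $t$ inside which any orientation flip at $t$ forces a crossing of a boundary edge of $W$, and at least one such crossing can be assigned to $e$. Using Observation~\ref{observation: 2-cuts on face} and the ordering of terminals along $\gamma$, each actual crossing in $\phi$ is charged at most $O(\beta^*)$ times. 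Summing the three contributions yields $|\bad_E(X)|\leq O(\beta^*)\cro_\phi(\Gr3(X),\Gr3)$. The main subtlety, as in the vertex lemma, will be to ensure that when multiple irregular edges could claim the same forced crossing, the multiplicity is controlled: this is precisely what Observations~\ref{observation: characterization of 2-cuts}--\ref{observation: num fo 2-cuts between subterminals} were set up to handle, so no new structural input should be needed.
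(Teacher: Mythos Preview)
Your type~(a) case is essentially the paper's first case, but your types~(b) and~(c) contain a genuine gap: the wheel argument from Lemma~\ref{lemma: bad vertices for X} does not transfer to irregular edges.

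Recall that the endpoint $v$ of an irregular path is by definition \emph{regular}: the cyclic order of the edges around $v$ in $\phi_X^+$ agrees with that in $\psi_X^+$ up to reflection. Your claim ``in any drawing in which the orientation of $v$ is flipped relative to $\psi_X^+$, two edges of $W$ must cross'' is false: the wheel $W$ you build (the union of boundaries of all faces incident to $v$, including the external path $P$) is homeomorphic to a wheel graph, and a wheel graph has a planar embedding in either orientation --- the reflected embedding is just as planar. So an orientation flip at $v$ alone forces no crossing inside $W$. Worse, edge irregularity is a statement about the \emph{relative} orientation of the two endpoints $v,y$ of the irregular path; it may well be that $v$'s orientation in $\phi$ matches $\psi$ exactly, and it is the \emph{far} endpoint $y$ whose orientation is flipped. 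In that case the wheel around $v$ sits in $\phi$ precisely as in $\tilde\psi$, with no crossing at all, and your charging scheme has nothing to charge to.

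The paper does something genuinely different here. It partitions \emph{edges} (not vertices) into two types according to whether they are adjacent to $S^{(2)}\cup\Gamma'(X)$. For a type-2 edge $e=(x,v)$ with $x$ on $\gamma$, it takes $W$ to be the union of the boundaries of the \emph{two faces of $\tilde\psi$ that contain $e$} --- a theta-like graph whose three $x$--$y$ arcs run along both sides of the entire irregular 2-path. This structure links the rotations at \emph{both} endpoints of the path: in any planar embedding of a theta graph the cyclic orders at the two poles are rigidly coupled, so an orientation mismatch between them (which is exactly what edge irregularity means) forces a crossing among the edges of $W$. The external path $P$ is only used to close off the outer face so that the two faces containing $e$ are both bounded; the charging via Observations~\ref{observation: 2-cuts on face} and~\ref{observation: num fo 2-cuts between subterminals} then proceeds as you anticipated. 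To repair your argument you must replace the wheel around $v$ by this two-face structure around $e$.
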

\begin{proof}
Fix some $X\in \xset'$.
We partition the edges of $\Hr3(X)$ into two types.
The first type is the edges that are not adjacent to any vertices in $S^{(2)}$ or $\Gamma'(X)$. Since these sets of vertices include all $2$-vertex separators of $\Hr3(X)$, if $e$ is an irregular edge of the first type, then $e\in \bad_E(\phi_X,\psi_X)\setminus E_2$, and by Lemma~\ref{lem:bad3}, their number is bounded by $O(\cro_{\phi_X}(\Hr3(X)))\leq O(\cro_{\phi}(\Hr3(X),\Hr3))$. 

The second type is the edges that are adjacent to vertices of $S^{(2)}\cup \Gamma'(X)$. Fix some such edge $e=(x,v)$, and assume that $x\in S^{(2)}\cup \Gamma'(X)$.  Then $x$ lies on the boundary $\gamma$ of the outer face $F_{out}$ of $\psi_X$, and we find two vertices $t,t'\in \Gamma'(X)$, as follows.
If $x\not\in \Gamma'(X)$, then $t$ and $t'$ are two vertices of $\Gamma'(X)$, lying immediately to the left and to the right of $x$ on $\gamma$ (it is possible that one of these vertices is $v$ itself if $v\in \Gamma'(X)$). Otherwise, if $x\in \Gamma'(X)$, then we let $t=x$. If $e\not\in \gamma$, then $t'$ is the vertex of $\Gamma'(X)$ lying immediately to the left of $x$ on $\gamma$. Finally, if $e\in \gamma$, then $t'$ is the vertex of $\Gamma'(X)$ closest to $v$, such that $v$ lies between $x$ and $t'$ on $\gamma$. Let $P$ be the path connecting $t$ to $t'$ in $\Gr3\setminus \Hr3(X)$, and let $\tilde{\psi}$ be the planar drawing of $\Hr3(X)\cup P$, obtained by adding the drawing of the path $P$ inside the outer face $F_{out}$ of $\psi_X$. This partitions the outer face $F_{out}$ into two sub-faces. If $e\in \gamma$, then we denote by $F_e$ the sub-face whose boundary contains $e$. Otherwise, we let $F_e$ be any one of the two sub-faces.
 
 Let $W$ be the union of the boundaries of the two faces containing the edge $e$ in $\tilde{\psi}$. Notice that while the first face may be $F_e$, the second face, $F'_e$ is a proper face of $\psi$, distinct from $F_{out}$. If $e$ is an irregular edge, then there must be two edges $e',e''$ of $W$, whose images cross in $\phi$, such that $e'\not\in P$. We charge this crossing of $e'$ for $e$. We now need to argue that each crossing of each edge of $\Hr3(X)$ is charged $O(\beta^*)$ times. Let $e'$ be any edge of $\Hr3(X)$, and let $F$ be one of the two faces on whose boundary $e'$ lies in $\psi_X$. Assume first that $F\neq F_{out}$. Then we can only charge $e'$ for edges $e$ adjacent to vertices of $S^{(2)}\cup \Gamma'(X)$ lying on $\gamma(F)$. As all such vertices belong to $\gamma$, by Observation~\ref{observation: 2-cuts on face}, their number is bounded by $O(\beta^*)$ (again, if $|\Gamma'(X)|<12\beta^*$, then $|S^{(2)}|\leq 2|\Gamma'(X)|\leq 24\beta^*$). If $F=F_{out}$, then we can only charge $e'$ for edges $e\in \gamma(F)$, such that $e'\in \gamma(F_e)$. Let $t,t'\in \Gamma'(X)$ be the new interface vertices lying immediately on the left and on the right of $e'$ (it is possible that $t$ or $t'$ are endpoints of $e'$). If some edge $e$ is charged to $e'$, then $e$ must also lie on the same segment of $\gamma$ between $t$ and $t'$, to which $e'$ belongs, and recall that one of the endpoints of $e$ must belong to $S_2$. From Observation~\ref{observation: num fo 2-cuts between subterminals}, the number of such edges is bounded by a constant. Therefore, the  number of times an edge may be charged is bounded by $O(\beta^*)$, and the total number of irregular edges of this type is $O(\beta^*)\cro_{\phi}(\Hr3(X),\Hr3)$.
\end{proof}

For each set $X\in \xset'$, we denote by $N(X)$ the set of all edges that either participate in crossings in the optimal drawing $\phi$ of $\Gr3$, or they belong to $\bad_E(X)$, or they are adjacent to vertices in $\bad_V(X)$. From Lemmas~\ref{lemma: bad vertices for X} and ~\ref{lemma: bad edges for X}, $|N(X)|\leq O(\beta^*\cdot \dmax)\cro_{\phi}(\Hr3(X),\Hr3)$.

\subsubsection{Part 3: finding the drawing}


Let $\phi$ be the optimal drawing of $\Gr3$. Recall that $\cro_{\phi}(\Gr3)\leq  \frac{20 \dmax^3}{\alpha^*}\cdot \opt_{cr}(\G)$.
We find a new drawing $\phi'$ of $\Gr3$, such that, for each $X\in \xset'$, no edges of $\Hr3(X)$ participate in crossings (and the matching edges will participate in crossings instead). We will show that $\cro_{\phi'}(\Gr3)\leq O(\poly(\dmax\cdot \log n\cdot \alpha^*))\cro_{\phi}(\Gr 3)$. After that, it is easy to show that there is a canonical drawing $\phi''$ of the contracted graph $H=\G_{|S}$, with $\cro_{\phi''}(H)\leq \cro_{\phi'}(\Gr3)$. The idea is that we simply replace $\Hr3(X)$ with the grid $Z_X$, for each $X\in \xset$, and since no edges of $\Hr3(X)$ participate in crossings in $\phi'$, and the ordering of the matching edges is identical in the planar drawings of $Z'_X$ and $\Gr3(X)$, we can do this transformation without increasing the number of crossings.

So from now on we can focus on finding such a drawing $\phi'$ of $\Gr3$. 
The following lemma is due to Anastasios Sidiropoulos~\cite{Tasos}. For completeness, we provide a slightly modified proof in Section \ref{sec:ell2}.

\begin{lemma}\label{lem:ell2}
Let $G=(V,E)$ be any $n$-vertex graph, and $S$ any subset of vertices of $G$, such that $G$ has property (P3) for $S$ with some parameter $\beta>0$. Moreover, assume that $G$ is $2$-connected, and it has a planar drawing $\psi$, in which the vertices of $S$ lie on the boundary of the outer face. Let $E'$ be any subset of edges of $G$. Then there is a vertex $v^*\in V$, and a collection $\pset$ of paths in $G$, such that for each $u\in S$, there is a path $P_u\in\pset$ connecting $u$ to $v^*$, and $\sum_{e\in E'}c^2(e)\leq O(\beta^2\cdot \log n\cdot |E'|)$, where $c(e)$ is the number of paths in $\pset$ containing $e$.
\end{lemma}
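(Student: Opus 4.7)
}

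The plan is to construct $v^*$ and the family $\pset$ by a randomized routing procedure driven by the flow guaranteed by property (P3), and then show that a random choice succeeds in expectation, so some deterministic choice achieves the required bound. Let $F$ be the flow from property (P3): for every ordered pair $(u,v)\in S\times S$ there is a flow $F_{u,v}$ of value one from $u$ to $v$, and the total vertex-congestion of $F$ is at most $|S|\beta$. Decompose each $F_{u,v}$ into a weighted collection of $u$--$v$ paths with total weight one; this turns $F_{u,v}$ into a probability distribution $\mu_{u,v}$ over $u$--$v$ paths.

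My plan is to pick $v^*$ at random from a carefully chosen distribution on $V$, and, conditional on $v^*$, to sample each $P_u$ independently from $\mu_{u,v^*}$. The natural first attempt is to take $v^*$ uniform on $S$; by linearity of expectation the expected congestion of a fixed edge $e$ satisfies $\expect{c(e)}=\frac{1}{|S|}\sum_{u,v^*\in S}f_{u,v^*}(e)$, which is $O(\beta)$ by the vertex-congestion bound of (P3). Since conditional on $v^*$ the indicators $X_u=\mathbf{1}[e\in P_u]$ are independent Bernoulli with parameter $f_{u,v^*}(e)$, the expansion
\[
\expect{c(e)^2\mid v^*}=\sum_u f_{u,v^*}(e)+\Bigl(\sum_u f_{u,v^*}(e)\Bigr)^2-\sum_u f_{u,v^*}(e)^2
\]
reduces the task to bounding, after averaging over $v^*$ and summing over $e\in E'$, the ``second-moment'' term $\sum_{e\in E'}\expect{(\sum_u f_{u,v^*}(e))^2}$ by $O(\beta^2\log n\cdot|E'|)$.

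The main obstacle is that the naive bound for this second-moment term can be as large as $|S|^2|E'|$ in adversarial instances, so a straightforward Cauchy--Schwarz does not suffice, and this is where I expect the $\log n$ factor and the planarity hypothesis to come in. My plan to overcome this is to exploit the planar embedding $\psi$ of $G$ with $S$ on the outer face: I would choose $v^*$ not uniformly on $S$, but from a distribution derived from a random geodesic ball / padded decomposition inside $\psi$ in the spirit of Bartal/FRT on planar graphs. In this way, for any edge $e$, the total flow from $S$ that can reach $v^*$ \emph{through} $e$ is controlled by the probability that $e$ is ``cut'' at a given scale, which is $O(\beta)$ per scale, and there are only $O(\log n)$ scales, giving the targeted $O(\beta^2\log n)$ bound per edge after squaring.

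Putting the pieces together: after verifying $\expect{\sum_{e\in E'}c(e)^2}=O(\beta^2\log n\cdot|E'|)$ by the above averaging, one instance of $(v^*,\pset)$ realising the bound exists. The paths $P_u$ can then be taken as these sampled paths; they connect each $u\in S$ to $v^*$ as required. The delicate step in the write-up will be the second-moment bound via the hierarchical/padded argument; the rest — flow decomposition, conditional independence, linearity of expectation, and de-randomization via the probabilistic method — is standard and will occupy only a few paragraphs of routine calculation.
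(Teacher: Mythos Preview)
Your setup is sound through the first-moment bound, and you correctly identify that the second-moment term $\expect{(\sum_u f_{u,v^*}(e))^2}$ is the crux. But the proposed fix is not a proof: invoking ``a distribution derived from a random geodesic ball / padded decomposition'' and asserting that the squared flow through $e$ is ``controlled by the probability that $e$ is cut at a given scale'' is a slogan, not an argument. Concretely, there are two unfilled gaps. First, the flow $F$ from property~(P3) only routes between pairs in $S$; if you draw $v^*$ from $V\setminus S$ you have no distribution $\mu_{u,v^*}$ to sample from, and if you keep $v^*\in S$ then no reweighting of $S$ can save you --- for any fixed $v^*\in S$ there may be an edge near $v^*$ carrying $\Omega(|S|)$ of the $|S|$ unit flows, so $\expect{c(e)^2\mid v^*}$ can be $\Omega(|S|^2)$ at that edge regardless of how $v^*$ was chosen. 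Second, even granting some hierarchical decomposition, you give no mechanism linking ``$e$ is cut at scale $i$'' to a bound on the flow through $e$ toward $v^*$; padded decompositions control stretch in tree embeddings, not squared congestion of single-sink routings, and you have not said what metric the decomposition is even taken in.

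The paper's proof is structurally different and does not route along the (P3) flow at all. It first proves the lemma for the $k\times k$ grid by an explicit construction: route the last row to a center via nested rectangles so that the congestion at scale $2^i$ from the center is $O(2^i)$, then choose the center uniformly at random in a $(k/2)\times(k/2)$ window so that any fixed edge lands at scale $2^i$ with probability $O(4^{-i})$, giving $\expect{c^2(e)}\le\sum_i O(4^i)\cdot O(4^{-i})=O(\log k)$. For general $G$, it uses property~(P3) and the planar drawing with $S$ on the outer face only to \emph{embed} a $k\times k$ grid (with $k=\Theta(|S|/\beta)$) into $G$ with edge-congestion~$2$, route $S$ into the last row of that grid with congestion $O(\beta)$, and then invoke the grid result; the factor $\beta^2$ comes from squaring the $O(\beta)$ congestion of this embedding/routing layer, and the $\log n$ from the grid analysis. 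The key idea your plan is missing is precisely this grid-minor step: it lets $v^*$ range over $\Theta(|S|^2/\beta^2)$ interior vertices rather than the $|S|$ boundary vertices of $S$, which is what disperses the second moment.
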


Recall that from Claim~\ref{claim: graph H3 has property P3}, for each $X\in \xset'$, the graph $\Hr3(X)$ has property (P3) for $\Gamma'(X)$, with parameter $O(\beta^*)$. Fix some $X\in \xset'$, and consider the subset $N(X)$ of edges of $\Hr3(X)$. Using Lemma~\ref{lem:ell2}, we can find a vertex $v^*_X$, and a collection $\pset_X$ of paths in $\Hr3(X)$, such that for each vertex $t\in \Gamma'(X)$, there is a path $P_t\in \pset_X$ connecting $t$ to $v^*_X$ in $\Hr3(X)$, and $\sum_{e\in N(X)}c^2(e)\leq O((\beta^*)^2 \cdot \log n\cdot |N(X)|)$, where $c(e)$ is the number of paths in $\pset_X$ containing $e$.

Consider the planar drawing $\psi_X$ of $\Hr3(X)$, with all vertices of $\Gamma'(X)$ on the boundary $\gamma$ of the outer face. Recall that $\Hr3(X)$ is $2$-connected, so $\gamma$ is a simple cycle. Denote by $\sigma_X$ the ordering of the vertices of $\Gamma'(X)$ along $\gamma$. Observe that $\psi_X$ induces a drawing of the paths in $\pset_X$, and we assume w.l.o.g. that the paths in $\pset_X$ are uncrossed w.r.t. this drawing, so that the paths $\set{P_t}_{t\in \Gamma'(X)}$ arrive at vertex $v^*$ in the same order as in $\sigma_X$ (this can be assumed w.l.o.g. since otherwise we can uncross the paths in $\pset_X$ to ensure this, without increasing the congestion on edges). Therefore, we obtain a planar drawing $\psi'_X$ of the paths in $\pset_X$. For each edge $e$ in $\Hr3(X)$, this drawing induces an ordering $\pi_e$ on all the paths in $\pset_X$ containing $e$. For each vertex $v$ of $\Hr3(X)$, this drawing induces a ``local planar drawing'' $\pi_v$ of all paths in $\pset_X$ going through vertex $v$. Consider the graph $H^*_X=\Hr3(X)\setminus N(X)$, and let $C_X$ be the connected component of this graph, containing $v^*$. Then the drawing of $C_X$ induced by $\phi$ is exactly the same as the drawing of $C_X$ induced by $\psi_X$, because $C_X$ does not contain irregular vertices or edges, or edges participating in crossings in $\phi$. 
Let $E'$ be the set of edges incident on $C_X$. For each edge $e\in E'$, let $T(e)\sse \Gamma'(X)$ be the set of vertices $t\in \Gamma'(X)$, such that $P'_{t'}$ contains $e$. If the edges in $E'$ appear in the order $e_1,e_2,\ldots,e_k$ along the boundary of the drawing of $C_X\cup E'$, induced by $\psi_X$, then each set $T(e_i)$ of vertices appears consecutively in $\sigma_X$, and the vertices in sets $T(e_1),T(e_2),\ldots,T(e_k)$ are ordered correctly  between the sets (w.r.t. $\sigma_X$).
For each path $P_t\in \pset_X$, let $P'_t$ be the portion of the path between $t$ and $C_X$. The ordering in which paths $P'_t$ hit $C_X$ in the drawing $\psi'_X$ is exactly the same as $\sigma_X$.

We are now ready to transform the drawing $\phi$. Start with the drawing $\phi$ of $\Gr3$. For each $X\in \xset'$, draw a closed curve $\gamma(X)$ around the drawing of $C_X$ in $\phi$. Since $C_X$ is a connected graph, whose edges do not participate in crossings in $\phi$, this can be done so that no other vertices or edges of $\Gr3$ appear inside $\gamma(X)$. Erase all edges and vertices of $\Hr3(X)$ from $\phi$, and instead place the embedding $\psi_X$ of $\Hr3(X)$ inside the curve $\gamma_X$, with the images of the vertices $\Gamma'(X)$ lying on $\gamma_X$. Let $e=(t,t')$ be any matching edge of $\Gr3(X)$, where $t\in \Gamma(X)$, $t'\in \Gamma'(X)$. Recall that $t$ does not belong to any other graph $\Hr3(X')$, for $X'\in \xset'$ (because of Step 2, where we have introduced new interface vertices). Therefore, the image of the vertex $t$ remains the same as in $\phi$, while the image of $t'$ now lies on $\gamma_X$. Let $v\neq t'$ be the other endpoint of $P'_{t'}$, and let $P$ be the concatenation of $P'_{t'}$ with the matching edge $e=(t,t')$. In the original drawing, $\phi$, the path $P$ connected the images of $t$ and $v$, where the image of $v$ lies just inside the curve $\gamma_X$. We use the image $\phi(P)$ to draw the edge $e$ in the new embedding. We perform this operation for each one of the matching edges of $\Gr3(X)$. We need to specify how these drawings interact with each other, in order to avoid large number of crossings. In particular, when a number of such paths $P$ share the same vertex $v$ or the same edge $e'$, we need to specify how the corresponding matching edges are drawn along the original image of $e'$, or around the original image of $v$. If $v\not\in C_X$ is a regular vertex, then the local drawing of the paths $P'_t$ that contain $v$ is the same as in $\pi_v$. Similarly, if $e\not\in C_X$ is a regular edge, then the local drawing of the paths containing $e$ is the same as in $\pi_e$. If $e$ is an irregular edge, then we allow all paths that use $e$ to cross at most once with each other, so the number of crossings due to $e$ is bounded by $c^2(e)$. Similarly, if $v$ is an irregular vertex, then we allow all paths that use $v$ to cross at most once with each other, so the number of crossings due to $v$ is bounded by $\left (\sum_{e: v\in e}c(e)\right )^2\leq \dmax^2\sum_{e:v\in e}c^2(e)$. Finally, whenever a pair of edges $e,e'$ in the drawing $\phi$ of graph $\Gr3$ cross, the images of the paths that contain $e$ will cross the images of the paths containing $e'$. The number of all such new crossings is bounded by $\sum_{X\in \xset}\sum_{e\in N(X)}c^2(e)$.
These are the only possible new crossings in the new drawing.

 It now only remains to re-order the images of the matching edges, so they enter the circle $\gamma(X)$ in the same order as they appear in the drawing $\psi_X$.
Recall that all vertices and edges of $C_X$ are regular, and $E'=\set{e_1,\ldots,e_k}$ is the set of edges incident on $C_X$, that appear in this order along the boundary of the drawing of $C_X\cup E'$ in $\psi_X$. The vertices in sets $T(e_1),T(e_2),\ldots,T(e_k)$ are ordered correctly between the sets (w.r.t. $\sigma_X$), but may not be ordered correctly within each set. However, re-ordering the paths within each set only introduces at most $\sum_{e\in N(X)}c^2(e)$ crossings. 

To summarize, the total number of new crossings due to the above transformation of $\phi$ is bounded by:

\[\begin{split}
\sum_{X\in \xset'}\sum_{e\in N(X)}O(\dmax^2c^2(e))&\leq \sum_{X\in \xset'}O(\dmax^2(\beta^*)^2 \cdot \log n\cdot |N(X)|)\\
&\leq  \sum_{X\in \xset'}O(\dmax^2(\beta^*)^2 \cdot \log n\cdot \beta^*\cdot \dmax\cro_{\phi}(\Hr3(X),\Hr3))\\
&\leq O(\dmax^3(\beta^*)^3\cdot \log n\cdot \cro_{\phi}(\Hr3))\\
&\leq O(\dmax^6 \log^{17/2}n\cdot (\log\log n)^3\cro_{\phi}(\Hr3))\\
&\leq O(\dmax^6 \log^{17/2}n\cdot (\log\log n)^3\cdot \frac{20\dmax^3}{\alpha^*}\cdot \optcro{G})\\
&\leq O(\dmax^9\log^{10}n\cdot (\log\log n)^4\cdot \optcro{G})
\end{split}
\]

It now only remains to prove Lemma~ \ref{lem:ell2}.


\subsubsection{Proof of Lemma \ref{lem:ell2}}\label{sec:ell2}
Let $v^*$ be a vertex in $G$, and $\pset$ a collection of paths connecting every vertex $u\in S$ to $v^*$. Given any edge $e\in E$, the congestion of $e$ w.r.t. $\pset$, denoted by $c_{\pset}(e)$, is the number of paths in $\pset$ containing $e$. In order to prove the lemma, it is enough to show a distribution $\dset$ on pairs $(v^*,\pset)$, such that for each edge $e\in E$, $\expect[(v^*,\pset)\in \dset]{c^2_{\pset}(e)}\leq  O(\beta^2\cdot \log n)$. In the rest of the proof, we focus on finding such a distribution.

The proof consists of two parts. In the first part, we prove a slightly stronger version of the lemma for the special case where $G$ is the grid, and $S$ is the set of vertices in the last row of $G$. In the second part, we extend this proof to general graphs. Throughout the proof, given a $k\times k$ grid $Z$, we denote by $(i,j)$ the vertex that lies in the $i$th row, $j$th column of $Z$.

\begin{claim}\label{claim: routing in big grid} Let $H$ be a $k\times k$ grid, where $k$ is a power of $2$, and let $S$ be the set of vertices in the last row of $G$. Then there is a distribution $\dset$ on pairs $(u^*,\qset)$, where $u^*\in V(H)$, and $\qset$ is a collection of paths connecting every vertex in $S$ to $u^*$, such that for each edge $e\in E(H)$, $\expect[(u^*,\qset)\in \dset]{c^2_{\qset}(e)}=O(\log k)$.
\end{claim}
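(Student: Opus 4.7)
The plan is to construct a random hierarchical aggregation scheme in $H$ that routes all $k$ terminals (on the bottom row) to a single random vertex $u^*$, and to bound $\expect{c_\qset^2(e)} = O(\log k)$ for every edge $e$ by induction on $k$.

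First, I would describe the scheme recursively. Split $H$ into four congruent quadrants $H_{\mathrm{tl}}, H_{\mathrm{tr}}, H_{\mathrm{bl}}, H_{\mathrm{br}}$, each a $(k/2)\times(k/2)$ subgrid; the two bottom quadrants $H_{\mathrm{bl}}, H_{\mathrm{br}}$ each contain $k/2$ of the terminals on their own bottom rows. Recursively apply the scheme inside each of $H_{\mathrm{bl}}$ and $H_{\mathrm{br}}$, using fresh independent randomness, to obtain aggregators $u^*_{\mathrm{bl}} \in H_{\mathrm{bl}}$ and $u^*_{\mathrm{br}} \in H_{\mathrm{br}}$ along with path collections $\qset_{\mathrm{bl}}$ and $\qset_{\mathrm{br}}$. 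At the top level I draw a row $R$ uniformly from the top half of $H$, a column $C$ uniformly from $\{1,\dots,k\}$, and an orientation bit $b\in\{0,1\}$ uniformly (independently of everything else), and declare $u^*=(R,C)$. I then connect each sub-aggregator to $u^*$ by a canonical $L$-shaped path (vertical-then-horizontal if $b=0$, horizontal-then-vertical if $b=1$). For each terminal $s$ in $H_{\mathrm{bl}}$, its path $P_s\in\qset$ is the concatenation of its path in $\qset_{\mathrm{bl}}$ with the top-level connecting path from $u^*_{\mathrm{bl}}$ to $u^*$, and symmetrically for $H_{\mathrm{br}}$.

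Second, I would carry out the inductive analysis, proving simultaneously the two bounds $\expect{c(e)} = O(1)$ and $\expect{c^2(e)} \leq C\log k$ for a suitable constant $C$ and every edge $e\in E(H)$. The first-moment bound comes from a volume argument: the expected total length of all $k$ paths is $O(k^2)$, and the top-level random choices spread that load roughly uniformly across the $\Theta(k^2)$ edges of $H$. For the second-moment bound, I decompose $c(e) = c_{\mathrm{top}}(e) + c_{\mathrm{rec}}(e)$, where $c_{\mathrm{top}}$ counts the (at most two) top-level connecting paths through $e$, and $c_{\mathrm{rec}}$ counts the paths in $\qset_{\mathrm{bl}}\cup\qset_{\mathrm{br}}$ through $e$. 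Then
\[
\expect{c^2(e)} \;=\; \expect{c_{\mathrm{top}}^2(e)} \;+\; 2\expect{c_{\mathrm{top}}(e)\,c_{\mathrm{rec}}(e)} \;+\; \expect{c_{\mathrm{rec}}^2(e)}.
\]
The first term is $O(1)$ since only two paths contribute at the top level. The last term is bounded by the inductive hypothesis: if $e$ lies inside $H_{\mathrm{bl}}$ or $H_{\mathrm{br}}$, then $\expect{c_{\mathrm{rec}}^2(e)} \leq C\log(k/2)$, and otherwise $c_{\mathrm{rec}}(e)=0$. For the cross term, I would condition on the pair $(u^*_{\mathrm{bl}},u^*_{\mathrm{br}})$ and use that, conditional on this pair, the top-level randomness $(R,C,b)$ is independent of the recursive paths inside $H_{\mathrm{bl}},H_{\mathrm{br}}$; together with the first-moment bounds this yields $\expect{c_{\mathrm{top}}(e)\,c_{\mathrm{rec}}(e)} = O(1)$. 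Adding everything gives $\expect{c^2(e)} \leq C\log(k/2) + O(1) \leq C\log k$ for $C$ large enough, closing the induction. The base case $k=1$ is trivial.

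The main obstacle will be establishing the uniform first-moment bound $\expect{c(e)} = O(1)$, especially for edges near the interfaces between the four quadrants, which can be traversed by the top-level connecting paths with probabilities that do not immediately decouple from the recursive randomness. I expect to handle this precisely via the orientation bit $b$: averaging over $b$ and over the uniform $(R,C)$ spreads each top-level path's edge usage so that the probability of any fixed edge lying on a top-level path is $O(1/k)$, which combined with induction gives $\expect{c(e)} = O(1)$ uniformly. A related subtlety is that $u^*_{\mathrm{bl}}$ is itself random (coming from the recursion) and influences $c_{\mathrm{top}}$; this is the reason for conditioning on $(u^*_{\mathrm{bl}},u^*_{\mathrm{br}})$ before decoupling the cross term.
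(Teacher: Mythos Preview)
Your decomposition $c(e)=c_{\mathrm{top}}(e)+c_{\mathrm{rec}}(e)$ together with the claim $\expect{c_{\mathrm{top}}^2(e)}=O(1)$ hides a fatal miscount. By your own description, each terminal path $P_s$ is the concatenation of its recursive path with the top-level connecting $L$-path from the appropriate sub-aggregator to $u^*$. Hence every one of the $k/2$ terminals in $H_{\mathrm{bl}}$ uses the \emph{same} left connecting path. If $c_{\mathrm{top}}(e)$ is meant to make the decomposition $c(e)=c_{\mathrm{top}}(e)+c_{\mathrm{rec}}(e)$ correct, then $c_{\mathrm{top}}(e)\in\{0,k/2,k\}$, not $\{0,1,2\}$; if instead it literally ``counts the (at most two) top-level connecting paths,'' then the decomposition is false. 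Either way the step ``the first term is $O(1)$'' collapses.

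Worse, the \emph{scheme} itself cannot yield the bound, not just this analysis. Take any vertical edge $e$ on the boundary between the top and bottom halves of $H$ (say between rows $k/2$ and $k/2{+}1$). No recursive path uses $e$, so $c(e)=c_{\mathrm{top}}(e)$. Each connecting $L$-path has length $\Theta(k)$, and under your randomization $(R,C,b,u^*_{\mathrm{bl}},u^*_{\mathrm{br}})$ the probability that a fixed boundary edge lies on a given connecting path is $\Theta(1/k)$ (e.g.\ when $b=1$ the boundary is crossed in column $C$, uniform over $k$ columns). Thus
\[
\expect{c^2(e)} \;\ge\; \left(\tfrac{k}{2}\right)^2\cdot \Pr[e\text{ on a connecting path}] \;=\; \Theta(k),
\]
which already violates the target $O(\log k)$. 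The underlying reason is that your top level funnels all $k$ flow units through two thin paths; to get $\expect{c^2(e)}=O(\log k)$ you must spread the flow so that congestion $\Theta(2^i)$ is confined to a region whose area shrinks like $\Theta(k^2/4^i)$. The paper achieves exactly this: it builds a \emph{deterministic} routing to a corner in which edges at ``depth'' $i$ (inside nested rectangles of side $\Theta(k/2^i)$) carry congestion $O(2^i)$, and then applies a uniform random shift of the target vertex so that a fixed edge lands at depth $i$ with probability $O(4^{-i})$, giving $\sum_i O(4^i)\cdot O(4^{-i})=O(\log k)$.
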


\begin{proof}
Let $Z$ be the $k'\times k'$ grid, where $k'=k/2$. We draw a number of rectangles in $Z$, that will define a partition of the edges of $Z$. For $1\leq i\leq \log k'$, rectangle $R_i$ contains, as its top boundary, row $k'/2^i$ of $Z$, bottom boundary row $k'/2^{i-1}$, left boundary column $1$, and right boundary column $k'/2^{i-1}$. (See Figure~\ref{fig: rectangles}). We say that edge $e$ belongs to set $E_i$, iff it either lies inside the rectangle $R_i$, or on its bottom, left, or right boundaries. 
We need the following claim.

\begin{figure}[h]
\scalebox{0.3}{\includegraphics{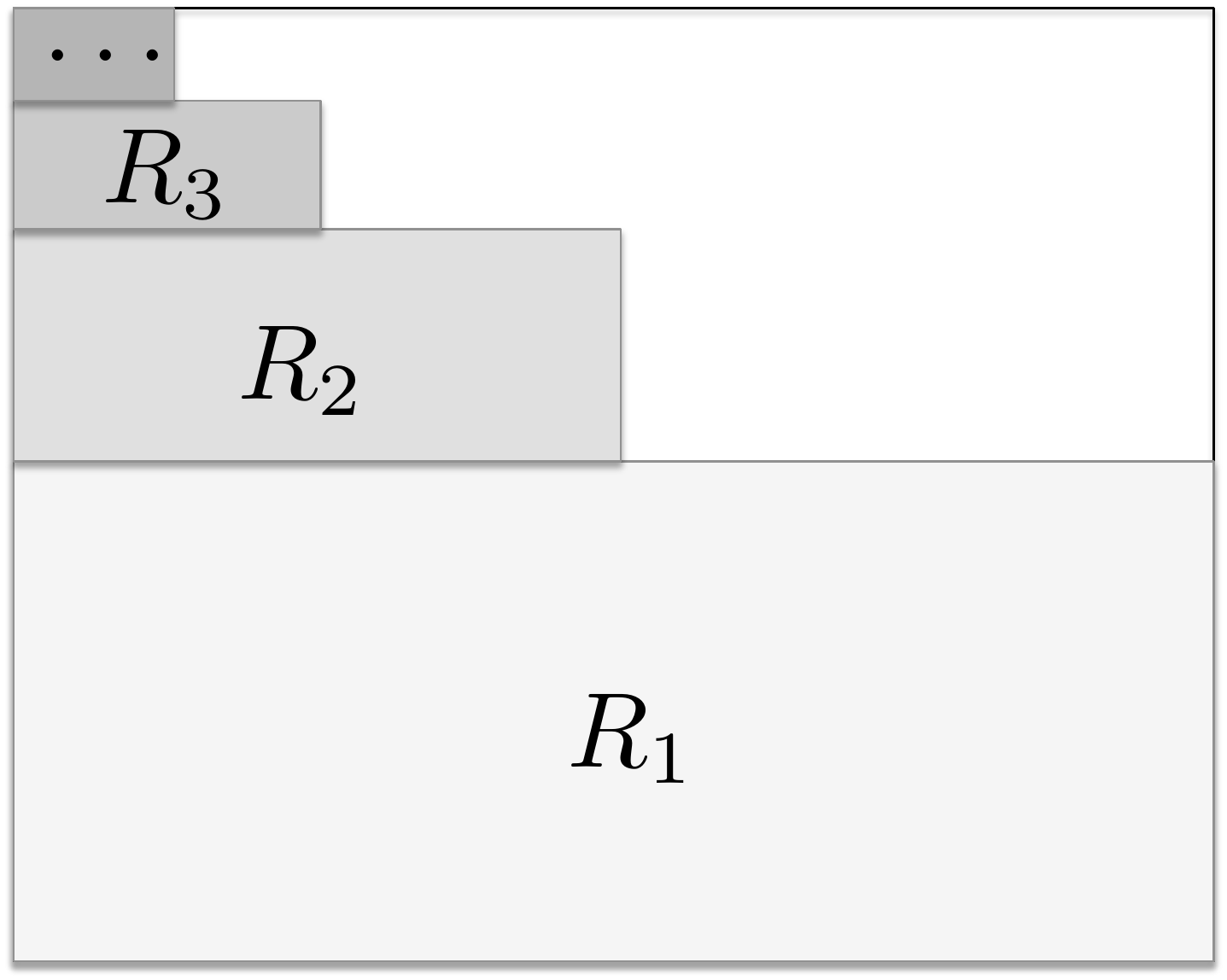}} \caption{Rectangles for grid $Z$.}
\label{fig: rectangles}
\end{figure}

\begin{claim}\label{claim: routing in small grid}
There is a collection $\pset'$ of paths in $Z$, that connect every vertex in the last row of $Z$ to the vertex $(1,1)$, such that for each $1\leq i\leq \log k'$, the congestion on any edge $e\in E_i$, is at most $O(2^i)$.
\end{claim}

\begin{proof}
It is enough to show that there is a flow $F$ in $Z$, where every vertex in the last row of $Z$ sends one flow unit to vertex $(1,1)$, and for each $1\leq i\leq \log k'$, the congestion on any edge $e\in E_i$ is $O(2^i)$. Since this is a single-sink flow, the claim will then follow from the integrality of flow.

Fix $1\leq i\leq \log k'$, and consider the rectangle $R_i$. Let $A$ be the set of the vertices lying on its bottom boundary, $|A|=k'/2^{i-1}$, and let $B$ be the set of the first $k'/2^{i}$ vertices lying on its top boundary. We show that there is a collection $\pset_i$ of paths, contained in $R_i$, 
connecting the vertices of $A$ to the vertices of $B$, such that every vertex in $A$ is an endpoint of exactly one such path, every vertex in $B$ an endpoint of exactly two paths, and the congestion on any edge in $E_i$ is bounded by $2$. Once we obtain such routing inside every rectangle $R_i$, in order to obtain the final flow $F$, we concatenate the paths in sets $\pset_i$, for $1\leq i\leq \log k_i$, sending $2^{i-1}$ flow units along each path in $\pset_i$.

We now show how to find the desired routing inside $R_i$. Let $v_1,\ldots,v_{k'/2^{i-1}}$ be the vertices of $A$, appearing in this order on the bottom boundary of $R_i$, and let $u_1,\ldots,u_{k'/2^{i}}$ be the first $k'/2^{i}$ vertices of $B$. First, for each $1\leq j\leq k'/2^i$, we define the path $P_{2j}$, connecting $v_{2j}$ to $u_j$, as follows. The path will follow column $2j$ of $R_i$ up to the $j$th row of $R_i$. Then it will follow row $j$ to column $j$, and finally column $j$ to $u_j$. In order to define the paths $P_{2j-1}$, connecting $v_{2j-1}$ to $u_j$, for $1\leq j\leq 2^i$, we simply concatenate the edge 
 $(v_{2j-1},v_{2j})$ with the path $P_{2j}$. This gives the desired routing in $R_i$, with congestion $2$.
 \end{proof}
 
We are now ready to define the distribution $\dset$ for the grid $H$. Let $Z'$ be the $k'\times k'$ sub-grid of $H$, where vertex $(1,1)$ of grid $Z'$ coincides with the vertex $(1,1)$ of $H$. We choose $u^*$ uniformly at random from the vertices of $Z'$. Once vertex $u^*$ is chosen, let $Z''$ be the $k'\times k'$ sub-grid of $H$, where the vertex $(1,1)$ of $Z''$ coincides with $u^*$ (see Figure~\ref{fig: shifted grid}).

\begin{figure}[h]
\scalebox{0.3}{\includegraphics{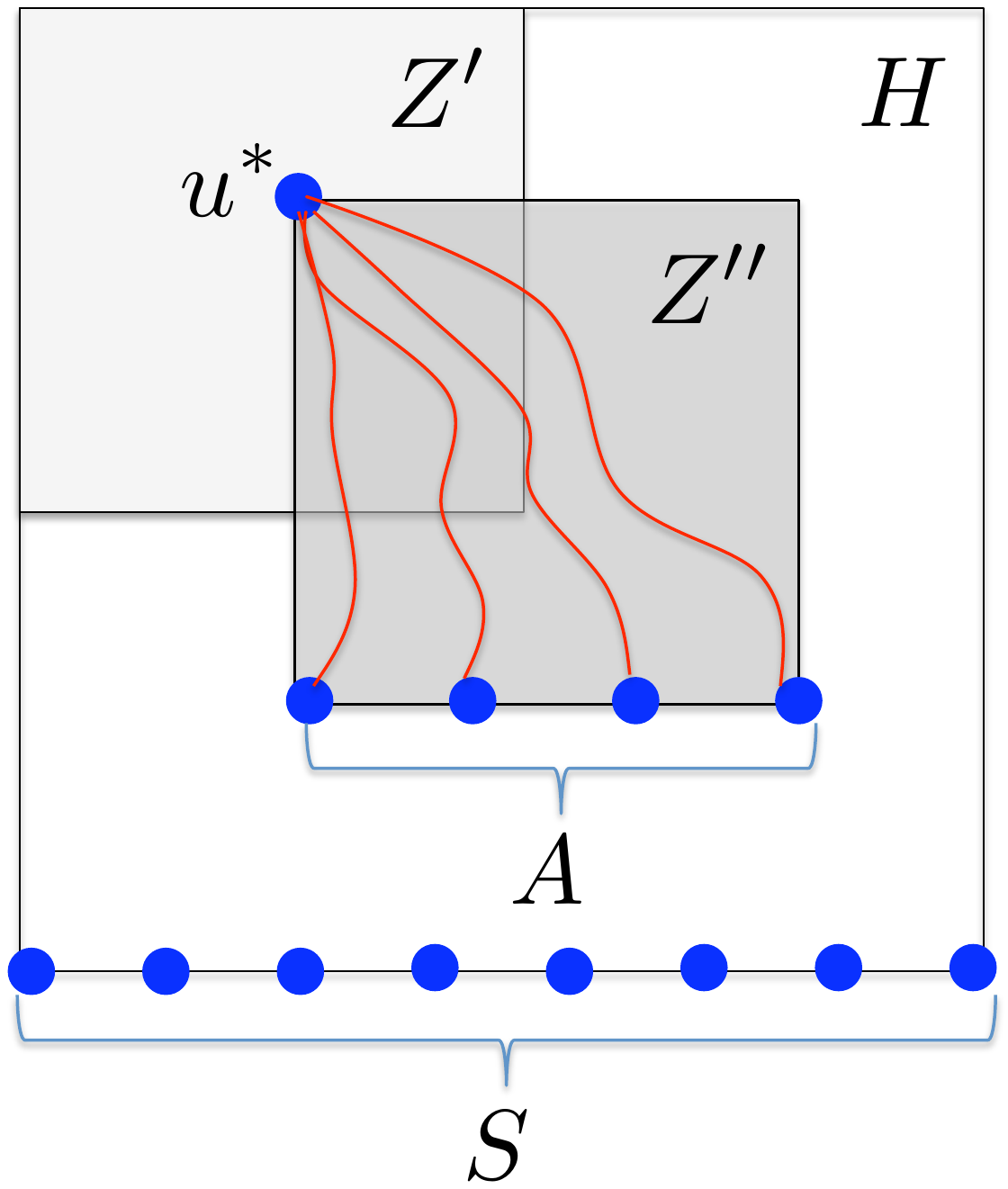}} \caption{Sub-grids $Z',Z''$ of $H$.}
\label{fig: shifted grid}
\end{figure}

Let $A$ denote the set of vertices in the last row of $Z''$, and let $\qset'$ be the collection of paths connecting vertices of $A$ to $u^*$ in $Z''$, as in Claim~\ref{claim: routing in small grid}. 
We now define the collection $\qset$ of paths, connecting the vertices of $S$ to $u^*$.
For each vertex $v\in S$, we define a path $P_1(v)$, connecting $v$ to some vertex $v'\in A$, and we let $P_2(v)\in \pset'$ be the path connecting $v'$ to $u^*$ inside $Z''$.  We choose the collection $\set{P_1(v)}_{v\in S}$ of paths, so that every vertex in $A$ is an endpoint of exactly two such paths, and the total congestion on the edges of $H$ due to paths  $\set{P_1(v)}_{v\in S}$ is bounded by a constant. For each vertex $v\in S$, we then let $P_v$ be the concatenation of $P_1(v)$ and $P_2(v)$, and we set $\qset=\set{P_v}_{v\in S}$.

Fix any edge $e\in E(H)$. We now bound the expected value of $c_{\qset}^2(e)$. If edge $e$ does not fall inside $Z''$, then the congestion on $e$ is bounded by a constant. Otherwise, if $e$ falls inside the rectangle $R_i$ of $Z''$, the congestion on $e$ is $O(2^i)$. The probability that $e$ belongs to $R_i$ is $O(1/2^{2i})$: indeed, the probability that both endpoints of $e$ belong to rows $k'/2^i,\ldots,k'/2^{i-1}$ is at most $1/2^{i}$, and the probability that they belong to columns $1,\ldots,k'/2^{i-1}$ is at most $1/2^{i-1}$. Therefore, $\expect{c^2_{\qset}(e)}\leq \sum_{i=1}^{\log k'} O(2^{2i}/2^{2i})=O(\log k)$.
 \end{proof}

In order to prove the lemma for a general graph $G$, and a subset $S$ of vertices of $G$, we first embed a $k\times k$ grid into $G$, where $k=O(|S|/\beta)$. We then route the vertices of $S$ to the last row of this grid, and use Claim~\ref{claim: routing in big grid} for routing inside the grid.

\begin{definition}
An embedding of a grid $H$ into a graph $G$ is a mapping $\pi$, where the vertices of $H$ are mapped to vertices of $G$, and edges $e=(u,v)\in E(H)$ are mapped to paths $\pi(e)$ connecting $\pi(u)$ to $\pi(v)$ in $G$. The congestion of the embedding is the maximum, over all edges $e'\in E(G)$, of the number of paths $\pi(e)$ containing $e'$, for all $e\in E(H)$.
\end{definition}

\begin{claim}\label{claim: embed grid into graph}
Let $G$ be any graph, $S$ a subset of vertices of $G$, such that $G$ has property (P3) for $S$ with parameter $\beta$, and there is a planar drawing $\psi$ of $G$, in which all vertices of $S$ lie on the boundary $\gamma$ of the outer face $\fout$ of the drawing. Then there is an embedding of the $k\times k$ grid $H$ into $G$ with congestion $2$, where $k$ is a power of $2$, $k=O(|S|/\beta)$. Moreover, the vertices of the last row of $H$ are mapped to a subset $A$ of $k$ distinct vertices of $G$, and there is a collection $\pset'$ of completely disjoint paths, connecting $k$ distinct vertices of $S$ to distinct vertices of $A$.
\end{claim}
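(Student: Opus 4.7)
The plan is to build the grid embedding from two orthogonal families of vertex-disjoint paths obtained from property (P3), intersected in a grid-like pattern forced by planarity.

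First I would partition $S$ into four consecutive arcs $S_1,S_2,S_3,S_4$ of $\lfloor |S|/4\rfloor$ vertices each, in the order they appear along the boundary $\gamma$ of the outer face. Property (P3) implies that for any bipartition $(S_A,S_B)$ of $S$, at least $|S_A|\cdot|S_B|/(|S|\beta)$ flow units can be routed between $S_A$ and $S_B$ with congestion $1$, and hence at least that many vertex-disjoint paths exist (after the usual vertex-splitting reduction, using that $\dmax$ is bounded in our setting, or by appealing directly to the edge-min-cut bound). Taking $k$ to be the largest power of $2$ at most $c\cdot|S|/\beta$ for a small constant $c$, I would obtain $k$ vertex-disjoint ``column'' paths $C_1,\dots,C_k$ from $S_1$ to $S_3$ and $k$ vertex-disjoint ``row'' paths $R_1,\dots,R_k$ from $S_2$ to $S_4$.

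Next I would argue that these paths can be arranged to form a genuine $k\times k$ grid. Using the planar drawing $\psi$ with all of $S$ on $\gamma$, a Jordan-curve argument shows that every column must intersect every row at a vertex: the union of $C_j$ with the arc of $\gamma$ running from its $S_1$-endpoint to its $S_3$-endpoint through $S_2$ separates the vertices of $S_2$ from those of $S_4$ inside $\gamma$, so the $S_2$-$S_4$ path $R_i$ must share a vertex with $C_j$. To get the grid order, I would index $C_1,\dots,C_k$ and $R_1,\dots,R_k$ by the cyclic positions of their endpoints along $\gamma$, then apply an uncrossing procedure: whenever two rows (or two columns) meet the opposite family in the wrong relative order at a common vertex, swap the suffixes of the two offending paths at that vertex. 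This operation preserves vertex-disjointness within each family, and a standard lexicographic potential argument shows the process terminates with a family in which the intersections $v_{i,j}\in R_i\cap C_j$ appear in the correct monotone order along both $R_i$ (with respect to $j$) and $C_j$ (with respect to $i$).

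With such aligned families in hand, I would define the embedding $\pi$ by mapping grid vertex $(i,j)$ to $v_{i,j}$ (these are distinct: two $v_{i,j},v_{i,j'}$ with the same row cannot coincide since they lie on distinct vertex-disjoint columns, and symmetrically across rows), the horizontal edge $(i,j)$-$(i,j{+}1)$ to the sub-path of $R_i$ between $v_{i,j}$ and $v_{i,j+1}$, and the vertical edge $(i,j)$-$(i{+}1,j)$ to the sub-path of $C_j$ between $v_{i,j}$ and $v_{i+1,j}$. Because the rows are pairwise vertex-disjoint, every edge of $G$ lies on at most one $R_i$ and hence in at most one horizontal path segment; the same holds for the columns and vertical segments. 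Therefore the congestion on every edge of $G$ is at most $2$, as required.

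Finally, for $\pset'$, set $A=\{v_{k,1},\dots,v_{k,k}\}$ (the images of the last row) and, for each $j$, let $P'_j$ be the suffix of $C_j$ from $v_{k,j}$ to its $S_3$-endpoint. These $P'_1,\dots,P'_k$ are subpaths of the pairwise vertex-disjoint columns, so they are themselves pairwise vertex-disjoint, and they connect $k$ distinct vertices of $S_3\subseteq S$ to the $k$ distinct vertices of $A$. The main technical obstacle is the uncrossing step that upgrades ``every row meets every column'' into a properly ordered $k\times k$ grid; the rest of the argument is straightforward bookkeeping.
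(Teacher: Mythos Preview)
Your overall strategy matches the paper's: partition $S$ into four consecutive boundary arcs $S_1,S_2,S_3,S_4$, use property~(P3) together with Menger's theorem to extract $k$ vertex-disjoint $S_1$--$S_3$ paths and $k$ vertex-disjoint $S_2$--$S_4$ paths, observe via a Jordan-curve argument that every path of one family meets every path of the other, and read off the grid from the intersection pattern. Your construction of $\pset'$ from tails of the column paths is also exactly what the paper does.

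The gap is your uncrossing step. As you describe it---``whenever two rows (or two columns) meet the opposite family in the wrong relative order at a common vertex, swap the suffixes of the two offending paths at that vertex''---it is not well-defined: the rows are pairwise vertex-disjoint by construction, so two rows never share a vertex at which suffixes could be swapped, and the same holds for columns. There is nothing to uncross within a family, and swapping pieces across families would destroy the row/column separation you need. You correctly identified this as the main technical obstacle, but uncrossing is the wrong tool.

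The paper handles this point without any rerouting. After labeling the paths so that $P_1,\dots,P_k$ (from $S_1$ to $S_3$) and $Q_1,\dots,Q_k$ (from $S_2$ to $S_4$) are each in non-crossing nested position---which is automatic from planarity and vertex-disjointness---it simply sets $v_{i,j}$ to be the \emph{first} vertex of $Q_i$ (traversed from its $S_2$-endpoint) lying on $P_j$. Because the $P_j$'s are nested, $Q_i$ is forced to meet them in the order $P_k,P_{k-1},\dots,P_1$, so the $v_{i,j}$ are monotone in $j$ along each $Q_i$; and because the $Q_i$'s are nested and mutually disjoint, one checks that the $v_{i,j}$ are also monotone in $i$ along each $P_j$. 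Thus the horizontal segments on each $Q_i$ and the vertical segments on each $P_j$ are internally non-overlapping, and since each edge of $G$ lies on at most one $P_j$ and at most one $Q_i$, the congestion bound of~$2$ follows. No modification of the path families is needed: planarity plus the ``first intersection'' rule already forces grid order.
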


\begin{proof}
Let $k$ be the largest power of $2$, smaller than $\lfloor |S|/(16\beta)\rfloor$.

Consider the planar drawing $\psi$ of $G$, and let $\sigma$ be the circular ordering of the vertices of $S$ on the boundary $\gamma$ of $\fout$. Let $S_1,S_2,S_3$ and $S_4$ be four disjoint subsets of $S$, such that for each $1\leq i\leq 4$, $|S_i|=\lfloor |S|/4\rfloor$, the vertices of $S_i$ appear consecutively in $\sigma$, and the ordering between the sets in $\sigma$ is $(S_1,S_2,S_3,S_4)$. 

We claim that there is a collection of $k$ vertex-disjoint paths in $G$, connecting the vertices of $S_1$ to the vertices of $S_3$. Assume otherwise. Then there is a $k$-vertex separator $C$ in $G$, separating the vertices of $S_1$ from the vertices of $S_3$. However, due to property (P3), it must be possible to send at least $\min\set{|S_1|,|S_3|}\cdot |S|/2>|S|^2/16$ flow units across the cut $C$, with congestion at most $|S|\cdot \beta$ on vertices. Therefore, the minimum cut separating $S_1$ from $S_3$ must contain more than $|S|/(16\beta)\geq k$ vertices. 
Let $\pset_1$ denote this collection of $k$ disjoint paths. Let $S_1'=\set{a_1,\ldots,a_k}$ be the subset of vertices of $S_1$ participating in these paths, and assume that they appear on $\sigma$ in this order. Let $S_3'=\set{a_1',\ldots,a_k'}$ be the subset of vertices of $S_3$ participating in these paths, and assume that they appear on $\sigma$ in the reverse order, $a_k',\ldots,a_1'$. Since $\psi$ is a planar drawing, set $\pset_1$ contains  a collection $(P_1,\ldots,P_k)$ of paths, where path $i$, for $1\leq i\leq k$, connects $a_i$ to $a_i'$.

Similarly, we can find a collection $\pset_2$ of vertex-disjoint paths $(Q_1,\ldots,Q_k)$, connecting a subset $S'_2=(b_1,\ldots,b_k)$ of vertices of $S_2$, to a subset $S_4'=(b_1',\ldots,b_k')$, where path $Q_i$ connects $b_i$ to $b_i'$, and the vertices $b_1,\ldots,b_k$ appear on $\sigma$ in this order. We are now ready to define the embedding $\pi$ of $H$ into $G$. 
For each $1\leq i\leq k$, $1\leq j\leq k$, let $v_{i,j}$ be the first vertex of $Q_i$ that belongs to $P_j$ (it is easy to verify that such a vertex exists, because of the planar drawing of $G$, in which the vertices of $S$ lie on $\gamma$).  Since the paths $\set{Q_i}_{i=1}^k$ are vertex-disjoint, it is easy to verify, that for every path $P_j$, the vertices $v_{1,j},v_{2,j},\ldots,v_{k,j}$ are distinct vertices of $P_j$, that appear on $P_j$ in this order.
For $1\leq i\leq k$, $1\leq j\leq k$, we map the vertex $(i,j)$ of $H$ to $v_{i,j}$. This concludes the definition of the mapping $\pi$ for the vertices of $H$. We now define the mappings of edges of $H$. Let $((i,j),(i,j+1))$ be any horizontal edge of $H$. We map this edge to the segment of $Q_i$ lying between $\pi((i,j))$ and $\pi((i,j+1))$. Similarly, if $((i,j),(i+1,j))$ is any vertical edge of $H$, we map it to the segment of $P_j$ connecting the images of vertices $(i,j)$ and $(i+1,j)$. Since the sets $\pset_1$, $\pset_2$ of paths are each vertex disjoint, the congestion on edges is at most $2$. Finally, we let $A=\set{v_{k,i}}_{i=1}^k$, $S'=\set{a_1',\ldots,a_k'}$, and we let $\pset'$ contain, for each path $P_i$, for $1\leq i\leq k$, the segment of $P_i$ between $a_i'$ and $v_{k,i}$.
\end{proof}

We are now ready to define the distribution $\dset$ over pairs $(v^*,\pset)$ in graph $G$. Let $H$ be the $k\times k$ grid, where $k=O(|S|/\beta)$,  with the embedding $\pi$ of $H$ into $G$, and the collection $\pset'$ of paths, connecting vertices in the subset $S'\sse S$, $|S'|=k$, to the vertices of $A$, as in Claim~\ref{claim: embed grid into graph}.

We will use the distribution $\dset'$ over pairs $(u^*,\qset)$, from Claim~\ref{claim: routing in big grid}. Given a pair $(u^*,\qset)$, we let $v^*=\pi(u^*)$, and we define a collection $\pset$ of paths, connecting the vertices in $S$ to $v^*$, as follows. We  start by defining three collections of paths. The first collection, $\pset_1$, connects all vertices of $S$ to vertices of $S'$. The second collection, $\pset_2$, is precisely the set $\pset'$ of paths, connecting each vertex of $S'$ to a vertex of $A$. The third collection, $\pset_3$, connects vertices of $A$ to $v^*$. The final set $\pset$ of paths is obtained by concatenating the paths in $\pset_1,\pset_2$, and $\pset_3$.

We now formally define each path set. Set $\pset_1$ contains $|S|$ paths. For each vertex $v\in S$, there is a path $P_v\in \pset_1$, connecting $v$ to some vertex of $S'$. We ensure that the total edge and vertex congestion due to these paths is at most $2\beta$, and each vertex in $S'$ serves as endpoint of at most $2\beta$ such paths. The problem of finding such paths can be cast as the problem of finding $s$-$t$ flow in the graph, where the vertices in $S\setminus S'$ serve as sources, and vertices in $S'$ serve as sinks. In order to show that such a flow exists, it is enough to show that for any collection $C$ of vertices, separating $S\setminus S'$ from $S'$, $|C|\geq |S'|/(2\beta)$ must hold. Let $C$ be any such separator. Notice that due to property (P3), the amount of flow sent across this cut is at least $|S'|\cdot |S|/2$, and the maximum vertex congestion is $|S|\cdot \beta$. Therefore, $C$ must contain at least $|S'|/(2\beta)$ vertices.

We obtain the second collection, $\pset_2$ of paths, from the set $\pset'$ of vertex-disjoint paths, connecting the vertices of $S'$ to the vertices of $A$. The only change is that if some vertex $v\in S'$ serves as an endpoint of $n_v$ paths in $\pset_1$, then we add $n_v$ copies of the corresponding path in $\pset'$ to set $\pset_2$. It is easy to see that the edge congestion due to paths in $\pset_2$ is bounded by $2\beta$.

Finally, we obtain the set $\pset_3$ of paths, connecting the vertices of $A$ to $v^*$, as follows.
Recall that $\qset$ is a collection of paths in the grid $H$, connecting every vertex of the last row of $H$ to $u^*$, and the expected value of $c^2_{\qset}(e)$ on any edge $e$ of $H$ is bounded by $O(\log k)$. We use the mapping $\pi$ of the edges of $H$ to paths of $G$, to define, for every path $Q\in \qset$, a corresponding path $P\in \pset''$. If path $Q_a\in \qset$ connects a vertex $a\in A$ to $u^*$, then the corresponding path $P_a\in \pset''$ will connect the vertex $\pi(a)$ to the vertex $v^*$ in graph $G$. If vertex $a$ serves as an endpoint of $n_a$ paths in $\pset_2$, then we add $n_a$ copies of the path $P_a$ to $\pset_3$. 

The final set, $\pset$ of paths, is obtained by concatenating the paths in $\pset_1,\pset_2$ and $\pset_3$. It is easy to see that $\pset$ contains a path connecting every vertex $v\in S$ to $v^*$. It now only remains to bound the expected value of $c_{\pset}^2(e)$ on edges $e\in E(G)$.

Let $e\in E(G)$ be any edge of $G$. Recall that the congestion of the embedding $\pi$ of $H$ into $G$ is at most $2$. Let $e_1,e_2$ be the two edges of $H$ whose images contain $e$ (one or both of these edges may be undefined). We then have:

\[\begin{split}
c_{\pset}^2(e)&=(c_{\pset_1}(e)+c_{\pset_2}(e)+c_{\pset_3}(e))^2\\
&\leq (c_{\pset_1}(e)+c_{\pset_2}(e)+2\beta\cdot c_{\qset}(e_1)+2\beta\cdot c_{\qset}(e_2))^2\\
&\leq 16\left (c^2_{\pset_1}(e)+c^2_{\pset_2}(e)+4\beta^2c_{\qset}^2(e_1)+4\beta^2c_{\qset}^2(e_2)\right )\\
&\leq O(\beta)^2+64\beta^2c_{\qset}^2(e_1)+64\beta^2c_{\qset}^2(e_2)
\end{split}\]

Therefore, $\expect[(v^*,\pset)]{c_{\pset}^2(e)}=O(\beta)^2+O(\beta^2)\expect[(u^*,\qset)]{c_{\qset}^2(e_1)}+O(\beta^2)\expect[(u^*,\qset)]{c_{\qset}^2(e_2)}=O(\beta^2\log |S|)$.

\subsection{Proof of Theorem~\ref{thm: nasty canonical set to contraction}}

Let $R$ be a nasty canonical set in the graph $H=\G_{|S}$, so it has properties (P1), (P2) in $H$, and $|R|\geq \frac{2^{16}\cdot \dmax^6}{(\alpha^*)^2}\cdot|\Gamma_H(R)|^2$. Throughout this proof, for each $X\in \xset$, $\Gamma(X)$ refers to the set $\Gamma_{\G}(X)$ of the interface vertices in the original graph, $\G$.

 We first perform the following clean-up step. For every set $Z_X\in \zset$, with $Z_X\sse R$, denote $T_R(Z_X)=T(Z_X)\cap R$, and $T_{\notr}(Z_X)=T(Z_X)\setminus R$ (see Figure~\ref{fig: cleanup}). If $|T_R(Z_X)|\leq |T(Z_X)|/4$, we remove the set $Z_X$ of vertices from $R$. 

\begin{figure}[h]
\scalebox{0.4}{\rotatebox{0}{\includegraphics{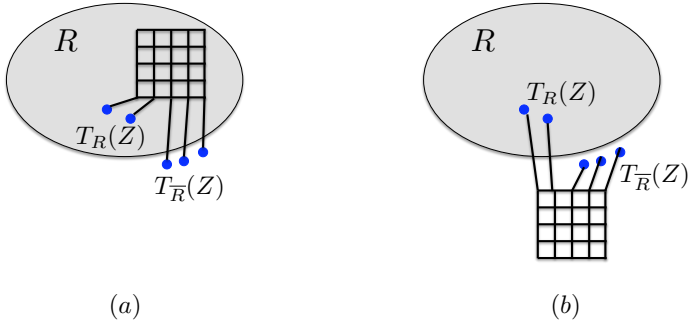}}} \caption{One round of the cleanup step.} \label{fig: cleanup}
\end{figure}

\begin{claim}
Let $R$ be any canonical nasty set in $H$, and let $R'=R\setminus Z_X$ be the set obtained after one round of the cleanup step. Then $R'$ is a nasty canonical set.
\end{claim}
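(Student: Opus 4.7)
The plan is to verify three facts about $R' = R\setminus Z_X$: that it is canonical with respect to $\zset$, that it satisfies the size condition defining nastiness, and that it satisfies properties (P1) and (P2). Canonicity is immediate: since the sets of $\zset$ are pairwise disjoint, removing one entire set $Z_X$ from the canonical set $R$ leaves every other $Z\in\zset$ either still contained in or still disjoint from $R'$, while $Z_X$ itself is now disjoint from $R'$.

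For the size condition, I would set $a=|\Gamma_H(R)|$, $b=|T(Z_X)|$, and $C = 2^{16}\dmax^6/(\alpha^*)^2$. Because $Z_X$ is a $b\times b$ grid, $|Z_X|=b^2$. The cleanup condition gives $|T_R(Z_X)|\leq b/4$ and hence $|T_{\notr}(Z_X)|\geq 3b/4$. Since each vertex of $T_{\notr}(Z_X)$ is the exterior endpoint of a matching edge coming out of a first-row vertex of $Z_X$ that must then belong to $\Gamma_H(R)$, we get $a\geq |\Gamma_H(R)\cap Z_X|\geq 3b/4$. From $\Gamma_H(R') = (\Gamma_H(R)\setminus Z_X) \cup T_R(Z_X)$ this yields $|\Gamma_H(R')|\leq a - b/2$. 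The required bound $|R'| = |R|-b^2 \geq C|\Gamma_H(R')|^2$ then follows by plugging in $|R|\geq Ca^2$; the inequality $Ca^2 - b^2 \geq C(a-b/2)^2$ simplifies to $Ca \geq b(1 + C/4)$, which holds comfortably using $a\geq 3b/4$ and the fact that $C$ is large.

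For (P1), I would split $T_H(R')$ into ``old'' terminals lying in $V(H)\setminus R$, which form a subset of $T_H(R)$ and are therefore connected in $H\setminus R \subseteq H\setminus R'$ by (P1) of $R$, and ``new'' terminals lying in $Z_X$, namely the first-row vertices of $Z_X$ adjacent to $T_R(Z_X)$, which are linked to each other by the first row of the grid, entirely contained in $H[Z_X]\subseteq H\setminus R'$. To link the two groups, I would use any of the $|T_{\notr}(Z_X)|\geq 3b/4 > 0$ matching edges from $Z_X$ to $T_{\notr}(Z_X)\subseteq T_H(R)$, which live in $H\setminus R'$ and give the missing connection.

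Finally, for (P2), I would start from a planar drawing $\pi(R)$ of $H[R]$ with $\Gamma_H(R)$ on the outer face, and delete all vertices of $Z_X$ together with their incident edges to obtain a planar drawing $\tilde\pi$ of $H[R']$; vertices of $\Gamma_H(R)\setminus Z_X$ then remain on the outer face automatically. The subtle point, which I expect to be the main obstacle, is showing that each $t\in T_R(Z_X)$ also ends up on the new outer face. For this I would argue as follows: the matching edge $(t, z_t)$ of $\pi(R)$ borders exactly two faces, each of which has $z_t\in Z_X$ on its boundary; since $Z_X$ is connected and contains at least one vertex on the outer face of $\pi(R)$ (any first-row vertex adjacent to $T_{\notr}(Z_X)$, of which there are at least $3b/4 \geq 1$), every face of $\pi(R)$ that touches $Z_X$ can be reached from the outer face by a walk along $Z_X$, and therefore merges into the outer face when $Z_X$ is erased. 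In particular, both faces incident to $(t,z_t)$ merge into the outer face of $\tilde\pi$, placing $t$ on its boundary and completing the verification of (P2).
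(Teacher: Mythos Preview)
Your proof is correct and follows essentially the same approach as the paper. For the size inequality you use the same bookkeeping: with $a=|\Gamma_H(R)|$, $b=|T(Z_X)|$ and $C=2^{16}\dmax^6/(\alpha^*)^2$, you obtain $|\Gamma_H(R')|\le a-b/2$ and $a\ge 3b/4$, and then reduce $Ca^2-b^2\ge C(a-b/2)^2$ to $C\ge 2$, which is exactly what the paper does (with the notation $M,A,B$ in place of your $C,a,b$). The paper dismisses (P1) and (P2) as ``obvious'' and ``straightforward'' and gives no argument; you supply correct explicit ones. Your (P2) argument---that $Z_X$ is connected and contains a vertex on the outer face of $\pi(R)$, so erasing $Z_X$ merges every $Z_X$-incident face into the new outer face and hence places each $t\in T_R(Z_X)$ on its boundary---is the right way to make this step rigorous.
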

\begin{proof}
It is obvious that $R'$ is a canonical set, and it is straightforward to verify that Properties (P1) and (P2) continue to hold for it. We only need to check that 
 $|R'|\geq \frac{2^{16}\cdot \dmax^6}{(\alpha^*)^2}\cdot|\Gamma(R')|^2$. For simplicity, denote $M=\frac{2^{16}\cdot \dmax^6}{(\alpha^*)^2}$, $A=|\Gamma(R)|$, $B=\Gamma(Z_X)$. Then $|R'|\geq |R|-B^2$. Moreover, observe that $\Gamma(R')$ is obtained from $\Gamma(R)$ by removing the vertices of $\Gamma(Z_X)$ that are adjacent to the vertices of $T_{\notr}(Z_X)$ from it, and possibly adding the vertices in $T_{R}(Z_X)$ (if they do not already belong to $\Gamma(R)$). Since $|T_R(Z_X)|\leq |T(Z_X)|/4$, we get that $|\Gamma(R')|\leq |\Gamma(R)|-|T_{\notr}(Z_X)|+|T_{R}(Z_X)|\leq |\Gamma(R)|-|B|/2$.
We then have that:

\[\begin{split}
|R'|&\geq |R|-B^2\\
&\geq MA^2-B^2\\
&\geq M(A-B/2)^2\\
&\geq  \frac{2^{16}\cdot \dmax^6}{(\alpha^*)^2}\cdot|\Gamma(R')|^2
\end{split} \]

(We have used the fact that $A\geq |T_{\notr}(Z_X)|\geq 3B/4$.)
\end{proof}

We perform the above cleanup step while possible, and we let $R$ denote the final nasty canonical set.
Notice that for all $Z\in \zset$ with $Z\sse R$, $|T_R(Z)|>|T(Z)|/4$.

We partition $\zset$ into two sets: $\zset_1$ contains all sets $Z_X$ with $Z_X\cap R=\emptyset$, and $\zset_2$ contains all sets $Z_X\sse R$. We also denote $\xset_1=\set{X\sse V(\G)\mid Z_X\in\zset_1}$, and $\xset_2=\set{X\sse V(\G)\mid Z_X\in\zset_2}$.

We now define a set $S'$ of vertices in the original graph $\G$. It consists of two subsets, $S_1'$ and $S_2'$. Subset $S_1'$ contains all vertices in sets $X\in\xset_1$, so $\xset_1$ is a partition of $S_1'$. Recall that the partition $\xset_1$ has Properties~(\ref{property: subsets-first})--(\ref{property: size-last}). We will use these properties later.

We now turn to define the subset $S_2'$, by first defining a set $S_2^*\sse V(\G)$. We start with the set $R\sse V(H)$ of vertices. If a vertex $v\in R$ is also a vertex of $\G$, then we add $v$ to $S_2^*$. Otherwise, $v\in Z_X$ for some $Z_X\in \zset_2$ must hold. Fix some such set $X$. Recall that the vertices of $\Gamma(X)=T(Z_X)$ belong to $Z'_X$, and therefore to $V(H)\cap V(\G)$. Some of these vertices may lie in $R$, and some of them outside of $R$. The vertices of $\Gamma(X)\cap R$ have been added to $S_2^*$, and the vertices of $\Gamma(X)\setminus R$ are not added to $S_2^*$. Finally, we add the vertices of $X\setminus \Gamma(X)$ to $S_2^*$. We have thus obtained a set $S_2^*\sse V(\G)$. We then set $S_2'=S_2^*\setminus \Gamma_{\G}(S_2^*)$, that is, $S_2'$ is obtained from $S_2^*$, after we remove all interface vertices from it. Finally, we set $S'=S_1'\cup S_2'$.

Observe that $\G[S_1']$ and $\G[S_2']$ are completely disjoint, with no direct edges connecting between them (that was the purpose of removing the vertices of $\Gamma(S^*_2)$ from $S^*_2$). Therefore, we can perform the decomposition for the graph contraction step separately for both sets. We will argue below that set $S_2'$ has properties (P1) and (P2). Assuming this is true, let $\xset_2'$ be the partition of the vertices of $S_2'$ guaranteed by Theorem~\ref{thm: graph contraction}. We then let $\xset'=\xset_1\cup\xset_2'$ be the final partition of the set $S'$. Notice that partition $\xset'$ has Properties~~(\ref{property: subsets-first})--(\ref{property: size-last}).  Let $H'$ be the contracted graph obtained from $\G$, after we replace each sub-graph $\G[X]$ with the grid $Z'_X$, for all $X\in \xset'$. Then from Theorem~\ref{thm: canonical drawing}, there is a canonical drawing $\phi'$ of the resulting graph $H'$, with $\cro_{\phi'}(H') = O(d_{\max}^9 \cdot \log^{10} n \cdot (\log\log n)^4 \cdot \optcro{\G})$.

It now only remains to prove two things. First, we need to show that $|V(H')|< |V(H)|$, and second, we need to show that $S_2'$ has properties (P1) and (P2), so that Theorem~\ref{thm: graph contraction} can be applied to it. These will complete the proof of the theorem.

\paragraph{Bounding the Size of $\mathbf{V(H')}$.}
Recall that since $R$ is a nasty set,  $|R|\geq \frac{2^{16}\cdot \dmax^6}{(\alpha^*)^2}\cdot|\Gamma(R)|^2$.
We can transform the graph $H$ into the graph $H'$ in the following two simple step. First, for each $Z_X\in \zset_2$, we replace $Z'_X$ with the graph $\G[X]$, using the vertices of $\Gamma(X)$ as the interface. Let $H^*$ denote this resulting graph. Notice that $S_2^*, S_2'$ are subsets of vertices of $H^*$. Next, for each set $X\in \xset_2'$, we replace $H^*[X]$ by $Z'_X$. This gives the final graph $H'$. We now analyze the number of vertices in these graphs.

The set $V(H)$ of vertices can be partitioned into two sets $(R,\notr)$, where $\notr=V(H)\setminus R$. Consider now the set $V(H^*)$ of vertices. By the definition of $S_2^*$, the two sets $(\notr,S_2^*)$ define a partition of $V(H^*)$.

Recall that we have obtained $S_2^*$ from $R$, by replacing each set $Z_X\in \zset_2$ with $X\setminus \Gamma(X)$. Since each set $X\in \xset_2$ has Property~(\ref{property: size-last}), $|X|\geq \frac{(\alpha^*|\Gamma(X)|)^2}{64\dmax^2}\geq \frac{(\alpha^*)^2}{64\dmax^2}|Z_X|$. Since at least $1/4$ of the vertices of $\Gamma(X)=T_{H}(Z_X)$ belong to $S_2^*$, $|X\cap S_2^*|\geq \frac{(\alpha^*)^2}{2^8\dmax^2}|Z_X|$. Finally, we observe that some vertices of $\Gamma(X)\cap S_2^*$ may be shared by up to $\dmax$ sets $X'\in \xset_2$. Therefore, in total,

\begin{equation}\label{eq: bound on S2star}
|S_2^*|\geq |R|\cdot \frac{(\alpha^*)^2}{2^8\dmax^3}\geq 2^8\dmax^3|\Gamma_H(R)|^2
\end{equation}

It is easy to see that $|\Gamma_{H^*}(S_2^*)|\leq \dmax |\Gamma_H(R)|$. 
The only difference between the two sets, is that for each $X\in \xset_2$, $\Gamma_H(R)$ contains $|T_{\notr}(Z_X)|$ vertices of $\Gamma(Z_X)$, that are adjacent to the vertices of $T_{\notr}(Z_X)$. These vertices are replaced by at most $\dmax|T_{\notr}(Z_X)|$ vertices in $\Gamma_{H^*}(S_2^*)$, because the maximum vertex degree is bounded by $\dmax$. Therefore, 

\begin{equation}\label{eq: bound on Gamma of S2star}
|\Gamma_{H^*}(S_2^*)|\leq \dmax |\Gamma_H(R)|.
\end{equation}

Recall that the set $S_2'$ is obtained by removing all vertices of $\Gamma(S_2^*)$ from $S_2^*$. From Equations~(\ref{eq: bound on Gamma of S2star}) and (\ref{eq: bound on S2star}), $|\Gamma_{\G}(S_2^*)|=|\Gamma_{H^*}(S_2^*)|<|S_2^*
|/2$, and so $|S_2'|\geq |S_2^*|/2\geq 2^5\dmax^3|\Gamma_H(R)|^2$, and $|\Gamma_{\G}(S_2')|\leq \dmax |\Gamma_{H^*}(S_2^*)|\leq \dmax^2|\Gamma_H(R)|$.

In the final step, we replace each set $X\in \xset_2'$ by set $Z_X$. The set of vertices $V(H')$ can then be partitioned into $(\notr,\Gamma_{H^*}(S_2^*),Y)$, where $Y$ is obtained from $S_2'$, after we replace each set $X\in \xset_2'$ with $Z_X$. From Equation~(\ref{eq: final size}) in Section~\ref{sec: graph contraction}, $|Y|\leq 162\dmax^2|\Gamma_{\G}(S_2')|^2\leq 162 \dmax^4|\Gamma_H(R)|^2<|R|/2$. Therefore, $|V(H')|=|\notr|+|\Gamma_{\G}(S_2^*)|+|Y|<
|\notr|+\dmax^2|\Gamma_H(R)|+|R|/2<|\notr|+|R|=|V(H)|$.

\paragraph{Property (P1)}
Recall that the original set $R$ we have started from had property (P1). It is immediate to see that the clean-up step has no affect on this property. Also, after we replace each graph $Z'_X$, for $Z_X\in \zset_2$, with graph $\G[X]$, the resulting set $S^*_2$, and consequently $S'_2$ have this property.

\paragraph{Property (P2)} 
Recall that set $R$ we have started from had property (P2). That is, there is a planar drawing $\psi_R$ of the graph $H[R]$, in which the vertices of $\Gamma(R)$ lie on the boundary of the outer face. It is easy to see that the clean-up step does not affect property (P2).

Let $R$ be the graph obtained after the clean-up step, and let $\psi$ be a planar drawing of $H[R]$, in which the vertices of $\Gamma(R)$ lie on the boundary $\Gamma$ of the outer face of $\psi$.

We now consider the sets $X\in\xset_2$ one-by-one.
 For each such set $X$, we replace the graph $Z'_X$ with the graph $\G[X]$. Let $\tilde{H}$ denote this new graph. We obtain the set $\tilde{R}$ of vertices in this new graph from the set $R$, by removing the vertices of $Z_X$, and adding the vertices of $X\setminus\Gamma(X)$ instead to $R$.
 
We need to prove the following. 

\begin{claim}
Let $\tilde{H}$ be the current graph, $\tilde{R}$ the current set of vertices, that has property (P2) in the graph $\tilde{H}$. Let $X\in \xset_2$ be any set with $Z_X\sse \tilde{R}$. Let $\tilde{H}'$ be the graph obtained from $\tilde{H}$ after we replace $Z_X'$ with $\G[X]$, and let $\tilde{R'}$ be the set of vertices obtained from $\tilde{R}$ by replacing the vertices in $Z_X$ with the vertices in $X\setminus\Gamma_X$. Then $\tilde{R}' $ has property (P2) in $\tilde{H}'$.
\end{claim}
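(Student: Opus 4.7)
The plan is to construct a planar drawing of $\tilde H'[\tilde R']$ with $\Gamma_{\tilde H'}(\tilde R')$ on the outer face by starting from any drawing $\tilde\psi$ of $\tilde H[\tilde R]$ witnessing (P2) and locally swapping the sub-drawing of $Z'_X$ for a truncated copy of the planar drawing $\pi'(X)$ of $\G[X]$ given by Property~(\ref{property: subsets-first}). Three ingredients will be used throughout. First, Property~(\ref{property: subsets-first}) provides $\pi'(X)$ with $\Gamma(X)$ lying on a simple closed curve $\gamma_X$. Second, the construction of $Z'_X$ in Section~\ref{sec: graph contraction} aligns the cyclic order of $\Gamma(X)$ along the first row of $Z_X$ with their order along $\gamma_X$. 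Third, after the cleanup step $|T_R(Z_X)|\geq |\Gamma(X)|/4$, while every first-row vertex of $Z_X$ whose matching partner lies in $T_{\notr}(Z_X)\sse \Gamma(X)\setminus \tilde R$ belongs to $\Gamma_{\tilde H}(\tilde R)$, and hence to the boundary of the outer face $F_{\out}$ of $\tilde\psi$.

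First I would argue that $\tilde\psi$ may be chosen so that the natural outer face of the grid $Z_X$ is exposed toward $F_{\out}$ and all matching edges exit $Z_X$ into this face in the correct cyclic order: for $|\Gamma(X)|\geq 3$, any planar embedding of the $|\Gamma(X)|\times|\Gamma(X)|$ grid singles out a unique face of length greater than $4$, namely its natural outer face; since many first-row vertices must appear on $\partial F_{\out}$, this face must coincide with (part of) $F_{\out}$ in $\tilde\psi$, and after possibly rerouting the matching edges to $T_R(Z_X)$ through interior 4-faces of $Z_X$ (which preserves planarity) we may assume all matching edges appear along $\partial F_{\out}$ in first-row order. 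The cases $|\Gamma(X)|\leq 2$ are trivial since $Z'_X$ is then a path. I would then choose a simple closed curve $\gamma$ enclosing $Z_X$ together with its matching edges, passing through the vertices of $T_R(Z_X)$ in this cyclic order, with nothing else inside. Erasing the interior of $\gamma$ from $\tilde\psi$ yields a planar drawing $\psi^-$ of $\tilde H[\tilde R\setminus Z_X]$ whose outer face $F_{\out}^-$ contains the former interior of $\gamma$ and has $T_R(Z_X)$ on its boundary in the same cyclic order.

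Next I would obtain a planar drawing $\pi''$ of $\G[(X\setminus \Gamma(X))\cup T_R(Z_X)]$ by deleting from $\pi'(X)$ the vertices $T_{\notr}(Z_X)$ together with their incident edges. The crucial observation is that the outer face of $\pi''$ contains on its boundary both (i) the surviving vertices $T_R(Z_X)$, which remain on $\gamma_X$ in their original cyclic order, and (ii) every vertex $v\in X\setminus \Gamma(X)$ that is adjacent in $\pi'(X)$ to some deleted vertex of $T_{\notr}(Z_X)$, because removing a vertex lying on the boundary of a face merges all its incident faces into that face. I would then insert $\pi''$ into the former interior of $\gamma$ in $\psi^-$, identifying the two copies of $T_R(Z_X)$ consistently using the cyclic-order compatibility established above; this yields a planar drawing $\psi'$ of $\tilde H'[\tilde R']$.

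Finally I would verify that every $u\in\Gamma_{\tilde H'}(\tilde R')$ lies on the boundary of the outer face of $\psi'$. Using $V(\tilde H')\setminus \tilde R' = V(\tilde H)\setminus \tilde R$, I split into three cases. (a) If $u\in(\tilde R\setminus Z_X)\setminus \Gamma(X)$, then $u$'s adjacencies to outside $\tilde R'$ in $\tilde H'$ are identical to its adjacencies to outside $\tilde R$ in $\tilde H$, so $u\in\Gamma_{\tilde H}(\tilde R)\sse \partial F_{\out}$, which is preserved inside $\partial F_{\out}^-$. (b) If $u\in T_R(Z_X)$, then $u\in\partial F_{\out}^-$ by construction. (c) If $u\in X\setminus \Gamma(X)$, then $u$ can only be an interface vertex of $\tilde R'$ via an edge of $\G[X]$ to some vertex of $T_{\notr}(Z_X)$, which by the observation in the previous paragraph places $u$ on the outer face of $\pi''$ and hence on the outer face of $\psi'$. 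The main obstacle is justifying the embedding claim of the second paragraph --- that in $\tilde\psi$ the grid $Z_X$ must expose its natural outer face toward $F_{\out}$ and that the matching edges can be arranged to follow the first-row cyclic order. This is where the rigidity of large grid embeddings combines with the lower bound $|T_R(Z_X)|\geq |\Gamma(X)|/4$ from the cleanup, which guarantees that enough first-row vertices matched to both $T_R(Z_X)$ and $T_{\notr}(Z_X)$ appear on $\partial F_{\out}$ to pin down the relevant embedding.
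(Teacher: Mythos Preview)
Your overall plan and your final case analysis for $\Gamma_{\tilde H'}(\tilde R')$ are sound, but the second paragraph---which you rightly flag as the main obstacle---has a real gap, and it stems from conflating $T_R(Z_X)$ and $T_{\notr}(Z_X)$.

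The first-row vertices that are forced onto $\partial F_{\out}$ are exactly those matched to $T_{\notr}(Z_X)$; the cleanup step, however, only gives you $|T_R(Z_X)|\geq |\Gamma(X)|/4$ and says nothing about $|T_{\notr}(Z_X)|$, which may well be zero. When $T_{\notr}(Z_X)=\emptyset$ (all of $\Gamma(X)$ lies in $\tilde R$), no first-row vertex of $Z_X$ need lie on $\partial F_{\out}$ at all, and $Z_X$ together with its matching edges can sit deep inside $\tilde\psi$, separated from $F_{\out}$ by other parts of $\tilde H[\tilde R]$. Your closed curve $\gamma$ then bounds a hole disconnected from $F_{\out}$, so after erasing you do not get $T_R(Z_X)$ on the boundary of the outer face of $\psi^-$, and your verification of case~(b) collapses. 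Even when $|T_{\notr}(Z_X)|\geq 1$, the sentence ``after possibly rerouting the matching edges to $T_R(Z_X)$ through interior $4$-faces \dots\ we may assume all matching edges appear along $\partial F_{\out}$ in first-row order'' does not do what you want: rerouting a matching edge into a $4$-face moves its endpoint in $T_R(Z_X)$ \emph{away} from the outer boundary, not onto it, and you still have not controlled the cyclic order of $T_R(Z_X)$ on $\gamma$.

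The paper avoids all of this with a different and cleaner trick. Rather than analyzing where $Z_X$ sits inside $\tilde\psi$, it first \emph{augments} $\tilde\psi$: draw a closed curve $c$ just outside $\partial F_{\out}$, and add back the missing matching edges---those to $\Gamma_1(X)=T_{\notr}(Z_X)$---by drawing them through $F_{\out}$ and across $c$, placing $\Gamma_1(X)$ outside $c$. This is possible precisely because the first-row endpoints of those edges already lie on $\partial F_{\out}$. Now the full $Z'_X$ is present in a single planar drawing, and one replaces the drawing of $Z'_X$ by $\pi'(X)$ while keeping the images of $\Gamma(X)$ fixed (so $\Gamma_1(X)$ stays outside $c$, everything else inside). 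Erasing $\Gamma_1(X)$ and its incident edges---the only things crossing $c$---leaves the desired planar drawing of $\tilde H'[\tilde R']$. The augmentation step is the idea you are missing: it converts the awkward ``partial $Z'_X$'' situation into the full-$Z'_X$ situation, where the replacement argument is the same as in Claim~\ref{claim: enough to solve contracted graph}.
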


\begin{proof}
The difficulty comes from the fact that not all vertices of $\Gamma(X)=T_{\tilde H}(Z_X)$ belong to $\tilde{R}$. Let $\Gamma_1(X)=\Gamma(X)\setminus\tilde{R}$, and let $\Gamma_2(X)=\Gamma(X)\setminus\Gamma_1(X)$. Let $\psi$ be a planar drawing of $\tilde{H}[\tilde{R}]$, in which the vertices of $\Gamma(\tilde{R})$ lie on the boundary $\gamma$ of the outer face $F_{out}$. Draw a closed curve $c$ around the boundary $\gamma$, inside $F_{out}$. We can now augment this drawing, by adding the matching edges, corresponding to vertices in $\Gamma_1(X)$, so that they cross $c$, and the resulting drawing of $Z'_X$ is planar (see Figure~\ref{fig: augmenting the drawing}). In this drawing, the vertices of $\Gamma_1(X)$ lie outside the circle $c$, and all other vertices lie inside it. 

\begin{figure}[h]
\scalebox{0.4}{\rotatebox{0}{\includegraphics{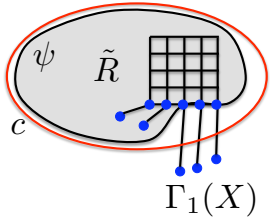}}} \caption{Augmenting the drawing $\psi$.} \label{fig: augmenting the drawing}
\end{figure}

This drawing gives a planar drawing of $Z'_X$, and the ordering of the vertices of $\Gamma(X)$ in this drawing is identical to their ordering along the curve $\gamma_X$ in the planar drawing $\pi_X$ of $\G_X$. Therefore, we can replace the drawing of $Z'_X$ with the drawing $\pi_X$ of $\G_X$. We do so without changing the drawing of the vertices in $\Gamma(X)$, so the vertices in $\Gamma_1(X)$ remain outside $c$, while all other vertices are drawn inside $c$. This can be done so that the only edges that cross the circle $c$ are the edges adjacent to vertices in $\Gamma_1(X)$. We then erase all vertices in $\Gamma_1(X)$, and all their adjacent edges from this drawing. This gives the desired planar drawing of $\tilde{H}'[\tilde{R}']$, in which all interface vertices lie on the boundary of the outer face. \end{proof} 

After we have processed all sets $X\in \xset_2$, the resulting graph that we obtain is exactly $H^*$, and the resulting set of vertices is $S_2^*$, which, from the above claim, has property (P2). In the final step, we remove the vertices of $\Gamma(S_2^*)$ from $S_2^*$ to obtain the set $S_2'$. It is easy to see that this step preserves property (P2). Since $H^*[S_2']=\G[S_2']$, set $S_2'$ has property (P2) in $\G$.


\section{Proof of Theorem~\ref{thm: stopping condition}}

\label{------------------------------------------------D: theorem for end of recursion-----------------------------------------------------------------------}

We start with the following theorem, whose proof uses the Planar Separator Theorem (Theorem~\ref{thm: planar separator}), the approximation algorithm for the Balanced Cut problem (Theorem~\ref{thm: ARV}), and Claim~\ref{claim: numbers}.

\begin{theorem}\label{theorem: balanced cut}
Let $G$ be any $n'$-vertex graph with maximum degree at most $\dmax$, and a collection $\zset$ of disjoint subsets of vertices of $G$, such that for each $Z\in \zset$, $G[Z]$ is a grid, $\Gamma(Z)$ is the set of vertices in the first row of $Z$, and $\out(Z)$ contains exactly one edge incident on each vertex in the first row of $Z$. Let $\opt$ denote the cost of the optimal canonical solution to the \MP problem on $G$, with respect to $\zset$, and assume that $\opt<\sqrt {n'}$. Then there is an efficient algorithm to partition the vertices of $G$ into two {\bf canonical} sets $\tilde A,\tilde C$, with $|\tilde A|,|\tilde C|\geq \Omega(n'/\dmax^2)$, such that $|E(\tilde A,\tilde C)|\leq O(\dmax\sqrt{n'\log n'})$.
\end{theorem}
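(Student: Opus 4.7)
The plan is to chain three tools: the Planar Separator Theorem supplies the \emph{existence} of a good canonical balanced cut, Theorem~\ref{thm: ARV} (ARV balanced cut) converts that existence into an \emph{efficient algorithm}, and Claim~\ref{claim: numbers} (together with Claims~\ref{claim: cut of grids} and~\ref{claim: cutting the grid}) is used to \emph{canonicalize} the ARV output while controlling the balance loss.

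First, to establish existence, let $E^*$ be an optimal canonical solution to \MP on $G$, so $|E^*|=\opt<\sqrt{n'}$ and $G\setminus E^*$ is planar. Because $E^*$ is canonical, it contains no edge inside any grid $Z\in\zset$, so we may contract every $Z$ to a single vertex $v_Z$ of weight $|Z|$, producing a planar graph $G_c\setminus E^*$ on at most $n'$ vertices of total weight $n'$. Applying the weighted version of Theorem~\ref{thm: planar separator} to $G_c\setminus E^*$ yields a separator $B_c$ with $|B_c|\leq O(\sqrt{n'})$ and both sides of weight at most $2n'/3$. Distributing the vertices of $B_c$ greedily to the lighter side, and handling the at most $O(\dmax^2)$ heavy grids (those $v_Z\in B_c$ with $|Z|$ comparable to $n'$) by a short case analysis -- if some grid $Z^*$ is so dominant that $|Z^*|>n'/2$, we directly take $\tilde A=Z^*$, $\tilde C=V(G)\setminus Z^*$, with cut size $|\out(Z^*)|=\sqrt{|Z^*|}\leq\sqrt{n'}$ -- we obtain a canonical partition of $V(G)$ with both sides of size $\Omega(n'/\dmax^2)$ and cut size at most $\dmax|B_c|+\opt=O(\dmax\sqrt{n'})$.

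Second, the existence of a canonical $\Omega(1/\dmax^2)$-balanced cut of size $O(\dmax\sqrt{n'})$ implies in particular the existence of an (arbitrary) $\Omega(1/\dmax^2)$-balanced cut of $G$ of the same size. Theorem~\ref{thm: ARV} then produces efficiently a partition $(A,C)$ of $V(G)$ with $|A|,|C|=\Omega(n'/\dmax^2)$ and $|E(A,C)|=O(\dmax\sqrt{n'\log n'})$. This partition is not guaranteed to be canonical, so we must canonicalize it.

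For each split grid $Z$, the plan is to move the side of $Z$ containing fewer vertices of $\Gamma(Z)$ to the other side. Let $m_Z$ denote the number of cut edges inside $Z$. By Claim~\ref{claim: cut of grids} we have $m_Z\geq\min(|\Gamma_A(Z)|,|\Gamma_C(Z)|)+1$, so the move strictly decreases the cut (the $m_Z$ internal edges leave the cut and at most $\min(|\Gamma_A(Z)|,|\Gamma_C(Z)|)<m_Z$ matching edges enter it). By Claim~\ref{claim: cutting the grid} the number of vertices moved for $Z$ is at most $4m_Z^2$, and $\sum_Z m_Z$ is at most the total cut $O(\dmax\sqrt{n'\log n'})$. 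We then invoke Claim~\ref{claim: numbers} with $\beta=4$, $S=O(\dmax\sqrt{n'\log n'})$, and a uniform upper cap $M=\Theta(n'/\poly(\dmax,\log n'))$ on the per-grid moved count to bound the total moved vertex count. The main obstacle will be choosing $M$ so that both the $2S\sqrt{\beta M}$ and the $M/4$ terms of Claim~\ref{claim: numbers} are at most a small constant fraction of the surviving balance $\Omega(n'/\dmax^2)$; grids whose smaller-$\Gamma$ side would exceed $M$ satisfy $m_Z>\sqrt{M}/2$ and are therefore few ($\leq 2S/\sqrt M$), and each of them can be absorbed by assigning the whole grid to a single side, causing only a constant-factor disturbance to the balance. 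The resulting partition is canonical, $\Omega(n'/\dmax^2)$-balanced, and has cut size $O(\dmax\sqrt{n'\log n'})$, as required.
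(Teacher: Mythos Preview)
Your proof uses the same three tools as the paper (planar separator, ARV, and Claim~\ref{claim: numbers}), but the order in which you apply them creates a genuine gap. You run ARV first and canonicalize afterwards; the paper canonicalizes only the \emph{existential} cut and then runs ARV with infinite weights on grid edges, so that the ARV output is canonical by construction.

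The reason your ordering fails is the $\sqrt{\log n'}$ factor ARV introduces. After ARV you have cut size $S'=O(\dmax\sqrt{n'\log n'})$, and you must bound the number of vertices moved during canonicalization via Claim~\ref{claim: numbers}. With the only available per-grid cap $M=n'/2$ (your threshold for the dominant-grid case), the bound is $2S'\sqrt{4M}+M/4=O(\dmax n'\sqrt{\log n'})$, which already exceeds $n'$ and destroys balance entirely. You try to rescue this by taking $M=\Theta(n'/\poly(\dmax,\log n'))$ and handling the ``exceptional'' grids (those with $m_Z>\sqrt{M}/2$) separately, but the sentence ``each of them can be absorbed by assigning the whole grid to a single side, causing only a constant-factor disturbance to the balance'' is not justified: there are up to $2S'/\sqrt{M}=O(\dmax^4\log n')$ such grids, each of size up to $n'/2$, and reassigning them can move $\Theta(n')$ vertices in total. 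Moving the smaller-$\Gamma$ side instead keeps the cut small but gives only $x_Z\le 4m_Z^2$, and a single exceptional grid with $m_Z\approx S'$ already yields $x_Z=O(\dmax^2 n'\log n')$.

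The paper avoids all of this by (i) taking the large-grid threshold to be $n'/(48q\dmax)^2$ rather than $n'/2$, so that in the main case every grid satisfies $|Z|<n'/(48q\dmax)^2$; (ii) canonicalizing the \emph{planar-separator} cut, whose size is $S=O(\dmax\sqrt{n'})$ with no $\log$ factor, so that Claim~\ref{claim: numbers} with $M=n'/(48q\dmax)^2$ gives total movement at most $n'/4$ and hence constant balance $\ge n'/20$; and (iii) applying Theorem~\ref{thm: ARV} with infinite cost on all grid edges, which forces the output to be canonical since the existence argument guarantees a finite-cost canonical balanced cut. This way no post-ARV canonicalization is needed at all.
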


\begin{proof}
If there is any grid $Z\in \zset$, with $|Z|\geq n'/(48q\dmax)^2$, where $q$ is the constant from Theorem~\ref{thm: planar separator}, then we return the partition $(\tilde A,\tilde C)$, where $\tilde A=Z$ and $\tilde C=V(G)\setminus Z$. Since $|\Gamma(Z)|\leq \sqrt {n'}$, this is a valid partition. We assume from now on that for each $Z\in \zset$, $|Z|< n'/(48q\dmax)^2$.

It is enough to show that there is a partition $(A',C')$ of $V(G)$, such that both $A'$ and $C'$ are canonical, $|A'|,|C'|\geq n'/20$, and $|E(A',C')|\leq O(\dmax\sqrt{n'})$. We can then apply Theorem~\ref{thm: ARV} to obtain the desired partition $(\tilde A,\tilde C)$; in order to ensure that the final partition is canonical, we can simply assign the edges of the grids $Z\in \zset$ infinite costs.

We now show that there is a partition $(A',C')$ of $V(G)$ with the above properties. Let $E'$ be the optimal canonical solution to the \MP problem on $G$, $|E'|=\opt$. Then graph $G\setminus E'$ is planar, and therefore, from Theorem~\ref{thm: planar separator}, there is a subset $B$ of vertices, $|B|\leq q\sqrt{n'}$, whose removal separates the two subsets $(A,C)$ of vertices, with $|A|,|C|\geq n'/3$. Assume w.l.o.g. that $|A|\leq |C|$, and consider the partition $(A'',C'')$, where $A''=A\cup B$, and $C''=C$. Then $|A''|,|C''|\geq n'/3$, and $|E(A'',C'')|\leq \opt+q\dmax \sqrt {n'}\leq 2q\dmax\sqrt {n'}$, since all edges in $E(A'',C'')$ either belong to $E'$, or are incident on the vertices of $B$, and since we have assumed that $\opt<\sqrt{n'}$. However, it is possible that the sets $A'',C''$ are not canonical. 

Consider some grid $Z\in \zset$, and let $\Gamma_1(Z)=\Gamma(Z)\cap A''$, $Z_1=Z\cap A''$, $\Gamma_2(Z)=\Gamma(Z)\cap C''$, and $Z_2=Z\cap C''$. We now define a partition $(\zset_1,\zset_2)$ of $\zset$, as follows: for each grid $Z\in \zset$, if $|\Gamma_1(Z)|>|\Gamma_2(Z)|$, then $Z$ belongs to $\zset_1$, and otherwise it belongs to $\zset_2$. For each grid $Z\in \zset_1$, we move all vertices of $Z_2$ to $A''$, and for each grid $Z\in \zset_2$, we move all vertices of $Z_1$ to $C''$, to obtain the final partition $(A',C')$.

Notice first that from Claim~\ref{claim: cut of grids}, for each $Z\in\zset_1$, $|E(Z_1,Z_2)|\geq |\Gamma_2|$, and for each $Z\in \zset_2$, $|E(Z_1,Z_2)|\geq |\Gamma_1|$. Therefore, the cut size does not increase, and $|E(A',C')|\leq |E(A'',C'')|\leq 2q\dmax\sqrt{n'}$.

It is now enough to show that $|A'|,|C'|\geq n'/20$. We show this for $A'$, and the proof for $C'$ is symmetric. Recall that $|A''|\geq n'/3$. Therefore, it is enough to show that $\sum_{Z\in \zset_2}|Z_1|\leq n'/4$. We prove this using Claim~\ref{claim: numbers}, as follows. For each grid $Z\in \zset_2$, we define a number $x_Z=|Z_1|$, and $y_{x_Z}=|E(Z_1,Z_2)|$. From Claim~\ref{claim: cutting the grid}, $|Z_1|\leq 4|E(Z_1,Z_2)|^2$. 
Therefore, $x_Z\leq 4y_{x_Z}$ for all $Z\in \zset_2$, and we can set the parameter $\beta=4$. Since we have assumed that for all $Z\in \zset$, $|Z|< n'/(48q\dmax)^2$, we can set $M= n'/(48q\dmax)^2$. Finally, we can set $S=\sum_{Z\in \zset_2}y_{x_Z}=\sum_{Z\in \zset_2}|E(Z_1,Z_2)|\leq |E(A'',C'')|\leq 2q\dmax\sqrt{n'}$. From Claim~\ref{claim: numbers}, 

\[\begin{split}
\sum_{Z\in \zset_2}|Z_1|&\leq 2S\sqrt{\beta M}+\frac M 4\\
&\leq 4q\dmax \sqrt{n'}\cdot \sqrt{\frac {4n'}{(48q\dmax)^2}}+\frac{n'}{4\cdot(48q\dmax)^2}\\
&\leq \frac{n'} 6+\frac{n'}{4\cdot (48q\dmax)^2}\leq \frac {n'} 4\\
\end{split}
\]
\end{proof}

We now turn to prove Theorem~\ref{thm: stopping condition}.
In order to simplify notation, we denote $\overline{\opt}$ by $\opt$ in this proof. Notice that in general, it is possible that $G$ contains much more vertices than $n'$, because the bounding box $X$ may contain many vertices.  In order to avoid this, we perform the following simple transformation. If $P\sse X$ is a path, all of whose vertices, except for endpoints, have degree $2$ in $G$, then we replace $P$ by an edge $e$. We perform this procedure while possible, and we let $G'$ denote the resulting graph, and $X'$ the resulting bounding box. Notice that $|V(G')|\leq \dmax\cdot n'$, and any weak feasible solution for $\pi(G,X,\zset')$ gives a weak feasible solution for $\pi(G',X',\zset')$, and vice versa. The same holds for strong solutions for $\pi(G,X,\zset')$ and $\pi(G',X',\zset')$. From now on we focus on finding a weak feasible solution for $\pi(G',X',\zset')$, and we denote $|V(G')|=n''$.

Let $\phi$ be the optimal strong solution for problem $\pi(G',X',\zset')$. Recall that $\phi$ contains at most $\opt$ crossings.
The algorithm consists of $O(\dmax^2\cdot \log n)$ iterations, and in each iteration $i$, we are given a collection $G_1^i,\ldots,G_{k_i}^i$ of disjoint sub-graphs of $G'$, such that for each $1\leq j\leq k_i$, $V(G_{j}^i)$ is canonical for $\zset'$, and $k_i\leq 2\OPT$. The number of vertices in each such sub-graph $G_j^i$ is bounded by $n_i=(1-\alpha)^{i-1}n''$, where $\alpha=\Omega(1/\dmax^2)$. In the input to the first iteration, $k_1=1$, and $G_1^1=G'$.

Iteration $i$, for $i\geq 1$ is performed as follows. We construct a new family $\gset_{i+1}$ of sub-graphs of $G'$, some of which will serve as the input to the next iteration, and a set $E^i$ of edges. At the beginning, $\gset_{i+1}=\emptyset$ and $E^i=\emptyset$.
Consider some graph $G^i_j$, for $1\leq j\leq k_i$. Let $\opt_j^i$ be the cost of the strong optimal canonical solution to problem $\pi(G_j^i,\emptyset,\zset')$. Notice that $\sum_{j=1}^{k_i}\opt_j^i$ is bounded by the cost of the strong optimal solution to problem $\pi(G',\emptyset,\zset')$ (since this solution implies solutions for each one of the sub-problems), which is in turn bounded by $\opt$.

We apply Theorem~\ref{theorem: balanced cut} to each such graph $G^i_j$. Recall that if $\opt_j^i< \sqrt{|V(G_j^i)|}$, then we obtain a partition $(\tilde A,\tilde C)$ of the vertices of $G^i_j$, where both subsets $\tilde A$ and $\tilde C$ are canonical, $|\tilde A|,|\tilde C|\geq \Omega\left (\frac{|V(G_j^i)|}{\dmax^2}\right )\geq \alpha n_i$, and so $|\tilde A|,|\tilde C|\leq (1-\alpha)n_i=n_{i+1}$. Moreover,  $|E_{G_j^i}(\tilde A,\tilde C)|\leq O(\dmax\sqrt {n_i\log n_i})$. We distinguish between three cases. The first case happens when the algorithm successfully finds such a  partition. The second case happens when the algorithm does not return a valid partition, but $|V(G_j^i)|\leq n_{i+1}$. The third case is when the algorithm does not return a valid partition, and $|V(G_j^i)|> n_{i+1}>n_i/2$.

Assume first that for each $1\leq j\leq k_i$, either Case 1 or Case 2 happens. Fix any such graph $G_j^i$. If Case 1 happens for $G_j^i$, denote by $H_j,H'_j$ the two sub-graphs of $G^i_j$ induced by $\tilde A$ and $\tilde C$, respectively, and denote by $E^i_j$ the corresponding set of edges $E_{G^i_j}(\tilde A_j,\tilde C_j)$. We add $H_j,H'_j$ to $\gset_{i+1}$, and the edges of $E^i_j$ to $E^i$. If Case 2 happens for $G^i_j$, 
then we add $G_j^i$ to $\gset_{i+1}$, and we let $E_j^i=\emptyset$.

Let $E^i=\bigcup_{j=1}^{k_i}E^i_j$. Since for all $j$, $|E^i_j|\leq O(\dmax\sqrt {n_i\log n_i})$, we get that 
\[|E^i|\leq O(k_i\dmax\sqrt{n_i\log n_i})\leq O(\opt\cdot\dmax\sqrt{n_i\log n''}),\] 

as $k_i\leq 2\opt$. Finally, consider the collection $\gset_{i+1}$ of sub-graphs of $G'$. Each sub-graph in $\gset_{i+1}$ is canonical and contains at most $n_{i+1}$ vertices.

Let $H\in \gset_{i+1}$ be any graph in this collection, and let $X''=V(X)\cap H$. Notice that the set $X''$ of vertices defines a partition $\Sigma_{H}$ of the cycle $X'$ into consecutive segments, where each segment contains two vertices of $X''$ as its endpoints, and no vertices of $X''$ as its inner vertices. We say that graph $H$ is good iff there is a planar canonical drawing $\psi_H$ of $H$ and a closed simple curve $\gamma_H$, such that the vertices of $X''$ appear on $\gamma_H$, in the same order in which they appear in $X'$, and the whole graph $H$ is embedded inside $\gamma_H$ in $\psi_H$. Moreover,  no other graph $H'\in \gset_{i+1}$ contains a path connecting a pair of vertices $v,v'\in X'$, such that $v$ and $v'$ are {\bf inner} vertices that belong to two distinct segments $\sigma,\sigma'\in \Sigma_{H}$. Notice that if $H$ is not a good graph, then at least one of its edges participates in a crossing in $\phi$.
We can efficiently check whether graph $H$ is good.

Let $\gset'_{i+1}\sse \gset_{i+1}$ contain all good graphs in $\gset_{i+1}$, and let $\gset''_{i+1}$ contain all bad graphs.
Notice that $|\gset''_{i+1}|\leq 2\opt$. The graphs in $\gset''_{i+1}$ become the input to the next iteration, $(i+1)$. 

Assume now that for some graph $G_j^i$, for $1\leq j\leq k_i$, Case 3 happens.  
Then $\opt\geq \opt_j^i\geq \sqrt{|V(G_j^i)|}\geq \sqrt {n_i/2}$. In this case, $i$ becomes the last iteration, and we denote its index by $i^*=i$. Notice that $i^*=O(\dmax^2\log n'')$ 
We then add the edges of all sub-graphs $G_j^{i^*}$ for all $1\leq j\leq k_{i^*}$ to set $E^{i^*}$, except for the edges that participate in the grids $Z\in\zset'$. We claim that $|E^{i^*}|\leq (\opt)^3$ must hold: indeed, for all $1\leq j\leq k_{i^*}$, if
 $|E(G_j^{i^*})|>4|V(G_j^{i^*})|$, then by Theorem~\ref{thm: large average degree large crossing number}, $\opt^i_j\geq \Omega(|V(G_j^{i^*})|)$, and  $|E(G_j^{i^*})|\leq |V(G_j^{i^*})|^2=O((\opt^i_j)^2)$ must hold. The total number of edges in such graphs is then bounded by $O(\opt^2)$. On the other hand, for indices $j$ for which  $|E(G_j^{i^*})|\leq 4|V(G_j^{i^*})|\leq 4n_{i^*}\leq O(\opt^2)$, we get that $|E(G_j^{i^*})|\leq O(\opt^2)$, and all such graphs can contribute at most $O(\opt^3)$ edges to 
$E^{i^*}$.

Our final solution is $E'=\left (\bigcup_{i=1}^{i^*}E^i\right )\setminus E(X')$, and its cost is bounded by 
\[|E'|\leq \sum_{i=1}^{i^*}|E^i|\leq \sum_{i=1}^{i^*-1} O(\dmax\opt\sqrt{n_i\log n''})+O(\opt^3)\leq O(\dmax\opt\sqrt{n''\log n''}+\opt^3),\]

 since the values $\sqrt{n_i}$ form a decreasing geometric series for $i\geq 1$. Since $n''\leq O(n'\dmax)$, $|E'|\leq O(\opt\cdot \sqrt{n'}\cdot\poly(\dmax\cdot \log n')+\opt^3)$ as required.

It now remains to show that $E'$ is a weak feasible solution to problem $\pi(G',X',\zset')$. Let $\cset$ be the set of all connected components in graph $G'\setminus \left (\bigcup_{i=1}^{i^*}E^i\right )$. Each graph $H\in \cset$ is either an isolated vertex, or a grid $Z\in \zset'$, or it belongs to some set $\gset'_i$, for some $1\leq i<i^*$. Therefore, each such graph $H\in \cset$ has a planar drawing $\psi_H$, inside a simple closed curve $\gamma_H$, such that the vertices of $X'\cap V(H)$ all appear on $\gamma_H$, in the same order as in $X'$. It is also easy to verify that no graph in $\cset$ contains a path connecting a pair of inner vertices of two distinct segments $\sigma,\sigma'\in \Sigma_{H}$. We can then draw a closed curve $\gamma$ in the plane, and compose the planar drawings $\psi_H$ of all graphs $H\in \cset$ together, so that all vertices of $X'$ all appear on  $\gamma$, in exactly the same order as in $X$, and all other vertices and edges are drawn inside $\gamma$, with no crossings. We can then add the edges of $X'$ to this drawing, thus obtaining a planar drawing of $G'\setminus E'$.

\section{Proofs Omitted from Section~\ref{sec: iteration}}
We start with the following simple claim, that we use repeatedly in this section.

\begin{claim}\label{claim: well-linkedness in extended graph}
Let $G=(V,E)$ be any graph, $A\sse A'\sse V$ any subsets of vertices, such that $A$ is $\alpha$-well-linked, for any parameter $0<\alpha<1$. Moreover, assume that the graph $H=G[A']\cup \out_G(A')$ contains a collection $\pset$ of edge-disjoint paths, connecting the edges in $\out_G(A')\setminus \out_G(A)$ to the vertices of $A$, such that each edge in $\out_G(A')\setminus \out_G(A)$ participates in exactly one such path. Then set $A'$ is $\alpha$-well-linked.
\end{claim}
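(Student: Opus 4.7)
The plan is to lift well-linkedness from $A$ to $A'$ by reducing an arbitrary partition of $A'$ to the induced partition of $A$, and then using the paths in $\pset$ to charge the ``extra'' terminal edges in $\out_G(A')\setminus\out_G(A)$ either to edges of the $A$-cut or to edges of the $A'$-cut.

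Fix an arbitrary partition $(A'_1,A'_2)$ of $A'$, and let $T_1=\out_G(A'_1)\cap\out_G(A')$, $T_2=\out_G(A'_2)\cap\out_G(A')$. Define the induced partition of $A$ by $A_1=A'_1\cap A$, $A_2=A'_2\cap A$, and let $S_1=\out_G(A_1)\cap\out_G(A)$, $S_2=\out_G(A_2)\cap\out_G(A)$. Since $A_1\sse A'_1$ and $A_2\sse A'_2$, every edge of $E(A_1,A_2)$ is also an edge of $E(A'_1,A'_2)$, so $|E(A'_1,A'_2)|\geq |E(A_1,A_2)|$, and the $\alpha$-well-linkedness of $A$ gives $|E(A_1,A_2)|\geq\alpha\min\set{|S_1|,|S_2|}$.

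The core step is a combinatorial charging argument bounding $|T_1|$ (and, symmetrically, $|T_2|$). For each $e\in T_1\cap\out_G(A)$ the $A'$-endpoint of $e$ necessarily lies in $A_1$, so $e\in S_1$ and I charge $e$ to itself. For each $e\in T_1\setminus\out_G(A)$ the $A'$-endpoint $u_e$ of $e$ lies in $A'_1\setminus A$; I traverse the path $P_e\in\pset$ starting from $u_e$ toward $A$, and let $g_e$ be the first edge encountered that either enters $A$ or crosses the cut $(A'_1,A'_2)$. A short case analysis shows that $g_e$ must lie in $E(A_1,A'_1\setminus A)\sse S_1$, in $E(A_2,A'_1\setminus A)\sse E(A'_1,A'_2)$, or in $E(A'_1\setminus A,A'_2\setminus A)\sse E(A'_1,A'_2)$. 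Because the paths in $\pset$ are edge-disjoint, the map $e\mapsto g_e$ is injective, and tracking exactly which sub-parts of $S_1$ and $E(A'_1,A'_2)$ can receive these charges will yield $|T_1|\leq |S_1|+|E(A'_1,A'_2)|-|E(A_1,A_2)|$, and symmetrically $|T_2|\leq |S_2|+|E(A'_1,A'_2)|-|E(A_1,A_2)|$.

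To finish, I take the minimum of these two bounds to get $\min\set{|T_1|,|T_2|}\leq \min\set{|S_1|,|S_2|}+|E(A'_1,A'_2)|-|E(A_1,A_2)|$, and then substitute $\min\set{|S_1|,|S_2|}\leq |E(A_1,A_2)|/\alpha$ together with $|E(A_1,A_2)|\leq |E(A'_1,A'_2)|$ to conclude $\min\set{|T_1|,|T_2|}\leq |E(A'_1,A'_2)|/\alpha$, i.e.\ $|E(A'_1,A'_2)|\geq\alpha\min\set{|T_1|,|T_2|}$, as required. The main subtlety I expect to have to nail down is verifying that the self-charges to $S_1$, the charges to $S_1\cap E(A'_1)$, and the charges to $E(A'_1,A'_2)$ all land in pairwise disjoint parts of $S_1\cup E(A'_1,A'_2)$, and in particular that none of them lands in $E(A_1,A_2)$; this is what makes the ``$-|E(A_1,A_2)|$'' slack genuinely available and prevents the argument from degrading to an $\alpha/(1+\alpha)$-well-linked conclusion instead of the desired $\alpha$.
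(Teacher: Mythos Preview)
Your proposal is correct and follows essentially the same approach as the paper's proof: both restrict the partition $(A'_1,A'_2)$ to $A$, charge each edge of $T_1\setminus\out_G(A)$ along its path $P_e$ to either an edge of $S_1\setminus T_1$ or an edge of $E(A'_1,A'_2)\setminus E(A_1,A_2)$, and then use $\alpha<1$ to absorb the additive $|E(A'_1,A'_2)|-|E(A_1,A_2)|$ term into $|E(A'_1,A'_2)|/\alpha$. The only cosmetic differences are that the paper phrases it as a contradiction and, after assuming $|T_X|\le|T_Y|$ on the $A$-side, bounds only $|T_{X'}|$, whereas you bound both sides and take the minimum; the resulting inequality $|T_1|\le|S_1|+|E(A'_1,A'_2)|-|E(A_1,A_2)|$ is exactly the paper's $|T_{X'}|\le|T_X|+|E(X',Y')\setminus E(X,Y)|$.
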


\begin{proof}
We will omit the sub-script $G$ in $\out_G(A),\out_G(A')$ for convenience from now on.
We can assume w.l.o.g. that each path $P\in\pset$ contains exactly one vertex of $A$, so the last edge on path $P$ belongs to $\out(A)$.

Assume for contradiction that $A'$ is not $\alpha$-well-linked, and let $(X',Y')$ be the violating partition. This gives a partition $(X,Y)$ of $A$, with $X=X'\cap A$, $Y=Y'\cap A$. Let $T_X=\out(A)\cap \out(X)$, and $T_Y=\out(A)\cap \out(Y)$, and assume w.l.o.g. that $|T_X|\leq |T_Y|$. Since $A$ is $\alpha$-well-linked, $|E(X,Y)|\geq \alpha |T_X|$. Let $T_{X'}=\out(A')\cap \out(X')$ (see Figure~\ref{fig: claim 4}). Recall that each edge $e\in T_{X'}\setminus \out(A)$ is associated with a path $P_e\in \pset$, connecting $e$ to a vertex of $A$, and these paths are edge-disjoint. Therefore, for each edge $e\in T_{X'}\setminus \out(A)$, either the endpoint of the path $P_e$ belongs to $X$, and so the last edge on the path belongs to $T_X\setminus T_{X'}$, or an edge of $P_e$ belongs to $E(X',Y')\setminus E(X,Y)$. Therefore:
$T_{X'}\setminus \out(A)\leq |E(X',Y')\setminus E(X,Y)|+|T_X\setminus T_X'|$. Altogether, 
since $T_{X'}\cap \out(A)= T_{X'}\cap T_X$:

 \[|T_{X'}|\leq |E(X',Y')\setminus E(X,Y)|+|T_X|\leq |E(X',Y')\setminus E(X,Y)|+|E(X,Y)|/\alpha\leq |E(X',Y')|/\alpha,\]
 
contradicting the fact that $(X',Y')$ is a violating cut.
\begin{figure}[h]
\scalebox{0.4}{\rotatebox{0}{\includegraphics{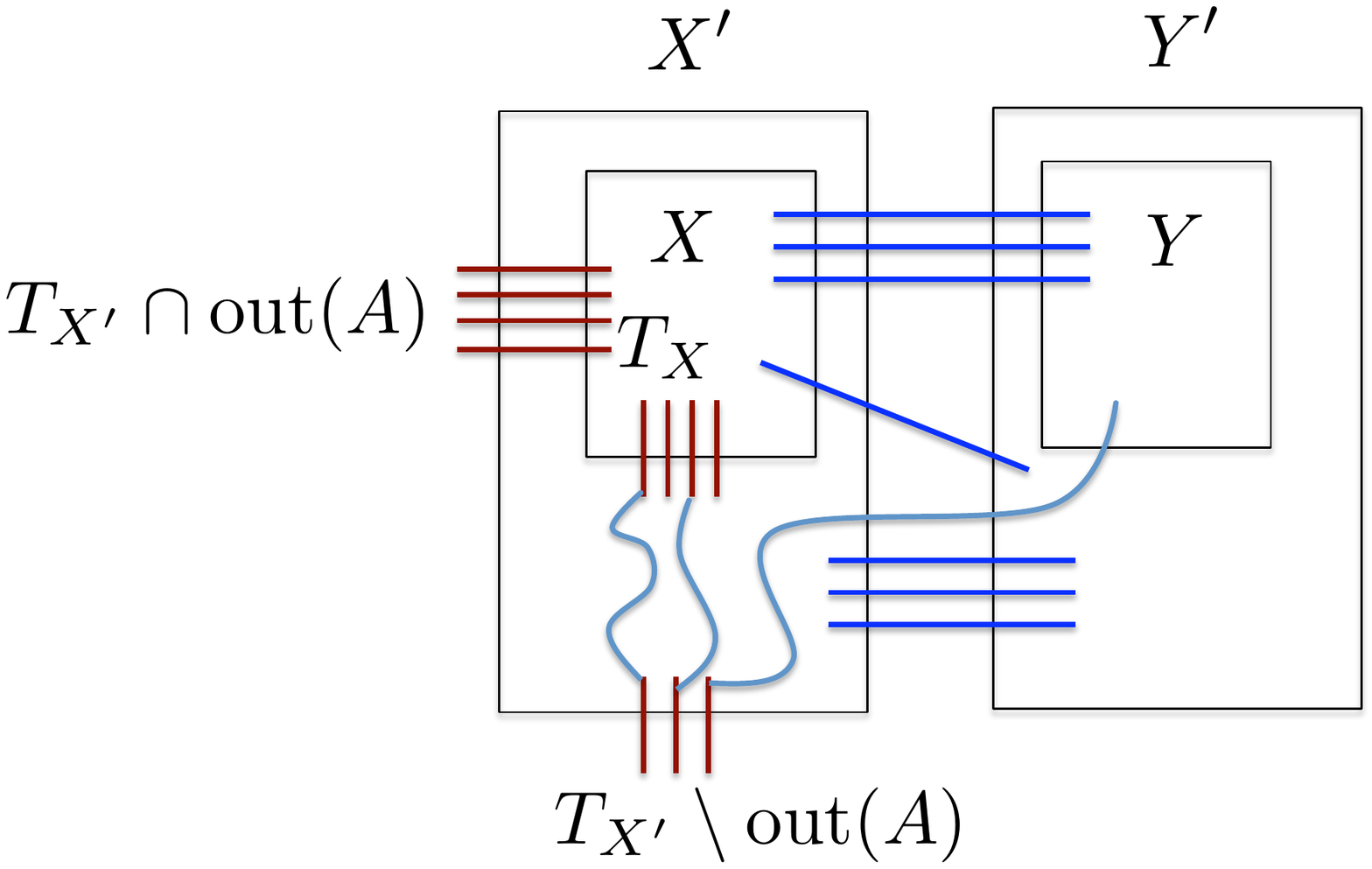}}} \caption{Illustration for Claim~\ref{claim: well-linkedness in extended graph}. \label{fig: claim 4}}
\end{figure}
\end{proof}

\subsection{Proof of Theorem~\ref{thm: initial partition}}

We first show that there is a partition $(X,Y)$ of $V(H)$, with $|X|,|Y|\geq n'/3$, and $|E(X,Y)|=O(\dmax\sqrt{n'})$. Indeed, let $E^*$ be the optimal planarizing set for the graph $H$. Observe that $|E^*|\leq \opt\leq O(\sqrt n')$, 
since we have assumed that $n'\geq m^*\rho^2> \opt^2$ in Equations~(\ref{eq: upper bound on rho in terms of n'}) and (\ref{eq: value of rho 2}).
Then graph $H\setminus E^*$ is planar, and so from Theorem~\ref{thm: planar separator}, there is a partition $(A',B',C')$ of $V(H)$, where $|A'|,|C'|\leq n'/3$, $|B'|\leq O(\sqrt{n'})$, and no edges connect vertices of $A'$ to vertices of $C'$ in $H\setminus E^*$. Assume w.l.o.g. that $|A'|\leq |C'|$, and consider the partition $(A'\cup B',C')$ of $V(H)$. Then $|A'\cup B'|,|C'|\geq n'/3$, and $|E_H(A'\cup B',C')|\leq |E^*|+\dmax|B'|\leq \opt +O(\dmax\sqrt{n'})\leq O(\dmax\sqrt{n'})$. We can then use Theorem~\ref{thm: ARV} to find a partition $(X_0,Y_0)$ of $V(H)$, with $|X_0|,|Y_0|\geq \eps n'$ for some constant $\eps$, and $|E(X_0,Y_0)|\leq O(\dmax\sqrt{n'\log n'})$.
We set the parameters  $N=|E(X_0,Y_0)|\leq O(\dmax\sqrt{n'\log n'})$, and $\lambda=\frac{\epsilon^2n'}{25N^2}=\Omega\left(\frac 1{\dmax^2\log n'}\right )$.

If there is any set $Z\in \zset'$, with $|Z|\geq \lambda n'$, then we can simply set let $Z=A$, $B=V(H)\setminus A$, to obtain the final partition $(A,B)$. It is easy to see that this partition satisfies the conditions of the theorem, since $|E(Z,V(H)\setminus Z)|=|\Gamma(Z)|\leq \sqrt{n'}$, and we have assumed that $|Z|\leq n'/2$, so $|B|\geq \lambda n'$ will also hold.

From now on we assume that there are no sets $Z\in \zset'$, with $|Z|\geq \lambda n'$.
 We perform a number of iterations. At the beginning of each iteration, we are given a partition $(X,Y)$ of $V(H)$, with $|X|,|Y|\geq \epsilon  n'$, and $|E(X,Y)|\leq N$.  The input to the first iteration is the partition $(X_0,Y_0)$. We then try to perform one of the  next two steps, while possible. 
 
Let $A\in \set{X,Y}$ be the set containing more vertices, and let $B$ be the other set. 
We set up a non-uniform sparsest cut problem on graph $H'=H[A]$, as follows. For each vertex $v\in \Gamma(A)$, we set its weight to be the number of edges in $\out(A)$ incident on $v$. For all other vertices, the weight is $0$. We then run the algorithm $\algsc$ on the resulting non-uniform sparsest cut instance. Assume first that we obtain a cut $(A_1,A_2)$ of sparsity less than $1$, and assume w.l.o.g. that $|A_1|\leq |A_2|$. Then $|E(A_1,A_2)|<|\out(A_1)\cap E(A,B)|$. We then define the new partition $(X,Y)$ to be $(A\setminus A_1,B\cup A_1)$. It is easy to see that the size of each resulting set is at least $\epsilon  n'$, and the number of the edges in the cut decreases.  
Otherwise, if the sparsity of the cut that we obtain is at least $1$, then, since $\algsc$ gives a factor $\alphasc$-approximation for the non-uniform sparsest cut problem, set $A$ is $1/\alphasc$ well-linked. We then perform the following step. 

Let $Z\in \zset'$, such that $Z\cap A\neq \emptyset$, and $Z\not\sse A$. Denote by $\Gamma_A(Z)=\Gamma(Z)\cap A$, and $\Gamma_B(Z)=\Gamma(Z)\cap B$. If $|\Gamma_A(Z)|<|\Gamma_B(Z)|$, then we define a new partition $(A\setminus Z, B\cup Z)$. It is easy to see that the new partition remains balanced, that is, $|A|,|B|\geq \epsilon  n'$, since we have assumed that $|Z|<\lambda n'<n'/4$. From Claim~\ref{claim: cut of grids}, $|E(A\cap Z,B\cap Z)|>|\Gamma_A(Z)|$. Since each vertex in $\Gamma(Z)$ has exactly one outgoing edge connecting it to vertices in $T(Z)$, it follows that the cut size strictly decreases.

We will perform one of the above two steps, while applicable. As observed above, the cut size can only go down, and eventually we will obtain a partition $(A,B)$, with $|A|\geq |B|\geq \epsilon  n'$, $|E(A,B)|\leq N$, such that $A$ is $1/\alphasc$-well-linked. Let $\zset''\sse \zset'$ denote the set of grids $Z\in \zset'$ with $A\cap Z\neq \emptyset$ and $Z\not\subseteq A$. Then $|\Gamma_A(Z)|\geq |\Gamma_B(Z)|$ must hold for each $Z\in \zset''$, and moreover, there is a collection $\pset(Z)$ of $|\Gamma_B(Z)|$ edge-disjoint paths in $H[Z]$, connecting the vertices in $\Gamma_B(Z)$ to the vertices in $\Gamma_A(Z)$, where each vertex in $\Gamma_B(Z)$ participates in exactly one such path. In our final step, we add all such sets $Z\in \zset''$ to the set $A$, to obtain the set $A'$. Let $B'=V(H)\setminus A'$. Our final partition is $(A',B')$. We now argue that it has all required properties. First, from Claim~\ref{claim: cut of grids},  for all $Z\in \zset''$,  $|E(A\cap Z,B\cap Z)|> |\Gamma_B(Z)|$, and so the size of the cut does not increase, and remains bounded by $N$, as required. Next, from Claim~\ref{claim: well-linkedness in extended graph}, set $A'$ is $1/\alphasc$-well-linked. Since $1/\alphasc=\Omega(1/(\sqrt{\log n}\cdot \log \log n))$, and $\alpha^*=\Omega(1/(\log^{3/2} n\log\log n))$, this implies that $A'$ is also $\alpha^*$-well-linked, as required. Finally, it is clear that $|A'|\geq \epsilon  n'$, and it is enough to show that $|B'|\geq \eps n'/10\geq \lambda n'$. Since $|B|\geq \eps n'$, it is enough to show that $\sum_{Z\in \zset''}|Z\cap B|\leq 0.9\epsilon  n'$.
 
We will use Claim~\ref{claim: numbers}, and define the set $\rset$ to contain, for each $Z\in \zset''$, the number $x(Z)=|Z\cap B|$, with the associated number $y_{x(Z)}=|E(Z\cap B,Z\cap A)|$. From Claim~\ref{claim: cutting the grid}, $x(Z)\leq 4y_{x(Z)}^2$, so we can set $\beta=4$. Since $|Z|<\lambda n'$ for all $Z\in \zset'$, we can set $M=\lambda n'$. Moreover, $\sum_{Z\in \zset''}|E(A\cap Z,B\cap Z)|\leq |E(A,B)|\leq N$, so we can set $S=N$. By Claim~\ref{claim: numbers}, $\sum_{Z\in \zset''}|Z\cap B|\leq 2S\sqrt{\beta M}+M/4\leq 2N\sqrt{4\lambda n'}+\lambda n'/4=4N\sqrt{\lambda n'}+\lambda n'/4$.
We now show that this number is bounded by $0.9\eps n'$.

Recall that $\lambda=\frac{\epsilon^2 n'}{25 N^2}\geq \Omega(\frac{1}{\dmax^2 \log n'})$. Therefore, $\lambda n'/4<\epsilon  n'/10$, and $4N\sqrt{\lambda n'}\leq 0.8\epsilon  n'$, giving the desired bound.

\subsection{Proof of Lemma~\ref{lemma: decomposition for Case 2}}
Recall that our starting point is two disjoint subsets $A,C$ of vertices of $H$, both of which are canonical w.r.t. $\zset'$, $\alpha^*$-well-linked in graph $H$, and contain at least $n'/\rho$ vertices. 

Construct a new graph from $H$, where all vertices in $A$ are contracted into a single source vertex $s$, and all vertices in $C$ are contracted into a single sink vertex $t$. Consider the minimum $s$-$t$ cut $\tilde{A},\tilde{B}$ in this graph, where $s\in \tilde{A}$, and let $(A',B')$ be the corresponding partition of the vertices of $H$, obtained from $(\tilde{A},\tilde B)$ after we un-contract $s$ and $t$. From Corollary~\ref{corollary: canonical s-t cut}, both sets $A'$ and $B'$ are canonical w.r.t. $\zset'$, and clearly both sets contain at least $n'/\rho$ vertices. From the min-cut max-flow theorem, there is a collection $\pset$ of $|E(A',B')|$ disjoint paths in $H$, where each such path connects an edge in $\out_H(A)$ to an edge in $\out_H(C)$, and contains exactly one edge in $E(A',B')$. Moreover, each edge in $E(A',B')$ is contained in exactly one path in $\pset$.
We now apply Claim~\ref{claim: well-linkedness in extended graph} to both sets $A'$ and $B'$ to conclude that they are both $\alpha^*$-well-linked.

\subsection{Proof of Lemma~\ref{lemma: decomposition for Case 3}}
The proof is very similar to the proof of Lemma~\ref{lemma: decomposition for Case 2}. Let $C$ be the union of all type-2 clusters. Recall that $|A|,|C|\geq \lambda n'/2\geq n'/\rho$ (by Equation~(\ref{eq: value of rho 2})), both sets $A$ and $C$ are canonical w.r.t. $\zset'$, $A$ is $\alpha^*$-well-linked w.r.t. $\out_H(A)$, and $H[C]\cup \out_G(C)$ contains a collection $\pset$ of edge disjoint paths, connecting the edges in $\out_H(C)$ to the edges in $\out^X(C)=\out_G(C)\setminus \out_H(C)$, such that every edge in $\out_H(C)$ is an endpoint of exactly one path.

As before, we construct a new graph from $H$, where all vertices in $A$ are contracted into a single source vertex $s$ and all vertices in $C$ are contracted into a single sink vertex $t$. Let $(\tilde A,\tilde B)$ denote the minimum $s$-$t$ cut in this network, as before, and let $(A',B')$ denote the corresponding partition of $H$. As before, both $A'$ and $B'$ are canonical w.r.t. $\zset'$, and they both contain at least $n'/\rho$ vertices. Similarly to the proof of Lemma~\ref{lemma: decomposition for Case 2}, set $A'$ is $\alpha^*$-well-linked w.r.t. $E_H(A',B')$. It now only remains to show the existence of the required paths in graph $H[B']\cup \out_G(B')$. Because of the min-cut max-flow theorem, there is a collection $\pset'$ of $|E_H(A',B')|$ edge-disjoint paths in the graph $H$, connecting edges in $\out_H(A)=E_H(A',B')$ to the edges of $\out_H(C)$, such that each edge of $E_H(A',B')$ appears on exactly one such path. This gives a collection $\pset''$ of paths in graph $(H[B']\setminus C)\cup \out_G(B'\setminus C)$, connecting the edges of  $E_H(A',B')$ to the edges of $\out_H(C)$, where every edge of $E_H(A',B')$ participates in one such path. The edges in $\out_H(C)$ are in turn connected by the paths in $\pset$ to the edges of $\out_X(C)$ in $H[B]$. The concatenation of paths in $\pset''$ and $\pset$ gives the desired collection of paths.

\subsection{Proof of Theorem~\ref{thm: case 4}}
We start with the set $\tset_1$ of all non-empty type-$1$ clusters. Consider the {\bf original contracted graph} $\H$ (which is different from the graph $H$ we have been working with so far). For each $C\in \tset_1$, the set $\out_{\H}(C)$ of edges consists of the following three subsets:

\begin{itemize}
\item edges in $E_1(C)$: recall that $C$ has property (P1) w.r.t. these edges, and in particular, for each such edge $e$, there is a path $P\sse \H\setminus C$, connecting $e$ to some vertex of set $A$. Moreover, for each $C\in \tset_1$, $E_1(C)\neq \emptyset$ (see Figure~\ref{fig: step 1 last partition}).

\item edges in $E_2(C)$: this set includes all edges in $\out_G(C)$ that do not belong to $E_1(C)$. Since the graph $H[C]\cup \out_G(C)$ contains a collection $\pset(C)$ of edge disjoint paths, connecting the edges of $E_1(C)$ to the edges of $E_2(C)$, and each edge in $E_2(C)$ participates in at least one path in $\pset(C)$, $|E_2(C)|\leq |E_1(C)|$. As observed before, each edge $e\in E_2(C)$ can reach the set $V(X)$ of vertices (the vertices of the bounding box) in the graph $G\setminus C$. Also, if $E_2(C)\neq \emptyset$, then there is a path $P(C)$, connecting a vertex of $C$ to a vertex of $X$ in graph $\H$, such that $P(C)$ does not contain vertices of any other clusters $C'\in \tset_1$.

\item edges in $\edges1,\ldots,\edges{h-1}$. From Invariant~(\ref{invariant 2: proper subgraph}), each such edge connects a vertex of $C$ to a vertex in $\H\setminus V(H)$, and from Invariant~(\ref{invariant 4.5: number of edges removed so far}), their number is bounded by $2m^*\cdot \rho\cdot \log n$, since the total number of iterations is at most $2\rho\cdot \log n$.
\end{itemize}

Since $\sum_{C\in \tset_1}|E_1(C)|\leq 2N$, and the total number of edges in sets $\edges1,\ldots,\edges{h-1}$ incident on vertices of $G$ is bounded by $2m^*\cdot \rho\cdot \log n\leq 2\sqrt{n'}\leq N$ (by Equation~(\ref{eq: upper bound on rho in terms of n'})), we have that $\sum_{C\in \tset_1}|\out_{\H}(C)|\leq 5N$. At the same time, since $|B|\geq \lambda/n'$, and Case 4 did not happen, $\sum_{C\in \tset_1}|C|\geq \lambda n'/2$, and for all $C\in \tset_1$, $|C|< n'/\rho$ (since Case 3 did not happen). In general, under these conditions, we could apply Claim~\ref{claim: numbers} to prove that there must be some set $C\in \tset_1$, with $|C|\geq \frac{2^{16}\cdot \dmax^6}{(\alpha^*)^2}\cdot|\Gamma(C)|^2$ --- the condition necessary for $C$ being a nasty set. However, we also need to ensure that $C$ has properties (P1) and (P2) in graph $\H$, which may not be true for all $C\in \tset_1$. Therefore, we first perform a transformation of sets $C\in \tset_1$ to ensure these properties. In this transformation, we will replace each cluster $C\in \tset_1$ with an augmented cluster $U_C$. We will then show that all but at most $2\opt+1$ augmented clusters have properties (P1) and (P2), and so these clusters still contain a large enough number of vertices, so that we can apply Claim~\ref{claim: numbers} to prove that one of them must be nasty.

Consider the graph $\H$, and remove the vertices in sets $A$, $X$, and all sets $C\in \tset_1$ from it. Let $\lset$ be the set of all connected components in the resulting graph. For each $S\in \lset$, if there is a cluster $C\in \tset_1$, such that $\out_{\H}(S)$ only contains edges connecting the vertices of $S$ to the vertices of $C$, we add $S$ to $C$. Let $U_C$ be the resulting extended cluster, for each $C\in \tset_1$.

\begin{claim}
There is at most one set $C\in \tset_1$, such that $U_C$ does not have the property (P1) in graph $\H$.
\end{claim}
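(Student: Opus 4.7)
I would first recast the claim as a statement about cut vertices in an auxiliary quotient graph, and then extract from it the structural fact that at most one $c_C$ can be a ``bad'' cut vertex. Form a quotient multigraph $\mathcal{G}$ of $\H$ whose super-vertices are $a$ (from contracting $A$), $x$ (from contracting $V(X)$ if $X\neq\emptyset$), $c_C$ for each $C\in\tset_1$ (from contracting $U_C$), and $s_S$ for each un-absorbed $S\in\lset$ (from contracting $S$); super-edges come from $\H$-edges joining the respective sets. Each contracted set is connected in $\H$: $A$ is $\alpha^*$-well-linked and hence connected; $V(X)$ is a simple cycle; each $U_C$ is connected because $C$ inherits connectedness from the well-linked decomposition plus the $s$-$t$ cut, and the absorbed $S$'s are each adjacent to $C$; and every $S\in\lset$ is, by definition, a connected component. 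Consequently, $T_{\H}(U_C)$ is connected in $\H\setminus U_C$ if and only if the super-neighbors of $c_C$ in $\mathcal{G}$ all lie in a single connected component of $\mathcal{G}\setminus\{c_C\}$, which, since $\mathcal{G}$ is connected and every neighbor of a cut vertex sits in exactly one side of the cut, is equivalent to $c_C$ not being a cut vertex of $\mathcal{G}$.

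So the claim is equivalent to: at most one super-vertex of the form $c_C$ is a cut vertex in $\mathcal{G}$. To establish this, I would exploit the structural property (inherited from Step~1 via the $s$-$t$ cut decomposition of each $\cset$-cluster) that for every $C\in\tset_1$, $E_1(C)\neq\emptyset$ and every edge in $E_1(C)$ lies on a path inside $H\setminus C$ that reaches a vertex of $A$. Translated into $\mathcal{G}$, this says: for every $C$, there is a walk from $c_C$ to $a$ whose interior super-vertices all lie in $\{x\}\cup\{c_{C'}: C'\neq C\}\cup\{s_S\}$. In particular, in $\mathcal{G}\setminus\{c_C\}$, the super-vertex $a$ is already connected to at least one super-neighbor of $c_C$ through this walk, so the ``main'' component of $\mathcal{G}\setminus\{c_C\}$ is the one containing $a$; any failure of (P1) for $U_C$ is witnessed by a super-neighbor of $c_C$ reached only via $E_2(C)$-edges (which by the corresponding property are routed through $C_2$ to $V(X)$).

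With this in hand, the key step is to upgrade ``at most one failure'' from the above reformulation. Suppose, for contradiction, that two distinct clusters $C,C'\in\tset_1$ both fail (P1). Let $L_C,L_{C'}$ denote the components of $\H\setminus U_C$ and $\H\setminus U_{C'}$ containing $A$. Using the path from $C'$ to $A$ inside $H\setminus C'$ (which avoids $U_{C'}$) together with the fact that $U_C$ and $U_{C'}$ are disjoint and $U_{C'}$ is connected, one of two situations must hold: either $U_{C'}\subseteq L_C$ and symmetrically $U_C\subseteq L_{C'}$, or one of them lies in the ``bad side'' of the other. In the first situation, every super-neighbor of $c_C$ that is reachable from $c_{C'}$ in $\mathcal{G}\setminus\{c_C\}$ lies in $L_C$, so the bad side of $C$ must be reached \emph{only} via $E_2(C)$-edges through the paired type-2 cluster, and similarly for $C'$; but the path $P(C)$ witnessing $E_2(C)\neq\emptyset$ goes through $C_2$ to $V(X)\subseteq L_{C'}$, so it provides a route in $\H\setminus U_{C'}$ from the bad side of $C$ back to $a$, contradicting that $C$ fails (P1). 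In the second situation, one checks that the bad side of $C$ would have to contain $U_{C'}$, and then $C'$'s own route to $A$ in $H\setminus C'$ would also force a route from the bad side of $C$ to $a$ in $\H\setminus U_C$, again contradicting the failure of (P1).

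The main obstacle I expect is making this last symmetry argument airtight, in particular handling the interplay of paths that leave $C$ via $E_2(C)$-edges and then re-enter the bounding box $X$ or some absorbed/un-absorbed $S$. The rest is essentially bookkeeping: the reformulation in terms of $\mathcal{G}$ is straightforward once one verifies the connectedness of each contracted piece, and the ``canonical side containing $a$'' observation follows directly from the path-to-$A$ property of type-1 clusters established in Section~\ref{sec: step 1}.
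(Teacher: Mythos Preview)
Your quotient-graph reformulation is a natural idea, but the contradiction argument has real gaps. In the ``first situation'' you assert $V(X)\subseteq L_{C'}$ as if it were already known, yet whether $X$ lies in the $A$-component of $\H\setminus U_{C'}$ is precisely the content of (P1) for $U_{C'}$. To even get to that point you would need (i) that a $C$-to-$X$ path through the paired type-2 cluster $C_2$ exists, i.e.\ that $E_2(C)\neq\emptyset$ or $E_{\H}(C,X)\neq\emptyset$, and (ii) that this path avoids all of $U_{C'}$, not just $C'$. Point~(i) does not come for free: the paper derives it from the failure of (P1) for $U_C$ via an argument that must also account for the $\edges{1},\ldots,\edges{h-1}$ edges in $\out_{\H}(C)$, which your outline omits entirely. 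In your ``second situation,'' the route from $C'$ to $A$ in $H\setminus C'$ may well pass through $U_C$, so it does not by itself give a path from the bad side of $C$ to $A$ inside $\H\setminus U_C$. A smaller but related issue: your equivalence between ``(P1) for $U_C$'' and ``$c_C$ is not a cut vertex of $\mathcal G$'' requires every other $U_{C'}$ to be connected, but $C=C_1$ is obtained by deleting the contracted source $s$ from one side of a minimum $s$--$t$ cut, and that deletion can disconnect it; ``well-linked decomposition plus the $s$--$t$ cut'' does not give this.

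The paper's proof is organized differently and supplies the missing pieces. It first proves, for every $C\in\tset_1$, that each edge of $\out_{\H}(U_C)$ can reach $A$ or $X$ inside $\H\setminus U_C$, treating $E_1(C)$, $E_2(C)$, and the $\edges{h'}$ edges separately. From this it deduces that if $U_C$ fails (P1) then necessarily $E_2(C)\neq\emptyset$ or $E_{\H}(C,X)\neq\emptyset$, yielding a $C$-to-$X$ path $\tilde P(C)$ that lies in $C\cup C_2\cup X$ and hence avoids $C'$. The contradiction then runs as follows: if there is also a $C$-to-$A$ path $P^*(C)$ in $\H$ avoiding $C'$, then $X\cup\tilde P(C)\cup \H[U_C]\cup P^*(C)$ is a connected subgraph of $\H\setminus U_{C'}$ containing both $A$ and $X$, so by the first lemma every terminal of $U_{C'}$ reaches $A$, and $U_{C'}$ has (P1). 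By symmetry the only remaining case is that every $C$-to-$A$ path hits $C'$ \emph{and} every $C'$-to-$A$ path hits $C$; this is ruled out by taking any $C$-to-$A$ path in $H\setminus C$ and keeping its tail after the last $C'$-vertex, which is a $C'$-to-$A$ path avoiding $C$. That ``tail of the path'' step is the key combinatorial move your plan is missing.
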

\begin{proof}
Fix some $C\in \tset_1$. Observe that $\out_{\H}(U_C)\sse \out_{\H}(C)$. We first show that each edge $e\in \out_{\H}(U_C)$ can reach either $A$ or $X$ in the graph $\H\setminus U_C$.

Let $e\in \out_{\H}(U_C)$, and let $t\in T_{\H}(U_C)$ be its corresponding terminal. If $e\in E_1(C)$, then we have already established that there is a path connecting $e$ to $A$ in $H\setminus C$ and hence in $\H\setminus U_C$. Similarly, if $e\in E_2(C)$, then we have already established that there is a path connecting $e$ to $X$ in $H\setminus C$ and hence in $\H\setminus U_C$. Assume now that $e\not\in E_1(C)\cup E_2(C)$, so $e\not\in E_H(C)$. From Invariant~(\ref{invariant 2: proper subgraph}), $t\not \in V(H)$ must hold. Clearly, if $t\in X$, then there is nothing to prove.

Let $S\in \lset$ be the connected component to which $t$ belongs. Since $S$ was not added to $U_C$, there is another cluster, $C'\in \tset_1$, such that an edge $e'\in E(S,C')$ belongs to graph $\H$. Let $e''$ be any edge in $E_H(C')$. Then we can build a path $P$, connecting $t$ to $e''$, that does not contain any vertices of $U_C$. Since $e''\in E_H(C')$, there is a path $P'\sse H$, connecting $e''$ to some vertex of $A\cup X$. If $P'$ does not contain any vertices of $C$, then the concatenation of $P$ and $P'$ gives a path connecting $t$ to a vertex of $A\cup X$. Otherwise, we get a path connecting $t$ to some edge $e^*\in \out_H(C)$. Since all such edges can reach $A\cup X$ in the graph $H\setminus C$, this will also give the desired path. We therefore conclude that for each $C\in\tset_1$, for each $e\in \out_{\H}(U_C)$, edge $e$ can reach $A\cup X$ in the graph $\H\setminus U_C$.

Let $\tset'\sse \tset_1$ be the subset of clusters $C$, such that $U_C$ does not have property (P1). It is now enough to prove that $|\tset'|\leq 1$.

Assume otherwise, and let $C,C'\in \tset'$ be any two clusters. The high-level idea is that we show that each one of these clusters must contain some interface vertices that can reach $A$, and some interface vertices that can reach $X$. Therefore, one of these clusters can use the other cluster in order to connect all its interface terminals to each $A$.

We first claim that either $E_2(C)\neq \emptyset$, or $E_{\H}(C,X)\neq \emptyset$. Assume otherwise. Then every edge $e\in \out_H(C)$ can reach $A$ in the graph $\H\setminus U_C$. If an edge $e\in \out_{\H}(U_C)$ cannot reach $A$ in the graph $\H\setminus U_C$, then the terminal $t\in T_{\H}(C)$, that serves as the endpoint of $e$, cannot belong to $V(H)$, and must belong to some set $S\in \lset$ (we have assumed that it cannot belong to $X$). Since $S$ has not been added to $U_C$, there is an edge $e'$ connecting $S$ to some other cluster $C''$. Cluster $C''$, in turn, has an edge $e''\in E_1(C'')$. Therefore, we have a path $P$ connecting $e$ to $e''$ in $\H\setminus U_C$. Edge $e''$ is in turn connected by a path $P'\sse H$ to a vertex of $A$. If $P'$ does not contain vertices of $C$, then the concatenation of $P$ and $P'$ will give a path connecting $t$ to $A$ in $\H\setminus U_C$. Otherwise, path $P'$ must contain an edge in $E_H(C)=E_1(C)$, so we can obtain a path connecting $t$ to some edge in $E_1(C)$, and then to $A$ in graph $\H\setminus U_C$. Therefore, $U_C$ must have property (P1).

We can therefore assume that either $E_2(C)\neq \emptyset$, or $E_{\H}(C,X)\neq \emptyset$. In either case, there is a path $\tilde P(C)\sse \H$, connecting a vertex of $C$ to a vertex of $X$, such that $\tilde P(C)\cap V(C')=\emptyset$. Similarly, we can conclude that there is a path $\tilde P(C')\sse \H$, connecting a vertex of $C'$ to a vertex of $X$, such that $\tilde P(C')\cap V(C)=\emptyset$.

Assume now that there is a path $P^*(C)$, connecting a vertex of $C$ to a vertex of $A$ in $\H$, such that $P^*(C)$ does not contain any vertices of $C'$. We claim that in this case, $U_{C'}$ must have property (P1). Indeed, if $e\in \out_{\H}(U_{C'})$, such that there is a path $P$ connecting $e$ to $X$ in $\H\setminus U_{C'}$, then we can take the union $U$ of graphs $P,X,\tilde P(C),\H[U_C],P^*(C)$. No vertex of $U_{C'}$ belongs to $U$, and it is a connected sub-graph of $\H$, containing both $X$ and $A$. Therefore, edge $e$ can reach $A$ in graph $\H\setminus U_{C'}$, implying that $U_{C'}$ has property (P1).

Similarly, if there is a path $P^*(C')$, connecting a vertex of $C'$ to a vertex of $A$ in $\H$, such that $P^*(C')$ contains no vertices of $C$, then $U_C$ has property (P1).

The only remaining case is when every path connecting $C$ to $A$ in $\H$ contains a vertex of $C'$, and every path connecting $C'$ to $A$ in $\H$ contains a vertex of $C$. It is easy to see that this is impossible: recall that $E_1(C)\neq \emptyset$, and so there is an edge $e\in E_1(C)$, with a path $P$ connecting $e$ to $A$ in graph $H\setminus C$. Let $v$ be the last vertex on $P$ that belongs to $C'$, and let $P'$ be the portion of $P$ starting from $v$. Then $P'$ connects $C'$ to $A$ and contains no vertices of $C$.
\end{proof}

Let $\tset''\sse \tset_1$ be the collection of all clusters $C$, such that $U_C$ has properties (P1) and (P2). Since all clusters $U_C$, for $C\in \tset_1\setminus \tset'$ have property (P1), if any such cluster does not have property (P2), its edges must participate in crossings in any drawing. Therefore, the total number of clusters in $\tset_1\setminus \tset''$ is at most $2\opt+1$, and they contain at most $\frac{n'}{\rho}\cdot (2\opt+1)<\frac{\lambda n'} 4$  vertices in total (we have used Equation~(\ref{eq: value of rho 2})).
Therefore, $\sum_{C\in \tset''}|C|\geq \lambda n'/4$.

This leaves us with the set $\tset''$ of clusters. For each cluster $C\in \tset''$, set $U_C$ is canonical, and has properties (P1) and (P2) in $\H$.
It now only remains to show that, for at least one such set, $|U_C|\geq \frac{2^{16}\cdot \dmax^6}{(\alpha^*)^2}\cdot|\Gamma(U_C)|^2$. Since $\Gamma(U_C)\sse \Gamma(C)$, and $|\Gamma(C)|\leq |\out_{\H}(C)|$, it is enough to prove that for some $C\in \tset''$, $|C|\geq \frac{2^{16}\cdot \dmax^6}{(\alpha^*)^2}\cdot|\out_{\H}(C)|^2$. This would imply that the corresponding set $U_C$ is a nasty canonical set. Assume for contradiction that for all $C\in\tset''$, $|C|< \frac{2^{16}\cdot \dmax^6}{(\alpha^*)^2}\cdot|\out_{\H}(C)|^2$. We will show that $\sum_{C\in \tset''}|C|<\lambda n'/4$ in this case, leading to a contradiction.

We use Claim~\ref{claim: numbers} by defining, for each cluster $C\in \tset''$, value $x(C)=|C|$, and $y_{x(C)}=|\out_{\H}(C)|$. We can then use $\beta= \frac{2^{16}\cdot \dmax^6}{(\alpha^*)^2}$, $M=n'/\rho$, and $S=5N$. From Claim~\ref{claim: numbers}, we get that $\sum_{C\in \tset'}|C|\leq 2S\sqrt{\beta M}+M/4=10N\sqrt{\frac{n'}{\rho}\cdot\frac{2^{16}\cdot \dmax^6}{(\alpha^*)^2}}+\frac{n'}{4\rho}$. Since, by Equation~(\ref{eq: value of rho 2}), $\rho>8/\lambda$, the second term is less than $\lambda n'/8$, at it is enough to show that the first term is bounded by $\lambda n'/8$ as well. This is equivalent to:

\[\rho>\frac{25\cdot 2^{24}\cdot \dmax^6 N^2}{n'\cdot \lambda^2\cdot (\alpha^*)^2}\]

which is guaranteed by Equation~(\ref{eq: value of rho 1}).

\subsection{Proof of Theorem~\ref{thm: initial skeleton for Cases 3 and 4}}\label{sec: proof of theorem skeleton construction for cases 3 and 4}

Since Case 2 does not happen, we assume throughout the proof that:

\begin{equation}\label{eq: bound on T}
 |T|> \frac{10^7\opt^2\cdot \rho\cdot\log^2n\cdot \dmax^2\cdot \betaFCG }{\alpha^*}
 \end{equation}
 
 The plan of the proof is as follows. We start by analyzing some structural properties of the graph $H$ and of the drawing $\phi'$ of $G$ for Cases 3 and 4. Next, we use a simple randomized algorithm to construct a skeleton $K^0$, which may not be connected. We will then show that at least one connected component $K'$ of $K^0$ has all the desired properties, except that it may contain $2$-vertex cuts, and therefore may not be rigid. Finally, we show how to turn $K'$ into a rigid graph, while preserving all other properties.
  
\subsubsection*{Structural Properties of $H$ and $\phi'$}
Recall that we are given a partition $(A,B)$ of $H$. 
For convenience, we will denote $H[A]=H_r$, and call its vertices and edges ``red'', and $H[B]=H_b$, and call its vertices and edges ``blue''. We will also denote $H_r=(V_r,E_r)$ and $H_b=(V_b,E_b)$. Then $T=E(V_r,V_b)$.

Let $\phi_r,\phi_b$ be the drawings of $H_r$ and $H_b$, respectively, induced by $\phi'$.
Since the graphs $H_r,H_b$ are not necessarily $2$-vertex connected, the face boundaries in $\phi_r,\phi_b$ are not necessarily simple. For a face $F$ in any embedding, we denote by $\gamma(F)$ the boundary of the face.

\begin{definition}
Let $H'\sse G$ be any sub-graph of $G$, and let $F$ be any face of the drawing $\phi_{H'}$ of $H'$ induced by $\phi'$. Let $e\not\in E(H')$ be any edge. We say that $e$ is embedded inside $F$, iff the image of the edge $e$ in $\phi'$ is completely contained inside the face $F$ (we allow one or both endpoints of $e$ to lie on $\gamma(F)$.)
\end{definition}


\begin{claim}\label{claim: most terminals embedded inside same face} Assume that Case 3 or Case 4 happen. Then
there is a subset $T^0\sse T$ of terminals, $|T^0|\geq |T|-7\opt\cdot \betaFCG /\alpha^*$, a face $F_b$ of $\phi_b$ and a face $F_r$ of $\phi_r$, such that all terminals in $T^0$ are embedded inside the face $F_r$ of $\phi_r$, and inside the face $F_b$ of $\phi_b$.  Moreover, the drawings of these terminals in $\phi'$ do not cross each other (or any other edge).
\end{claim}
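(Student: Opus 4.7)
Plan.  The claim asks for a large subset of $T$ of \emph{good} terminals (not participating in crossings) whose blue ends are all embedded in one face of $\phi_b$ and whose red ends all emerge into one face of $\phi_r$.  Let $E^{\mathrm{bad}}\subseteq E(G)$ denote the set of edges participating in at least one crossing of $\phi'$; since each crossing involves two edges, $|E^{\mathrm{bad}}|\leq 2\opt$.  First I throw away $T\cap E^{\mathrm{bad}}$ from $T$, losing at most $2\opt$ terminals.  For each remaining terminal $e=(u,v)$, the curve $\phi'(e)$ is crossing-free, so it lies in a single face $F_r(e)$ of $\phi_r$ (that is, a connected component of the plane minus the drawing of $H_r$), and likewise in a single face $F_b(e)$ of $\phi_b$.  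The goal is to show that one pair $(F_r^*,F_b^*)$ captures all but $O(\opt\cdot\betaFCG/\alpha^*)$ good terminals.

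Topological key.  The heart of the argument is the observation that if $e,e'\in T$ are good terminals with $F_r(e)\neq F_r(e')$, then every path $P$ from $v$ to $v'$ in the graph supplied by Observation~\ref{observation: existence of flow in well-linked instance} applied to $S=B$ — whose vertex set is contained in $V_b$ together with subdivision vertices $\{t_f:f\in T\}$ and in particular contains no vertex of $V_r$ — must use at least one edge of $E^{\mathrm{bad}}$.  Indeed, if $P$ consisted only of good edges, its image in $\phi'$ would be a continuous curve containing $v$ and $v'$ and disjoint from every edge and vertex of $H_r$, forcing $v$ and $v'$ to lie in the same face of $\phi_r$ and hence $F_r(e)=F_r(e')$.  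The analogue with $A$ in place of $B$ and $\phi_b$ in place of $\phi_r$ holds symmetrically.

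Counting via flow, Case~3.  In Case~3 both $A$ and $B$ are $\alpha^*$-well-linked in $H$.  Let $F_r^*$ be the face of $\phi_r$ containing the largest number of good terminals.  Using the partial-matching version of Observation~\ref{observation: existence of flow in well-linked instance} applied to $B$, any matching on $T$ can be routed by unit flows in the $B$-subdivided graph with congestion at most $2\betaFCG/\alpha^*$.  Take a maximum matching $M$ that pairs every good terminal outside $F_r^*$ with a good terminal inside $F_r^*$; by the topological key each matched pair sends its entire unit through the at most $2\opt$ bad edges of that graph, so $|M|\leq 2\opt\cdot 2\betaFCG/\alpha^*=4\opt\betaFCG/\alpha^*$.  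Thus at most $4\opt\betaFCG/\alpha^*$ good terminals lie outside $F_r^*$.  The symmetric argument using the $\alpha^*$-well-linkedness of $A$ supplies a face $F_b^*$ of $\phi_b$ missing at most $4\opt\betaFCG/\alpha^*$ good terminals.

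Case~4 variation and conclusion.  In Case~4 we may not have $B$-well-linkedness, but Lemma~\ref{lemma: decomposition for Case 3} provides an edge-disjoint family $\pset$ of paths in $H[B]\cup\out_G(B)$, each connecting some $e\in T$ to an edge of $\out^X(B)$ incident with the bounding box $X$.  For any good $e$ whose path $P_e$ uses only good edges, the curve $e\cup P_e$ is a crossing-free arc from $u\in V_r$ to a vertex of $X$; since $X$ carries no crossings in $\phi'$ and is embedded on the boundary of the outer face of $\phi_r$, the same topological argument forces $F_r(e)$ to be the outer face.  Edge-disjointness of $\pset$ and $|E^{\mathrm{bad}}|\leq 2\opt$ imply at most $2\opt$ paths $P_e$ can be spoiled, so the $F_r$-loss in Case~4 is only $2\opt$; the $F_b^*$-loss is again $\leq 4\opt\betaFCG/\alpha^*$ from the $A$-side.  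Setting $T^0:=\{e\in T:e\text{ good},\ F_r(e)=F_r^*,\ F_b(e)=F_b^*\}$ yields
\[
|T^0|\geq |T|-2\opt-4\opt\betaFCG/\alpha^*-4\opt\betaFCG/\alpha^*\geq |T|-7\opt\betaFCG/\alpha^*,
\]
where the final inequality absorbs the additive $2\opt$ using $\betaFCG/\alpha^*=\Omega(1)$.  Terminals in $T^0$ are good by construction, so they cross neither each other nor any other edge.  The principal obstacle is the topological step: one must be careful to apply Observation~\ref{observation: existence of flow in well-linked instance} in the form where flow paths remain inside $V_b\cup\{t_e\}$, as this is precisely what prevents a clean path from ``hopping'' between $\phi_r$-faces through a $V_r$-vertex and thereby breaking the argument.
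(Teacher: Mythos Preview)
Your approach is the same as the paper's---discard crossing terminals, then use the well-linkedness of one side to route a cross-face matching and bound the number of stragglers---but there are two genuine gaps.

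\textbf{The arithmetic does not close.} You conclude
\[
|T^0|\ \geq\ |T|-2\opt-4\opt\betaFCG/\alpha^*-4\opt\betaFCG/\alpha^*\ \geq\ |T|-7\opt\betaFCG/\alpha^*,
\]
but $2\opt+8\opt\betaFCG/\alpha^*>7\opt\betaFCG/\alpha^*$ for every positive choice of parameters, so the last inequality is simply false. The fix is to sharpen ``at most $2\opt$ bad edges'' to ``at most $\opt$ blue edges that cross a red edge'': each of the $\leq\opt$ crossings contributes at most one blue edge crossing a red one, so the flow of size $|M|$ with congestion $2\betaFCG/\alpha^*$ is supported on $\leq\opt$ such edges, giving $|M|\leq 2\opt\betaFCG/\alpha^*$ per side. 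Then $2\opt+2\cdot 2\opt\betaFCG/\alpha^*\leq 7\opt\betaFCG/\alpha^*$ does hold. The paper uses exactly this sharper count (and in fact sets up the contradiction so that $>3\opt/2$ red edges must cross blue edges, exceeding $\opt$); your cruder $2\opt$ bound on total bad edges is too loose for the constant $7$.

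\textbf{The matching step needs a case.} You write ``take a maximum matching $M$ that pairs every good terminal outside $F_r^*$ with one inside,'' and then infer that the number outside equals $|M|$. But if no face holds a majority of the good terminals, the maximum such matching has size equal to the number \emph{inside} $F_r^*$, not outside, and your inference fails. The paper handles this with a short two-case construction: if some face holds $\geq 3\opt\betaFCG/\alpha^*$ terminals, match that many inside--outside pairs; otherwise greedily collect a set of faces holding between $3\opt\betaFCG/\alpha^*$ and $6\opt\betaFCG/\alpha^*$ terminals and match across its boundary. Alternatively you can dispose of the no-majority case directly using Equation~(\ref{eq: bound on T}), since then a cross-face matching of size $\lfloor N^*/2\rfloor$ exists and the flow bound would force $|T|$ to be far smaller than that equation allows. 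Either way, the argument as written skips this step.
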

(See figure~\ref{fig: red-blue-terminals} for an illustration)

\begin{proof}
Remove from $T$ all terminals whose embeddings participate in crossings in $\phi'$, and let $T^*$ be the set of remaining terminals, $|T^*|\geq |T|-2\opt$.
\ From Equation~\ref{eq: bound on T},  $|T|\geq 14\betaFCG \cdot \opt/\alpha^*$.

Assume for contradiction that there is no face $F_b$ in the embedding $\phi_b$ of $H_b$, such that at least $|T^*|-3\opt\betaFCG /\alpha^*$ terminals of $T^*$ are embedded inside $F_b$.  We will find a set $M$ of $3\betaFCG \cdot \opt/\alpha^*$ disjoint pairs of terminals of $T^*$, such that each pair contains two terminals whose red endpoints lie inside distinct faces of $\phi_b$.  Since graph $H_r$ is $\alpha^*$-well-linked w.r.t. $T$, from Observation~\ref{observation: existence of flow in well-linked instance}, we can find a flow $F$ in $H_r$, that routes these pairs of terminals with congestion at most $2\betaFCG /\alpha^*$ in $H_r$. Each flow-path in $F$ must contain a red edge whose drawing crosses a blue edge. With the maximum congestion being bounded by $2\betaFCG /\alpha^*$, and the total amount of flow in $F$ being $3\betaFCG \cdot \opt/\alpha^*$, we get that there are more than $\opt$ red edges, whose drawings cross those of blue edges in $\phi'$, a contradiction. 

To find the desired set $M$ of terminal pairs, consider two cases. First, if any face $F'_b$ of $\phi_b$ contains at least $3\betaFCG \cdot \opt/\alpha^*$ terminals, then we can simply match  $3\betaFCG \cdot \opt/\alpha^*$ terminals lying outside $F'_b$ to $3\betaFCG \cdot \opt/\alpha^*$ terminals lying inside $F'_b$. Otherwise, we can find a set $\fset$ of faces, containing between $3\betaFCG \cdot \opt/\alpha^*$ and $6\betaFCG \cdot \opt/\alpha^*$ terminals in total. We can then match terminals lying outside these faces to terminals lying inside these faces.

We conclude that in both Case 3 and Case 4, there is a subset $T'\sse T^*$ of terminals, $|T'|\geq |T^*|-3\opt\betaFCG /\alpha^*$, such that all terminals in $T'$ are embedded inside the same face $F_b$ of $\phi_b$. 

We now show that there is a subset $T''\sse T^*$ of terminals, $|T''|\geq |T^*|-3\opt\betaFCG /\alpha^*$, such that all terminals of $T''$ are embedded inside the same face $F_r$ of $\phi_r$. In the end, we will set $T^0=T'\cap T''$.

If Case 3 happens, then $H_b$ is $\alpha^*$-well-linked w.r.t. $T$, so the same argument as for $H_r$ works. Assume now that Case 4 happens. Since the edges of $X$ do not participate in any crossings in $\phi'$, all vertices of $X$ are embedded inside the same face $F_r$ of $\phi_r$. Recall that we have a collection $\pset$ of edge-disjoint paths, connecting the terminals in $T$ to the vertices of $X$. If a terminal $e\in T$ is not embedded inside $F_r$, then the corresponding path has to cross the boundary of $F_r$. Therefore, all but $\opt$ terminals of $T^*$ must be embedded inside $F_r$. 

We obtain the final set of terminals, by setting $T^0=T'\cap T''$.
\end{proof}

\begin{figure}[h]
\scalebox{0.4}{\rotatebox{0}{\includegraphics{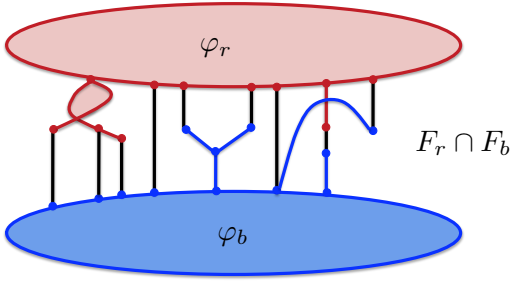}}} \caption{Drawing of $H_r,H_b$ and $T^0$ in $\phi$. Terminals in $T^0$ are shown in black. \label{fig: red-blue-terminals}}
\end{figure}

Let $N'=|T^0|$. From the above claim, $N'\geq |T|-7\opt\cdot \betaFCG /\alpha^*\geq |T|/2$.
Since $H_r$ is a connected graph, we can draw a simple closed curve $\tau$ along the boundary $\gamma(F_r)$ of $F_r$, inside $F_r$. This can be done so that $\tau$ does not cross any red edges or the edges of $X$, and it crosses every terminal in $T^0$ exactly once. Moreover, $\tau$ does not cross any blue edges, except for those blue edges $e$ whose drawing crosses the drawing of $\gamma(F_r)$ (See Figure~\ref{fig: red-blue-with-curve}). For each terminal $t\in T^0$, we let $p_t$ denote the point where the drawing of $t$ crosses the curve $\tau$. The order of the points $\set{p_t}_{t\in T^0}$ along the curve $\tau$ defines a circular ordering $\pi$ of the terminals in $T^0$. We will use the ordering $\pi$ of the terminals in $T^0$ throughout the proof.
We will sometimes say that some subset of terminals appears on $\tau$ in a certain order, to mean that their corresponding points $p_t$ lie on $\tau$ in this order.

\begin{figure}[h]
\scalebox{0.4}{\rotatebox{0}{\includegraphics{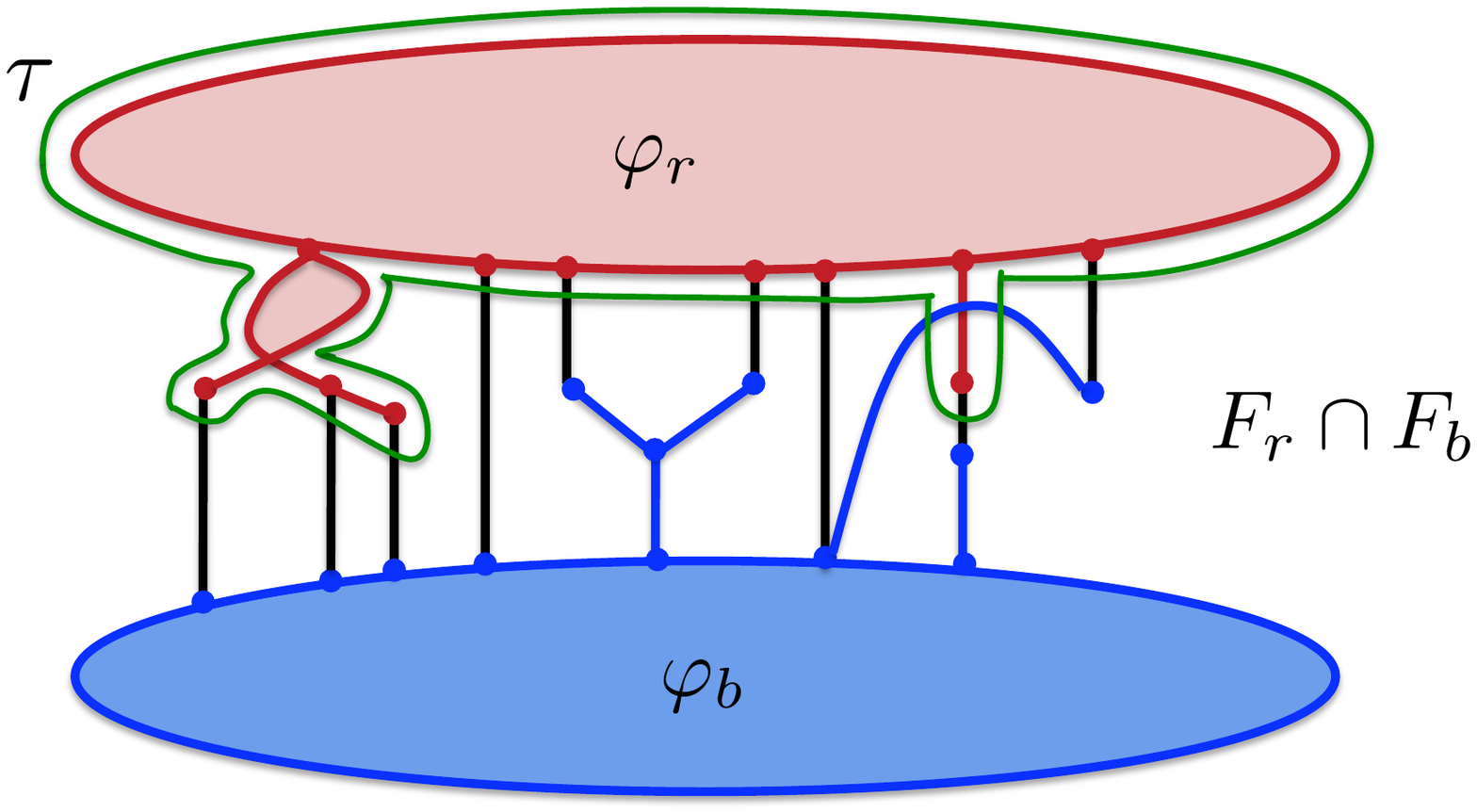}}} \caption{Curve $\tau$ is show in green. \label{fig: red-blue-with-curve}}
\end{figure}


Let $P=\frac{1}{2\rho\cdot \opt}$ be the allowed failure probability. Since we have assumed that $\opt\cdot \dmax^{6}\leq \sqrt n$, and $\rho=\Theta(\log n\cdot \dmax^2)\cdot \max\set{\dmax^{10}\log^5 n(\log\log n)^2,\opt}$, we can assume that $P\geq 10/{n^2}$.

Assume that Case 3 or Case 4 happen. In either case, the set $V_r$ of vertices is $\alpha^*$-well-linked w.r.t. the set $T$ of terminals. Therefore, from Observation~\ref{observation: existence of flow in well-linked instance}, there is a flow $F_r$ in $H_r\cup T$, where every pair $t,t'\in T$ of terminals sends one flow unit to each other, and the edge congestion is at most  $|T|\betaFCG /\alpha^*$. For each pair $t,t'\in T$ of terminals, we randomly choose a path $P_r(t,t')\sse H_r\cup\set{t,t'}$ connecting them, from the distribution induced by the flow $F_r$, and we let $\pset_r$ denote the set of all these selected paths.

Consider now Case 3.  Set $V_b$ is also $\alpha^*$-well-linked, and so there is a flow $F_b$ in $H_b\cup T$, where each pair $t,t'\in T$ of terminals sends one flow unit to each other, and the total congestion is at most $|T|\betaFCG/\alpha^*$. As before, each pair $t,t'\in T$ of terminals chooses a random path $P_b(t,t')$ according to the distribution induced by flow $F_b$, and we let $\pset_b$ be the set of all these chosen paths.

For Case 4, we construct the collection $\pset_b$ of paths as follows. Recall that we have a collection $\pset$ of edge-disjoint paths in graph $H_b\cup \out_G(V_b)$, connecting every terminal in $T$ to a vertex of $X$. For every pair $t,t'\in T$ of terminals, we construct a path $P_b(t,t')$, by concatenating $P_t,P_{t'}$ and one of the two segments of $X$ connecting $t$ to $t'$. Consider the resulting collection $\pset_b$ of paths. The congestion on the edges of $H_b$ is at most $|T|$ (but the congestion on the edges of $X$ may be higher). 

Let $\rset$ be the set of all pairs of distinct terminals in $T$. For each pair $(t,t')\in \rset$, the concatenation of the paths $P_r(t,t')$ and $P_b(t,t')$ gives a simple cycle $L(t,t')$ that contains both terminals. Let $\lset=\set{L(t,t')\mid (t,t')\in \rset}$. 

\subsubsection*{Constructing and analyzing $K^0$}
We select a random collection $\rset^*\sse \rset$ of pairs of terminals, where every pair is added to $\rset^*$ with probability $P'=\frac{P\cdot \alpha^*}{30\cdot \opt\cdot \betaFCG \cdot |T|}$. Let $\lset^*=\set{L(t,t')\mid (t,t')\in \rset^*}$.
The initial skeleton $K^0$ consists of the union of all cycles in $\lset^*$. Note that $K^0$ is not necessarily connected. 

Let $\event_1$ be the bad event that either a bad edge, or a terminal in $T\setminus T^0$ belong to $K^0$, or some edge of $H$
belongs to more than $4\log n$ cycles in $\lset^*$.
\begin{claim} $\prob{\event_1}\leq P/2$.
\end{claim}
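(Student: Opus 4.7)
The plan is to decompose $\event_1$ as a union of three sub-events and bound each one separately via a union bound, relying on the fact that each pair $(t,t')\in\rset$ is included in $\rset^*$ \emph{independently} with probability $P'$, so that for any fixed edge, the number of cycles of $\lset^*$ passing through it is a sum of independent Bernoulli variables. The choice of $P'=\frac{P\alpha^*}{30\opt\betaFCG|T|}$ is designed precisely so that each of the three resulting bounds comes out proportional to $P$.

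Bad edges. Fix any bad edge $e\in E(H)$. Since the flow $F_r$ sends one unit between each pair in $T$ with edge-congestion at most $|T|\betaFCG/\alpha^*$, we have $\sum_{(t,t')\in\rset}\Pr[e\in P_r(t,t')]\le |T|\betaFCG/\alpha^*$, and an identical bound holds for $P_b$ in Case~3. In Case~4 the bound for blue paths is even stronger ($\le|T|$) since $\pset$ is edge-disjoint; crucially, edges of $X$ are not edges of $H$ (since $H\sse H^0=G\setminus V(X)$), so we do not have to worry about congestion on $X$. Therefore $\Pr[e\in K^0]\le 2P'\cdot |T|\betaFCG/\alpha^*=P/(15\opt)$. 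Since the number of bad edges is at most $2\opt$, the union bound gives probability at most $2P/15$ that some bad edge enters $K^0$.

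Terminals outside $T^0$. A terminal $t\in T$ enters $K^0$ only through cycles $L(t,t')$ for some $t'$, so $\Pr[t\in K^0]\le |T|\cdot P'=P\alpha^*/(30\opt\betaFCG)$. Together with $|T\setminus T^0|\le 7\opt\betaFCG/\alpha^*$ from Claim~\ref{claim: most terminals embedded inside same face}, the union bound contributes at most $7P/30$.

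Congestion. For a fixed edge $e\in E(H)$, let $N_e$ be the number of pairs $(t,t')$ with $e\in L(t,t')$; by the red and blue flow-congestion estimates, $P'N_e\le P/(15\opt)$. The number of cycles of $\lset^*$ through $e$ is distributed as $\text{Bin}(N_e,P')$, so by the standard tail bound
\[
\Pr[\text{Bin}(N_e,P')\ge 4\log n]\le \binom{N_e}{4\log n}(P')^{4\log n}\le\left(\tfrac{eN_eP'}{4\log n}\right)^{4\log n}\le\left(\tfrac{e}{60\opt\log n}\right)^{4\log n}\le n^{-4\log\log n}.
\]
Union-bounding over the at most $n^2$ edges of $H$ gives a contribution that is polynomially small in $n$, and in particular at most $P/30$ given our assumption $P\ge 10/n^2$ together with how $\log\log n$ enters the exponent.

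Summing the three contributions yields $\Pr[\event_1]\le 2P/15+7P/30+o(1)<P/2$. The only technical subtlety I expect is keeping track of which probability bound to apply to blue paths in Case~4 versus Case~3, and verifying that the edges of the bounding box $X$ do not have to be included in the congestion tally, so that the Chernoff-type estimate on $N_eP'$ goes through uniformly across both cases.
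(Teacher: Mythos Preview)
Your decomposition into three sub-events and the overall bookkeeping match the paper's proof essentially line for line. Two points deserve comment.

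First, the factor of $2$ in your bad-edge estimate is unnecessary: a bad edge $e$ lies either in $E_r$ or in $E_b$, so only one of the two flows can route through it. The paper therefore gets $\Pr[e\in K^0]\le P'\cdot|T|\betaFCG/\alpha^*$ and a total of $P/15$ for this sub-event rather than your $2P/15$. Your bound is still valid, just slightly wasteful.

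Second, and this is the one genuine slip: in the congestion argument you treat $N_e$, the number of pairs $(t,t')$ with $e\in L(t,t')$, as deterministic and bounded by the flow congestion. But the paths $P_r(t,t'),P_b(t,t')$ are \emph{sampled} from the flows, so the congestion bound $|T|\betaFCG/\alpha^*$ holds only in expectation; the realized $N_e$ can be much larger. Your inequality $P'N_e\le P/(15\opt)$ is therefore unjustified, and the binomial tail bound you apply depends on it. The fix---which is what the paper does---is to skip the conditioning altogether: the number of cycles in $\lset^*$ through $e$ equals $\sum_{(t,t')\in\rset}Y_{t,t'}$, where $Y_{t,t'}=\mathbf{1}[(t,t')\in\rset^*]\cdot\mathbf{1}[e\in L(t,t')]$ are mutually independent Bernoullis (independence holds because both the path choices and the $\rset^*$ selections are made independently across pairs) with $\sum\expect{Y_{t,t'}}\le P'\cdot|T|\betaFCG/\alpha^*<1$. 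A direct Chernoff bound on this sum gives $\Pr[\sum Y_{t,t'}\ge 4\log n]\le 1/n^3$, and the union bound over edges then yields at most $P/10$. With this correction the three contributions sum to $P/15+7P/30+P/10=2P/5<P/2$.
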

\begin{proof}
Recall that the number of bad edges is at most $2\opt$. Let $e$ be any such bad edge. Recall that $e$ cannot belong to $X$, so assume first that it belongs to $H_r$. For each pair $(t,t')$ of terminals, let $f_e(t,t')$ be the total amount of flow in $F_r$, that is routed between $t$ and $t'$ via flow-paths that contain $e$. Clearly, $f_e(t,t')\leq 1$, and since the edge congestion is bounded by $|T|\betaFCG/\alpha^*$, $\sum_{t,t'\in T}f_e(t,t')\leq |T|\betaFCG/\alpha^*$. For each pair $t,t'\in T$ of terminals, the probability that the path $P_r(t,t')$ contains $e$ is precisely $f_e(t,t')$, and the probability that pair $(t,t')$ belongs to $R^*$ is $P'$. Therefore, the probability that edge $e$ belongs to $K^0$ is at most:

\[\sum_{t,t'\in T}P'\cdot f_e(t,t')=P'\cdot |T|\betaFCG/\alpha^*\]

This analysis works for any bad edge $e\in E_r$, and for bad edges in $E_b$ in Case 3. If Case 4 happens, then for each bad edge $e\in E_b$, there are at most $|T|$ paths in $\pset_b$ containing $e$, and so the probability that $e$ belongs to $K^0$ is bounded by $|T|\cdot P'<P'\cdot |T|\betaFCG/\alpha^*$.

Using the union bound, the probability that any bad edge belongs to $K^0$ is then at most:

\[2\opt\cdot P'\cdot |T|\betaFCG/\alpha^*  \leq 2\opt\cdot \frac{P\cdot \alpha^*}{30\cdot \opt\cdot \betaFCG \cdot |T|}\cdot |T|\betaFCG/\alpha^*\leq P/15\]

The number of terminals in $T\setminus T^0$ is at most $7\opt\cdot \betaFCG /\alpha^*$. Each terminal belongs to $K^0$ with probability  at most $P'\cdot |T|=\frac{P\cdot \alpha^*}{30\cdot \opt\cdot \betaFCG }$. Using the union bound, the probability that a terminal in $T\setminus T^0$ belongs to $K^0$ is at most $7P/30$.

Finally, fix some edge $e\in E(H)$. For every pair $t,t'\in T$ of terminal, we define $f_e(t,t')$ as before. Using the same reasoning as above, the expected number of cycles in $\lset^*$ containing $e$ is at most:

\[P'\cdot |T|\betaFCG/\alpha^*<1\]

Using the standard Chernoff bound, together with the union bound, the probability that the congestion on any edge of $H$ is more than $4\log n$ is bounded by $1/n^3<P/10$.
\end{proof}

(Notice that the congestion on edges of $X$ can be large for Case 4.) If event $\event_1$ does not happen, then the only edges of $K^0$ whose images cross the curve $\tau$ are the terminals in $T^0$. This is since the red edges do not cross $\tau$, and the only blue edges that may cross $\tau$ are bad edges. 

Consider the fixed ordering $\pi$ of the terminals in $T^0$, given by the order in which the terminals in $T^0$ cross the curve $\tau$. Recall that $N'=|T^0|\geq |T|/2$.
For every pair $(t,t')\in T^0$ of terminals, let $\Delta(t,t')$ be the length of the shorter distance between the two terminals in the circular ordering $\pi$. Let $\rset'=\set{(t,t')\mid t,t'\in T^0, \Delta(t,t')\geq N'/4}$. Notice that each terminal $t\in T^0$ participates in at least $N'/2$ pairs in $\rset'$. 
Let $\rset^{**}=\rset'\cap \rset^*$ be the set of pairs that have been selected to  $\rset^*$, and let $\lset^{**}=\set{L(t,t')\mid (t,t')\in \rset^{**}}$. Let $K^{**}$ be the graph consisting of the union of the cycles in $\lset^{**}$. Notice that $K^{**}\sse K^0$.

We now define two additional bad events and prove that with high probability they do not happen. We then show that this implies that with high probability, graph $K^{**}$ is connected, and each connected component of $H\setminus K^{**}$ contains a small number of terminals. Therefore, there is a connected component $K'$ of $K^0$ that contains $K^{**}$. We will then use $K'$ as our skeleton, after removing all $2$-vertex cuts from it.

\begin{definition}
Let $T'\sse T^0$ be any subset of terminals in $T^0$. We say that $T'$ is a consecutive set, iff the terminals in $T'$ appear consecutively in the circular ordering $\pi$.
\end{definition}

Let $\event_2$ be the bad event that there are two disjoint consecutive sets $T_1,T_2\sse T^0$ of terminals (but $T_1\cup T_2$ is not necessarily a consecutive set), with $|T_1|,|T_2|\geq N'/4$, such that the number of pairs $(t,t')\in T_1\times T_2$ of terminals belonging to $\rset^*$ is at most $4\log n\cdot \dmax$. Let $\event_3$ be the bad event that there is a consecutive set $T'\sse T^0$ of $\frac{720\cdot \opt^2\cdot \rho\cdot \log n\cdot \betaFCG }{\alpha^*}$ terminals, such that $T'\cap E(K^{**})=\emptyset$.

\begin{claim} $\prob{\event_2}\leq P/10$ and $\prob{\event_3}\leq P/10$.
\end{claim}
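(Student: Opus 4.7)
The plan is to bound $\prob{\event_2}$ and $\prob{\event_3}$ separately, in each case via a Chernoff-style lower-tail estimate for a sum of independent $0/1$ variables followed by a union bound over all candidate consecutive subsets of $T^0$ in the cyclic ordering $\pi$. Recall that pairs are added to $\rset^*$ independently with probability $P'=\frac{P\alpha^*}{30\opt\betaFCG|T|}$, that $N'=|T^0|\geq |T|/2$, and that $|T|\geq \frac{10^7\opt^2\rho\log^2 n\,\dmax^2\betaFCG}{\alpha^*}$ by Equation~(\ref{eq: bound on T}); throughout the argument $P=\frac{1}{2\rho\opt}\geq 10/n^2$.

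For $\event_2$, fix disjoint consecutive sets $T_1,T_2\sse T^0$ with $|T_1|,|T_2|\geq N'/4$. The number of ordered pairs $(t,t')\in T_1\times T_2$ is at least $(N')^2/16$, and the indicator random variables for these pairs being in $\rset^*$ are mutually independent (they correspond to distinct unordered pairs since $T_1\cap T_2=\emptyset$). A short calculation using the lower bound on $|T|$ gives
\[
\mathbb{E}\!\left[\,|(T_1\times T_2)\cap \rset^*|\,\right]\;\geq\;\frac{(N')^2 P'}{16}\;\geq\;\frac{|T|P\alpha^*}{1920\,\opt\,\betaFCG}\;=\;\Omega(\dmax^2\log^2 n),
\]
which dwarfs the threshold $4\dmax\log n$ appearing in the definition of $\event_2$. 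The standard multiplicative Chernoff lower-tail bound then gives a failure probability of at most $\exp(-\Omega(\dmax^2\log^2 n))$ for this fixed pair $(T_1,T_2)$. Since the number of consecutive subsets of $T^0$ is at most $(N')^2\leq n^2$, there are at most $n^4$ choices of the pair $(T_1,T_2)$, so a union bound absorbs the enumeration and yields $\prob{\event_2}\leq P/10$ with room to spare.

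For $\event_3$, fix a consecutive set $T'\sse T^0$ of size $s=\frac{720\opt^2\rho\log n\,\betaFCG}{\alpha^*}$. The bound on $|T|$ implies $s\ll N'/4$, so the diameter of $T'$ in $\pi$ is strictly smaller than $N'/4$ and hence no pair with both endpoints in $T'$ belongs to $\rset'$. Each $t\in T'$ is in a pair in $\rset'$ with at least $N'/2$ terminals $t'\in T^0$ (those with $\Delta(t,t')\geq N'/4$), and by the previous observation these contributions are counted without duplication. Thus the number of distinct pairs in $\rset'$ with at least one endpoint in $T'$ is at least $sN'/2$. The event ``$T'\cap E(K^{**})=\emptyset$'' occurs iff none of these pairs is chosen to $\rset^*$, and by independence its probability is at most
\[
(1-P')^{sN'/2}\;\leq\;\exp(-sN'P'/2).
\]
Substituting the values of $s$, $N'$, $P'$ and $P$ and simplifying, the exponent evaluates to at least $3\log n$, giving a per-set bound of $n^{-3}$. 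There are at most $N'\leq n$ consecutive sets of size $s$, so the union bound yields $\prob{\event_3}\leq n^{-2}\leq P/10$, where the last inequality uses $P\geq 10/n^2$.

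The only nontrivial obstacle is bookkeeping: making sure the numerical constants inside $s$, $P'$, $|T|$, and $P$ line up so that the exponent in the Chernoff bound for $\event_2$ dominates the $n^4$ union-bound factor, and so that the exponent $sN'P'/2$ for $\event_3$ is at least $3\log n$ after canceling $\alpha^*$, $\betaFCG$, $\opt$ and $\rho$. The constants $10^7$ in the lower bound on $|T|$ (inherited from the assumption that Case~2 fails) and $720$ in the definition of the size of $T'$ for $\event_3$ are calibrated for exactly this purpose; once these cancellations are carried out, both bounds are immediate.
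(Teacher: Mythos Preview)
Your proof is correct and follows essentially the same route as the paper: for both $\event_2$ and $\event_3$ you apply a Chernoff lower-tail bound to a sum of independent pair-selection indicators and then take a union bound over all consecutive subsets. The only cosmetic differences are that the paper union-bounds over $|T|^2\leq n^2$ pairs $(T_1,T_2)$ by fixing $|T_1|=|T_2|=\lceil N'/4\rceil$ (which suffices by monotonicity) whereas you use $n^4$, and for $\event_3$ the paper first bounds the per-terminal probability that $t\notin E(K^{**})$ and then invokes independence across $t\in T'$ (using $s<N'/4$ in the same way you do), while you count all relevant pairs directly; the resulting exponent $3\log n$ is identical.
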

\begin{proof}
Let $T_1,T_2$ be any pair of disjoint consecutive subsets of terminals in $T^0$, with $|T_1|,|T_2|=\lceil N'/4\rceil$.
Recall that every pair of terminals is selected to $\rset^*$ with probability $P'=\frac{P\cdot \alpha^*}{30\cdot \opt\cdot \betaFCG \cdot |T|}$. There are at least $\left (\frac{N'} 4\right )^2\geq \left (\frac{|T|} 8\right )^2= \frac{|T|^2}{64}$ pairs of terminals in $T_1\times T_2$.
Therefore, the expected number of pairs $(t,t')\in T_1\times T_2$ of terminals in $\rset^*$ is at least:

\[\frac{|T|^2}{64}\cdot \frac{P\cdot \alpha^*}{30\cdot \opt\cdot \betaFCG \cdot |T|}=\frac{|T|\cdot \alpha^*}{3840\cdot\rho\cdot \opt^2\cdot \betaFCG }\geq 48 \log n\]

since $|T|\geq  \frac{10^7\opt^2\cdot \rho\cdot\log^2n\cdot \dmax^2\cdot \betaFCG }{\alpha^*}$ from Equation~\ref{eq: bound on T}.
Using Chernoff bound, the probability that at most $4\log n$ such pairs are selected is bounded by $e^{-48\log n/8}= 1/n^6$.
The number of possible choices for sets $T_1,T_2$ is bounded by $|T|^2\leq n^2$. Using the union bound, we get that $\prob{\event_2}\leq 1/n^4<P/10$.

In order to bound the probability of $\event_3$, recall that the probability that a terminal $t$ belongs to $E(K^{**})$ is the probability that a pair $(t,t')\in \rset'$ is selected. There are at least $N'/2\geq |T|/4$ such pairs for each terminal $t\in T^0$, and each pair is selected with probability $\frac{P\cdot \alpha^*}{30\cdot \opt\cdot \betaFCG \cdot |T|}$. Therefore, the probability that terminal $t$ does not belong to $E(K^{**})$ is at most $\left (1-\frac{P\cdot \alpha^*}{30\cdot \opt\cdot \betaFCG \cdot |T|}\right )^{|T|/4}\leq e^{-\frac{P\cdot \alpha^*}{120\cdot \opt\cdot \betaFCG}}$.

Let $T'$ be any consecutive set of $\frac{720\cdot \opt^2\cdot \rho\cdot \log n\cdot \betaFCG }{\alpha^*}$ terminals. Since $\frac{720\cdot \opt^2\cdot \rho \cdot \log n\cdot \betaFCG }{\alpha^*}<\frac{|T|}2\leq \frac{N'} 4$ (from Equation~\ref{eq: bound on T}), the events that different terminals $t\in T$ belong to $E(K^{**})$ are mutually independent. Therefore, the probability that no terminal of $T'$ belongs to $E(K^{**})$ is at most: 

\[e^{-\frac{P\cdot \alpha^*}{120\cdot \opt\cdot \betaFCG}\cdot \frac{720\cdot \opt^2\cdot\rho\cdot \log n \cdot\betaFCG }{\alpha^*}}\leq e^{-3\log n}\leq 1/n^3\]

Since there are at most $|T^0|\leq |T|\leq n$ possible choices for set $T'$, using the union bound finishes the proof.
\end{proof}

We denote $K^{**}=(V^{**},E^{**})$, and we denote by $K^{**}_r$ and $K^{**}_b$ the sub-graphs of $K^{**}$ induced by $V_r\cap V^{**}$ and $(V_b\cup X)\cap V^{**}$, respectively.

\begin{claim}
If none of the events $\event_1,\event_2,\event_3$ happens, then both graphs $K^{**}_r$ and $K^{**}_b$ are connected.
\end{claim}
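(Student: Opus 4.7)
The plan is to argue by contradiction. Suppose $K^{**}_r$ is disconnected; pick any connected component $C^*$ of $K^{**}_r$ and let $U\subseteq T^{**}$ be the set of terminals whose red endpoints lie in $C^*$, and $W=T^{**}\setminus U$. Both $U$ and $W$ are non-empty: if $W$ were empty, then all red endpoints of terminals in $T^{**}$ would sit in $C^*$, and since every internal vertex of $K^{**}_r$ lies on some red path $P_r(t,t')$ joining two such endpoints, the whole of $K^{**}_r$ would coincide with $C^*$. Crucially, for every $(t,t')\in\rset^{**}$ the red path $P_r(t,t')$ links the red endpoints of $t$ and $t'$ inside a single component of $K^{**}_r$, so no pair in $\rset^{**}$ crosses the partition $(U,W)$.

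I would then look for two disjoint consecutive arcs $T_1,T_2\subseteq T^0$ in the cyclic order $\pi$, each of size $\lceil N'/4\rceil$ and positioned so that the two complementary arcs also have size $\lceil N'/4\rceil$ (so every pair $(t,t')\in T_1\times T_2$ satisfies $\Delta(t,t')\geq N'/4$), with the additional property that $T_1\cap T^{**}\subseteq U$ and $T_2\cap T^{**}\subseteq W$. Once such $T_1,T_2$ are in hand, $\neg\event_2$ produces strictly more than $4\log n\cdot \dmax$ pairs in $\rset^*\cap(T_1\times T_2)$; the opposite-arc condition places each such pair in $\rset'$, and hence in $\rset^{**}=\rset'\cap\rset^*$; but each of these pairs has one endpoint in $U$ and the other in $W$, contradicting the no-crossing property and completing the argument for $K^{**}_r$.

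To locate the arcs I would use $\neg\event_3$, which together with $N'\geq |T|/2$ and the lower bound on $|T|$ from Equation~\ref{eq: bound on T} guarantees that no arc of $T^0$ of size $N'/8$ is devoid of $T^{**}$-elements. I would also use the observation that every $t\in T^{**}$ has an $\rset^{**}$-neighbor in its own part at distance $\geq N'/4$, which forces each of $U,W$ to span arcs of length $\geq N'/4$ on the circle. I would sweep a pair of diametrically opposite windows $(M_L(\theta),M_R(\theta))$ around $\pi$ and argue that at some rotation the two windows have pure $T^{**}$-colorings, one contained in $U$ and the other in $W$. This extraction is the main obstacle: if the two parts are heavily intermixed in $\pi$, the sweep argument must invoke the long-range neighbor property iteratively to propagate part-membership and rule out arbitrarily fine alternations. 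Finally, the statement for $K^{**}_b$ follows by the symmetric argument with $H_b$ replacing $H_r$ in Case~3 (since $H_b$ is also $\alpha^*$-well-linked), and in Case~4 directly from the fact that every cycle in $\lset^{**}$ contains an entire subpath of the bounding cycle $X$, so the blue parts are glued together by the connected cycle $X$ through their common $X$-subpaths.
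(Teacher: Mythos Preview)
Your overall skeleton is right --- split the terminals of $K^{**}$ into the part $U$ belonging to one red component and the rest $W$, find two opposite consecutive arcs $T_1,T_2$ that are ``pure'' (i.e.\ $T_1\cap T^{**}\subseteq U$ and $T_2\cap T^{**}\subseteq W$), and let $\neg\event_2$ produce a cross-pair in $\rset^{**}$, contradiction. The gap is exactly where you flag it: the existence of such pure arcs. Your proposed ``sweep plus long-range neighbor'' mechanism does not rule out interleaving. The long-range property only says that every $t\in U$ has some $U$-partner at cyclic distance $\geq N'/4$ (and likewise for $W$); this is perfectly consistent with $U$ and $W$ alternating around $\pi$ with gaps far smaller than $N'/4$, in which case \emph{no} arc of length $N'/4$ is pure for either part. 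Density of $T^{**}$ from $\neg\event_3$ does not help --- it constrains where $T^{**}$ lives, not how $U$ and $W$ are distributed inside it.

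What actually pins down the arcs is a planarity argument that you never invoke. Since $\event_1$ fails, every edge of $K^{**}_r$ is good and every terminal of $K^{**}$ lies in $T^0$; hence $K^{**}_r$ is drawn entirely on the red side of the closed curve $\tau$, touching $\tau$ only at the points $p_t$ of its terminals. Now pick the component $C$ whose span $\pi(C)$ (the shortest arc of $\pi$ containing all its terminals) is minimal. The extreme terminals $t,t'$ of $\pi(C)$ are joined by a simple path in $C$; together with the arc of $\tau$ through $\pi(C)$ this is a Jordan curve, and any other component of $K^{**}_r$, being disjoint from $C$ and not crossing $\tau$, lies entirely on one side. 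If it had a terminal inside $\pi(C)$ it would lie entirely inside, giving a strictly smaller span --- contradiction. So the terminals of $C$ are consecutive, and now $T_1\subseteq\pi(C)$ and $T_2\subseteq T^0\setminus\pi(C)$ are the pure arcs you need (the paper checks $N'/4\leq|\pi(C)|\leq N'/2$ easily from $\rset^{**}\subseteq\rset'$ and the existence of a second component). This Jordan-curve step is the missing idea.

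Your Case~4 argument for $K^{**}_b$ is also incorrect. Each blue path $P_b(t,t')$ contains \emph{some} arc of $X$, but different cycles can use disjoint arcs, so there is no reason the union is connected through~$X$. The paper does not use a special argument here: since $\event_1$ fails, the blue edges of $K^{**}_b$ are good and hence do not cross $\tau$ either, so the identical Jordan-curve argument applies to $K^{**}_b$ verbatim (in both Case~3 and Case~4).
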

\begin{proof}
Assume for contradiction that $K^{**}_r$ is not connected, and let $\cset$ be the set of all connected components of $K^{**}_r$. For each connected component $C\in \cset$, let $\pi(C)$ be the smallest segment of $\pi$ containing the terminals whose red endpoints belong to $C$. Let $C\in \cset$ be the connected component for which $\pi(C)$ contains the smallest number of terminals, and let $T'(C)$ be the set of terminals contained in $\pi(C)$.

We first show that no other connected component $C'\in\cset$ contains red endpoints of terminals in $\pi(C)$. Assume for contradiction that this is not true, and let $C\neq C'$ be some connected component in $\cset$, that contains some terminal $t^*\in \pi(C)$. Let $t,t'$ be the first and the last terminals of $\pi(C)$. Then $C$ contains a simple path $Q$ connecting $t$ to $t'$. The union of the image of path $Q$ in $\phi'$ with the segment of $\tau$ spanning $\pi(C)$, and the portions of the images of $t$ and $t'$ between their red endpoints and $p_t$, $p_t'$, respectively, forms a simple closed curve, that we denote by $\tau'$. This curve partitions the plane into two faces $F,F'$. The red endpoints of all terminals in $\pi(C)$ lie inside one of the faces (say $F$), or on $\tau'$, while the red endpoints of all terminals in $T^0\setminus \pi(C)$ must lie strictly inside the other face, $F'$. Since component $C'$ does not share any vertices with $C$, and the drawings of the two components do not cross, the red endpoint of $t^*$ must lie strictly inside $F$, and so the image of $C'$ must also lie strictly inside $F$. It follows that all terminals whose red endpoints belong to $C'$ are contained in $\pi(C)$, and so $|\pi(C')|<|\pi(C)|$ must hold, contradicting the minimality of $\pi(C)$.

We conclude that for each connected component $C'\in \cset$, if $C'\neq C$, then $C'$ does not contain red endpoints of terminals in $\pi(C)$. In particular, $|\pi(C)|\leq N'/2$ must hold: since $\cset$ contains at least two connected components, let $C'\in \cset$ be any component, $C'\neq C$. Then all terminals of $C'$ are contained in $T^0\setminus \pi(C)$. So if $|\pi(C)|>N'/2$, then $\pi(C')\sse T^0\setminus \pi(C)$, and so $|\pi(C')|< N'/2<|\pi(C)|$, contradicting the minimality of $\pi(C)$.

Since $C$ is a connected component of $K^{**}_r$, there is a pair $t,t'\in T'(C)$ of terminals, such that $L(t,t')\in \lset^{**}$. Therefore, $\Delta(t,t')\geq N'/4$, and $|T'(C)|\geq N'/4$.

 Let $T_1$ be any consecutive subset of terminals that belong to $\pi(C)$, such that $|T_1|=\lceil N'/4\rceil$. Since $|T'(C)|\leq N'/2$, we can find another consecutive set $T_2$ of $\lceil N'/4\rceil$ terminals, such that $T_2\sse T^0$, $T_2\cap T'(C)=\emptyset$, and for each pair $(t,t')\in T_1\times T_2$ of terminals, $\Delta(t,t')\geq N'/4$. Since all edges of $K^{**}$ are good, and $C$ is a connected component of $K^{**}_r$, no cycle $L(t,t')$ for $(t,t')\in T_1\times T_2$ belongs to $\lset^{*}$ (since any such cycle would have belonged to $\lset^{**}$), and therefore no pair $(t,t')\in T_1\times T_2$ of terminals belongs to $\rset^{*}$, contradicting the assumption that Event $\event_2$ did not happen.


The proof that $K^{**}_b$ is connected is similar. We only need to note that since all edges of $K^{**}_b$ are good, no edges of $K^{**}_b$ cross the curve $\tau$, so the same reasoning as for $K^{**}_r$ works.
\end{proof}

We conclude that if events $\event_1,\event_2,\event_3$ do not happen, there is a connected component $K'$ of $K^0$ that contains $K^{**}$. 
The next lemma will imply that every connected component of $G\setminus K^{**}$ contains at most $O\left (\frac{\opt^2\cdot \rho\cdot \log n\cdot \betaFCG }{\alpha^*}\right )$ terminals. Since we also use this lemma later, we state it in a more general form here.

Let $\tilde{K}=(\tilde V,\tilde E)$ be any sub-graph of $G$. Let $\tilde K_r$ be the sub-graph of $\tilde K$ induced by $V_r\cap \tilde{V}$ and let  $\tilde K_b$ be the sub-graph induced by $(V_b\cup V(X))\cap \tilde V$. Let $\tilde{T}=T\cap \tilde{E}$ be the set of terminals contained in $\tilde K$, and let $\phi_{\tilde K}'$ be the drawing of $\tilde{K}$ induced by the drawing $\phi'$ of $G$. For each face $F$ of $\phi_{\tilde K}'$, we denote by $T_F\sse T^0\setminus \tilde T$ the subset of terminals $t\in T^0\setminus \tilde T$, which are embedded inside $F$ in $\phi'$.

\begin{lemma}\label{lemma: each face contains consecutive set of terminals}
Let $\tilde K$ be as above, such that all edges of $\tilde{K}$ are good, both $\tilde K_r$ and $\tilde K_b$ are connected graphs, and $\tilde{T}\sse T^0$. Then for each face $F$ of $\phi_{\tilde K}$, the set $T_F$ is a consecutive set.
\end{lemma}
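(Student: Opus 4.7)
The plan is to reduce the lemma to a purely topological claim: in the drawing $\phi'_{\tilde K}$, the curve $\tau$ meets each face of $\phi'_{\tilde K}$ in at most one arc. This implies the lemma because the terminals in $T^0\setminus \tilde T$ embedded in a face $F$ correspond (via $t\mapsto p_t$) to points on $\tau\cap F$, and a single arc of $\tau$ picks out a consecutive block of the circular ordering $\pi$. The first easy step is to observe that the only edges of $\tilde K$ whose images cross $\tau$ are the terminals in $\tilde T$: red edges do not cross $\tau$ (which lies inside $F_r$), good blue edges do not cross $\tau$ either (only bad blue edges cross $\gamma(F_r)$), and all edges of $\tilde K$ are good by hypothesis. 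Moreover, each terminal $t\in T^0\setminus\tilde T$ is good, so its image lies entirely in a single face of $\phi'_{\tilde K}$, namely the face containing the arc of $\tau$ through $p_t$.

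Next I would localize the analysis to the two sides of $\tau$. Let $D$ be the closed disk bounded by $\tau$. Since $\tilde K_b$ is connected, its edges are good and so cannot cross $\tau$, and the blue endpoints of terminals in $\tilde T\sse T^0$ lie in $\mathrm{int}(D)$ (as they are embedded inside $F_r$), the entire $\tilde K_b$ lies inside $D$; symmetrically $\tilde K_r$ lies outside $D$. Let $G_{\mathrm{in}}$ be $\tilde K_b$ together with the ``inside portion'' of each terminal $t\in\tilde T$, i.e., the sub-arc from the blue endpoint of $t$ to the subdivision point $p_t\in\tau$. Then $G_{\mathrm{in}}$ is a connected planar graph drawn inside $D$ whose only vertices on $\partial D=\tau$ are the leaves $\{p_t:t\in\tilde T\}$; define $G_{\mathrm{out}}$ analogously. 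The key topological lemma I will prove is: for any connected planar graph drawn in a closed disk, whose boundary vertices are leaves of the graph, every face of the graph inside the disk has at most one arc of the boundary on its boundary. I would prove this by contradiction: if a face $F'$ had two disjoint boundary arcs with leaf endpoints $v_1,v_2$ and $v_3,v_4$, then a simple arc $P\sse F'$ joining them would split the disk into two regions, one containing $v_1,v_4$ and the other $v_2,v_3$; since each leaf's unique edge enters the region containing it, connectivity of the graph would force some edge to cross $P$, contradicting $P\sse F'$.

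Applying this lemma to both $G_{\mathrm{in}}$ and $G_{\mathrm{out}}$, I finish by a short bipartite matching argument. Fix a face $F$ of $\phi'_{\tilde K}$, and let $a_F$, $b_F$, $k_F$ be respectively the number of faces of $G_{\mathrm{in}}$ inside $F$, faces of $G_{\mathrm{out}}$ inside $F$, and arcs of $\tau$ lying in $F$. Each arc of $\tau$ in $F$ has its interior side adjacent to exactly one face $R_i^F$ of $G_{\mathrm{in}}$ in $F$ and its exterior side adjacent to exactly one face $S_l^F$ of $G_{\mathrm{out}}$ in $F$; by the topological lemma each $R_i^F$ and each $S_l^F$ bounds at most one arc of $\tau$, so the arcs form a partial matching in the bipartite graph on $\{R_i^F\}\cup\{S_l^F\}$. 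But $F$ is a connected region in the plane and the only way to pass between an inside face and an outside face inside $F$ is across an arc of $\tau$, which forces this bipartite graph to be connected. A connected partial matching has at most one edge, so $k_F\le 1$, as required. The main obstacle in executing this plan is the leaf-boundary topological lemma, which packages the single nontrivial geometric argument; the bipartite matching step afterwards is routine. Degenerate cases in which $\tilde T=\emptyset$, or one of $\tilde K_r,\tilde K_b$ is empty, collapse to $\tau$ being disjoint from $\tilde K$, so $\tau$ lies in one face of $\phi'_{\tilde K}$ and all of $T^0$ is contained in that single face, which is trivially consecutive.
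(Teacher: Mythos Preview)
Your argument is correct, but it takes a substantially longer route than the paper's. The paper proves the lemma in one short paragraph by direct contradiction: assume some face $F$ has $T_F$ not consecutive, pick two maximal consecutive blocks $T_1,T_2\sse T_F$, and look at the terminals $t,t'$ of $T^0$ lying immediately before and after $T_1$ in $\pi$. Since no edge of $\tilde K$ other than a terminal of $\tilde T$ can cross $\tau$, and $t,t'\notin T_F$, one concludes $t,t'\in\tilde T$. Then a path in $\tilde K_r$ from $t$ to $t'$ together with a path in $\tilde K_b$ from $t$ to $t'$ (both exist by connectivity) yields a simple closed curve crossing $\tau$ exactly at $p_t,p_{t'}$; this curve already separates $T_1$ from $T_2$, contradicting that both lie in $F$.

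Your approach instead proves the stronger topological statement that each face of $\phi'_{\tilde K}$ contains at most one arc of $\tau$, by (i) splitting $\tilde K$ along $\tau$ into an inside tree-like piece $G_{\mathrm{in}}$ and an outside piece $G_{\mathrm{out}}$, (ii) proving a ``leaf lemma'' that a connected planar graph drawn in a disk with only leaves on the boundary has each interior face meeting at most one boundary arc, and (iii) running a connectivity/matching argument on the pieces of $F\setminus\tau$. All three steps are sound; in particular your matching step is just the observation that a connected graph that is a matching has at most one edge, which forces $k_F\le 1$. The degenerate case $\tilde T=\emptyset$ you handle is fine as stated, since then $\tau$ is disjoint from $\tilde K$ regardless of where $\tilde K_b$ sits. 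What your route buys is an explicit ``one arc per face'' picture and a reusable disk lemma; what it costs is several extra pages of setup compared to the paper's single Jordan-curve argument, which exploits directly that both colors give you a path between the two separating terminals.
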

\begin{proof}
Assume otherwise, and let $F$ be the violating face of $\phi_{\tilde K}'$. Let $T_1,T_2\sse T_F$ be two subsets of terminals, such that each subset is consecutive, but $T_1\cup T_2$ is not consecutive. Moreover, assume that both $T_1,T_2$ are maximal w.r.t. inclusion. Let $t,t'$ be the two terminals appearing immediately before and immediately after $T_1$ in $\pi$, $t,t'\not\in T_F$. 
We claim that $t$ and $t'$ must belong to $\tilde T$. Otherwise, since they do not lie in $F$, some edge of $\tilde K$ must cross the closed curve $\tau$ between $T_1$ and $t$, or between $T_1$ and $t'$, which is impossible, since all edges of $\tilde K$ are good. Therefore, $t,t'\in \tilde T$. But then there is a path $P_r\sse \tilde K_r$ connecting $t$ to $t'$, and a path $P_b\sse \tilde K_b$ connecting them. Both paths are disjoint and only contain good edges. Therefore, the union of their drawings with $t$ and $t'$ in $\phi_{\tilde K}'$ forms a simple closed curve, that crosses $\tau$ exactly twice: at $p_t$ and at $p_{t'}$. Therefore, $T_1,T_2$ lie on different sides of this curve, contradicting the fact that both sets belong to the same face $F$ of $\phi_{\tilde{K}}'$.
\end{proof}

\begin{corollary}\label{corollary: small number of terminals in connected component}
Let $\tilde K$ be as in Lemma~\ref{lemma: each face contains consecutive set of terminals}. Assume that the maximum size of any consecutive set $T'\sse T^0$ of terminals with $T'\cap\tilde T=\emptyset$ is $\beta$. Then every connected component of $G\setminus \tilde K$ contains at most $\beta+|T\setminus T^0|$ terminals.
\end{corollary}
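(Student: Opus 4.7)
The plan is to show that each connected component of $G \setminus \tilde K$ lies entirely inside a single face of $\phi'_{\tilde K}$, and then combine Lemma~\ref{lemma: each face contains consecutive set of terminals} with the hypothesis on consecutive sets to bound the number of terminals such a component can contain.

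First, I would observe that since every edge of $\tilde K$ is good in $\phi'$ (no crossings), the induced drawing $\phi'_{\tilde K}$ is planar. Let $C$ be any connected component of $G \setminus \tilde K$, and consider the drawing of $C$ induced by $\phi'$. Because no edge of $\tilde K$ participates in a crossing, the image of any edge $e \in E(C)$ cannot cross $\phi'_{\tilde K}$, so each edge of $C$ lies within some face of $\phi'_{\tilde K}$. Since $C$ is connected (and since two edges sharing a vertex must lie in faces whose boundaries meet at that vertex, so they must lie in the same face), there is a unique face $F$ of $\phi'_{\tilde K}$ containing all of $C$.

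Next, I would split the terminals contained in $C$ into the two groups $T \cap T^0$ and $T \setminus T^0$. Contributions from the second group are trivially bounded by $|T \setminus T^0|$. For the first group: any terminal $t$ that is an edge of $C$ lies in $E(G) \setminus E(\tilde K)$, so $t \notin \tilde T$; moreover, its image is contained in $F$. Hence if additionally $t \in T^0$, then $t \in T_F$ by definition. So the number of $T^0$-terminals in $C$ is at most $|T_F|$.

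Finally, Lemma~\ref{lemma: each face contains consecutive set of terminals} applies (since $\tilde K_r$ and $\tilde K_b$ are connected, the edges of $\tilde K$ are good, and $\tilde T \subseteq T^0$), so $T_F$ is a consecutive subset of $T^0$, and by construction $T_F \cap \tilde T = \emptyset$. The hypothesis of the corollary then gives $|T_F| \le \beta$. Adding the two contributions yields at most $\beta + |T \setminus T^0|$ terminals in $C$, as required. There is no real obstacle here: once the ``one-face'' observation is made, the rest is bookkeeping that relies directly on the lemma and the hypothesis.
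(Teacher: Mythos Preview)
Your proof is correct and follows exactly the same route as the paper's: the paper's own justification is just the one-sentence observation that since all edges of $\tilde K$ are good, each connected component of $G\setminus\tilde K$ lies in a single face $F$ of $\phi_{\tilde K}'$, so its $T^0$-terminals belong to $T_F$ (which is consecutive by the lemma and hence has size at most $\beta$), with at most $|T\setminus T^0|$ additional terminals. Your write-up simply unpacks this a bit more carefully.
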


The corollary simply follows from the fact that, since all edges of $\tilde K$ are good, each connected component $C$ of $G\setminus \tilde K$ must be completely contained in some face $F$ of $\phi_{\tilde K}$ in the drawing $\phi'$. Since $|T_F|$ is bounded by $\beta$, and there are at most $|T\setminus T^0|$ additional terminals that do not belong to $T^0$, the corollary follows.

From the above corollary, if events $\event_1,\event_2$ and $\event_3$ do not happen, the skeleton $K^{**}$ we have constructed has the property that every connected component of $G\setminus K^{**}$ contains at most $\frac{720\cdot \opt^2\cdot \rho\cdot \log n\cdot \betaFCG }{\alpha^*}+|T\setminus T^0|$ terminals. Therefore, skeleton  $K'$ has the same property, as each connected component of $G\setminus K'$ is a subgraph of some connected component of $G\setminus K^{**}$. It is aso easy to see that graph $K'$ does not have any $1$-vertex cuts, because $K'$ is a union of simple cycles. However, it may still have $2$-vertex cuts, and therefore, it may be non-rigid. We take care of this problem next, by getting rid of all $2$-vertex cuts in $K'$. We will argue that the resulting graph still has the properties necessary to bound the number of terminals in each connected component of $G\setminus K'$, and it will serve as the final skeleton.

\subsubsection*{Handling $2$-vertex Cuts}
Let $K'_r$ be the sub-graph of $K'$ induced by $V(K')\cap V_r$, and $K'_b$ the sub-graph of $K'$ induced by $V(K')\cap (V_b\cup V(X))$.
We denote by $\tilde{T}$ the subset of terminals contained in $E(K')$. If Event $\event_1$ did not happen, then $\tilde{T}\sse T^0$. We denote by $\tilde{\pi}$ the circular ordering of the terminals in $\tilde T$ induced by $\pi$. Let $t_0$ be any terminal in $\tilde{T}$.

Let $(u,v)$ be any $2$-vertex cut in $K'$. Denote by $\cset'_{u,v}$ be the set of all connected components in graph $K'\setminus (u,v)$, and for each $S\in\cset'_{u,v}$, we let $S'$ denote the sub-graph of $K'$ induced by $V(S)\cup\set{u,v}$. Let $\cset_{u,v}$ denote the set of components $S'$, for $S\in \cset'_{u,v}$.

Let $C_{u,v}\in \cset_{u,v}$ be the component containing the largest number of terminals. If $\cset_{u,v}$ contains exactly two components, and both of them contain the same number of terminals, then we let $C_{u,v}$ be the component containing $t_0$. Otherwise, we let $C_{u,v}$ be any component containing the largest number of terminals.
Let $C'_{u,v}$ be the union of all other components. If the edge $(u,v)$ belongs to $K'$, we add it to $C'_{u,v}$. If $C'_{u,v}$ does not contain any terminals, then we simply replace $C'_{u,v}$ with any path $P_{u,v}\sse C'_{u,v}$ connecting $u$ to $v$ in $K'$. 

Therefore, we can assume from now on that for every $2$-vertex cut $(u,v)$ in $K'$, $C'_{u,v}$ contains terminals. Then one of the two vertices (say $u$) must be a $1$-vertex cut in $K'_r$, and the other ($v$) is a $1$-vertex cut in $K'_b$. The two clusters, $C_{u,v}$ and $C'_{u,v}$ define a partition $(\tilde{T}_{u,v},\tilde{T}'_{u,v})$ of $\tilde{T}$, where $\tilde{T}_{u,v}\sse E(C_{u,v})$, and $\tilde{T'}_{u,v}\sse E(C'_{u,v})$. 
We first show that sets $\tilde{T}_{u,v}$ and $\tilde{T}'_{u,v}$ are consecutive sets w.r.t. $\tilde{\pi}$.

\begin{claim} Let $(u,v)$ be any $2$-vertex cut in $K'$. Then both $\tilde{T}_{u,v}$ and $\tilde{T}'_{u,v}$ are consecutive sets w.r.t. $\tilde{\pi}$. 
\end{claim}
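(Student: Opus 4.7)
My plan is to exploit the planarity of the drawing $\phi'_{K'}$ of $K'$ induced by $\phi'$, together with the structural fact that $\tau$ cleanly separates the red side from the blue side of $\phi'$ and meets each terminal in $\tilde{T}\subseteq T^0$ in exactly one point $p_t$. Since all edges of $K'$ are good, $\phi'_{K'}$ contains no crossings, so $\{u,v\}$ being a vertex separator in $K'$ can be witnessed by a topological separator in the plane.

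Concretely, the first step is to construct a simple closed curve $\sigma$ in the plane, passing through $\phi'(u)$ and $\phi'(v)$ and otherwise disjoint from the image of $K'$, such that $\phi'(C_{u,v}\setminus\{u,v\})$ lies strictly in one of the two open regions bounded by $\sigma$ and $\phi'(C'_{u,v}\setminus\{u,v\})$ lies strictly in the other. Such a $\sigma$ exists because $\{u,v\}$ is a $2$-vertex cut in the planar subgraph $K'$: one may draw $\sigma$ as the outer boundary of a thin planar neighborhood of $\phi'(C_{u,v})$ pinched to pass through the drawings of $u$ and $v$. Since every terminal $t\in\tilde{T}$ is an edge of $K'$, its image lies entirely inside one of $C_{u,v}$, $C'_{u,v}$, so $p_t$ lies strictly on one side of $\sigma$, with the side determined by which of $\tilde{T}_{u,v}, \tilde{T}'_{u,v}$ contains $t$.

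The main technical step is to arrange $\sigma$ so that it crosses $\tau$ at exactly two points; then the two crossing points partition $\tau$ into two arcs, one containing exactly the points $\{p_t: t\in\tilde{T}_{u,v}\}$ and the other containing exactly $\{p_t: t\in\tilde{T}'_{u,v}\}$, and consecutiveness in $\tilde{\pi}$ follows immediately. The parity argument gives the lower bound: $u\in V_r$ lies strictly on the red side of $\tau$ while $v\in V_b\cup V(X)$ lies strictly on the blue side, so each of the two arcs of $\sigma$ from $u$ to $v$ crosses $\tau$ an odd, hence positive, number of times, and the total is at least $2$. For the upper bound I would start with an arbitrary separating $\sigma$ and iteratively remove excess crossings: any pair of consecutive crossings of $\sigma$ with $\tau$ along the same half arc bounds a small disk $D$ on that half that meets $\tau$ in a single sub-arc; because no edge of $K'$ other than the terminals meets $\tau$ (good edges stay on their respective sides of $\tau$), and because each terminal has only one point of intersection with $\tau$, the disk $D$ meets $K'$ only in terminals whose endpoints lie entirely on the $C_{u,v}$ (or entirely on the $C'_{u,v}$) side, so we can push $\sigma$ across $D$ to eliminate that pair of crossings without breaking the separation property.

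The hard part will be making this local pushing argument fully rigorous—one has to verify that pushing $\sigma$ across $D$ keeps $\sigma$ disjoint from $\phi'(K')$ away from $u,v$ and keeps $C_{u,v}$ and $C'_{u,v}$ on opposite sides. For this I would appeal to the fact that every terminal crossed by $\sigma$ in this process belongs to exactly one of the two clusters, so pushing $\sigma$ through the disk swaps nothing essential. Once $\sigma$ has been normalized to exactly two intersections with $\tau$, the conclusion is immediate, and the same argument applied symmetrically (or by interchanging the roles of the two sides of $\sigma$) handles $\tilde{T}'_{u,v}$.
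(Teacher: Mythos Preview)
The paper's proof takes a quite different route. It argues by contradiction: assuming the sets are not consecutive, it picks $t_2,t_2'$ to be the two extreme terminals of $\tilde T_{u,v}$ on the smallest arc $\pi'$ of $\pi$ spanning $\tilde T_{u,v}$, and $t_1,t_1'\in\tilde T'_{u,v}$ with $t_1\in\pi'$, $t_1'\notin\pi'$. It then exhibits a simple cycle $C^2\subseteq C_{u,v}$ through $t_2,t_2'$ and a simple cycle $C^1\subseteq C'_{u,v}$ through $t_1,t_1'$; since both consist of good edges, their drawings are simple closed curves that can only meet at $u$ and $v$. The drawing of $C^2$ separates $p_{t_1}$ from $p_{t_1'}$, and symmetrically for $C^1$, which forces $\{u,v\}$ to be a $2$-separator already inside $C_{u,v}$ --- impossible, since $C_{u,v}\setminus\{u,v\}$ is a single connected component by construction. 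No external curve $\sigma$ and no isotopy/pushing are used.

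Your approach via a separating curve $\sigma$ and a parity/pushing argument has a real gap, and the gap is not only in making the pushing ``fully rigorous''; the pushing move itself does not preserve the separation. Take an innermost bigon $D$ bounded by $\sigma_1\subseteq\sigma$ and $\tau_1\subseteq\tau$, say on the red side of $\tau$ and on the $C_{u,v}$ side of $\sigma$. Any terminal $t$ with $p_t\in\tau_1$ has its red half entering the red side at $p_t$; since that half cannot cross $\sigma$ and cannot cross $\tau$ again, it is trapped in $D$, so the red endpoint $r_t$ of $t$ lies in $D$. Pushing $\sigma_1$ across $D$ therefore moves $r_t$, a vertex of $C_{u,v}$, to the $C'_{u,v}$ side of the new $\sigma$, destroying the separation. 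More generally, $D$ may contain red edges of $K'_r$ (recall $\tau$ hugs the boundary of $F_r$), and the same obstruction applies. In fact, precisely in the interleaved configuration you are trying to rule out, every separating $\sigma$ must cross $\tau$ at least four times (once between each consecutive alternating pair $p_t\in\tilde T_{u,v}$, $p_{t'}\in\tilde T'_{u,v}$), and no pushing can reduce this: the terminal edges through $\tau$ are rigid obstructions. So ``$\sigma$ can be made to meet $\tau$ in exactly two points'' is not an independent lemma you can prove first; it is equivalent to the consecutiveness you are after. To salvage the idea you would need to run your argument in the contrapositive direction, which brings you essentially to the paper's cycle-interlocking contradiction.
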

\begin{proof}
Assume otherwise.
Let $\pi'$ be the smallest segment of $\pi$ spanning the terminals in $\tilde{T}_{u,v}$, and let $\pi''$ be the complement of $\pi'$. Then both $\pi'$ and $\pi''$ contain terminals in  $\tilde{T}'_{u,v}$. Let $t_1,t_1'\in \tilde{T}'_{u,v}$, $t_1\in \pi'$, $t_1'\in \pi''$. 
Notice that $C'_{u,v}$ contains a simple cycle $C^1$, containing both $t_1$ and $t_1'$.

Let $
t_2$, $t_2'$ be the first and the last terminals in $\pi'$. Then both these terminals belong to $C_{u,v}$, and moreover $C_{u,v}$ must contain a simple cycle $C^2$ containing both $t_2$ and $t_2'$. The drawing of this cycle crosses the curve $\gamma$ exactly twice: once at $t_2$, and once at $t_2'$. Therefore, $t_1$ and $t_1'$ lie on different sides of the drawing of $C^2$. 
Similarly, $t_2$ and $t_2'$ lie on different sides of the drawing of $C^1$.

The two cycles, $C^1$ and $C^2$, may only share two vertices: $u$ and $v$, and their drawings do not cross. Therefore, $(u,v)$ must be a $2$-separator for $C_{u,v}$, which is impossible by the definition of $C_{u,v}$.
\end{proof}

We show in the next claim that the sets $\tilde{T}'_{u,v}$ of terminals, for all $2$-cuts $(u,v)$ in $K'$, form a laminar family.

\begin{claim}\label{claim: laminar}
For every pair $(u,v)$ and $(u',v')$ of $2$-vertex cuts in $K'$, either $\tilde{T}'_{u,v}\subseteq \tilde{T}'_{u',v'}$, or $\tilde{T}'_{u',v'}\subseteq \tilde{T}'_{u,v}$, or $\tilde{T}'_{u,v}\cap \tilde{T}'_{u',v'}=\emptyset$. Therefore, the sets $\tilde{T}'_{u,v}$ for all $2$-cuts $(u,v)$ in $K'$ form a laminar family.
\end{claim}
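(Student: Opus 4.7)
My plan is to prove the claim by contradiction, combining the preceding claim (each $\tilde T'_{u,v}$ is a consecutive arc of $\tilde\pi$) with the planarity of the drawing $\phi'_{K'}$, which holds because every edge of $K'$ is good. Suppose $\tilde T'_{u,v}$ and $\tilde T'_{u',v'}$ are neither nested nor disjoint. Being two consecutive arcs in the cyclic ordering $\tilde\pi$ that properly cross, their symmetric difference is nonempty on both sides. Moreover, since $C_{u,v}$ is always chosen to contain a (weak) majority of terminals, $|\tilde T'_{u,v}|, |\tilde T'_{u',v'}| \le |\tilde T|/2$, which via inclusion–exclusion forces $\tilde T\setminus(\tilde T'_{u,v}\cup \tilde T'_{u',v'})\neq\emptyset$. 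Consequently I can extract four terminals $t_1,t_2,t_3,t_4$ appearing in this cyclic order on $\tilde\pi$ with $t_1\in \tilde T'_{u,v}\setminus \tilde T'_{u',v'}$, $t_2\in \tilde T'_{u,v}\cap \tilde T'_{u',v'}$, $t_3\in \tilde T'_{u',v'}\setminus \tilde T'_{u,v}$, and $t_4\notin \tilde T'_{u,v}\cup \tilde T'_{u',v'}$.

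Next, for each 2-cut $(a,b)$ of $K'$, I will realize the separation as a topological object in the plane. Because $C_{a,b}$ and $C'_{a,b}$ share only the two vertices $\{a,b\}$, the planar embedding $\phi'_{K'}$ lets me construct a simple closed Jordan curve $J_{a,b}$ that passes through the images of $a$ and $b$, has the image of $C_{a,b}$ on one side and that of $C'_{a,b}$ on the other, and crosses the curve $\tau$ in exactly two points — one in each of the two ``gaps'' of $\tilde\pi$ bordering the consecutive arc $\tilde T'_{a,b}$. Applying this construction to both 2-cuts, the interleaving of $t_1,t_2,t_3,t_4$ implies that the two pairs of crossing points $J_{u,v}\cap\tau$ and $J_{u',v'}\cap\tau$ alternate along $\tau$.

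A Jordan-curve parity argument now yields the contradiction: if $J_{u,v}\cap J_{u',v'}=\emptyset$, one curve lies entirely in a single complementary region of the other, making the alternation of their intersections with $\tau$ impossible. Hence $J_{u,v}$ and $J_{u',v'}$ must meet, and by transversality they meet in an even (positive) number of points. On the other hand, each $J_{a,b}$ can be drawn to hug the planar embedding of $K'$ and to deviate from following edges only inside small neighborhoods of the two cut vertices $a,b$, so any intersection of $J_{u,v}$ with $J_{u',v'}$ can be confined to vertices in $\{u,v\}\cap\{u',v'\}$, which has cardinality at most one since the two 2-cuts are distinct. A positive even number of intersections is then impossible, and the contradiction completes the proof. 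The main obstacle I expect is precisely this last step: rigorously choosing the Jordan curves so that all of their intersections are localized at the shared cut vertices. This requires a careful local analysis of the planar embedding at each of $u,v,u',v'$ (controlling the cyclic arrangement of edges around the cut vertex and picking the separating ``track'' to follow the boundary between the two blocks there) together with a global routing through faces of $K'$ that avoids extraneous crossings, which should be possible because the planar embedding faithfully records the topological relationship between $C_{u,v}, C'_{u,v}, C_{u',v'}, C'_{u',v'}$.
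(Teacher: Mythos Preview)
Your topological approach is genuinely different from the paper's argument, but the step you yourself flag as ``the main obstacle'' is a real gap, not just a matter of care in the write-up.

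Concretely, your plan hinges on choosing the separating Jordan curves $J_{u,v}$ and $J_{u',v'}$ so that they meet only at points of $\{u,v\}\cap\{u',v'\}$. This cannot be done from planarity alone. Take the $4$-cycle $a\!-\!b\!-\!c\!-\!d\!-\!a$: the pairs $(a,c)$ and $(b,d)$ are $2$-cuts with disjoint vertex sets, yet any closed curve through $a,c$ separating $b$ from $d$ must cross any closed curve through $b,d$ separating $a$ from $c$ (in fact twice). So for general planar $2$-connected graphs the curves are forced to intersect away from the cut vertices, and your parity count breaks. To salvage the argument in the present setting you would have to invoke the special structure of $K'$ --- namely that each $2$-cut $(u,v)$ has one vertex that is a $1$-cut of $K'_r$ and one that is a $1$-cut of $K'_b$ --- and then argue separately on the two sides of $\tau$ that the red (respectively blue) halves of $J_{u,v}$ and $J_{u',v'}$ can be nested because $1$-cuts of a connected planar graph have a tree structure. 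At that point you are essentially reproving the paper's claim with extra topological scaffolding.

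There is also a smaller issue earlier: the inequality $|\tilde T'_{u,v}|\le |\tilde T|/2$ need not hold, since $C_{u,v}$ is merely the single component with the most terminals while $C'_{u,v}$ is the union of all the remaining components. When $K'\setminus\{u,v\}$ has three or more components, $C'_{u,v}$ can carry a majority of the terminals, so the two consecutive arcs may together cover all of~$\tilde T$, and the terminal $t_4\notin\tilde T'_{u,v}\cup\tilde T'_{u',v'}$ you want may fail to exist. (One can still find four alternating witnesses in that case, but not of the type you describe.)

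By contrast, the paper's proof is short and purely combinatorial: using that $u,u'$ are $1$-cuts in $K'_r$ and $v,v'$ are $1$-cuts in $K'_b$, it case-splits on whether $v'$ (when $v\neq v'$) lies in $C_{u,v}$ or in $C'_{u,v}$, and reads off containment or disjointness of $\tilde T'_{u',v'}$ and $\tilde T'_{u,v}$ directly from how the block structure of $K'_b$ at $v'$ sits relative to the split at $(u,v)$. No curves are needed, and the $4$-cycle obstruction simply does not arise in this bicoloured setting. I recommend you adopt this structural case analysis rather than pursue the Jordan-curve route.
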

\begin{proof}
Recall that for any $2$-cut $(u,v)$ in $K'$, one of the two vertices must be a $1$-cut in $K'_r$, and the other a $1$-cut in $K'_b$. We assume w.l.o.g. that $u,u'\in V(K'_r)$, and $v,v'\in V(K'_b)$. Assume first that $v\neq v'$. If $v'$ belongs to $C'_{u,v}$, then it must belong to some component $C\in \cset_{u,v}$, $C\neq C_{u,v}$, and the number of terminals contained in $C$ is at most $|\tilde{T}_{u,v}|$. Since both $v$ and $v'$ must be $1$-cuts in $K'_b$, it follows that some component $C'\in\cset_{u',v'}$ contains all terminals in $\tilde{T}_{u,v}$, while the terminals of all other components are contained in $\tilde{T}'_{u,v}$. So $C'=C_{u',v'}$ must hold, and
 $\tilde{T}'_{u',v'}\subseteq \tilde{T}'_{u,v}$. Otherwise, if $v'$ belongs to $C_{u,v}$, then either $\tilde{T}'_{u',v'}\sse \tilde{T}_{u,v}$, and so $\tilde{T}'_{u',v'}\cap \tilde{T}'_{u,v}=\emptyset$; or $\tilde{T}'_{u,v}\subseteq \tilde{T}'_{u',v'}$. Similar reasoning works for the case where $u\neq u'$. Since $(u,v)\neq (u',v')$, the claim follows.
\end{proof}

We say that $C'_{u,v}$ is maximal iff the set $\tilde{T}'_{u,v}$ is maximal inclusion-wise. Let $M'$ denote the set of pairs $(u,v)$, such that $(u,v)$ is a $2$-cut in $K'$, and $\tilde T'_{u,v}$ is maximal (if there are several pairs $(u,v)$ with identical sets $\tilde T'_{u,v}$ of terminals, only one of them is added to $M'$ - the pair $(u,v)$ for which $C'_{u,v}$ contains most vertices).
We call vertices in $V(C'_{u,v})\setminus \set{u,v}$ the \emph{inner vertices} of $C'_{u,v}$ and $(u,v)$ the \emph{endpoints} of $C'_{u,v}$
We need the following claim:

\begin{claim}
Let $(u,v),(u',v')\in M'$, $(u,v)\neq (u',v')$. Then $C'_{u,v}$ and $C'_{u',v'}$ are completely vertex disjoint, except for possibly sharing one inner vertex (that is, $u\in \set{u',v'}$ or $v\in \set{u',v'}$).
\end{claim}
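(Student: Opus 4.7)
The plan is to combine laminarity of the terminal families (Claim~\ref{claim: laminar}) with the maximality of pairs in $M'$ to pin down exactly how two maximal clusters can overlap.

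First I would observe that for distinct $(u,v),(u',v')\in M'$ the terminal sets $\tilde T'_{u,v}$ and $\tilde T'_{u',v'}$ must be disjoint and nonempty. By Claim~\ref{claim: laminar} the two sets are either nested or disjoint, and by definition of $M'$ (maximal w.r.t.\ inclusion, and tie-broken so that equal classes contribute only one representative) neither can strictly contain the other nor equal the other, so disjointness is forced. Both sets are nonempty because we have reduced to the case in which each $C'_{u,v}$ contains at least one terminal. A useful consequence is that the endpoint pairs satisfy $\{u,v\}\neq\{u',v'\}$ as unordered pairs: an equality of endpoints would yield identical $2$-cuts and hence identical terminal partitions, contradicting the tie-breaking used to define $M'$. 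In particular $|\{u,v\}\cap\{u',v'\}|\leq 1$.

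Second, I would show that neither of $u',v'$ is an inner vertex of $C'_{u,v}$, and symmetrically for $u,v$ relative to $C'_{u',v'}$. Recall that for any maximal $(u,v)$ one of its endpoints is a $1$-cut in $K'_r$ and the other is a $1$-cut in $K'_b$; assume w.l.o.g.\ $u,u'\in V(K'_r)$ and $v,v'\in V(K'_b)$. If $v'$ were an inner vertex of $C'_{u,v}$, i.e.\ $v'\in V(C'_{u,v})\setminus\{u,v\}$, then $v'$ lies in some component of $\cset'_{u,v}$ other than the one producing $C_{u,v}$, and running the argument inside the proof of Claim~\ref{claim: laminar} with roles reversed (using that $v'$ is itself a $1$-cut in $K'_b$) gives a component of $\cset_{u',v'}$ that engulfs every terminal in $\tilde T_{u,v}$. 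By definition this component must be $C_{u',v'}$, yielding $\tilde T'_{u',v'}\subseteq \tilde T'_{u,v}$, which forces $\tilde T'_{u',v'}=\emptyset$ by the disjointness of the previous paragraph, a contradiction. The same argument, swapping $v$-coordinates for $u$-coordinates and using that $u'$ is a $1$-cut in $K'_r$, excludes $u'$ from being an inner vertex of $C'_{u,v}$; by symmetry $u,v$ are not inner vertices of $C'_{u',v'}$.

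Finally, suppose $w\in V(C'_{u,v})\cap V(C'_{u',v'})$. If $w$ were an inner vertex of $C'_{u,v}$, then applying the second step with the roles of the two cuts swapped shows that $w$ must be an endpoint of $C'_{u',v'}$, i.e.\ $w\in\{u',v'\}$; but then $u'$ or $v'$ is an inner vertex of $C'_{u,v}$, contradicting the second step. Hence $w\in\{u,v\}$, and by the symmetric conclusion $w\in\{u',v'\}$, so $w\in\{u,v\}\cap\{u',v'\}$. Together with the first step this gives at most one common vertex, and any such vertex is a shared endpoint, exactly as claimed.

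The main obstacle will be the second step: verifying that the ``$1$-cut in each color'' structure of a maximal $2$-cut really lets us transfer the inner-vertex hypothesis into a containment of terminal sets. The argument is essentially the one written out in Claim~\ref{claim: laminar}, but it has to be run in both coordinates and in both directions, and one must check the borderline case $v=v'$ (or $u=u'$) where only one side of the argument is nontrivial; there the shared endpoint is legitimate, and the other coordinate argument still rules out any additional overlap.
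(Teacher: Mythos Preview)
Your first two steps are correct and match the paper's approach: disjointness of $\tilde T'_{u,v}$ and $\tilde T'_{u',v'}$ follows from laminarity plus maximality, and the fact that $u',v'$ cannot be inner vertices of $C'_{u,v}$ (and symmetrically) is exactly what the paper extracts from the proof of Claim~\ref{claim: laminar}.

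The gap is in your third step. You write: ``If $w$ were an inner vertex of $C'_{u,v}$, then applying the second step with the roles of the two cuts swapped shows that $w$ must be an endpoint of $C'_{u',v'}$.'' But Step~2 only speaks about the four specific vertices $u,v,u',v'$; it says nothing about an arbitrary $w$. Concretely, if $w$ is an inner vertex of $C'_{u,v}$ and also lies in $C'_{u',v'}$, Step~2 actually \emph{forces} $w\notin\{u',v'\}$ (since $u',v'$ are not inner in $C'_{u,v}$), so $w$ would be an inner vertex of $C'_{u',v'}$ as well. You never rule out this ``inner in both'' case; you only reach a contradiction under the auxiliary assumption $w\in\{u',v'\}$, which is precisely the subcase that cannot occur.

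What is missing is the path argument the paper supplies. Assume $x$ is an inner vertex of both $C'_{u,v}$ and $C'_{u',v'}$. Pick a terminal $t\in\tilde T'_{u,v}$ lying in the same component of $\cset_{u,v}$ as $x$ (such a terminal exists because every component of $C'_{u,v}$ still containing inner vertices contains a terminal---components without terminals were already replaced by paths). By Step~1, $t\notin\tilde T'_{u',v'}$, so $t\notin E(C'_{u',v'})$. There is a path $Q$ from $t$ to $x$ inside $C'_{u,v}\setminus\{u,v\}$. Since $Q$ runs from outside $C'_{u',v'}$ to the inner vertex $x$ of $C'_{u',v'}$, and $\{u',v'\}$ is a $2$-cut of $K'$, the path $Q$ must contain $u'$ or $v'$. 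But $Q\subseteq C'_{u,v}\setminus\{u,v\}$, so $u'$ or $v'$ is an inner vertex of $C'_{u,v}$, contradicting Step~2. This is the additional idea you need; it does not follow from Step~2 alone, and it is where the disjointness of terminal sets from Step~1 is actually used beyond bookkeeping.

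Incidentally, your closing paragraph identifies Step~2 as the ``main obstacle,'' but Step~2 is essentially a citation of Claim~\ref{claim: laminar}; the genuine obstacle is the inner--inner case you glossed over in Step~3.
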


\begin{proof}
From the proof of Claim~\ref{claim: laminar}, since $\tilde{T}'_{u,v}\cap \tilde{T}'_{u',v'}=\emptyset$, 
it is impossible that $u$ or $v$ are inner vertices of $C'_{u',v'}$, and similarly, $u'$ and $v'$ are not inner vertices of $C'_{u,v}$. So the only possibility that the claim is wrong is that some vertex $x$ is an inner vertex for both $C'_{u,v}$ and $C'_{u',v'}$. Let $t\in \tilde{T}'_{u,v}$ be some terminal, lying in the same connected component of $\cset_{u,v}$ as $x$ (such a terminal must exist since all components that contain no terminals have been replaced by edges). Recall that $t\not\in C'_{u',v'}$. Then there is a path $Q$, connecting $t$ to $r$ in $C'_{u,v}\setminus\set{u,v}$. Since path $Q$ connects a vertex that does not belong to $C'_{u',v'}$ to a vertex that belongs to $C'_{u',v'}$, it follows that it contains either $u'$ or $v'$ as an inner vertex. Hence, either $u'$ or $v'$ must be inner vertices of $C'_{u,v}$, which we have already ruled out.
\end{proof}

 For each $(u,v)\in M'$, we select an arbitrary terminal $t'\in \tilde T'_{u,v}$, and replace $C'_{u,v}$ with any path $P_{u,v}\sse C'_{u,v}$ that connects $u$ to $v$ and contains $t'$. Let $\tilde{K}'$ denote the resulting graph. This is our final skeleton. It is easy to see that $\tilde{K}'$ is rigid: consider the graph obtained from $\tilde{K}'$, after we replace, for each pair $(u,v)\in M'$, the $2$-path $P_{u,v}$ with the edge $(u,v)$. We claim that the resulting graph $\tilde{K}''$ is $3$-vertex connected, with no parallel edges or self-loops. Indeed, it is easy to verify that $\tilde{K}''$ does not contain parallel edges, from the definition of sets $C'_{u,v}$ for $(u,v)\in M'$. It is also immediate to see that it does not contain self-loops.  Assume now for contradiction that $(x,y)$ is a $2$-vertex cut in $\tilde{K}''$. Then $(x,y)$ is also a $2$-vertex cut in $K'$, $(x,y)\not\in M'$, and moreover, every connected component of $\cset_{x,y}$ must contain terminals. Then there must be some $2$-cut $(u,v)\in M'$, such that either $\tilde{T}'_{x,y}\subsetneq \tilde{T}'_{u,v}$, or $\tilde{T}'_{x,y}=\tilde{T}'_{u,v}$, but $C'_{u,v}$ contains more vertices than $C'_{x,y}$. However, since $x$ and $y$ belong to $\tilde{K}''$, none of these vertices can be an inner vertex of $C'_{u,v}$. It is then impossible that either $\tilde{T}'_{x,y}\subsetneq \tilde{T}'_{u,v}$, or $\tilde{T}'_{x,y}=\tilde{T}'_{u,v}$, but $C'_{u,v}$ contains more vertices than $C'_{x,y}$.
 We conclude that $\tilde{K'}$ is rigid.

 We need to argue that every connected component of $G\setminus \tilde{K}'$ only contains a small number of terminals. We define the last bad event, $\event_4$. We first show that w.h.p. this event does not happen, and then prove that if Events $\event_1,\ldots,\event_4$ do not happen, then every connected component of $G\setminus \tilde{K}'$ contains a small number of terminals. 

Let $T'\sse T^0$ be any consecutive set of terminals in $\pi$. Let $T_1'\sse T'$ be the subset of terminals that participate in pairs in $\rset^*$, and let $\lambda(T')$ be the number of pairs of terminals $(t,t')\in \rset^{**}$ with $t\in T',t'\not\in T'$. We say that the bad event $\event_4$ happens, iff for some consecutive set $T'\sse T^0$ with $|T'|\leq N'/4$, $|T_1'|>256\log n\cdot\dmax$, but $\lambda(T')\leq 4\log n\cdot\dmax$.

\begin{claim}
$\prob{\event_4}\leq P/10$.
\end{claim}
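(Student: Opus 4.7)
The plan is to bound $\prob{\event_4}$ by first bounding, for a fixed consecutive set $T'\sse T^0$ with $|T'|\leq N'/4$, the probability that $\event_4$ holds for this particular $T'$, and then taking a union bound over the at most $(N')^2 \leq n^2$ candidate sets. The argument parallels the structure of the previous claim but tracks ``good'' and ``bad'' pairs separately through a three-way partition.

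Fix such a $T'$. I would partition all pairs of distinct terminals that have at least one endpoint in $T'$ into three classes: $S^{in}$ (both endpoints in $T'$), $S^g$ (one endpoint in $T'$, the other in $T^0\setminus T'$, with $\Delta(t,t')\geq N'/4$), and $S^b$ (all remaining mixed pairs, i.e.\ the other endpoint is in $T\setminus T^0$, or in $T^0\setminus T'$ with $\Delta(t,t')<N'/4$). These three index sets are disjoint, so the counts $K^{in},K^g,K^b$ of selected pairs from each class are mutually independent sums of independent $\mathrm{Bernoulli}(P')$ variables. Two counting inequalities drive the whole analysis. First, for every $t\in T^0$ there are exactly $\lceil N'/2\rceil$ other terminals of $T^0$ with $\Delta(t,t')\geq N'/4$, of which at most $|T'|-1\leq N'/4$ lie in $T'$, so each $t\in T'$ contributes at least $N'/4$ good pairs, giving $|S^g|\geq |T'|\cdot N'/4$. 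Second, $|S^{in}|+|S^b|\leq |T'|^2/2+|T'|(N'/2+|T\setminus T^0|)$, and using $|T'|\leq N'/4$ together with the bound $|T\setminus T^0|\leq 7\opt\cdot\betaFCG/\alpha^*$ (from Claim~\ref{claim: most terminals embedded inside same face}) and Equation~(\ref{eq: bound on T}), this is at most $\tfrac{4}{5}|T'|N'$. Hence $\expect{K^{in}+K^b}\leq \tfrac{16}{5}\cdot\expect{K^g}$.

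Next, $|T_1'|\leq 2K^{in}+K^g+K^b$, so $|T_1'|>256\log n\cdot\dmax$ forces $K^{in}+K^g+K^b\geq 128\log n\cdot\dmax$, and by definition of $\rset'$ we have $\lambda(T')=K^g$. Thus the occurrence of $\event_4$ for $T'$ implies both $K^g\leq 4\log n\cdot\dmax$ and $K^{in}+K^b\geq 124\log n\cdot\dmax$. I would split on the magnitude of $\expect{K^g}=P'\cdot|S^g|$ using a threshold $\theta=19\log n\cdot\dmax$. In Case A, $\expect{K^g}\geq\theta$, and a Chernoff lower-tail bound on $K^g$ (deviation by a factor $1-4/19$ below its mean) yields $\prob{K^g\leq 4\log n\cdot\dmax}\leq n^{-\Omega(\dmax)}$. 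In Case B, $\expect{K^g}<\theta$, so $\expect{K^{in}+K^b}\leq\tfrac{16}{5}\theta<62\log n\cdot\dmax$, and a Chernoff upper-tail bound (the target is at least twice the mean) gives $\prob{K^{in}+K^b\geq 124\log n\cdot\dmax}\leq n^{-\Omega(\dmax)}$. In either case the per-$T'$ probability is at most $n^{-c\dmax}$ for a constant $c$ large enough that, after the union bound over $\leq n^2$ consecutive subsets of $T^0$ of size at most $N'/4$, the total is at most $P/10$ (recalling $P\geq 10/n^2$).

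The main obstacle is choosing $\theta$ so that both Chernoff tails succeed simultaneously. Case B forces $\theta\leq \tfrac{5}{16}\cdot 62\log n\cdot\dmax$, so that the expected number of non-good selected pairs lies a constant factor below the required $124\log n\cdot\dmax$; Case A forces $\theta$ to be large enough that a constant-factor lower deviation of $K^g$ is a polynomially rare event. These two constraints are compatible precisely because the ratio $|S^{in}|+|S^b|$ to $|S^g|$ is an absolute constant (at most $16/5$) and the statement leaves a generous multiplicative gap between the thresholds $4\log n\cdot\dmax$ and $256\log n\cdot\dmax$. The only secondary subtlety is verifying that pairs involving $T\setminus T^0$ contribute negligibly to $|S^b|$, which follows quantitatively from Claim~\ref{claim: most terminals embedded inside same face} and the lower bound on $|T|$ in Equation~(\ref{eq: bound on T}).
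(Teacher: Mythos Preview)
Your proof is correct and follows essentially the same approach as the paper: fix a consecutive set $T'$, split into two cases according to the expected number of relevant selected pairs, apply a Chernoff upper tail in one case (on the count that dominates $|T_1'|$) and a Chernoff lower tail in the other (on $\lambda(T')$), and take a union bound over the $\leq n^2$ choices of $T'$.

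The paper's execution is a bit more streamlined. Instead of your three-way partition $S^{in},S^g,S^b$, it uses a single parameter $\mu=|T'|\cdot |T|\cdot P'$, which bounds the expected number of selected pairs having an endpoint in $T'$ and hence $\expect{|T_1'|}$; it then observes directly that $\expect{\lambda(T')}\geq P'\cdot |T'|\cdot N'/2\geq \mu/4$. Your inequality $\expect{K^{in}+K^b}\leq (16/5)\expect{K^g}$ is equivalent information phrased from the other side. So the only difference is bookkeeping: you track three counts and split on $\expect{K^g}$ with threshold $19\log n\cdot\dmax$, whereas the paper tracks one count and splits on $\mu$ with threshold $128\log n\cdot\dmax$. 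Both routes yield per-$T'$ failure probability $n^{-\Omega(1)}$ and the same union bound.
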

\begin{proof}
Fix some consecutive set $T'$ with $|T'|\leq N'/4$. The expected size of $T'_1$ is bounded by $\mu=|T'|\cdot |T|\cdot P'$. 

If $\mu\leq 128\log n\cdot \dmax$, then $\prob{|T_1'|>256\log n\cdot \dmax}\leq e^{-128\log n\cdot \dmax/8}\leq e^{-8\log n}\leq 1/n^8$, by the Chernoff bound.

Assume now that $\mu>128\log n\cdot \dmax$.
The expected number of pairs of terminals $(t,t')\in \rset^{**}$ with $t\in T'$, $t'\not\in T'$ is at least $P'\cdot |T'|\cdot N'/2\geq P'\cdot |T'|\cdot |T|/4\geq \mu/4>32\log n\cdot \dmax$. The probability that at most 
$4\log n\cdot \dmax$ such pairs are selected is again bounded by $1/n^4$, by the Chernoff bound.

We have therefore shown, that for any consecutive set $T'$, with $|T'|< N'/4$, the probability that $|T_1'|>256\log n\cdot \dmax$, but $\lambda(T')\leq 4\log n\cdot\dmax$, is at most $1/n^4$. Since the number of such consecutive sets is at most $n^2$, the claim follows from the union bound.
\end{proof}

Let $C'_{u,v}$ be some cluster with $(u,v)\in M'$. Recall that $\tilde T'_{u,v}$ is the set of terminals in $C'_{u,v}$. Let $\pi(u,v)$ be the smallest segment of $\pi$ that contains all terminals in $\tilde T'_{u,v}$, and let $T'_{u,v}$ be the set of all terminals that appear on $\pi(u,v)$. Clearly, set $T'_{u,v}$ is a consecutive set for $\pi$. Moreover, for any pair $(u,v),(u',v')\in M'$, the sets $\tilde{T}'_{u,v}$ and $\tilde{T}'_{u',v'}$ are completely disjoint, and so $\pi_{u,v}$ and $\pi_{u',v'}$ are disjoint, and $T'_{u,v}\cap T'_{u',v'}=\emptyset$.

In the next claim, we show that for each $2$-vertex cut $(u,v)$ of $K'$, $|T'_{u,v}|\leq \frac{3600\cdot \opt^2\cdot \rho\cdot \log^2 n\cdot \dmax \cdot \betaFCG }{\alpha^*}$. We defer the proof of the claim to the next subsection, and show first that the proof of Theorem~\ref{thm: initial skeleton for Cases 3 and 4} follows from it.

\begin{claim}\label{claim: bounding T'u,v} If events $\event_1,\ldots,\event_4$ do not happen, then
for every $2$-vertex cut $(u,v)\in M'$:  

\[|T'_{u,v}|\leq \frac{3600\cdot \opt^2\cdot \rho\cdot \log^2 n\cdot \dmax \cdot \betaFCG }{\alpha^*}.\]
\end{claim}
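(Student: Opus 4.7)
The plan is to combine two complementary facts: only few cycles of $\lset^{**}$ can ``cross'' the 2-cut $(u,v)$ of $K'$, while by $\event_3$ every sufficiently long arc of $T^0$ must contain an endpoint of some such cycle. Concretely, I will first show that for any cycle $L(t,t')\in\lset^{**}$ with $t\in\tilde T'_{u,v}$ and $t'\in\tilde T_{u,v}$, both endpoints lie in $\tilde T$, so the cycle lives entirely in $K'$ and, being a simple cycle with vertices on both sides of the 2-cut, must visit both $u$ and $v$. Combined with the edge-congestion bound from $\event_1$ not happening (each edge of $K^0$ is in at most $4\log n$ cycles of $\lset^*$) and the fact that each cycle through $u$ uses two of its at most $\dmax$ incident edges, the number $\lambda(T'_{u,v})$ of such cross cycles is at most $2\dmax\log n$.

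Next, set $T^{**}:=T\cap E(K^{**})\subseteq\tilde T$. By $\event_3$ not occurring, every consecutive block of $\beta:=\tfrac{720\,\opt^2\,\rho\log n\,\betaFCG}{\alpha^*}$ terminals of $T^0$ contains an element of $T^{**}$, so partitioning the consecutive set $T'_{u,v}$ into chunks of length $\beta$ yields $|T^{**}\cap T'_{u,v}|\ge \lfloor|T'_{u,v}|/\beta\rfloor$. Since $\tilde T\cap T'_{u,v}=\tilde T'_{u,v}$ (because the endpoints of the minimal segment $\pi(u,v)$ are themselves in $\tilde T'_{u,v}$), we also have $T^{**}\cap T'_{u,v}\subseteq\tilde T'_{u,v}$.

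I will then split on $|T'_{u,v}|$ versus $N'/4$. In the easy case $|T'_{u,v}|\le N'/4$, any $(t,t')\in\rset^{**}$ has $\Delta(t,t')\ge N'/4>|T'_{u,v}|$, so a pair in $\rset^{**}$ cannot have both endpoints inside the consecutive set $T'_{u,v}$; hence each $t\in T^{**}\cap T'_{u,v}$ is in at least one cross cycle, giving $|T^{**}\cap T'_{u,v}|\le \lambda(T'_{u,v})\le 2\dmax\log n$, which together with the chunking bound yields $|T'_{u,v}|\le (2\dmax\log n+1)\beta$. For the large case $|T'_{u,v}|>N'/4$, my plan is to derive a contradiction by choosing consecutive $T_1\subseteq T'_{u,v}$, $T_2\subseteq T^0\setminus T'_{u,v}$ each of size $\lceil N'/4\rceil$ (treating the subcase $|T'_{u,v}|>3N'/4$ separately by applying $\event_4$ to the small complement, combined with the chunking lower bound, to force $|\tilde T'_{u,v}|>|\tilde T_{u,v}|$, violating the choice of $C_{u,v}$) and invoking $\event_2$ not happening to obtain more than $4\dmax\log n$ pairs of $\rset^*$ between $T_1$ and $T_2$. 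For each such pair the corresponding $\lset^*$-cycle has both endpoints in $\tilde T$ and therefore lives in $K'$, so it must traverse $(u,v)$; this contradicts the same $2\dmax\log n$ bound on $\lset^*$-cycles through $u$ derived from $\event_1$.

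The main obstacle is the large case: internal pairs of $\rset^{**}$ inside $T'_{u,v}$ can exist, so the cross-pair counting of Case A does not directly bound $|\tilde T'_{u,v}|$. The resolution is to upgrade from $\rset^{**}$ to the much richer collection $\rset^*$ via $\event_2$ (noting that the congestion-through-$(u,v)$ bound automatically applies to every $\lset^*$-cycle whose endpoints are in $\tilde T$), and to use $\event_4$ on the complement to rule out the pathological subcase $|T'_{u,v}|>3N'/4$. Combining the two cases yields $|T'_{u,v}|\le 3\dmax\log n\cdot \beta\le \tfrac{3600\,\opt^2\,\rho\log^2 n\,\dmax\,\betaFCG}{\alpha^*}$.
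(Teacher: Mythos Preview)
Your three–case split on $|T'_{u,v}|$ and the use of $\event_3$ to ``chunk'' $T'_{u,v}$ mirror the paper's argument, and your treatment of the small case $|T'_{u,v}|\le N'/4$ is essentially identical to the paper's Case~1.

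There is, however, a genuine error in your handling of the subcase $|T'_{u,v}|>3N'/4$. You argue that $\event_4$ applied to the complement yields $|\tilde T_{u,v}|\le 256\dmax\log n$, while the chunking bound gives a large lower bound on $|\tilde T'_{u,v}|$, and conclude that $|\tilde T'_{u,v}|>|\tilde T_{u,v}|$ ``violates the choice of $C_{u,v}$.'' It does not: $C_{u,v}$ is the \emph{single} component of $\cset_{u,v}$ with the most terminals, while $C'_{u,v}$ is the union of all other components, of which there may be up to $\dmax$ (each using at least one edge incident to $u$). So $|\tilde T'_{u,v}|$ can exceed $|\tilde T_{u,v}|$ without any contradiction. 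The paper's fix is exactly to exploit this $\dmax$ bound: every component of $\cset_{u,v}$ has at most $|\tilde T_{u,v}|$ terminals, so $|\tilde T|\le \dmax\cdot|\tilde T_{u,v}|\le 256\dmax^2\log n$; then $\event_3$ bounds $|T^0|$, contradicting Equation~(\ref{eq: bound on T}).

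In the middle case $N'/4<|T'_{u,v}|\le 3N'/4$ your reasoning is also shaky. You want to apply $\event_2$ to $T_1\subseteq T'_{u,v}$ and $T_2\subseteq T^0\setminus T'_{u,v}$ and then argue each resulting $\rset^*$-pair gives a cycle through $u$. Your justification, ``both endpoints lie in $\tilde T$ and therefore the cycle lives in $K'$,'' is circular: membership in $\tilde T=T\cap E(K')$ is precisely what you need to show, and a generic pair $(t,t')\in\rset^*$ with $t\in T_1$, $t'\in T_2$ only gives a cycle in $K^0$, which could lie in a component other than $K'$. The paper asserts the same bound (at most $4\dmax\log n$ such $\rset^*$-pairs) but derives it from the opening sentence of its proof rather than from your $\tilde T$-argument; you should make explicit why any $\lset^*$-cycle with one terminal in $T'_{u,v}$ and one outside is forced through $u$.
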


 We perform some transformations to the skeleton $K^{**}$, to obtain a new graph $\tilde{K}^{**}$. These transformations will reflect the changes that have been done to graph $K'$, so in the end $\tilde K^{**}\sse \tilde K'$ will hold.
 
Consider again some $2$-vertex separator $(u,v)$ of $K'$. Since graph $K^{**}$ is also a union of simple cycles, either both $u,v\not \in K^{**}$, or both $u,v\in K^{**}$ must hold. In the former case, we do not perform any changes to $K^{**}$. For the latter case, if $(u,v)\in M'$, we simply replace $C'_{u,v}\cap K^{**}$ with the path $P_{u,v}$ -- the same path that replaced $C'_{u,v}$ in $K'$. 

Consider the final graph $\tilde K^{**}$. It is easy to see that both the subgraphs of $\tilde K^{**}$ induced by $V_r\cap V(\tilde K^{**})$, and by $(V_b\cup V(X))\cap V(\tilde K^{**})$ remain connected. Therefore, we can apply Corollary~\ref{corollary: small number of terminals in connected component}, once we suitably bound the size of maximum consecutive set of terminals that do not belong to $\tilde K^{**}$.

Consider the ordering $\pi$ of the terminals in $T^0$. Any consecutive set $T'\sse T^0$, such that $T'\cap E(K^{**})=\emptyset$ is called clean. Our goal is to bound the size of maximal consecutive clean set. At the beginning, if event $\event_3$ does not happen, the size of the maximal consecutive clean set is bounded by $\frac{720\cdot \opt^2\cdot \rho\cdot \log n\cdot \betaFCG }{\alpha^*}$. Consider now the family $\tset$ of subsets $T'_{u,v}$ of terminals, corresponding to the clusters $C'_{u,v}$ with $(u,v)\in M'$. Each subset $T'_{u,v}$ is a consecutive set, and we have replaced $C'_{u,v}$ with some path $P_{u,v}$, containing some terminal in $T'_{u,v}$. Moreover, all sets $T'_{u,v}\in \tset$ are disjoint, and from Claim~\ref{claim: bounding T'u,v}, the size of each such set is bounded by $\frac{3600\cdot \opt^2\cdot \rho\cdot \log ^2n\cdot \dmax \cdot \betaFCG }{\alpha^*}$. Therefore, the maximum size of any consecutive set  of terminals, that does not contain any edges of $\tilde{K}^{**}$ is bounded by $2\cdot \frac{3600\cdot \opt^2\cdot \rho\cdot \log^2 n\cdot \dmax \cdot \betaFCG }{\alpha^*}+\frac{720\cdot \opt^2\cdot \rho\cdot \log n\cdot \betaFCG }{\alpha^*}=O\left ( \frac{\opt^2\cdot \rho\cdot \log^2 n\cdot \dmax \cdot \betaFCG }{\alpha^*}\right )$. From Corollary~\ref{corollary: small number of terminals in connected component}, every connected component of $G\setminus \tilde{K}^{**}$ contains at most $O\left ( \frac{\opt^2\cdot \rho\cdot \log^2 n\cdot \dmax \cdot \betaFCG }{\alpha^*}\right )+|T\setminus T^0|=O\left ( \frac{\opt^2\cdot \rho\cdot \log^2 n\cdot \dmax \cdot \betaFCG }{\alpha^*}\right )$ terminals.

We conclude that if Events $\event_1,\ldots,\event_4$ do not happen, then the final skeleton $\tilde{K}'$ is good and rigid. Moreover, since $\tilde{K}^{**}\sse \tilde K'$, and every connected component of $G\setminus \tilde{K}^{**}$ contains at most $O\left ( \frac{\opt^2\cdot \rho\cdot \log^2 n\cdot \dmax \cdot \betaFCG }{\alpha^*}\right )$ terminals, every connected component of $G\setminus \tilde K'$ contains at most $O\left ( \frac{\opt^2\cdot \rho\cdot \log^2 n\cdot \dmax \cdot \betaFCG }{\alpha^*}\right )$ terminals.

We now summarize the algorithm for finding the skeleton. First, we build the graph $K^0$, using the randomized procedure as above. Since we do not know which connected component of $K^0$ contains $K^*$, we go over every connected component $K''$ of $K^0$. For each such connected component, we try to take care of all $2$-vertex cuts, as in the above procedure, and then check whether the resulting graph $\tilde K''$ is rigid, and whether every connected component of $G\setminus \tilde K''$ contains  at most $O\left ( \frac{\opt^2\cdot \rho\cdot \log^2 n\cdot \dmax \cdot \betaFCG }{\alpha^*}\right )$ terminals. By the above argument, if Events $\event_1,\ldots,\event_4$ do not happen, one of the connected components of $K^0$ will have these properties. We output this component as $K'$.

In order to finish the proof of Theorem~\ref{thm: initial skeleton for Cases 3 and 4}, it is now enough to prove Claim~\ref{claim: bounding T'u,v}.

\subsubsection{Proof of Claim~\ref{claim: bounding T'u,v}}
Let $(u,v)\in M'$, and assume for contradiction that $|T'_{u,v}|> \frac{3600\cdot \opt^2\cdot \rho\cdot \log^2 n\cdot \dmax \cdot \betaFCG }{\alpha^*}$. To simplify notation, we denote $T'_{u,v}$ by $T'$ for the rest of the proof. Recall that one of the two vertices $u,v$ must be a $1$-vertex cut in $K'[V_r]$. We assume w.l.o.g that it is $u$. Since we assume that event $\event_1$ does not happen, vertex $u$ may belong to at most $4\log n\cdot \dmax$ cycles in $\lset^*$. In particular, there are at most $4\log n\cdot \dmax$ pairs $(t,t')\in \rset^*$ of terminals, with $t\in T',t'\not\in T'$. Recall also that $T'$ is a consecutive set for $\pi$.

We consider three cases. The first case is when $|T'|\leq N'/4$. In this case, for every terminal $t\in T'$ that belongs to $K^{**}$, there must be another terminal $t'\not\in T'$, such that $(t,t')\in \rset^{**}$. Since each such cycle $L_{t,t'}$ must contain $u$, $|T'\cap K^{**}|\leq 4\log n\cdot \dmax$.
On the other hand, since we assume that event $\event_3$ does not happen, the maximum size of any consecutive subset of $T'$ with no terminals in $E(K^{**})$ is $\frac{720\cdot \opt^2\cdot \rho\cdot \log n\cdot \betaFCG }{\alpha^*}$. Therefore:

\[|T'|\leq (4\log n\cdot \dmax+2)\cdot  \frac{720\cdot \opt^2\cdot \rho\cdot \log n\cdot \betaFCG }{\alpha^*}+4\log n\cdot \dmax\leq\frac{3600\cdot \opt^2\cdot \rho\cdot \log^2 n\cdot \dmax\cdot \betaFCG }{\alpha^*}\]

The second case is that $N'/4<|T'|<3N'/4$. In this case, we have two consecutive subsets $T',T^0\setminus T'$, of at least $N'/4$ terminals each, such that the number of pairs $(t,t')\in T'\times (T^0\setminus T')$ that belong to $\rset^*$ is at most $4\log n\cdot \dmax$. Since we have assumed that Event $\event_2$ does not happen, this is impossible.

Finally, the third case is that $|T'|\geq 3N'/4$. Then $|T^0\setminus T'|<N'/4$, and all terminals of $\tilde{T}_{u,v}$ belong to $T^0\setminus T'$.
Every pair of terminals $(t,t')\in T'\times (T^0\setminus T')$, with $(t,t')\in \rset^{**}$ contributes a cycle containing $u$ to $\lset^{**}$, so the number of such pairs is at most $4\log n\cdot \dmax$. Since we assume that Event $\event_4$ does not happen, this means that $|\tilde{T}_{u,v}|\leq 256\log n\cdot \dmax$. Observe that there are at most $\dmax$ clusters in $\cset_{u,v}$, and by the definition of $C_{u,v}$, each such cluster contains at most $|\tilde{T}_{u,v}|\leq 256\log n\cdot \dmax$ terminals. Therefore, the total number of terminals in $K'$ is bounded by $256\log n\cdot \dmax^2$, and in particular, the number of terminals in $E(K^{**})$ is also at most $256\log n\cdot \dmax^2$. Since we assume that Event~$\event_3$ does not happen, the total number of terminals in $T^0$ must be at most:

\[256\log n\cdot \dmax^2\cdot \frac{720\cdot \opt^2\cdot \rho\cdot \log n\cdot \betaFCG }{\alpha^*}+256\log n\cdot \dmax^2<\frac{10^7\cdot \opt^2\cdot \rho\cdot \log^2n\cdot \betaFCG\cdot \dmax^2}{2\alpha^*}\]

and the total number of terminals in $T$ is less than $\frac{10^7\cdot \opt^2\cdot \rho\cdot \log^2n\cdot \betaFCG\cdot \dmax^2}{\alpha^*}$, contradicting our assumption in Equation~(\ref{eq: bound on T}).

\end{document}